\newif\ifsubmission
\definecolor{darkgreen}{rgb}{0,0.5,0}
\definecolor{darkblue}{rgb}{0,0,0.6}
\newcommand{\C}{\mathds{C}}
\newcommand{\N}{\mathds{N}}
\newcommand{\BN}{\mathbb{N}}
\newcommand{\Z}{\mathbb{Z}}
\newcommand{\R}{\mathbb{R}}
\newcommand{\E}{\mathds{E}}
\newcommand{\cA}{\mathcal{A}}
\newcommand{\cD}{\mathcal{D}}
\newcommand{\cF}{\mathcal{F}}
\newcommand{\cI}{\mathcal{I}}
\newcommand{\cP}{\mathcal{P}}
\newcommand{\cU}{\mathcal{U}}
\newcommand{\cX}{\mathcal{X}}
\newcommand{\cY}{\mathcal{Y}}
\newcommand{\sfR}{\mathsf{R}}
\newcommand{\sfM}{\mathsf{M}}
\newcommand{\pk}{\mathsf{pk}}
\newcommand{\sk}{\mathsf{sk}}
\newcommand{\dss}{\mathsf{SIG}}
\newcommand{\cdss}{\mathsf{CSIG}}
\newcommand{\hyb}{\mathsf{Hyb}}
\newcommand{\samp}{\leftarrow}
\newcommand{\reg}{\mathsf{R}}
\newcommand{\regi}[1]{\reg_{\mathsf{#1}}}
\newcommand{\io}{i\mathcal{O}}
\newcommand{\minibank}{\mathsf{MiniBank}}
\newcommand{\ofe}{\mathsf{1FE}}
\newcommand{\npke}{\mathsf{NPKE}}
\newcommand{\biti}[2]{\mathsf{Bit}_{#1}(#2)}
\newcommand{\prefii}[2]{\mathsf{Prefix}_{#1}(#2)}
\newcommand{\csrd}{C_{\mathsf{SRD}}}
\newcommand{\ocsrd}{C_{\mathsf{OSRD}}}
\newcommand{\crcons}{\mathfrak{S}}
\newtheorem{theorem}{Theorem}
 \newtheorem{lemma}{Lemma}
 \newtheorem{corollary}{Corollary}
 \newtheorem{definition}{Definition}
 \newtheorem{remark}{Remark}
\newcommand{\cR}{\mathcal{R}}
\newcommand{\cM}{\mathcal{M}}
\newcommand{\cE}{\mathcal{E}}
\newcommand{\poly}{\mathsf{poly}}
\newcommand{\Enc}{\mathsf{Enc}}
\newcommand{\Dec}{\mathsf{Dec}}
\newcommand{\Ver}{\mathsf{Verify}}
\newcommand{\zo}{\{0, 1\}}
\newcommand{\negl}{\mathsf{negl}}
\newcommand{\cN}{\mathcal{N}}
\newcommand{\enc}{\mathsf{Enc}}
\newcommand{\dec}{\mathsf{Dec}}
\newcommand{\adve}{\mathcal{A}}
\newcommand{\Sign}{\mathsf{Sign}}
\newcommand{\bank}{\mathsf{Bank}}
\newcommand{\cfgame}[1]{\mathsf{PKQM-CF}_{#1}}
\newcommand{\bzgame}[1]{\mathsf{BZ-GAME}_{#1}}
\newcommand{\advb}[3]{\mathsf{Adv}_{#1}^{\mathsf{#2} \mbox{-} \mathsf{#3}}}
\newcommand{\advc}[4]{\mathsf{Adv}_{#1}^{\mathsf{#2} \mbox{-} \mathsf{#3} \mbox{-} \mathsf{#4}}}
\newcommand{\advd}[5]{\mathsf{Adv}_{#1}^{\mathsf{#2} \mbox{-} \mathsf{#3} \mbox{-} \mathsf{#4} \mbox{-} \mathsf{#5}}}
\newcommand{\expb}[3]{\mathsf{Exp}_{#1}^{ \mathsf{#2} \mbox{-} \mathsf{#3}}}
\newcommand{\expc}[4]{\mathsf{Exp}_{#1}^{ \mathsf{#2} \mbox{-} \mathsf{#3} \mbox{-} \mathsf{#4}}}
\newcommand{\expd}[5]{\mathsf{Exp}_{#1}^{\mathsf{#2} \mbox{-} \mathsf{#3} \mbox{-} \mathsf{#4} \mbox{-} \mathsf{#5}}}
\newcommand*{\algo}[1]{\ensuremath{\mathsf{#1}}}
\newcommand{\seteq}{\coloneqq}
\newcommand{\setbk}[1]{\{#1\}}
\newcommand{\mat}[1]{\boldsymbol{#1}}
\newcommand{\chosen}{\leftarrow}
\newcommand{\lrun}{\leftarrow}
\newcommand{\ra}{\rightarrow}
\newcommand{\concat}{\|}
\newcommand{\tensor}{\otimes}
\newcommand{\cB}{\mathcal{B}}
\newcommand{\qB}{\cB}
\newcommand{\qA}{\cA}
\newcommand{\qD}{\mathsf{q}\cD}
\newcommand{\sfD}{\mathsf{D}}
\newcommand{\nonnegl}{{\mathsf{non}\textrm{-}\mathsf{negl}}}
\newcommand{\sfmode}{\mathsf{mode}}
\newcommand{\cind}{\approx}
\newcommand{\projimp}{\algo{ProjImp}}
\newcommand*{\qreg}[1]{{\color{gray}{\mathsf{#1}}}}
\newcommand{\coin}{\mathsf{coin}}
\newcommand{\secp}{\lambda}
\newcommand{\ct}{\mathsf{ct}}
\newcommand{\qsk}{\mathsf{sk}}
\newcommand{\msk}{\mathsf{msk}}
\newcommand*{\msg}{\mathsf{m}}
\newcommand{\ek}{\mathsf{ek}}
\newcommand{\dk}{\mathsf{dk}}
\newcommand{\One}{\mathsf{One}}
\newcommand{\one}{\mathsf{one}}
\newcommand{\idx}{\mathsf{idx}}
\newcommand{\event}[1]{\mathbf{#1}}
\newcommand{\FE}{\mathsf{FE}}
\newcommand{\PKFE}{\mathsf{PKFE}}
\newcommand{\RE}{\algo{RE}}
\newcommand{\Setup}{\algo{Setup}}
\newcommand{\KG}{\algo{KG}}
\newcommand{\Gen}{\algo{Gen}}
\newcommand{\sffe}{\mathsf{fe}}
\newcommand{\fsk}{\mathsf{fsk}}
\newcommand{\qKG}{\algo{KG}}
\newcommand{\qDec}{\algo{Dec}}
\newcommand{\PRF}{\algo{PRF}}
\newcommand{\prfF}{\mathsf{F}}
\newcommand{\prfgen}{\PRF.\Gen}
\newcommand{\prfkey}{\mathsf{K}}
\newcommand{\Puncture}{\algo{Puncture}}
\newcommand{\PuncPRF}{\algo{PPRF}}
\newcommand\SDE{\algo{SDE}}
\newcommand{\sde}{\mathsf{sde}}
\newcommand\UE{\algo{UE}}
\newcommand{\ue}{\mathsf{ue}}
\newcommand{\TI}{\algo{TI}}
\newcommand{\ATI}{\algo{ATI}}
\newcommand{\API}{\algo{API}}
\newcommand{\sfPI}{\algo{PI}}
\newcommand{\shiftdis}[1]{\Delta_{\mathsf{Shift}}^{#1}}
\newenvironment{boxfig}[2]{\begin{figure}[#1]\fbox{\begin{minipage}{0.97\linewidth}
                        \vspace{0.2em}
                        \makebox[0.025\linewidth]{}
                        \begin{minipage}{0.95\linewidth}
            {{
                        #2 }}
                        \end{minipage}
                        \vspace{0.2em}
                        \end{minipage}}
                        }
                        {\end{figure}}
\newcommand{\pprotocol}[4]{
\begin{boxfig}{h!}{\footnotesize 
%\begin{center}
\centering{\textbf{#1}}
%\end{center}
    #4
\vspace{0.2em} } \caption{\label{#3} #2}
\end{boxfig}
}
\newcommand{\protocol}[4]{
\pprotocol{#1}{#2}{#3}{#4} }
\DeclareMathOperator*{\Exp}{\mathbb{E}}
\let\originalleft\left
\let\originalright\right
\renewcommand{\left}{\mathopen{}\mathclose\bgroup\originalleft}
\renewcommand{\right}{\aftergroup\egroup\originalright}
\newcommand{\secexp}{\mathsf{Exp}}
\newcommand{\todo}[1]{}
\newcommand{\anote}[1]{}
\newcommand{\takashi}[1]{}
\newcommand{\vipul}[1]{}
\newcommand{\ryo}[1]{}
\newcommand{\fuyuki}[1]{}
\newcommand{\todo}[1]{{\color{blue} \footnotesize(Alper: #1)}}
\newcommand{\anote}[1]{{\color{blue} \footnotesize(Alper: #1)}}
\newcommand{\takashi}[1]{{\color{purple} \footnotesize(Takashi: #1)}}
\newcommand{\vipul}[1]{{\color{red} \footnotesize(Vipul: #1)}}
\newcommand{\ryo}[1]{{\color{darkgreen} \footnotesize(Ryo: #1)}}
\newcommand{\fuyuki}[1]{{\color{chocolate} \footnotesize(Fuyuki: #1)}}
\newcommand{\genstate}{\mathsf{GenState}}
\newcommand{\prs}{\mathsf{PRS}}
\title{Multi-Copy Security in Unclonable Cryptography}
\date{}
\author{Alper \c{C}akan\thanks{Carnegie Mellon University.  Part of the work done while the author was visiting NTT Social Informatics Laboratories, Tokyo, Japan. \texttt{acakan@andrew.cmu.edu}.} \and Vipul Goyal\thanks{NTT Research \& Carnegie Mellon University.  \texttt{vipul@vipulgoyal.org}}  \and Fuyuki Kitagawa\thanks{NTT Social Informatics Laboratories, Tokyo, Japan. \texttt{fuyuki.kitagawa@ntt.com}.}  \and Ryo Nishimaki \thanks{NTT Social Informatics Laboratories, Tokyo, Japan. \texttt{ryo.nishimaki@ntt.com}.} \and Takashi Yamakawa\thanks{NTT Social Informatics Laboratories, Tokyo, Japan. \texttt{takashi.yamakawa@ntt.com}.}}
\begin{document}
	
\maketitle
\begin{abstract}
Unclonable cryptography leverages the quantum no-cloning principle to copy-protect cryptographic functionalities. While most existing works address the basic single-copy security, the stronger notion of multi-copy security remains largely unexplored. 

We introduce a generic compiler that upgrades collusion-resistant unclonable primitives to achieve multi-copy security, assuming only one-way functions. Using this framework, we obtain the first multi-copy secure constructions of public-key quantum money (termed quantum coins), single-decryptor encryption, unclonable encryption, and more. We also introduce an extended notion of quantum coins, called upgradable quantum coins, which allow weak (almost-public) verification under weaker assumptions and can be upgraded to full public verification under stronger assumptions by the bank simply publishing additional classical information. %which can transition from almost-public to fully public verification once a stronger assumption becomes available. 

Along the way, we give a generic compiler that upgrades single-copy secure single-decryptor encryption to a collusion-resistant one, assuming the existence of functional encryption, and construct the first multi-challenge secure unclonable encryption scheme, which we believe are of independent interest. 
\end{abstract}

\newpage
\tableofcontents
\newpage

\section{Introduction}
The no-cloning theorem is a fundamental and distinctive feature of quantum information theory. It asserts that it is impossible to create an identical copy of an arbitrary unknown quantum state. This stands in stark contrast to classical information, which can be copied freely. The no-cloning theorem has had profound implications in the field of cryptography. One of the earliest applications is quantum money~\cite{Wie83}, whose authenticity can be verified but which cannot be cloned.  
Building on this idea, a wide range of unclonable cryptographic primitives have been explored, including copy-protection~\cite{CCC:Aaronson09}, unclonable encryption~\cite{TQC:BroLor20}, single-decryptor encryption~\cite{GZ20,C:CLLZ21}, secure leasing~\cite{EC:AnaLaP21}, certified deletion~\cite{TCC:BroIsl20}, and more. 

\paragraph{\bf Multi-copy security.} 
In studies of unclonable cryptography, we typically focus on the $1\!\to\!2$ unclonability setting, where an adversary is given a single copy of a pure quantum state but is prohibited from producing two copies. In more realistic scenarios, however, one may also consider the more general $q\!\to\!q{+}1$ setting, in which the adversary is provided with $q$ copies of a pure state but cannot produce $q{+}1$ copies. We refer to this property as \emph{multi-copy security}. Throughout the paper, we focus on unbounded multi-copy security, where $q$ is an arbitrary, a priori unbounded polynomial unless otherwise stated.

The notion of multi-copy security has been investigated in several works. However, the known feasibility results for achieving multi-copy security remain quite limited. To the best of our knowledge, only the following results address multi-copy security in the context of unclonable cryptography:
\begin{itemize}
\item 
Mosca and Stebila~\cite{mosca2010quantum} introduced the notion of \emph{quantum coins}, which is a multi-copy secure version of public-key quantum money. In a quantum coin scheme, a ``coin'' is a pure state, and security requires that an adversary given $q$ exact copies of the coin cannot produce $q+1$ valid coins. They constructed a quantum coin scheme in the quantum oracle model and left open the problem of constructing a quantum coin scheme in the plain model. 
    \item Ji, Liu, and Song~\cite{C:JiLiuSon18} constructed a multi-copy secure private-key quantum money scheme based on pseudorandom quantum states.
    \item Ananth, Mutreja, and Poremba~\cite{ITC:AnaMutPor25} constructed a (one-time secret key) encryption scheme that satisfies an oracular notion of multi-copy certified deletion security  based on one-way functions. Note that their oracular notion of security is weaker than the standard certified deletion security (see \Cref{sec:related_work} for the details).  
    They also constructed
    a multi-copy secure leasing scheme for general programs in a classical oracle model.
    \item Poremba, Ragavan, and Vaikuntanathan~\cite{PRV24} constructed a 
    (one-time secret key)
    unclonable encryption (UE) scheme  that satisfies an oracular notion of bounded multi-copy security.    Again, their oracular security notion is weaker than the standard security of UE 
    (see \Cref{sec:related_work} for the details).
\end{itemize}

Despite these advances, many fundamental challenges in understanding and constructing multi-copy secure cryptographic schemes remain largely unexplored. For example, 
constructing a quantum coin scheme in the plain model and 
improving the results of~\cite{ITC:AnaMutPor25,PRV24} to achieve standard (non-oracular) security and to achieve unbounded multi-copy security (in the case of UE) 
remain open problems.
 Furthermore, the multi-copy security of other unclonable primitives has not yet been investigated.

\paragraph{\bf Collusion-resistance.}
A notion closely related to multi-copy security is collusion resistance. In this setting, the adversary is given multiple quantum states that are independently generated by repeatedly executing the same generation algorithm. This stands in stark contrast to multi-copy security, where the adversary receives multiple exact copies of the same pure state. 
%To highlight this distinction, namely, that collusion-resistance allows the adversary access to many independent samples rather than identical copies, we often refer to this property as multi-sample security. 
\takashi{I commented out the note about "multi-sample" security since I don't think we are using this term. (Please let me know if that's not the case.)}

The collusion-resistance  of unclonable cryptographic primitives is better understood than their multi-copy security counterparts, and several constructions have been recently proposed. For example, 
Aaronson and Christiano~\cite{TOC:AarChr13} (based on earlier ideas of \cite{LAF09,ITCS:FGHLS12}) showed that a single-copy secure public-key quantum money scheme (often referred to as a mini-scheme) can be generically upgraded to a collusion-resistant secure scheme by combining it with a digital signature scheme.
In addition, 
\cite{TCC:CakGoy24} constructed (using ideas from the bounded-collusion scheme of \cite{TCC:LLQZ22}) a collusion-resistant single-decryptor encryption (SDE) schemes and \cite{C:KitNisPap25}
%based on subexponentially secure indistinguishability obfuscation (iO) and the learning with errors (LWE) assumption. Similarly, YY proposed  
constructed a collusion-resistant secure key leasing (SKL) scheme for public-key encryption (PKE) and, more generally, for attribute-based encryption (ABE)
based on standard assumptions in the plain model. 
%relying on the LWE assumption.

While collusion-resistance and multi-copy security are conceptually similar, %there exist several collusion-resistant (or multi-sample secure) primitives that have been constructed from standard assumptions in the plain model, whereas 
the multi-copy secure counterparts of the above schemes remain unknown. This naturally raises the following question:

\begin{center}
\emph{Is there a way to bridge the gap between collusion-resistance and multi-copy security?}    
\end{center}

\paragraph{\bf Why multi-copy security?}
 
A natural question is why one should pursue multi-copy security instead of collusion resistance. In fact, the states considered in a collusion-resistant setting can be interpreted as multiple copies of mixed states that account for the randomness of the generation algorithm. In what follows, we outline several reasons for focusing on multi-copy security.

First, consider the operational advantage. Equality of pure states can be efficiently tested using the SWAP test (up to some bounded error), and this capability can be useful in applications. For instance, in the context of secure key leasing, where decryption keys are encoded as quantum states, multi-copy security allows one to compare multiple leased keys for equality. \cite{ITC:AnaMutPor25}  argued that such a capability could lead to the notion of nested key leasing, where a leased key can itself be further leased to another party.

Second, consider the aspect of anonymity as discussed in \cite{mosca2010quantum}. When identical copies of the same pure state are distributed, there is no inherent way to distinguish among them, naturally giving rise to a form of anonymity. (See \Cref{sec:related_work} for more discussions). %\cite{mosca2010quantum} discussed this idea in the context of public-key quantum money, and termed such a multi-copy secure version quantum coins. They provided a construction relative to a quantum oracle, but their construction in the plain model was left as an open problem.

Finally, there is also a conceptual motivation. While multiple copies of a pure state correspond to identical physical objects, states sampled from the same distribution need not be identical, they may be distinct states generated according to the same preparation procedure.
In a classical analogy, if two strings 
$x$ and 
$y$ are drawn from the same distribution but take different values, we would not regard them as “copies.”
Hence, multi-copy security more faithfully captures the intuition of having exact duplicates in the quantum setting.

\subsection{Our Results}
We present a generic compiler that upgrades collusion-resistance into multi-copy security for a broad class of unclonable cryptographic primitives, assuming the existence of one-way functions (OWFs). 
Namely, we prove the following theorem. 
\if0
\begin{theorem}[Informal]
Let $\mathsf{GenState}$ be a QPT algorithm that takes a classical input $x$ and a classical randomness $r$, and outputs a pure quantum state $\ket{\psi_{x,r}}$ fully determined by $x$ and $r$. Assuming the existence of one-way functions, 
there is a QPT algorithm $\mathsf{GenState}'$ that takes a classical input $x$ and a classical randomness $r'$, and outputs a pure quantum state $\ket{\psi'_{x,r'}}$ fully determined by $x$ and $r'$, satisfying the following  for any polynomial $t$:  
\begin{itemize}
\item 
Given $\ket{\psi'_{x,r'}}^{\otimes t}$ for uniformly random $r'$, one can generate a state that is computationally indistinguishable from  $\ket{\psi_{x,r_1}}\otimes...\otimes\ket{\psi_{x,r_t}}$ for independently random $r_1,...,r_t$.  
\item
Given $\ket{\psi_{x,r_1}}\otimes...\otimes\ket{\psi_{x,r_t}}$ for independently random $r_1,...,r_t$, one can efficiently generate a state that is computationally indistinguishable from  $\ket{\psi'_{x,r'}}^{\otimes t}$ for uniformly random $r'$.  
\end{itemize}
\end{theorem}
\fi
\begin{theorem}[Informal] 
Suppose that there is a QPT algorithm $\mathsf{GenState}$ that takes a classical input $z$ and a classical randomness $rand\in\zo^{r}$, and outputs a pure quantum state $\ket{\phi_{z,rand}}$ fully determined by $z$ and $rand$. 
For a pseudorandom state (PRS) key $k$ for $\lambda$-qubit output states and a pseudorandom function (PRF) key $K$ for a function $F$ from $\zo^\lambda$ to $\zo^{r}$, define 
$\ket{\psi_{z,k,K}}=\sum_{x}\alpha_{k,x} \ket{x}\otimes \ket{\phi_{z,F(K,x)}}$ where $\sum_{x}\alpha_{{k},x} \ket{x}$ is the state produced by the PRS scheme with the key $k$. 
Then, for any $z$ and any polynomial $t$, 
given $\ket{\phi_{z,rand_1}}\otimes...\otimes\ket{\phi_{z,rand_t}}$ for independently random $rand_1,...,rand_t$, one can efficiently generate a state that is computationally indistinguishable from $\ket{\psi_{z,k,K}}^{\otimes t}$ for uniformly random $k,K$. 
\end{theorem}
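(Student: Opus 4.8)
Here is how I would approach it.

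\textbf{The construction.} The plan is to exhibit a QPT algorithm $\mathcal A$ that Bose‑symmetrizes the $t$ given samples against $t$ fresh random labels, and then to verify correctness by idealizing the PRS and the PRF. Concretely, on input $\ket{\phi_{z,rand_1}}\otimes\cdots\otimes\ket{\phi_{z,rand_t}}$ on registers $C_1,\dots,C_t$, the algorithm $\mathcal A$ samples $t$ uniformly random \emph{distinct} labels $y_1,\dots,y_t\in\zo^{\lambda}$ (resampling on the negligible collision event), writes $\ket{y_j}$ on auxiliary registers $B_1,\dots,B_t$, prepares a fresh register in $\ket{\mathsf{sym}}=\tfrac1{\sqrt{t!}}\sum_{\sigma\in S_t}\ket\sigma$, and, controlled on $\sigma$, swaps the pair $(B_{\sigma(i)},C_{\sigma(i)})$ into the $i$‑th output slot for all $i$. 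Since the $y_j$ are distinct, $\sigma$ is a coherently computable function of the now‑populated label registers together with the classical list $(y_j)_j$, so $\mathcal A$ uncomputes the $\ket{\mathsf{sym}}$ register back to $\ket 0$ and outputs the $t$ slots; its output is the state $\ket{\Omega_{\vec y,\vec{rand}}}:=\tfrac1{\sqrt{t!}}\sum_{\sigma\in S_t}\bigotimes_i\bigl(\ket{y_{\sigma(i)}}\otimes\ket{\phi_{z,rand_{\sigma(i)}}}\bigr)$, so that averaged over $\mathcal A$'s choices the output density matrix is $\E_{\vec y,\vec{rand}}\bigl[\ket{\Omega_{\vec y,\vec{rand}}}\bra{\Omega_{\vec y,\vec{rand}}}\bigr]$, with $\vec y$ ranging over uniform distinct tuples. (The symmetrization is unavoidable: the naive construction that simply hands back the $t$ samples alongside $t$ fresh PRS copies is separated from $\ket{\psi_{z,k,K}}^{\otimes t}$ with constant advantage by a SWAP test across two copies, since the target is invariant under permuting copies whereas that construction is not.)

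\textbf{Idealizing the PRS and the PRF.} For the analysis, fix uniform $k,K$ and first replace the PRF $F(K,\cdot)$ in $\ket{\psi_{z,k,K}}=\sum_x\alpha_{k,x}\ket x\otimes\ket{\phi_{z,F(K,x)}}$ by a uniformly random function $R$, obtaining $\ket{\Psi_{z,k,R}}^{\otimes t}=\bigl(\sum_x\alpha_{k,x}\ket x\otimes\ket{\phi_{z,R(x)}}\bigr)^{\otimes t}$; this is computationally indistinguishable by quantum‑query PRF security, since given an oracle for $F(K,\cdot)$ or $R$ one prepares either family by coherently evaluating the oracle and $\mathsf{GenState}$ (with $O(t)$ coherent oracle calls, which the reduction can itself simulate by an $O(t)$‑wise independent function). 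Next replace the PRS state $\sum_x\alpha_{k,x}\ket x$ by a Haar‑random $\ket\theta$: this is indistinguishable by multi‑copy PRS security, since the reduction receives $t$ copies of the challenge state and attaches the $\ket{\phi_{z,R(x)}}$ registers itself. It therefore remains to show that $\E_{\theta,R}\bigl[\ket{\Psi_{z,\theta,R}}\bra{\Psi_{z,\theta,R}}^{\otimes t}\bigr]$ is within negligible trace distance of $\E_{\vec y,\vec{rand}}\bigl[\ket{\Omega_{\vec y,\vec{rand}}}\bra{\Omega_{\vec y,\vec{rand}}}\bigr]$.

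\textbf{The Haar‑moment computation, and conclusion.} Write $\ket{\Psi_{z,\theta,R}}^{\otimes t}=\sum_{\vec x}\theta_{x_1}\cdots\theta_{x_t}\bigotimes_i\ket{x_i}\ket{\phi_{z,R(x_i)}}$ and project onto pairwise‑distinct $\vec x$, which costs only $O(t/2^{\lambda/2})$ in trace distance because $t$ is polynomial in $\lambda$. Re‑indexing a distinct tuple by a sorted tuple $\vec y$ and the permutation $\sigma$ with $x_i=y_{\sigma(i)}$, and using $\theta_{x_1}\cdots\theta_{x_t}=\prod_j\theta_{y_j}$, the distinct part equals $\sqrt{t!}\sum_{\vec y\ \mathrm{sorted}}\bigl(\prod_j\theta_{y_j}\bigr)\ket{\Omega_{\vec y,\,R(\vec y)}}$, where $R(\vec y)=(R(y_1),\dots,R(y_t))$. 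Now average: over Haar $\theta$, $\E_\theta[\prod_j\theta_{y_j}\prod_j\bar\theta_{y'_j}]$ vanishes unless $\vec y=\vec y'$ and is otherwise a constant $w$ independent of $\vec y$, so all cross terms between different label‑sets die; and for a fixed distinct $\vec y$ the tuple $R(\vec y)$ is i.i.d.\ uniform, hence distributed exactly as $\vec{rand}$. Consequently the distinct part averages to $t!\,w\sum_{\vec y\ \mathrm{sorted}}\E_{\vec{rand}}[\ket{\Omega_{\vec y,\vec{rand}}}\bra{\Omega_{\vec y,\vec{rand}}}]=t!\,w\binom{2^\lambda}{t}\cdot\E_{\vec y,\vec{rand}}[\ket{\Omega_{\vec y,\vec{rand}}}\bra{\Omega_{\vec y,\vec{rand}}}]$, and $t!\,w\binom{2^\lambda}{t}=\prod_{i=0}^{t-1}\tfrac{2^\lambda-i}{2^\lambda+i}=1-O(t^2/2^\lambda)$, which gives the desired closeness. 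Chaining the three steps shows $\mathcal A$'s output is computationally indistinguishable from $\ket{\psi_{z,k,K}}^{\otimes t}$ for uniform $k,K$. The heart of the argument — and the step I expect to be most delicate — is exactly this last computation: once the PRS and the PRF are idealized, every correlation between the label registers and the $\phi$‑registers collapses to the bare symmetrized pairing of a random size‑$t$ label set with $t$ i.i.d.\ fresh samples, and this is precisely where one uses that the PRS produces $\lambda$‑qubit states, so that a size‑$t$ collision among the labels is negligibly unlikely.
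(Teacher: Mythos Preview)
Your proposal is correct and follows essentially the same approach as the paper: the simulator $\mathcal A$ is exactly the paper's construction (sample distinct labels, prepare the permutation superposition, controlled-swap, then uncompute the permutation using distinctness of the labels), and the analysis proceeds through the same hybrids (idealize the PRF to a random function, idealize the PRS to Haar, then match to the symmetrized samples). The only cosmetic difference is that the paper invokes the ``type lemma'' of Ananth et al.\ as a black box for the final Haar step, whereas you carry out the equivalent Haar-moment calculation inline; these are the same computation.
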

The formal version of the above theorem is given in \Cref{thm:compiler}. 
The theorem provides a way to  switch between $t$ independently generated states to $t$ copies of a state produced with the same randomness. %In many cryptographic primitives, the single randomness $r'$ used for generating multiple states can be absorbed into the (master) secret key, thereby 
This establishes a compiler that compiles collusion-resistance into multi-copy security. 

To apply our compiler, we have to assume that the algorithm generating a copy-protected quantum state satisfies the assumption for $\mathsf{GenState}$ in the above theorem; namely, its output must be a pure state fully determined by the input and the classical randomness. We say that such a quantum algorithm has \emph{classically determined outputs}. Fortunately, many existing constructions of unclonable primitives satisfy this property, and thus our compiler is applicable to many existing constructions.

\subsubsection{Quantum Coins}
In \Cref{sec:money_PR_compiler}, we first apply our compiler to quantum money schemes.
It is known that there exists a collusion-resistant public-key quantum money scheme based on subspace-hiding obfuscation\footnote{This is implied by indistinguishability obfuscation and one-way functions~\cite{EC:Zhandry19b}.} and OWFs~\cite{TOC:AarChr13,EC:Zhandry19b}. We observe that the banknote generation algorithm in these schemes has classically determined outputs. Therefore, we can apply our compiler to obtain quantum coins based on subspace-hiding obfuscation and OWFs. This yields the first construction of quantum coins in the plain model, resolving an open problem posed by~\cite{mosca2010quantum}.
In fact, our construction is generic, being based on any public-key mini-scheme~\cite{TOC:AarChr13} whose banknote generation algorithm has classically determined outputs, together with any digital signature scheme.

Separately, in \cref{sec:money_eqsup_compiler}, 
for the first time, we also prove the security of the folklore quantum coin scheme that simply uses an equal superposition of all possible mini-scheme banknotes. While this needs additional assumptions compared to our pseudorandom state based compiler, we believe this still might be of independent theoretical interest. Additionally, we develop two tools (\cref{sec:readonce} and \cref{sec:detsig}) to prove the security of this scheme, and these tools might have independent applications for quantum cryptography with pure states.

\paragraph{\bf Upgradable quantum coins.} 
We extend our construction of quantum coins to what we call upgradable quantum coins.
Upgradable quantum coins are a new notion in which the security of the quantum money scheme holds under a weaker assumption (e.g., public-key encryption), but only in the almost-public setting~\cite{bs21}, where verification of a banknote is done by comparing it with other banknotes that are already known to be correct. Once an implementation of a stronger cryptographic scheme is available, it can be upgraded to a fully public-key quantum money scheme by the bank releasing a single classical value. 
We construct upgradable quantum coins whose security before the release (as almost-public coins) holds under the assumption of subexponentially secure public-key encryption, and whose security after the release (as public-key quantum money) holds under the assumptions of subexponentially secure public-key encryption and subspace-hiding obfuscation. 
%We also introduce a new notion called upgradable quantum coins, where the security of the quantum money schemes holds assuming a weaker assumption (e.g. public-key encryption) but the scheme is almost-public. Once an implementation of a strong cryptographic scheme is available, it can be upgraded to a fully public scheme by releasing a single classical value. \takashi{We should also explain our construction (at least the result).}

\subsubsection{Single-Decryptor Encryption.}
%Next, we apply our compiler to 
A single-decryptor encryption (SDE)~\cite{GZ20,C:CLLZ21} is a public key encryption scheme where decryption keys can be copy-protected.  
A prior work~\cite{TCC:CakGoy24} constructed collusion-resistant SDE schemes based on subexponentially secure indistinguishability obfuscation (iO) and the LWE assumption.
Thus, one can directly apply our compiler to obtain a multi-copy secure variant under these assumptions.
However, this construction does not appear to be optimal in terms of assumptions. A recent work~\cite{EPRINT:KitYam25} constructed a single-copy secure SDE scheme under only polynomially secure iO and OWFs, and thus we hope to achieve collusion-resistance and multi-copy security under the same assumptions as well. 
%However, our goal is to weaken the assumptions and achieve the same level of security under only polynomially secure iO and OWFs.

To this end, we design a generic compiler that upgrades any single-copy secure SDE scheme into a collusion-resistant one using functional encryption (FE) as a building block.
By applying this compiler to the single-copy secure scheme of ~\cite{EPRINT:KitYam25}, we obtain a collusion-resistant SDE under the same assumption.
Finally, by applying our main compiler, this scheme can be further transformed into a multi-copy secure SDE.  

We believe that our generic compiler from single-copy security to collusion resistance for SDE is of independent interest, since existing works on (even bounded) collusion-resistant SDE~\cite{TCC:LLQZ22,TCC:CakGoy24} rely on specific structural properties of the single-copy secure SDE scheme of~\cite{C:CLLZ21}, rather than treating it as a black-box.
%Further, through our compiler, we also show that single-copy secure SDE schemes can be uncondtionally upgraded to bounded-copy secure schemes. 
\takashi{I deleted the sentence about unconditional compiler in the bounded case; 
I don't think SDE implies classical bounded FE in general, since SDE has quantum secret key. It is very likely that our compiler also works with FE with quantum keys, but this is not what we proved.  
}\anote{Yes that's true let's remove for now. I'll update the compiler later to make it valid for FE with quantum keys too, since I think it's a nice result that bounded collusion is for free for Copy-protection/secure leasing/leakage-resilience.}
\takashi{In fact, we have an idea for another unconditional compiler for bounded collusion-resistance that even does not use bounded FE (already mentioned in the note on Overleaf).  A nice about the compiler is that it works even in the one-time secret key setting. (Note that the FE-based compiler only works in the public-key setting.) This can be used to construct bounded multi-copy UE from LWE. We may write this later.
}

\subsubsection{Unclonable Encryption}
Unclonable encryption (UE)~\cite{TQC:BroLor20} is a one-time secret-key encryption scheme in which each ciphertext is copy-protected. In the single-copy setting, UE schemes achieving search-based security can be constructed unconditionally; however, no construction with indistinguishability-based security is known in the plain model. Since this work focuses on constructions in the plain model, we concentrate on search-based security.

Our compiler reduces the construction of a multi-copy secure UE scheme to that of a multi-challenge secure one, where security holds even when the adversary is given multiple ciphertexts of the same random message, each generated under independent encryption randomness.\footnote{We use the term multi-challenge rather than collusion-resistant, as we believe it is more suitable in the context of UE.}
However, no construction of a multi-challenge secure UE scheme is currently known. The main difficulty in achieving this lies in the fact that the hybrid argument is not applicable in the UE setting, as there are two non-communicating adversaries. As a result, we cannot perform a simple reduction to the single-challenge setting, as is done in standard (reusable) encryption schemes.

We overcome this challenge by connecting multi-challenge secure UE to collusion-resistant SDE. We observe that if we have an SDE scheme that satisfies collusion-resistant identical-challenge security, where all second-stage adversaries receive identical challenge ciphertexts, then a straightforward conversion from SDE to UE, in which we simply switch the roles of ciphertexts and keys, yields a multi-challenge secure UE scheme.
Fortunately, we show that an SDE scheme obtained from our single-copy-to-collusion-resistant compiler also satisfies collusion-resistant identical-challenge security. As a result, we obtain a multi-challenge secure UE scheme assuming the existence of polynomially secure iO and OWFs. We emphasize that this is the first construction of a multi-challenge secure UE scheme under any assumption, which we believe is of independent interest.

Finally, by applying our compiler, we also obtain a multi-copy secure UE scheme under the same assumptions. This improves upon the result of \cite{PRV24} that achieved a weaker bounded multi-copy and oracular notion of security, albeit under a stronger assumption.

%First, we introduce the notion of collusion-resistant (search) security of UE,\footnote{In this work, we focus on search security, since indistinguishability-based security for UE is not known to be achievable in the plain model, even in the single-copy setting.} where the adversary is given $q$ independently generated ciphertexts of the same random message $m$ and the same key $k$, and tries to generate $q+1$ states such that each state, when combined with the key $k$, enables one to recover the message $m$.

%We show that the construction of such a collusion-resistant UE can be reduced to that of a collusion-resistant SDE by flipping the roles of ciphertexts and keys. Since we construct collusion-resistant SDE from polynomially secure iO and OWFs, we obtain collusion-resistant UE under the same assumption. We emphasize that this is the first construction of a collusion-resistant UE scheme under any assumption, which we believe is of independent interest.

\subsubsection{Other Applications}
%Beyond quantum money, 
Beyond those seen so far, our compiler would be  applicable to a wide range of unclonable cryptographic primitives. We briefly review some examples below. 
%In particular, we obtain the following constructions: 
%Building on this compiler, we obtain the following constructions:
\begin{itemize}
%\item Multi-copy secure public-key quantum money (i.e., quantum coins) based on iO and OWFs. This is the first construction of quantum coins in the plain model, answering an open problem left by \cite{mosca2010quantum}. %\cite{aarcacm}.   
\item 
By applying our compiler to the collusion-resistant PKE and ABE schemes with secure key leasing (SKL) from~\cite{C:KitNisPap25}, we obtain their multi-copy secure counterparts based on the LWE assumption. This yields the first construction of multi-copy secure SKL for any type of encryption scheme. 
\item 
In the setting of certified deletion, multi-challenge security can be straightforwardly reduced to single-challenge security via a hybrid argument, provided that the underlying scheme supports reusable security, i.e., it provides an encryption oracle to the adversary (a condition automatically satisfied in the public-key setting). Thus, by applying our compiler to the constructions of encryption with publicly verifiable certified deletion from~\cite{TCC:KitNisYam23,TCC:BKMPW23}, we obtain multi-copy secure reusable secret-key (resp. public-key) encryption schemes with publicly verifiable certified deletion based on OWFs (resp. PKE). This would improve upon the result of~\cite{ITC:AnaMutPor25}, which achieved only a weaker oracular notion of security. More generally, our compiler applies to any type of encryption considered in~\cite{TCC:KitNisYam23,TCC:BKMPW23}. Note that our compiler does not preserve \emph{everlasting security}; however, \cite{ITC:AnaMutPor25} does not consider everlasting security either.
\item 
Untelegraphable encryption (UTE)~\cite{CKNY25} is a weaker variant of UE in which the security requirement only demands that a quantum ciphertext cannot be converted into a classical string that could later be used to recover the message given the decryption key. \cite{CKNY25} constructed a multi-challenge secure UTE scheme\footnote{They refer to it as collusion-resistant UTE, but we use the term multi-challenge secure UTE for consistency with our notion of multi-challenge secure UE.} based on any (possibly quantum) reusable secret-key encryption scheme. By applying our compiler to this construction, we obtain its multi-copy secure counterpart under the assumption that OWFs exist. This yields the first construction of a multi-copy secure UTE scheme.
%\item Multi-copy secure (public key) single-decryptor encrption (SDE) scheme based on iO and OWFs. This is the first multi-copy secure SDE scheme. 
%\item Multi-copy secure unclonable encryption (UE) scheme based on iO and OWFs. This improves upon the result of \cite{PRV24} that achieved a weaker bounded multi-copy and oracular notion of security, albeit under a stronger assumption.
\end{itemize}
We omit the details of the above constructions, as they are obtained in a straightforward manner by applying our compiler to existing constructions.

%The first two constructions are obtained directly by applying our compiler to existing collusion-resistant variants.In contrast, the latter two cannot be derived by a direct application of the compiler, as collusion-resistant SDE and UE schemes based on iO and OWFs are not yet known; the only known construction of collusion-resistant SDE~\cite{TCC:CakGoy24} relies on subexponentially secure iO and the LWE assumption, and there is no known conPleasstruction of collusion-resistant UE. Thus, another contribution of this work is to provide new instantiations of collusion-resistant SDE and UE, which in turn yield their multi-copy secure counterparts via our compiler. In addition, we introduce the notion of an \emph{upgradable quantum coin} and present its construction. We detail each of these points below.

\subsection{Related Work}\label{sec:related_work}

\paragraph{\bf Anonymous public key quantum money.}
While the notion of quantum coins was originally motivated by anonymity, a recent work~\cite{anonymousmoney} constructed anonymous public-key quantum money without relying on quantum coins. 
In fact, our quantum coin scheme provides a weaker anonymity guarantee compared to theirs.  In a quantum coin scheme, all (honestly generated) banknotes are identical pure states. Therefore, if the bank honestly issues the banknotes and adversaries do not tamper with the banknotes, perfect anonymity is satisfied. Thus, our scheme satisfies a semi-honest anonymity notion where the adversaries are trying to track banknotes by recording information without tampering with the banknotes. However, a malicious bank or user might attempt to tamper with the money state to embed distinguishing information while still ensuring that the modified state passes verification. 
Such an attack would allow the adversary to trace the movement of the state, thereby breaking anonymity. A possible countermeasure is to use SWAP test to compare one's banknote to some banknotes that are somehow known to be honest. A similar approach was used by \cite{bs21}. Note that this is not ideal, since it might be unrealistic to trust any banknote to be honest. However, even if such an assumption is made, this is still not a solution: Even perfectly orthogonal states will be deemed to be equal with probability $1/2$ by the SWAP test, and parallel amplification will not work either since the adversary needs to only track a single instance to track the amplified banknote. To fully prevent adversarial tracking attacks that involve tampering with banknotes, one may aim to construct a quantum coin scheme in which the verification procedure consists solely of a projection onto the correct state. This would indeed ensure perfect anonymity even against malicious users. We remark that such a construction is known in the private-key setting~\cite{C:JiLiuSon18}, but achieving it in the public-key setting appears challenging. Our construction of quantum coins can be viewed as a first step toward this goal. In particular, since our equal-superposition compiler works with arbitrary public states that have delocalized supports, it might be possible to come up with some state that could give us projective verification.

\paragraph{\bf Comparison with \cite{PRV24}.}
We compare our multi-copy UE scheme with that of~\cite{PRV24}. 
The security notion achieved in~\cite{PRV24} is weaker in two respects. 
First, their construction is limited to the \emph{bounded-copy} setting, where the number of copies is restricted to \( q = o(n / \log n) \), with \( n \) denoting the key length. 
Second, they only achieve a weaker \emph{oracular} security notion, where the adversary in the second stage is provided with encryption and decryption oracles rather than the decryption key itself. 
In contrast, our scheme achieves \emph{standard unbounded multi-copy security}, where the number of copies can be an arbitrarily large polynomial, and the entire decryption key is revealed to the adversary in the second stage. 

On the other hand, our scheme relies on the existence of iO and OWFs, whereas their construction relies only on OWFs. In addition, their scheme enjoys an extra property: a ciphertext is fully determined by the message and the key. While our scheme can satisfy this property if the encryption and decryption keys are allowed to differ, requiring them to be identical, as in the standard definition, necessitates that the ciphertext also depend on additional encryption randomness (see \Cref{rem:deterministic_ciphertext}). It remains an open problem to construct a multi-copy secure UE scheme that satisfies this additional property while using identical encryption and decryption keys.
%, %while in our scheme it also depends on additional encryption randomness.\footnote{If we regard the encryption randomness as part of the encryption key, then the ciphertext is indeed fully determined by the encryption key and the message. However, in this case, the encryption and decryption keys must differ, since the encryption randomness should not be revealed to the adversary.}  \takashi{The explanation here depends on how we formalize UE in the technical sections.}

\paragraph{\bf Comparison with \cite{ITC:AnaMutPor25}.} 
We compare our multi-copy certified deletion scheme with that of~\cite{ITC:AnaMutPor25}. 
In the standard security notion of certified deletion, the decryption key is revealed to the adversary once the adversary provides a valid certificate of deletion. 
In contrast,~\cite{ITC:AnaMutPor25} considers a weaker \emph{oracular} security notion, where the adversary is given access to an oracle that depends on the decryption key instead of receiving the key itself. 
On the other hand, our scheme achieves the \emph{standard} certified deletion security, where the entire key is revealed. 
Moreover, our construction naturally extends to the reusable and public-key settings.

%\paragraph{\bf Collusion-resistant untelegraphable encryption.}
%Champion, Kitagawa, Nishimaki, and Yamakawa~\cite{CKNY25} introduced a weaker variant of UE called untelegraphable encryption (UTE). They constructed a collusion-resistant UTE scheme under the assumption that (possibly quantum) reusable secret-key encryption exists. Their crucial observation is that a hybrid argument can be used to reduce collusion resistance to single-challenge security in the setting of UTE. On the other hand, such a hybrid argument cannot be applied to UE, as there are two non-communicating adversaries. Consequently, constructing collusion-resistant UE (as done in this work) is significantly more challenging than constructing collusion-resistant UTE. 

%Our scheme satisfies unbounded multi-copy security, meaning that for any a priori unbounded polynomial $q$, when the adversary is given $q$ exact copies of the ciphertext of a uniformly random message $m$ under a key $k$, the adversary cannot spliyt the states into $q+1$ states in such a way that each of them can be used to recover $m$ when $k$ is later recovered. On the other hand, oracle.  

\subsection{Organization}
The remainder of the paper is organized as follows.
In \Cref{sec:preliminaries}, we review relevant definitions and known results.
In \Cref{sec:purification_compiler}, we present a theorem that gives our main compiler, which upgrades collusion-resistance into multi-copy security.
In \Cref{sec:readonce,sec:detsig}, we develop the technical tools used for the security proof of the additional folklore construction of quantum coins.
In \Cref{sec:coincomp}, we present two constructions of quantum coins. The first (\Cref{sec:money_PR_compiler}) is obtained by directly applying the compiler developed in \Cref{sec:purification_compiler}, while the second (\Cref{sec:money_eqsup_compiler}) is the folklore ``equal-superposition'' construction whose security we establish using the tools developed in \Cref{sec:readonce,sec:detsig}.
In \Cref{sec:upgrade}, we introduce upgradable quantum coins and present its construction.
In \Cref{sec:CR-transformation}, we describe a generic compiler that transforms single-key SDE into collusion-resistant SDE, which is then compiled into multi-copy secure SDE by the compiler in \Cref{sec:purification_compiler}.
In \Cref{Sec:UE} , we demonstrate a conversion from collusion-resistant SDE to multi-challenge secure UE, which is then compiled into multi-copy secure UE by the compiler in \Cref{sec:purification_compiler}.

\section*{Acknowledgments}
Generative AI tools were used for language editing. 
Alper \c{C}akan was supported by the following grants of Vipul Goyal: NSF award 1916939, DARPA SIEVE program, a gift from Ripple, a DoE NETL award, a JP Morgan Faculty Fellowship, a PNC center for financial services innovation award, and a Cylab seed funding award.

\section{Preliminaries}\label{sec:preliminaries}
\subsection{Pseudorandom Functions}
\begin{definition}[Pseudorandom Function (PRF)]\label{def:prf}
A pseudorandom function (PRF) is a pair of algorithms $\PRF = (\prfgen, \prfF)$, where 
$\{\prfF_{\prfkey} : \zo^{\ell_1} \ra \zo^{\ell_2} \mid \prfkey \in \zo^{\secp}\}$ 
is a family of efficiently computable functions. Here $\ell_1$ and $\ell_2$ are polynomials in the security parameter $\secp$.  
The PRF satisfies the following pseudorandomness condition.
    \paragraph{Pseudorandomness} For any QPT distinguisher $\cA$, it holds that
    \begin{align}
    \left|
    \Pr[\cA^{\prfF_{\prfkey}(\cdot)}(1^{\secp}) \ra 1] -
    \Pr[\cA^{\cU_{\ell_1,\ell_2}(\cdot)}(1^{\secp}) \ra 1]
    \right| \leq \negl(\secp),
    \end{align}
    where $\prfkey \lrun \prfgen(1^{\secp})$, and 
    $\cU_{\ell_1,\ell_2}$ denotes a truly random function mapping 
    $\zo^{\ell_1}$ to $\zo^{\ell_2}$.
\end{definition}

\begin{definition}[Puncturable PRF]\label{def:pprf}
A puncturable PRF (PPRF) is a tuple of algorithms $\PuncPRF = (\prfgen, \prfF,\Puncture)$ where $\{\prfF_{\prfkey}: \zo^{\ell_1} \ra \zo{\ell_2} \mid \prfkey \in \zo^{\secp}\}$ is a PRF family and satisfies the following two conditions. Note that $\ell_1$ and $\ell_2$ are polynomials of $\secp$.
   \begin{description}
       \item[Punctured correctness:] For any polynomial-size set $S \subseteq \zo^{\ell_1}$ and any $x\in \zo^{\ell_1} \setminus S$, it holds that
       \begin{align}
       \Pr[\prfF_{\prfkey}(x) = \prfF_{\prfkey_{\notin S}}(x)  \mid \prfkey \lrun \prfgen(1^{\secp}),
       \prfkey_{\notin S} \lrun \Puncture(\prfkey,S)]=1.
       \end{align}
       \item[Pseudorandom at punctured point:] For any polynomial-size set $S \subseteq\zo^{\ell_1}$
       and any QPT distinguisher $\cA$, it holds that
       \begin{align}
       \vert
       \Pr[\cA(\prfF_{\prfkey_{\notin S}},\{\prfF_{\prfkey}(x_i)\}_{x_i\in S}) \ra 1] -
       \Pr[\cA(\prfF_{\prfkey_{\notin S}}, (\cU_{\ell_2})^{\abs{S}}) \ra 1]
       \vert \leq \negl(\secp),
       \end{align}
       where $\prfkey\lrun \prfgen(1^{\secp})$,
       $\prfkey_{\notin S} \lrun \Puncture(\prfkey,S)$ and $\cU_{\ell_2}$ denotes the uniform distribution over $\zo^{\ell_2}$.
   \end{description}
   If $S = \setbk{x^\ast}$ (i.e., puncturing a single point), we simply write $\prfF_{\ne x^\ast}(\cdot)$ instead of $\prfF_{\prfkey_{\notin S}}(\cdot)$ and consider $\prfF_{\ne x^\ast}$ as a keyed function.
\end{definition}

It is easy to see that the Goldwasser-Goldreich-Micali tree-based construction of PRFs (GGM PRF)~\cite{JACM:GolGolMic86} from OWF yield puncturable PRFs where the size of the punctured key grows polynomially with the size of the set $S$ being punctured~\cite{AC:BonWat13,PKC:BoyGolIva14,CCS:KPTZ13}. Thus, we have:
\begin{theorem}[\cite{JACM:GolGolMic86,AC:BonWat13,PKC:BoyGolIva14,CCS:KPTZ13}]\label{thm:pprf-owf} If OWFs exist, then for any polynomials $\ell_1(\secp)$ and $\ell_2(\secp)$, there exists a PPRF that maps $\ell_1$-bits to $\ell_2$-bits.
\end{theorem}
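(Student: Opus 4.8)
The plan is to instantiate $\PuncPRF$ with the tree-based Goldreich--Goldwasser--Micali (GGM) pseudorandom function together with its standard puncturable variant. First I would invoke the well-known fact that (post-quantum) one-way functions imply a (post-quantum) pseudorandom generator of arbitrary polynomial stretch; in particular, take a length-doubling PRG $G:\zo^\secp\to\zo^{2\secp}$, writing $G_0(s),G_1(s)$ for its two $\secp$-bit halves, and a PRG $H:\zo^\secp\to\zo^{\ell_2}$ (which exists for any polynomial $\ell_2=\ell_2(\secp)$). Set $n=\ell_1(\secp)$, let the key be $\prfkey=k\in\zo^\secp$, and for an input $x=x_1\cdots x_n$ define the node seeds $s_{x_1\cdots x_i}\seteq G_{x_i}(G_{x_{i-1}}(\cdots G_{x_1}(k)\cdots))$ (the root seed being $k$) and the output $\prfF_\prfkey(x)\seteq H(s_{x_1\cdots x_n})$. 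This is the usual depth-$n$ binary tree.

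Second I would argue that $\{\prfF_\prfkey\}$ is a PRF by the textbook level-by-level hybrid argument: in hybrid $i\in\{0,\dots,n\}$ the seeds of the depth-$i$ tree nodes actually visited by the distinguisher's queries are replaced by fresh uniform strings, and consecutive hybrids are indistinguishable via a reduction that, using the at-most-$Q=\poly(\secp)$ distinct depth-$i$ nodes, guesses which node is visited next and plants a $G$-challenge there (plus one final hybrid invoking $H$-security at the visited leaves). The overall loss is $O(nQ)$ times the PRG advantage, which is negligible against any QPT distinguisher since $n,Q=\poly(\secp)$ and $G,H$ are post-quantum secure. (If one wants security against superposition queries to the PRF oracle, the same construction works by appealing to the known quantum-query analysis of GGM; the non-oracular puncturing notions below need only classical-query security anyway.)

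Third I would define puncturing and verify punctured correctness and punctured pseudorandomness. To puncture at a single point $x^\ast=x^\ast_1\cdots x^\ast_n$, let $\Puncture(\prfkey,\{x^\ast\})$ output the $n$ off-path sibling seeds $\bigl(s_{x^\ast_1\cdots x^\ast_{j-1}(1-x^\ast_j)}\bigr)_{j\in[n]}$; punctured correctness is immediate since any $x\ne x^\ast$ first disagrees with $x^\ast$ at some coordinate $j^\ast$, so the leaf of $x$ lies in the released subtree rooted at $s_{x^\ast_1\cdots x^\ast_{j^\ast-1}(1-x^\ast_{j^\ast})}$, from which $\prfF_\prfkey(x)$ is recomputed by applying the remaining $G_0/G_1$ steps and then $H$. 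Pseudorandomness at the punctured point follows from a hybrid down the root-to-$x^\ast$ path that replaces the on-path seed $s_{x^\ast_1\cdots x^\ast_j}$ by a uniform string for $j=1,\dots,n$: step $j$ invokes $G$-security at the on-path depth-$(j-1)$ node, whose seed is already uniform (it is $k$ when $j=1$, or was made uniform in step $j-1$) and feeds only its two children, one of which is the released sibling; after all $n$ steps the leaf seed $s_{x^\ast_1\cdots x^\ast_n}$ is uniform and $\prfF_\prfkey(x^\ast)=H(s_{x^\ast_1\cdots x^\ast_n})$ is pseudorandom by $H$-security, while every released sibling stays consistent. For a polynomial-size set $S$, $\Puncture(\prfkey,S)$ outputs the seeds of the maximal subtrees disjoint from $S$ whose union is $\zo^n\setminus S$; there are at most $n\cdot|S|$ of them, giving a punctured key of size $\poly(\secp,|S|)$, punctured correctness exactly as before, and joint pseudorandomness of $\{\prfF_\prfkey(x_i)\}_{x_i\in S}$ by a hybrid over the $|S|$ points, each step reducing to single-point punctured pseudorandomness for one $x_m\in S$ and using that the $S$-punctured key as well as the real values at the remaining points of $S$ are all computable from the $\{x_m\}$-punctured key.

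The step I expect to require the most care is quantitative rather than conceptual: one must ensure that both the punctured-key length and every hybrid loss remain polynomial, which is exactly what forces the subtree-covering description of $\Puncture(\prfkey,S)$ (rather than naively iterating single-point punctures) and the careful counting of visited nodes in the GGM hybrids; and one must phrase PRG/PRF security in the post-quantum regime so that all reductions stay valid against QPT adversaries, which is automatic here since the paper's OWFs are post-quantum and the argument uses the PRG only as a black box. Everything else is the standard GGM analysis of \cite{JACM:GolGolMic86,AC:BonWat13,PKC:BoyGolIva14,CCS:KPTZ13}.
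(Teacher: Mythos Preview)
Your proposal is correct and follows exactly the approach the paper alludes to: the paper does not give its own proof but simply cites the GGM tree-based construction and the puncturing works \cite{JACM:GolGolMic86,AC:BonWat13,PKC:BoyGolIva14,CCS:KPTZ13}, noting in one sentence that the punctured key grows polynomially in $|S|$. Your sketch faithfully unpacks that citation, including the off-path-sibling puncturing and the subtree-covering for sets, so there is nothing to add.
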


\subsection{Indistinguishability Obfuscation}
In this section, we recall indistinguishability obfuscation (iO).
\begin{definition}[Indistinguishability Obfuscation]
    An indistinguishability obfuscation (iO) scheme for a class of circuits $\mathcal{C} = \{\mathcal{C}_\lambda\}_\lambda$ is an efficient algorithm $\io$ that satisfies the following.
    \paragraph{Correctness.} For all $\lambda \in \BN^+, C \in \mathcal{C}_\lambda$ and all inputs $x$ to $C$,
    $$\Pr[\Tilde{C}(x) = C(x): \Tilde{C} \samp \io(1^\lambda, C)] = 1.$$

    \paragraph{Security.} Let $\mathcal{B}$ be any QPT algorithm that outputs two circuits $C_0, C_1 \in \mathcal{C}$ of the same size, along with quantum auxiliary information $\regi{aux}$, such that $\Pr[\forall x ~ C_0(x)=C_1(x) : (C_0, C_1, \regi{aux}) \samp \mathcal{B}(1^\lambda)] \geq 1 - \negl(\lambda)$. Then, for any QPT adversary $\mathcal{A}$,
    \begin{align*}
      \bigg|&\Pr[\adve(\io(1^\lambda, C_0), \regi{aux}) = 1 :  (C_0, C_1, \regi{aux}) \samp \mathcal{B}(1^\lambda)] -\\ &\Pr[\adve(\io(1^\lambda, C_1), \regi{aux}) = 1 : (C_0, C_1, \regi{aux}) \samp \mathcal{B}(1^\lambda)]\bigg| \leq \negl(\lambda).  
    \end{align*}
\end{definition}

\subsection{Functional Encryption}
\begin{definition}[Functional Encryption]\label{def:FE}
An FE scheme $\FE$ is a tuple of PPT algorithms $(\Setup, \KG, \Enc, \Dec)$. 
%Below, let $\cX$, $\cY$, and $\cF$ be the plaintext, output, and function spaces of $\CPFE$, respectively.
\begin{description}
\item[$\Setup(1^\secp)\ra(\pk,\msk)$:] The setup algorithm takes a security parameter $1^\secp$ and outputs a public key $\pk$ and master secret key $\msk$.
\item[$\KG(\msk,f)\ra\fsk$:] The key generation algorithm $\KG$ takes a master secret key $\msk$ and a function $f$, and outputs a functional decryption key $\fsk$.

\item[$\Enc(\pk,x)\ra\ct$:] The encryption algorithm takes a public key $\pk$ and an input $x$, and outputs a ciphertext $\ct$.

\item[$\Dec(\fsk,\ct)\ra y$:] The decryption algorithm takes a functional decryption key $\fsk$ and a ciphertext $\ct$, and outputs $y$.

\item[Correctness:] We require 
we have that
\[
\Pr\left[
\Dec(\fsk, \ct) = f(x)
 \ \middle |
\begin{array}{rl}
 &(\pk,\msk) \lrun \Setup(1^\secp),\\
 & \fsk \lrun \KG(\msk,f), \\
 &\ct \lrun \Enc(\pk,x)
\end{array}
\right]=1 -\negl(\secp).
\]
\end{description}
\end{definition}

\begin{definition}[Adaptive Security for PKFE]\label{def:ad_ind_PKFE}
We formalize the experiment $\expb{\cA}{ada}{ind}(1^\secp,\coin)$ between an adversary $\cA$ and a challenger for PKFE scheme for $\cX,\cY$, and $\cF$ as follows:
        \begin{enumerate}
            \item The challenger runs $(\pk,\msk)\gets\Setup(1^\secp)$ and sends $\pk$ to $\cA$.
            \item $\cA$ sends arbitrary key queries. That is, $\cA$ sends function $f_{i}\in\cF$ to the challenger and the challenger responds with $\fsk_{i}\gets \KG(\msk,f_i)$ for the $i$-th query $f_{i}$.
            \item At some point, $\cA$ sends $(x_0,x_1)$ to the challenger. If $f_i(x_0)=f_i(x_1)$ for all $i$, the challenger generates a ciphertext $\ct^*\lrun\Enc(\pk,x_\coin)$. The challenger sends $\ct^*$ to $\cA$.
            \item Again, $\cA$ can sends function queries $f_i$ such that $f_i(x_0)=f_i(x_1)$.
            \item $\cA$ outputs a guess $\coin^\prime$ for $\coin$.
            \item The experiment outputs $\coin^\prime$.
        \end{enumerate}
        We say that $\PKFE$ is adaptively secure if, for any QPT $\cA$, it holds that
\begin{align}
\advb{\PKFE,\cA}{ada}{ind}(\secp) \seteq \abs{\Pr[\expb{\PKFE,\cA}{ada}{ind} (1^\secp,0) \ra 1] - \Pr[\expb{\PKFE,\cA}{ada}{ind} (1^\secp,1) \ra 1] }\leq \negl(\secp).
\end{align}
If $\qA$ can send only $q$ key queries in $\expb{\PKFE,\cA}{ada}{ind}$ where $q$ is a bounded polynomial, we say that $\PKFE$ is $q$-bounded adaptively secure.
\end{definition}

\subsection{Quantum Information.}

\begin{definition}[Quantum algorithms with classically determined outputs]\label{def:classically_determined}
We say that a quantum algorithm has classically determined outputs if it takes a classical input and classical randomness and outputs a pure state that is fully determined by its input and randomness.
\end{definition}

\begin{lemma}[\protect{\cite[Lemma~5 simplified]{zhandry2019record}}]\label{lem:qrom}
Let $H$ be a random function $H: \zo^{p(\lambda)} \to \zo^{q(\lambda)}$ where $k(\cdot), p(\cdot), q(\cdot)$ are polynomials. Consider a quantum algorithm $\adve$ that makes $k(\lambda)$ quantum-queries to $H$ and outputs $(x_1, y_1), \dots (x_{k+1}, y_{k+1})$. Then, $\Pr[y_1 = H(x_1) \wedge \dots \wedge y_{k+1} = H(x_{k+1})] \leq \negl(\lambda)$, where the probability is taken over the random choice of $H$ and the inherent randomness of the quantum algorithm $\adve$.
\end{lemma}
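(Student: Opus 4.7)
The plan is to establish the bound via Zhandry's compressed oracle (recording-query) technique. As a first step, I would replace the truly random function $H$ by the compressed standard oracle simulator; since the simulator is perfectly indistinguishable from a uniformly sampled $H$ under any number of quantum queries, the joint distribution of $\adve$'s output $(x_1, y_1), \dots, (x_{k+1}, y_{k+1})$ together with the values $H(x_i)$ (obtained by additional queries at the end) is unchanged. Working inside the compressed oracle model gives access to an internal database $D$ that records ``committed'' input–output pairs.

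Next, I would invoke the central structural fact of the framework: after $\adve$'s $k(\lambda)$ queries the database satisfies $|D| \leq k(\lambda)$ with certainty. Interpreting the lemma in the non-degenerate case of pairwise distinct $x_i$'s (since repeated coordinates can be deduplicated without affecting the event), the pigeonhole principle forces at least one index $i^{\star}$ with $x_{i^{\star}} \notin \mathrm{dom}(D)$. For such an index the compressed oracle assigns $H(x_{i^{\star}})$ a fresh uniform value in $\zo^{q(\lambda)}$, independent of $\adve$'s view and in particular of $y_{i^{\star}}$, so the conditional probability that $y_{i^{\star}} = H(x_{i^{\star}})$ is exactly $2^{-q(\lambda)}$.

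Combining these observations via a union bound over the at most $k(\lambda)+1$ possible choices of $i^{\star}$ yields
\[
\Pr[y_1 = H(x_1) \wedge \dots \wedge y_{k+1} = H(x_{k+1})] \leq (k(\lambda)+1)\cdot 2^{-q(\lambda)},
\]
which is negligible in $\lambda$ whenever $q(\lambda)$ is, say, super-logarithmic (and here $q$ is polynomial, so comfortably so).

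The step that requires the most care is justifying that ``evaluating $H(x_i)$ at the end'' can be faithfully realized as $k+1$ additional queries to the compressed oracle, and that the resulting event on the database corresponds exactly to the event we want to bound; both facts are standard consequences of Zhandry's perfect-indistinguishability theorem for the compressed versus ideal oracle, but they are the places where the oracle reformulation must be handled rigorously. Once that reduction is in place, the remainder is pure pigeonhole and union bounding, so the conceptual difficulty is confined to the oracle-simulation setup.
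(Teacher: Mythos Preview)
The paper does not prove this lemma; it simply imports it by citation from Zhandry's compressed-oracle paper. So the only ``paper proof'' to compare against is Zhandry's own argument, and your plan (compressed oracle, database-size bound, pigeonhole) is indeed the right skeleton.

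The gap is in the step where you assert that for $x_{i^\star}\notin\mathrm{dom}(D)$ the value $H(x_{i^\star})$ is a ``fresh uniform value, independent of $\cA$'s view,'' giving conditional probability ``exactly $2^{-q(\lambda)}$.'' This is classical lazy-sampling intuition that does not carry over directly: the database $D$ is in superposition and entangled with the adversary's output registers, so ``$x_{i^\star}\notin\mathrm{dom}(D)$'' is not a classical event one can condition on, and a compressed-oracle query at a $\bot$-entry is not exactly independent of the rest of the state after decompression. The actual technical content of Zhandry's Lemma~5 is precisely the inequality relating $\Pr[\forall i:\,y_i=H(x_i)]$ to $\Pr[\forall i:\,(x_i,y_i)\in D]$ up to an additive $\sqrt{(k{+}1)/2^{q}}$-type correction; once that is in hand, pigeonhole ($|D|\le k$ recorded entries versus $k{+}1$ distinct $x_i$'s) makes the database probability zero, and squaring recovers the $(k{+}1)\cdot 2^{-q(\lambda)}$ bound you state. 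So your final inequality is correct, but the real work lives in establishing that correction-term lemma, not in the perfect indistinguishability of the simulator, which is the part you flag as the delicate step.

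A minor side point: ``$q$ is polynomial'' does not by itself force $2^{-q(\lambda)}$ to be negligible (take $q\equiv 1$); one needs $q(\lambda)=\omega(\log\lambda)$, which is implicit in how the paper actually uses the lemma.
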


\begin{lemma}[\cite{C:BonZha13}]\label{lem:midmeasure}
Let $A$ be a quantum algorithm, and let $Pr[x]$ be the probability that $A$ outputs x. Let $A'$ be another quantum algorithm obtained from $A$ by pausing $A$ at an arbitrary stage of execution, performing a partial measurement that obtains one of $k$ outcomes, and then resuming $A$. Let $Pr'[x]$ be the probability $A'$ outputs $x$. Then $Pr'[x] \geq Pr[x]/k$.
\end{lemma}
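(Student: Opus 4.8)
The plan is to reduce the statement to a short inner-product inequality: intuitively, inserting a $k$-outcome measurement splits the computation into $k$ branches whose amplitudes toward the output $x$ must sum (coherently) to the original amplitude toward $x$, so by a Cauchy--Schwarz bound the branches together must retain at least a $1/k$ fraction of the original success probability.

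First I would put $A$ into a normal form. By deferring all intermediate measurements, I may assume $A$ applies a unitary to produce the pure state $\ket{\psi}$ of its registers at the pause point, then applies a further unitary $U$, and finally performs a computational-basis measurement $\{\Pi_x\}_x$, so that $\Pr[x]=\|\Pi_x U\ket{\psi}\|^2$. I would likewise take the inserted $k$-outcome measurement to be projective, given by orthogonal projectors $P_1,\dots,P_k$ with $P_iP_j=\delta_{ij}P_i$ and $\sum_{j=1}^{k}P_j=I$; this is automatic for a partial measurement of a subsystem, and a general $k$-outcome measurement reduces to it by appending a $k$-dimensional ancilla (prepared in $\ket{0}$) to $A$'s otherwise-unused workspace, which does not change $\Pr[x]$.

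In this normal form, running $A'$ goes as follows: the inserted measurement returns outcome $j$ with probability $\|P_j\ket{\psi}\|^2$, leaves the normalized state $P_j\ket{\psi}/\|P_j\ket{\psi}\|$ conditioned on $j$, and then $A$ resumes with $U$ followed by $\{\Pi_x\}_x$. Hence
\[
\Pr'[x]\;=\;\sum_{j=1}^{k}\|P_j\ket{\psi}\|^2\cdot\frac{\|\Pi_x U P_j\ket{\psi}\|^2}{\|P_j\ket{\psi}\|^2}\;=\;\sum_{j=1}^{k}\|\Pi_x U P_j\ket{\psi}\|^2,
\]
where outcomes with $P_j\ket{\psi}=0$ contribute nothing. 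Writing $v_j:=\Pi_x U P_j\ket{\psi}$ and using $\sum_{j}P_j=I$, we get $\Pi_x U\ket{\psi}=\sum_{j=1}^{k}v_j$, so by Cauchy--Schwarz $\Pr[x]=\|\sum_{j=1}^{k}v_j\|^2\le k\sum_{j=1}^{k}\|v_j\|^2=k\cdot\Pr'[x]$, i.e.\ $\Pr'[x]\ge\Pr[x]/k$.

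The estimate itself is a one-liner, so the only step that needs genuine care is the normal-form reduction: arguing that we may assume $A$ is unitary up to a final computational-basis measurement and that the inserted measurement is projective with exactly $k$ outcomes, all while leaving the output distribution of $A$ (run without the extra measurement) unchanged. Everything after that is the Cauchy--Schwarz bound above.
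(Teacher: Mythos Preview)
The paper does not give its own proof of this lemma; it simply cites it from Boneh--Zhandry~\cite{C:BonZha13} as a known result in the preliminaries. Your argument is correct and is essentially the standard proof: purify to make $A$ unitary up to a final measurement, take the inserted measurement to be projective (via Naimark if necessary), expand $\Pr'[x]=\sum_j\|\Pi_x U P_j\ket{\psi}\|^2$, and apply Cauchy--Schwarz to $\|\sum_j v_j\|^2\le k\sum_j\|v_j\|^2$.
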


\subsubsection{Testing Quantum Adversaries}
\begin{definition}[Shift Distance]\label{def:shift_distance}
For two distributions $D_0,D_1$, the shift distance with parameter $\epsilon$, denoted by $\shiftdis{\epsilon}(D_0,D_1)$, is the smallest quantity $\delta$ such that for all $x \in \R$:
\begin{align}
\Pr[D_0\le x] & \le \Pr[D_1\le x + \epsilon] + \delta,&& \Pr[D_0\ge x]  \le \Pr[D_1\ge x - \epsilon] + \delta,\\
\Pr[D_1\le x] & \le \Pr[D_0\le x + \epsilon] + \delta,&& \Pr[D_1\ge x]  \le \Pr[D_0\ge x - \epsilon] + \delta.
\end{align}
For two real-valued measurements $\cM$ and $\cN$ over the same quantum system, the shift distance between $\cM$ and $\cN$ with parameter $\epsilon$ is
\[
\shiftdis{\epsilon}(\cM,\cN)\seteq \sup_{\ket{\psi}}\shiftdis{\epsilon}(\cM(\ket{\psi}),\cN(\ket{\psi})).
\]
\end{definition}
\begin{definition}[Projective Implementation~\cite{TCC:Zhandry20}]\label{def:projective_implementation}
Let:
\begin{itemize}
 \item $\cD$ be a finite set of distributions over an index set $\cI$.
 \item $\cP=\setbk{\mat{P}_i}_{i\in \cI}$ be a POVM.
 \item $\cE = \setbk{\mat{E}_D}_{D\in\cD}$ be a projective measurement with index set $\cD$.
 \end{itemize}
 % We consider a binary outcome POVM $\cP=\setbk{\mat{P}_i}_{i\in \cI}$ where $\mat{P}_i=\sum_{D\in\cD}\mat{E}_D \Pr[D=i]$.
 We consider the following measurement procedure.
 \begin{enumerate}
 \item Measure under the projective measurement $\cE$ and obtain a distribution $D$.
 \item Output a random sample from the distribution $D$.
 \end{enumerate}
 We say $\cE$ is the projective implementation of $\cP$, denoted by $\projimp(\cP)$, if the measurement process above is equivalent to $\cP$.
\end{definition}

\begin{theorem}[{\cite[Lemma 1]{TCC:Zhandry20}}]\label{lem:commutative_projective_implementation}
Any binary outcome POVM $\cP=(\mat{P},\mat{I}-\mat{P})$ has a unique projective implementation $\projimp(\cP)$.
\end{theorem}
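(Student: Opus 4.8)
The plan is to derive both existence and uniqueness directly from the spectral theorem applied to $\mat{P}$, exploiting the fact that a distribution over the two-element index set $\cI=\{0,1\}$ of a binary POVM is completely specified by the single number $p=\Pr[\text{outcome}=1]$. This turns the projective-implementation condition of \Cref{def:projective_implementation} into an operator equation that is literally a spectral decomposition of $\mat{P}$.

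\textbf{Existence.} First I would use that $\cP=(\mat{P},\mat{I}-\mat{P})$ being a POVM forces $\mat{0}\preceq\mat{P}\preceq\mat{I}$, so by the spectral theorem $\mat{P}=\sum_{j}\lambda_j\mat{\Pi}_j$ for the distinct eigenvalues $\lambda_j\in[0,1]$ of $\mat{P}$ and the orthogonal projectors $\mat{\Pi}_j$ onto the corresponding eigenspaces; these satisfy $\mat{\Pi}_j\mat{\Pi}_{j'}=\delta_{jj'}\mat{\Pi}_j$ and $\sum_j\mat{\Pi}_j=\mat{I}$, hence form a projective measurement. I then define $\cE=\{\mat{E}_D\}_{D\in\cD}$ by setting $\cD=\{\Ber(\lambda_j)\}_j$, where $\Ber(\lambda_j)$ is the distribution on $\{0,1\}$ that outputs $1$ with probability $\lambda_j$, and $\mat{E}_{\Ber(\lambda_j)}=\mat{\Pi}_j$. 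To check this is a projective implementation, I observe that for any state $\ket{\psi}$, measuring with $\cE$ yields the label $\Ber(\lambda_j)$ with probability $\bra{\psi}\mat{\Pi}_j\ket{\psi}$, and sampling from $\Ber(\lambda_j)$ then outputs $1$ with probability $\lambda_j$; so the two-stage procedure outputs $1$ with probability $\sum_j\lambda_j\bra{\psi}\mat{\Pi}_j\ket{\psi}=\bra{\psi}\mat{P}\ket{\psi}$ and outputs $0$ with probability $\bra{\psi}(\mat{I}-\mat{P})\ket{\psi}$, matching $\cP$.

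\textbf{Uniqueness.} Let $\cE'=\{\mat{E}'_{D}\}_{D\in\cD'}$ be any projective implementation of $\cP$. Since each $D\in\cD'$ is a distribution over $\{0,1\}$, I identify it with its parameter $q_D=\Pr_{b\sim D}[b=1]\in[0,1]$, and may assume these parameters are pairwise distinct (otherwise merge equal labels and add the corresponding projectors, which stays projective). Equivalence with $\cP$ says that for every $\ket{\psi}$ the output-$1$ probabilities agree, i.e.\ $\sum_{D\in\cD'}q_D\bra{\psi}\mat{E}'_{D}\ket{\psi}=\bra{\psi}\mat{P}\ket{\psi}$; since a Hermitian operator is determined by its quadratic form this yields the operator identity $\sum_{D\in\cD'}q_D\mat{E}'_D=\mat{P}$. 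The $\mat{E}'_D$ are mutually orthogonal projectors summing to $\mat{I}$ with pairwise distinct coefficients, so after discarding any $D$ with $\mat{E}'_D=\mat{0}$ this is precisely a spectral decomposition of $\mat{P}$; by uniqueness of the spectral decomposition the set $\{q_D\}$ equals $\{\lambda_j\}$ and $\mat{E}'_{\Ber(\lambda_j)}=\mat{\Pi}_j$, so $\cE'=\cE$.

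\textbf{Where the difficulty lies.} There is essentially no obstacle: the statement is a repackaging of the spectral theorem, and the entire argument reduces to the observation that for a \emph{binary} POVM a ``distribution over the index set'' is just a real number in $[0,1]$. The only things that need care are the bookkeeping items — checking that every $\Ber(\lambda_j)$ is a legitimate probability distribution (which is exactly what $\mat{0}\preceq\mat{P}\preceq\mat{I}$ gives), and normalizing $\cD'$ by merging duplicate labels and dropping zero projectors before invoking uniqueness of the spectral decomposition. It is worth noting that the binary-outcome case is genuinely special: for general (non-binary) POVMs a projective implementation need not exist, so the proof really does use that $\cP$ has only two outcomes.
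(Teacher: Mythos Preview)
The paper does not give its own proof of this statement; it is simply quoted from \cite[Lemma~1]{TCC:Zhandry20}. Your argument via the spectral theorem is correct and is exactly the standard proof: for a binary POVM, a distribution over the index set is a single number in $[0,1]$, so the projective-implementation condition collapses to the operator identity $\sum_D q_D\,\mat{E}'_D=\mat{P}$, and both existence and uniqueness follow from existence and uniqueness of the spectral decomposition of $\mat{P}$. Your bookkeeping (merging duplicate labels, discarding zero projectors) is the right way to handle the trivial ambiguities.
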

\begin{definition}[Mixture of Projective Measurement~\cite{TCC:Zhandry20}]\label{def:mixture_projective_measurement}
Let $D: \cR \ra \cI$ where $\cR$ and $\cI$ are some sets.
Let $\setbk{(\mat{P}_i,\mat{Q}_i)}_{\in \cI}$ be a collection of binary projective measurement.
The mixture of projective measurements associated to $\cR$, $\cI$, $D$, and $\setbk{(\mat{P}_i,\mat{Q}_i)}_{\in \cI}$ is the binary POVM $\cP_D =(\mat{P}_D,\mat{Q}_D)$ defined as follows.
\begin{align}
& \mat{P}_D = \sum_{i\in\cI}\Pr[i \chosen D(R)]\mat{P}_i && \mat{Q}_D = \sum_{i\in\cI}\Pr[i \chosen D(R)]\mat{Q}_i,
\end{align}
where $R$ is uniformly distributed in $\cR$.
\end{definition}

\begin{definition}[Threshold Implementation~\cite{TCC:Zhandry20,C:ALLZZ21}]\label{def:threshold_implementation}
Let
\begin{itemize}
\item $\cP=(\mat{P},\mat{I}-\mat{P})$ be a binary POVM
\item $\cE$ be the projective measurement in the first step of the measurement procedure in~\cref{def:projective_implementation}.
\item  $t>0$.
\end{itemize}
A threshold implementation of $\cP$, denoted by $\TI_t(\cP)$, is the following measurement procedure.
\begin{itemize}
\item Apply $\cE$ to a quantum state and obtain $(p,1-p)$ as an outcome.
\item Output $1$ if $p\ge t$, and $0$ otherwise.
\end{itemize}
For any quantum state $\rho$, we denote by $\Tr[\TI_{t}(\cP)\rho]$ the probability that the threshold implementation applied to $\rho$ outputs $1$ as Coladangelo et al. did~\cite{C:CLLZ21}. This means that whenever $\TI_t(\cP)$ appears inside a trace $\Tr$, we treat $\TI_t(\cP)$ as a projection onto the $1$ outcome.
\end{definition}

\begin{lemma}[\cite{C:ALLZZ21}]
Any binary POVM $\cP=(\mat{P},\mat{I}-\mat{P})$ has a threshold implementation $\TI_t(\cP)$ for any $t$.
\end{lemma}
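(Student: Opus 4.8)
The claim is that every binary POVM $\cP=(\mat{P},\mat{I}-\mat{P})$ admits a threshold implementation $\TI_t(\cP)$ for any $t$. The plan is to simply \emph{construct} the threshold implementation explicitly from the projective implementation guaranteed by \Cref{lem:commutative_projective_implementation}, and check that the construction satisfies the defining properties in \Cref{def:threshold_implementation}. So the proof is essentially an existence-by-construction argument rather than something requiring a clever idea; the only thing to verify is that the required sub-objects exist and that the described two-step procedure is a well-defined measurement.

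First I would invoke \Cref{lem:commutative_projective_implementation}: since $\cP$ is a binary-outcome POVM, it has a (unique) projective implementation $\cE \seteq \projimp(\cP)$, which is a projective measurement with outcome set contained in $[0,1]$ — equivalently a projective measurement $\cE=\setbk{\mat{E}_p}_p$ whose outcome $p$ records the ``$\mat{P}$-weight'' of the state, in the sense that first applying $\cE$ to obtain $p$ and then outputting $1$ with probability $p$ (and $0$ with probability $1-p$) reproduces exactly the statistics of $\cP$. This is precisely the measurement procedure of \Cref{def:projective_implementation}, and $\cE$ is the measurement in its first step, which is the object referenced in \Cref{def:threshold_implementation}. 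Next, for the given threshold $t>0$, I would define $\TI_t(\cP)$ to be the measurement: apply $\cE$, read off the outcome $p$, and output $1$ if $p \ge t$ and $0$ otherwise. Since $\cE$ is a genuine projective measurement and ``$p \ge t$'' is a fixed (classical) predicate on its outcome, post-composing $\cE$ with this predicate is a legitimate two-outcome measurement — concretely it is the projective measurement with $1$-outcome projector $\sum_{p \ge t}\mat{E}_p$ and $0$-outcome projector $\sum_{p < t}\mat{E}_p$ — so $\TI_t(\cP)$ is well-defined. Because $t$ was arbitrary in $(0,\infty)$ (and for $t>1$ the procedure is the trivial ``always output $0$'' measurement, while for $t\le 0$ analogous degeneracies occur), the existence holds for any $t$, which is the content of the lemma.

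The only mild subtlety — and the closest thing to an obstacle — is making sure the outcome set of $\cE$ is discrete (or at least that $\setbk{p : p \ge t}$ is a measurable/admissible outcome event) so that the coarse-graining $\sum_{p\ge t}\mat{E}_p$ makes sense as a projector. This is handled by the standard observation that $\projimp(\cP)$ for a binary POVM on a finite-dimensional space has only finitely many distinct outcomes (the distinct eigenvalues of $\mat{P}$), so the sum is finite and the coarse-grained operators are honest projections summing to the identity. Beyond that, there is nothing to prove: the lemma is an immediate corollary of the existence of $\projimp(\cP)$ from \Cref{lem:commutative_projective_implementation} together with the trivial fact that one may coarse-grain the outcomes of any projective measurement. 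I would therefore keep the write-up to a few lines, citing \Cref{lem:commutative_projective_implementation} for $\projimp(\cP)$ and \Cref{def:threshold_implementation} for the definition, and noting the finiteness of the outcome set to justify the coarse-graining.
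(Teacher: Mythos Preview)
The paper does not give its own proof of this lemma; it is simply cited from \cite{C:ALLZZ21} without argument. Your proposal is correct and is exactly the standard argument: the existence of $\TI_t(\cP)$ is immediate from \Cref{lem:commutative_projective_implementation} (existence of $\projimp(\cP)$) together with the fact that \Cref{def:threshold_implementation} is already a constructive definition, so there is nothing beyond coarse-graining the outcomes of $\cE$ at the threshold $t$.
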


\begin{theorem}[Properties of Threshold Implementation]\label{thm:TI_repeat}
\begin{enumerate}
\item Applying $\TI_\gamma(\cP_\sfD)$ on $\rho$ outputs a binary outcome $b^\prime$ and a collapsed program $\rho^\prime$.
\item If $b^\prime=1$, $\rho^\prime$ has success probability at least $\gamma$ with respect to $\sfD$. Furthermore, applying $\TI_\gamma(\cP_\sfD)$ on $\rho^\prime$ alwasy outputs $1$.
\item If $b^\prime=0$, $\rho^\prime$ has success probability less than $\gamma$ with respect to $\sfD$. Furthermore, applying $\TI_\gamma(\cP_\sfD)$ on $\rho^\prime$ alwasy outputs $0$.
\end{enumerate}
\end{theorem}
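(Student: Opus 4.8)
The plan is to unwind the definitions and reduce everything to the observation that the projective implementation of a binary POVM is, up to relabeling its outcomes, its spectral measurement. Write $\cE = \projimp(\cP_\sfD) = \setbk{\mathbf{E}_p}_p$ for the projective measurement supplied by \Cref{lem:commutative_projective_implementation}, where $p$ ranges over the (finitely many) eigenvalues of $\mathbf{P}_\sfD$. The defining equivalence of the projective implementation---measure $\cE$ to obtain $p$, then output $1$ with probability $p$, and the result reproduces $\cP_\sfD$---forces $\mathbf{P}_\sfD = \sum_p p\,\mathbf{E}_p$ with the $\mathbf{E}_p$ mutually orthogonal projectors summing to $\mathbf{I}$. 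Hence, under the paper's convention that $\TI_\gamma(\cP_\sfD)$ denotes the projector onto the ``$1$'' outcome, we have $\TI_\gamma(\cP_\sfD) = \Pi_{\ge\gamma} \seteq \sum_{p\ge\gamma}\mathbf{E}_p$ and $\mathbf{I}-\TI_\gamma(\cP_\sfD) = \Pi_{<\gamma}\seteq\sum_{p<\gamma}\mathbf{E}_p$, i.e.\ $\TI_\gamma(\cP_\sfD)$ is a genuine binary \emph{projective} measurement. I would isolate this identification as the first step, since it is the only place Zhandry's machinery enters; everything afterwards is linear algebra on orthogonal projectors.

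Given the identification, Item~1 is immediate: executing the procedure of \Cref{def:threshold_implementation} amounts to measuring $(\Pi_{\ge\gamma},\Pi_{<\gamma})$ (possibly with an additional fine-grained collapse onto some $\mathbf{E}_p$, which only shrinks the support further), producing a bit $b'$ and a collapsed program $\rho'$ whose support lies in $\mathrm{Im}(\Pi_{\ge\gamma})$ when $b'=1$ and in $\mathrm{Im}(\Pi_{<\gamma})$ when $b'=0$. For Item~2, $b'=1$ forces $\Tr[\mathbf{E}_p\rho']=0$ for every $p<\gamma$, so the success probability of $\rho'$ with respect to $\sfD$ computes to
\[
\Tr[\mathbf{P}_\sfD\,\rho'] \;=\; \sum_p p\,\Tr[\mathbf{E}_p\rho'] \;=\; \sum_{p\ge\gamma} p\,\Tr[\mathbf{E}_p\rho'] \;\ge\; \gamma\sum_{p\ge\gamma}\Tr[\mathbf{E}_p\rho'] \;=\; \gamma\,\Tr[\Pi_{\ge\gamma}\rho'] \;=\; \gamma ,
\]
while $\Tr[\TI_\gamma(\cP_\sfD)\rho'] = \Tr[\Pi_{\ge\gamma}\rho'] = 1$, so re-applying $\TI_\gamma(\cP_\sfD)$ outputs $1$ with certainty. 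Item~3 is the mirror image: $b'=0$ forces $\Tr[\mathbf{E}_p\rho']=0$ for every $p\ge\gamma$, hence $\Tr[\mathbf{P}_\sfD\rho'] = \sum_{p<\gamma}p\,\Tr[\mathbf{E}_p\rho'] < \gamma$ (a convex combination, with weights summing to $1$, of values each strictly below $\gamma$), and $\Tr[\Pi_{\ge\gamma}\rho']=0$ yields that re-applying $\TI_\gamma(\cP_\sfD)$ outputs $0$ with certainty.

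The step I expect to need the most care is the first one: pinning down that the (unique, by \Cref{lem:commutative_projective_implementation}) projective implementation of the binary POVM $\cP_\sfD$ is precisely its spectral measurement, so that $\mathbf{P}_\sfD = \sum_p p\,\mathbf{E}_p$ with orthogonal $\mathbf{E}_p$; this is exactly the content of Zhandry's projective-implementation formalism, but it is worth spelling out. Once it is in hand there is no quantitative loss anywhere, and in particular the fact that $\cP_\sfD$ arose as a \emph{mixture} of the measurements $\setbk{(\mathbf{P}_i,\mathbf{Q}_i)}_i$ over the internal randomness of the test distribution $\sfD$ plays no role beyond guaranteeing that $\cP_\sfD$ is a legitimate binary POVM to which \Cref{def:projective_implementation,def:threshold_implementation} apply. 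I would also record the slightly stronger by-products that, conditioned on $b'=1$, applying $\TI_\gamma(\cP_\sfD)$ to $\rho'$ leaves it invariant (under the trace convention), and symmetrically for $b'=0$.
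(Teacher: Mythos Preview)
The paper does not actually give a proof of this theorem: it is stated in the preliminaries (Section~2.3.1) as a known property of Zhandry's threshold-implementation formalism, alongside the other cited results from \cite{TCC:Zhandry20,C:ALLZZ21}, and no argument is supplied. Your proof is correct and is exactly the standard derivation: identify $\projimp(\cP_\sfD)$ with the spectral measurement of $\mathbf{P}_\sfD$ (this is the content of \cite[Lemma~1]{TCC:Zhandry20}), recognize $\TI_\gamma(\cP_\sfD)$ as the spectral projector $\Pi_{\ge\gamma}$, and read off the three items from orthogonality of the eigenprojectors. So there is nothing to compare against---you have supplied the proof the paper omits, and your emphasis on the spectral identification as the only nontrivial step is well placed.
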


\begin{theorem}[\cite{TCC:Zhandry20,C:ALLZZ21}]\label{thm:ind_distribution_TI}
Let
\begin{itemize}
\item $t>0$
\item $\cP$ be a collection of projective measurements indexed by some sets
\item $\rho$ be an efficiently constructible mixed state
\item $D_0$ and $D_1$ be two efficienctly samplable and computationally indistinguishable distributions over $\cI$.
\end{itemize}
For any inverse polynomial $\epsilon$, there exists a negligible function $\delta$ such that
\[
\Tr[\TI_{t -\epsilon}(\cP_{D_1})\rho] \ge \Tr[\TI_t(\cP_{D_0})\rho] - \delta,
\]
where $\cP_{D_\coin}$ is the mixture of projective measurements associated to $\cP$, $D_\coin$, and $\coin \in \zo$.
\end{theorem}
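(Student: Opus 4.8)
The plan is to deduce the inequality from Zhandry's efficient approximation of projective and threshold implementations together with the computational indistinguishability of $D_0$ and $D_1$; this follows the argument of \cite{TCC:Zhandry20,C:ALLZZ21}. Recall that for any binary POVM $\mathcal{Q}$ implementable by an efficient quantum circuit, and any inverse polynomial $\epsilon'$ and any $\delta'>0$, there is an \emph{approximate projective implementation} $\API_{\mathcal{Q},\epsilon',\delta'}$: an efficient real-valued measurement whose running time is polynomial in the implementation time of $\mathcal{Q}$, in $1/\epsilon'$, and in $\log(1/\delta')$, satisfying $\shiftdis{\epsilon'}(\API_{\mathcal{Q},\epsilon',\delta'},\projimp(\mathcal{Q}))\le\delta'$; thresholding its outcome at $s$ gives the \emph{approximate threshold implementation} $\ATI_{s,\mathcal{Q},\epsilon',\delta'}$. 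I would apply this with $\mathcal{Q}=\cP_{D_\coin}$, using (as is implicit in the statement) that each $\cP_i$ can be applied efficiently given $i$. The point to keep in mind is that the mixture $\cP_{D_\coin}$ is implementable by an efficient circuit whose entire dependence on $\coin$ is through a bounded polynomial number $N=N(\secp)$ of classical samples of $D_\coin$ (each used to pick and run the corresponding $\cP_i$); hence $\API_{\cP_{D_\coin},\epsilon',\delta'}$, and thus $\ATI_{s,\cP_{D_\coin},\epsilon',\delta'}$, are efficient and depend on $\coin$ only through $N$ independent samples of $D_\coin$.

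First I would fix the inverse polynomial $\epsilon$, set $\epsilon':=\epsilon/3$ and $\delta':=2^{-\secp}$ (any negligible $\delta'$ works, since the running time only involves $\log(1/\delta')$), and abbreviate $\ATI^{(\coin)}_{s}:=\ATI_{s,\cP_{D_\coin},\epsilon',\delta'}$. Rewriting the shift-distance bound in terms of the events $\{\,\cdot\ge s\,\}$, and recalling $\Tr[\TI_s(\cP_{D_\coin})\sigma]=\Pr[\projimp(\cP_{D_\coin})(\sigma)\ge s]$ and $\Tr[\ATI^{(\coin)}_{s}\sigma]=\Pr[\API_{\cP_{D_\coin},\epsilon',\delta'}(\sigma)\ge s]$, gives, for every efficiently constructible $\sigma$ and every $s$,
\[
\Tr[\TI_s(\cP_{D_\coin})\sigma]\ \le\ \Tr[\ATI^{(\coin)}_{s-\epsilon'}\sigma]+\delta',
\qquad
\Tr[\ATI^{(\coin)}_{s}\sigma]\ \le\ \Tr[\TI_{s-\epsilon'}(\cP_{D_\coin})\sigma]+\delta'.
\]

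Then I would show $\abs{\Tr[\ATI^{(0)}_{t-\epsilon'}\rho]-\Tr[\ATI^{(1)}_{t-\epsilon'}\rho]}\le\negl(\secp)$: since $\rho$ is efficiently constructible and $\ATI^{(0)}_{t-\epsilon'}$ and $\ATI^{(1)}_{t-\epsilon'}$ are the \emph{same} efficient circuit except that one draws its $N$ internal samples from $D_0$ and the other from $D_1$, a non-negligible gap would yield an efficient algorithm distinguishing $D_0^{\otimes N}$ from $D_1^{\otimes N}$ (construct $\rho$, run the approximate threshold implementation using the received samples in place of fresh ones, output the threshold bit); a standard hybrid over the $N$ coordinates — legitimate because $D_0,D_1$ are efficiently samplable, so the reduction can fill in the remaining coordinates itself — reduces this to single-sample indistinguishability. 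Finally I would chain everything with $\sigma=\rho$:
\begin{align*}
\Tr[\TI_t(\cP_{D_0})\rho]
&\ \le\ \Tr[\ATI^{(0)}_{t-\epsilon'}\rho]+\delta'
\ \le\ \Tr[\ATI^{(1)}_{t-\epsilon'}\rho]+\delta'+\negl(\secp)\\
&\ \le\ \Tr[\TI_{t-2\epsilon'}(\cP_{D_1})\rho]+2\delta'+\negl(\secp)
\ \le\ \Tr[\TI_{t-\epsilon}(\cP_{D_1})\rho]+2\delta'+\negl(\secp),
\end{align*}
where the last step uses that $s\mapsto\Tr[\TI_s(\cP_{D_1})\rho]$ is non-increasing (a smaller threshold only enlarges the accepting set) and $t-\epsilon\le t-2\epsilon/3$. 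Setting $\delta(\secp):=2\delta'+\negl(\secp)$, which is negligible, gives $\Tr[\TI_{t-\epsilon}(\cP_{D_1})\rho]\ge\Tr[\TI_t(\cP_{D_0})\rho]-\delta(\secp)$, as claimed.

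I expect the main obstacle to be the step underpinning the reduction: justifying that the approximate threshold implementation of the mixture $\cP_{D_\coin}$ can be realized so that its only dependence on $\coin$ is through a fixed polynomial number of black-box classical samples of $D_\coin$ — rather than through some coherent use of a sampler for $D_\coin$ — so that a genuine distinguisher can be extracted and the hybrid argument applies. This is exactly what Zhandry's approximate-implementation machinery supplies; everything else (the directions in the shift-distance inequalities, the factor-of-three split of $\epsilon$, and the monotonicity of the threshold implementation in its threshold) is routine bookkeeping.
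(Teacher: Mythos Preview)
The paper does not give its own proof of this theorem: it is stated in the preliminaries with citations to \cite{TCC:Zhandry20,C:ALLZZ21} and used as a black box. Your reconstruction is correct and is precisely the argument those works use---pass from $\TI$ to the efficient $\ATI$ via the shift-distance bounds, invoke computational indistinguishability of $D_0,D_1$ (through the sample-only access that Zhandry's approximate implementation affords) to hop between $\ATI^{(0)}$ and $\ATI^{(1)}$, and then pass back to $\TI$, losing a constant number of $\epsilon'$-shifts which are absorbed into~$\epsilon$.
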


 \begin{lemma}[\cite{C:ALLZZ21}]\label{lem:ATI_property}
 For any $\epsilon,\delta,t \in (0,1)$, any collection of projective measurements $\cP=\setbk{(\mat{P}_i,\mat{I}-\mat{P}_i)}_{i\in\cI}$ where $\cI$ is some index set, and any distribution $D$ over $\cI$, there exists a measurement procedure $\ATI_{\cP,D,t}^{\epsilon,\delta}$ that satisfies the following.
 \begin{itemize}
 \item $\ATI_{\cP,D,t}^{\epsilon,\delta}$ implements a binary outcome measurement.
 \item For all quantum state $\rho$,
 \begin{itemize}
 \item $\Tr[\ATI_{\cP,D,t-\epsilon}^{\epsilon,\delta}\rho] \ge \Tr[\TI_{t}(\cP_D)\rho] - \delta$ and
 \item $\Tr[\TI_{t-\epsilon}(\cP_D)\rho] \ge \Tr[\ATI_{\cP,D,t}^{\epsilon,\delta}\rho] - \delta$.
 \end{itemize}
 For simplicity, we denote the probability of the measurement outputting $1$ on $\rho$ by $\Tr[\ATI_{\cP,D,t}^{\epsilon,\delta} \rho]$.
 \item For all qunatum state $\rho$, let $\rho^\prime$ be the post-measurement state after applying $\ATI_{\cP,D,t}^{\epsilon,\delta}$ on $\rho$, and obtaining outcome $1$. Then, it holds $\Tr[\TI_{t -2\epsilon}(\cP_D)\rho^\prime] \ge 1 - 2\delta$.
 \item The expected running time is $T_{\cP,D}\cdot \poly(1/\epsilon,1/\log{\delta})$, where $T_{\cP,D}$ is the combined running time of sampling according to $D$, of mapping $i$ to $(\mat{P}_i,\mat{I}-\mat{P}_i)$, and of implementing the projective measurement $(\mat{P}_i,\mat{I}-\mat{P}_i)$.
 \end{itemize}
 \end{lemma}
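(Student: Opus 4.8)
The plan is to follow the Jordan's-lemma-plus-amplification blueprint of Zhandry~\cite{TCC:Zhandry20}, which gives an approximate \emph{projective} implementation $\API$ of any binary POVM, and then obtain $\ATI$ simply by thresholding $\API$, as in~\cite{C:ALLZZ21}. \textbf{First} I would strip the mixture structure off $\cP_D=(\mat{P}_D,\mat{I}-\mat{P}_D)$ by purifying the sampling of $D$: letting $\cR$ be the randomness space of the $D$-sampler and $i=D(r)$, define the isometry $V:\ket{\psi}\mapsto\ket{\psi}\otimes\sum_{r\in\cR}\sqrt{\Pr[r]}\ket{r}$ and the projector $\mat{\Pi}\seteq\sum_{r}\mat{P}_{D(r)}\otimes\ket{r}\bra{r}$. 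Then $V^\dagger\mat{\Pi}V=\sum_i\Pr[i\chosen D(R)]\mat{P}_i=\mat{P}_D$, so $\cP_D$ is the compression through $V$ of the \emph{projective} measurement $\mat{\Pi}$ and is governed entirely by the two projectors $\mat{\Pi}$ and $\mat{\Pi}_0\seteq VV^\dagger$. Jordan's lemma decomposes the ambient space into $\mat{\Pi},\mat{\Pi}_0$-invariant blocks of dimension $\le 2$; in a two-dimensional block with Jordan angle $\theta$, $\mat{P}_D$ acts on $\mathrm{Im}(V)$ as multiplication by $\cos^2\theta$. Reading off which block we are in and outputting $\cos^2\theta$ is precisely the unique projective implementation $\projimp(\cP_D)=\cE$ of \Cref{lem:commutative_projective_implementation}, so $\Tr[\TI_t(\cP_D)\rho]=\Pr[\cE(V\rho V^\dagger)\ge t]$.

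\textbf{Second}, I would estimate the Jordan angle efficiently by repeatedly alternating the projective measurements $\{\mat{\Pi},\mat{I}-\mat{\Pi}\}$ and $\{\mat{\Pi}_0,\mat{I}-\mat{\Pi}_0\}$ à la Marriott--Watrous; within a Jordan block the outcome statistics concentrate around $\cos^2\theta$, so with $\poly(1/\epsilon)$ rounds and $\poly(\log(1/\delta))$-fold median amplification one obtains the procedure $\API_{\cP,D}^{\epsilon,\delta}$, with exactly the properties established for $\API$ in~\cite{TCC:Zhandry20}: (i) each round costs $O(T_{\cP,D})$, giving the stated running time; (ii) $\shiftdis{\epsilon}(\API_{\cP,D}^{\epsilon,\delta},\projimp(\cP_D))\le\delta$; and (iii) it is almost projective, i.e.\ re-running it on its own post-measurement state reproduces the first outcome up to additive $\epsilon$ except with probability $\delta$. (The effort of Step~2 is essentially imported wholesale from~\cite{TCC:Zhandry20}.)

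\textbf{Third}, define $\ATI_{\cP,D,t}^{\epsilon,\delta}$ as: conjugate by $V$, run $\API_{\cP,D}^{\epsilon,\delta}$ to obtain an estimate $\tilde p$, and output $1$ iff $\tilde p\ge t$; this is a binary-outcome measurement. Both inequalities are then immediate from the shift-distance bound~(ii), which gives $\Pr[\tilde p\ge y]\ge\Pr[\cE\ge y+\epsilon]-\delta$ and $\Pr[\cE\ge y-\epsilon]\ge\Pr[\tilde p\ge y]-\delta$ for all $y$: taking $y=t-\epsilon$ in the first yields $\Tr[\ATI_{\cP,D,t-\epsilon}^{\epsilon,\delta}\rho]\ge\Pr[\cE\ge t]-\delta=\Tr[\TI_t(\cP_D)\rho]-\delta$, and taking $y=t$ in the second yields $\Tr[\TI_{t-\epsilon}(\cP_D)\rho]\ge\Pr[\tilde p\ge t]-\delta=\Tr[\ATI_{\cP,D,t}^{\epsilon,\delta}\rho]-\delta$. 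The running time is inherited from~(i).

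The \textbf{main obstacle} is the gentleness guarantee: conditioned on outcome $1$, the collapsed state $\rho'$ must satisfy $\Tr[\TI_{t-2\epsilon}(\cP_D)\rho']\ge 1-2\delta$. Here I would chain the two approximation guarantees of $\API$. Outcome $1$ forces the first estimate to satisfy $\tilde p\ge t$, so by almost-projectivity~(iii) re-running $\API_{\cP,D}^{\epsilon,\delta}$ on $\rho'$ returns some $\tilde p'\ge t-\epsilon$ except with probability $\delta$; then applying~(ii) to $\rho'$ with $y=t-\epsilon$ in $\Pr[\cE\ge y-\epsilon]\ge\Pr[\tilde p'\ge y]-\delta$ upgrades this to $\Pr[\cE(\rho')\ge t-2\epsilon]\ge(1-\delta)-\delta=1-2\delta$, i.e.\ $\Tr[\TI_{t-2\epsilon}(\cP_D)\rho']\ge 1-2\delta$. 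The genuinely delicate point, which I expect to absorb most of the work, is that the estimation procedure behaves ambiguously on states whose true eigenvalue $\cos^2\theta$ lies within $\epsilon$ of the threshold $t$, so one must verify that such near-boundary states do not violate the pre-measurement inequalities or this post-measurement bound beyond the advertised slacks; this is exactly where Zhandry's detailed analysis of the almost-projectivity of $\API$, together with~\cite{C:ALLZZ21}'s tracking of the doubled slacks $2\epsilon,2\delta$, is needed, and everything else is bookkeeping.
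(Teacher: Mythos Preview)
The paper does not prove this lemma: it is stated in the preliminaries with a citation to~\cite{C:ALLZZ21} and no proof is given, so there is no ``paper's own proof'' to compare against. Your proposal is a faithful reconstruction of the standard argument from~\cite{TCC:Zhandry20,C:ALLZZ21}---purify the mixture, use Jordan's lemma and Marriott--Watrous alternation to build $\API$, then threshold to get $\ATI$---and the derivations of the four bullet points from the shift-distance and almost-projectivity properties of $\API$ are correct.
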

 We can easily obtain the following corollary from~\cref{thm:ind_distribution_TI,lem:ATI_property}.
 \begin{corollary}\label{cor:ind_distribution_ATI}
 Let
 \begin{itemize}
 \item $\gamma>0$
 \item $\cP$ be a collection of projective measurements indexed by some sets
 \item $\rho$ be an efficiently constructible mixed state
 \item $D_0$ and $D_1$ be two efficienctly samplable and computationally indistinguishable distributions over $\cI$.
 \end{itemize}
 For any inverse polynomial $\epsilon$, there exists a negligible function $\delta$ such that
 \[
 \Tr[\ATI^{\epsilon,\delta}_{\cP,D_1,t -3\epsilon}\rho] \ge \Tr[\ATI^{\epsilon,\delta}_{\cP,D_0,t}\rho] - 3\delta,
 \]
 where $\cP_{D_\coin}$ is the mixture of projective measurements associated to $\cP$, $D_\coin$, and $\coin \in \zo$.
 \end{corollary}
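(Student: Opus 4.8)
The plan is a short chaining argument that uses \cref{lem:ATI_property} as a two-way dictionary between the approximate threshold implementation $\ATI^{\epsilon,\delta}$ and the exact threshold implementation $\TI$, together with \cref{thm:ind_distribution_TI} to carry out the actual switch from $D_0$ to $D_1$ at the level of $\TI$, where computational indistinguishability of the index distributions is directly applicable. Throughout I write $t$ for the threshold parameter appearing in the displayed inequality.

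First I fix the inverse polynomial $\epsilon$. Applying \cref{thm:ind_distribution_TI} to the data $(\cP,\rho,D_0,D_1)$ with gap parameter $\epsilon$ and threshold parameter $t-\epsilon$ produces a negligible function, which I name $\delta$, such that $\Tr[\TI_{t-2\epsilon}(\cP_{D_1})\rho]\ge\Tr[\TI_{t-\epsilon}(\cP_{D_0})\rho]-\delta$; the hypotheses of the corollary (namely $\rho$ efficiently constructible, and $D_0,D_1$ efficiently samplable and computationally indistinguishable) are exactly what \cref{thm:ind_distribution_TI} requires. I then instantiate $\ATI^{\epsilon,\delta}$ with this same pair $(\epsilon,\delta)$, so that both error bounds of \cref{lem:ATI_property} hold with slack $\delta$.

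Next I chain three inequalities. (i) By the second displayed bound of \cref{lem:ATI_property}, applied to $D_0$ with threshold $t$: $\Tr[\TI_{t-\epsilon}(\cP_{D_0})\rho]\ge\Tr[\ATI^{\epsilon,\delta}_{\cP,D_0,t}\rho]-\delta$. (ii) The distribution switch fixed above: $\Tr[\TI_{t-2\epsilon}(\cP_{D_1})\rho]\ge\Tr[\TI_{t-\epsilon}(\cP_{D_0})\rho]-\delta$. (iii) By the first displayed bound of \cref{lem:ATI_property}, applied to $D_1$ with threshold $t-2\epsilon$: $\Tr[\ATI^{\epsilon,\delta}_{\cP,D_1,t-3\epsilon}\rho]\ge\Tr[\TI_{t-2\epsilon}(\cP_{D_1})\rho]-\delta$. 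Composing (iii), then (ii), then (i) yields $\Tr[\ATI^{\epsilon,\delta}_{\cP,D_1,t-3\epsilon}\rho]\ge\Tr[\ATI^{\epsilon,\delta}_{\cP,D_0,t}\rho]-3\delta$, which is precisely the assertion: the three unit-$\epsilon$ threshold losses (one per invocation) add up to the $3\epsilon$ in the statement, and the three $\delta$ slacks add up to $3\delta$.

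There is no genuine obstacle here --- the corollary is, as the text says, an easy consequence of the two cited results, and the proof is pure bookkeeping. The only points needing a moment of attention are: using a single negligible $\delta$ for all three error terms, which is handled by first extracting $\delta$ from \cref{thm:ind_distribution_TI} and only then fixing it as the accuracy parameter of $\ATI$ (that parameter is free to choose; alternatively one may take $\delta$ to be the maximum of the finitely many negligible functions involved); and orienting each inequality so that the $D_0$-quantity ends up bounded above and the $D_1$-quantity bounded below, matching the direction of the target bound.
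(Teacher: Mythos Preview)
Your proposal is correct and is exactly the intended argument: the paper gives no proof beyond stating that the corollary follows from \cref{thm:ind_distribution_TI} and \cref{lem:ATI_property}, and your three-step chaining (ATI $\to$ TI via \cref{lem:ATI_property}, switch $D_0\to D_1$ via \cref{thm:ind_distribution_TI}, TI $\to$ ATI via \cref{lem:ATI_property}) is precisely the bookkeeping this entails.
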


\subsection{Pseudorandom Quantum States}
In this section, we recall pseudorandom (quantum) states (PRS) \cite{C:JiLiuSon18}.
\begin{definition}[Pseudorandom Quantum States]
A pseudorandom quantum state (PRS) scheme $\prs$ is a family of quantum states 
$\{\ket{\phi_K}\}_{K \in \zo^{c(\lambda)}}$ over $m(\lambda)$ qubits along with the following efficient algorithms.
\begin{itemize}
    \item $\prs.\Setup(1^\lambda):$ Takes in the unary representation of the security parameter and outputs a key $K \in \zo^{c(\lambda)}$.
    \item $\prs.\genstate(K):$ On input $K \in \zo^{c(\lambda)}$, outputs a quantum state over $m(\lambda)$ qubits.
\end{itemize}

We require the following.

\paragraph{Correctness.}
For all $\lambda \in \BN^+$ and all $K \in \zo^{c(\lambda)}$, the algorithm $\prs.\genstate(K)$ outputs $\ket{\phi_K}$.

\paragraph{Security.}
We require that for any QPT adversary $\adve$ and any polynomial $t(\lambda)$, 
\begin{equation*}
    \left| 
    \Pr_{K \samp \prs.\Setup(1^\lambda)}\left[ \adve\left( \ket{\phi_K}^{\otimes t(\lambda)} \right) = 1 \right]
    - 
    \Pr_{\ket{\psi} \samp \mathsf{Haar}(m(\lambda))}\left[ \adve\left( \ket{\psi}^{\otimes t(\lambda)} \right) = 1 \right]
    \right| \leq \negl(\lambda),
\end{equation*}
where $\mathsf{Haar}(m(\lambda))$ denotes the Haar measure over $(\C^2)^{\otimes{m}}$
\end{definition}

\begin{theorem}[\cite{C:JiLiuSon18}]\label{thm:prsexists}
Let $m(\cdot)$ be polynomially bounded. If (post-quantum) one-way functions exist, then there exists a (post-quantum) pseudorandom quantum state scheme $\prs$ that outputs $m(\lambda)$-qubit states.
\end{theorem}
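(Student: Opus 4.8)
The plan is to instantiate the Ji--Liu--Song \emph{random phase state} construction and carry out its two-step analysis. Fix $m=m(\lambda)$ and, via \Cref{thm:pprf-owf}, take a PRF family $\{\prfF_k:\zo^m\to\zo^\lambda\}_{k\in\zo^\lambda}$ built from the assumed one-way function; by Zhandry's result such GGM-style families remain secure even against adversaries that query the function on superpositions. Let $\omega=e^{2\pi i/2^\lambda}$ and define
\[
\ket{\phi_k}=\frac{1}{\sqrt{2^m}}\sum_{x\in\zo^m}\omega^{\prfF_k(x)}\ket{x}.
\]
Let $\prs.\Setup(1^\lambda)$ sample $k\samp\prfgen(1^\lambda)$, and let $\prs.\genstate(k)$ prepare $\ket{\phi_k}$ by applying Hadamards to reach $\tfrac{1}{\sqrt{2^m}}\sum_x\ket{x}\ket{0}$, coherently computing $\prfF_k$ into the ancilla, applying the diagonal phase $\ket{x}\ket{y}\mapsto\omega^{y}\ket{x}\ket{y}$ (a product of single-qubit rotations, implementable to negligible precision with a fixed universal gate set), and then uncomputing $\prfF_k$. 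This is a fixed unitary on $\ket{0\cdots 0}$, so the output is the fixed pure state $\ket{\phi_k}$, giving correctness.

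For security, fix a QPT adversary $\adve$ and a polynomial $t$. I would first pass to an information-theoretic idealization: with $\ket{\psi_\theta}=\tfrac{1}{\sqrt{2^m}}\sum_x\omega^{\theta(x)}\ket{x}$ for a uniformly random $\theta:\zo^m\to\zo^\lambda$, show that $\rho^{(t)}_\theta:=\E_\theta[(\ket{\psi_\theta}\!\bra{\psi_\theta})^{\otimes t}]$ is within negligible trace distance of $\rho^{(t)}_{\mathsf{Haar}}:=\E_{\psi\samp\mathsf{Haar}(m)}[(\ket{\psi}\!\bra{\psi})^{\otimes t}]=\Pi_{\mathsf{sym}}/\binom{2^m+t-1}{t}$, the maximally mixed state on the symmetric subspace of $(\C^{2^m})^{\otimes t}$. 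Expanding $\rho^{(t)}_\theta$ in the computational basis, the cross term $\ket{\vec x}\!\bra{\vec y}$ picks up the coefficient $\prod_v\E_{\theta(v)}[\omega^{(a_v-b_v)\theta(v)}]$, where $a_v,b_v$ count the occurrences of $v$ in $\vec x,\vec y$; since $|a_v-b_v|\le t<2^\lambda$, this vanishes unless $\vec y$ is a permutation of $\vec x$. Grouping the surviving terms by multiset type shows $\rho^{(t)}_\theta$ is diagonal in the orthonormal basis $\{\ket{S_{\mathsf{sym}}}\}_S$ of normalized equal superpositions over the orderings of each size-$t$ multiset $S$ (a basis of $\Pi_{\mathsf{sym}}$), with eigenvalue $t!/(2^{mt}\prod_v m_v(S)!)$ on $\ket{S_{\mathsf{sym}}}$ ($m_v(S)$ the multiplicity of $v$ in $S$), whereas $\rho^{(t)}_{\mathsf{Haar}}$ has eigenvalue $1/\binom{2^m+t-1}{t}$ on each. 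The trace distance is then $\sum_S\bigl|\,t!/(2^{mt}\prod_v m_v(S)!)-1/\binom{2^m+t-1}{t}\,\bigr|$, and a birthday-type estimate (collision-free tuples carry almost all the mass since $t^2\ll 2^m$) bounds it by $O(t^2/2^m)$, which is $\negl(\lambda)$ for every polynomial $t$ provided $m(\lambda)=\omega(\log\lambda)$, the regime relevant here (e.g. $m(\lambda)=\lambda$).

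Second, I would move to the computational setting by reduction to quantum-query PRF security. By the triangle inequality it suffices to show $\ket{\phi_k}^{\otimes t}$ (random $k$) and $\ket{\psi_\theta}^{\otimes t}$ (random $\theta$) are computationally indistinguishable. Given a distinguisher $\adve$, construct $\cB$ with quantum oracle access to $g\in\{\prfF_k,\theta\}$: $\cB$ runs the state-preparation procedure above $t$ times --- each run using two quantum queries to $g$, one to compute and one to uncompute the ancilla --- obtaining $\ket{\psi_g}^{\otimes t}$, runs $\adve$ on it, and outputs $\adve$'s bit. $\cB$ is QPT and makes $2t$ quantum queries; when $g=\prfF_k$ it feeds $\adve$ exactly $\ket{\phi_k}^{\otimes t}$, and when $g=\theta$ exactly $\ket{\psi_\theta}^{\otimes t}$, so $\cB$'s distinguishing advantage equals $\adve$'s, contradicting quantum-query security of the PRF family. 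Chaining the two steps gives that $\ket{\phi_k}^{\otimes t}$ is computationally indistinguishable from $\ket{\psi}^{\otimes t}$ for Haar-random $\ket{\psi}$, which is exactly PRS security.

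The main obstacle is the information-theoretic step. Three points need care: (i) the phase modulus must exceed $t$ for every polynomial $t$ --- hence exponent $\omega(\log\lambda)$, with $2^\lambda$ working uniformly --- so that the off-diagonal expectations vanish \emph{exactly}; (ii) one should exploit the simultaneous diagonalization of $\rho^{(t)}_\theta$ and $\rho^{(t)}_{\mathsf{Haar}}$ in the symmetrized-multiset basis, which turns the trace distance into a sum of scalar discrepancies rather than an unwieldy operator-norm computation; and (iii) that sum must be controlled by the birthday bound, which is exactly where $m(\lambda)=\omega(\log\lambda)$ enters. A secondary point is that the reduction requires PRFs secure under \emph{quantum} superposition queries, so one invokes Zhandry's result that GGM-style PRFs from one-way functions have this property (rather than only the post-quantum, classical-query security recalled in \Cref{def:prf}); and the controlled phase $\ket{y}\mapsto\omega^{y}\ket{y}$ must be realized to negligible precision, which is routine.
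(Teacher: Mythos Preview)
The paper does not prove \Cref{thm:prsexists}; it is stated with a citation to Ji--Liu--Song~\cite{C:JiLiuSon18} and used as a black box. Your proposal is a faithful and correct sketch of the Ji--Liu--Song random-phase-state construction and analysis: the two-step argument (first replacing the PRF by a truly random function via quantum-query PRF security, then showing the resulting random-phase state is statistically close to Haar via the symmetric-subspace eigenvalue comparison and birthday bound) is exactly how that result is established. Your caveat that the information-theoretic step requires $m(\lambda)=\omega(\log\lambda)$ is accurate and worth flagging, though the paper only ever invokes the theorem with $m(\lambda)=\lambda$ (see \Cref{thm:compiler}), so the restriction is immaterial for its purposes.
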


\part{Tools}
\takashi{We may want to clarify that tools in Sections 4 and 5 are only used in the equal-superposition compiler.}\anote{i think that should be fine, i mention in the intro that we use those for the eq-superposition compiler}\takashi{I added "organization" at the end of the introduction.}
\section{Purification Compiler}\label{sec:purification_compiler}
In this section, we give a purification compiler that transforms any quantum state query game that involves multiple samples from a mixed quantum state into a game where instead multiple copies of the purified state is used.

First, we introduce some notation and recall the following lemma. Let $\mathcal{H}_{n}$ denote the Haar random state distribution on $n$ qubits and let $S_t$ denote the set of permutations on $[t]$. For any distinct $x_1,\dots,x_t \in \zo^n$, define $\ket{\mathsf{type}(x_1, \dots, x_t)}$ to be
\begin{equation*}
\ket{\mathsf{type}(x_1, \dots, x_t)} =\frac{1}{t!}\sum_{\pi \in S_t}\bigotimes_{i \in [t]}\ket{x_{\pi(i)}}.
\end{equation*}

\begin{lemma}[\cite{ananth2025simultaneous}]\label{lem:typelem}
    For any $n,t \in \N^+$,
    \begin{equation*}
        \left| \E_{\ket{\psi} \samp \mathcal{H}_{n}} (\ketbra{\psi}{\psi})^{\otimes t} - \E_{\substack{x_1,\dots,x_t \in \zo^{n}\\ \text{all distinct}}}(\ketbra{\mathsf{type}(x_1, \dots, x_t)}{\mathsf{type}(x_1, \dots, x_t)})^{\otimes t} \right| \leq O\left(\frac{t^2}{2^n}\right).
    \end{equation*}
\end{lemma}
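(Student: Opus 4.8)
To prove the lemma, the plan is to recognize both operators as (essentially) the maximally mixed state on the symmetric subspace and then compare them via the combinatorics of multisets versus sets. Write $N = 2^n$ and let $\mathrm{Sym}^t \subseteq (\C^N)^{\otimes t}$ denote the symmetric subspace, with projector $\Pi_{\mathrm{sym}}$ and $\dim \mathrm{Sym}^t = \binom{N+t-1}{t}$. The first fact I would invoke is the standard identity
\[
\E_{\ket{\psi} \samp \mathcal{H}_n}(\ketbra{\psi}{\psi})^{\otimes t} = \frac{\Pi_{\mathrm{sym}}}{\binom{N+t-1}{t}} ,
\]
which holds because this operator is invariant under $A \mapsto U^{\otimes t} A (U^{\otimes t})^{\dagger}$ for every unitary $U$ and is supported on $\mathrm{Sym}^t$; since $\mathrm{Sym}^t$ is an irreducible $U(N)$-representation, Schur's lemma forces proportionality to $\Pi_{\mathrm{sym}}$, and the trace pins down the constant. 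The symmetric subspace carries the orthonormal ``type basis'' $\{\ket{T_\mu}\}$ indexed by size-$t$ multisets $\mu$ over $\zo^n$, and for a multiset $\mu$ whose entries are pairwise distinct, $\ket{T_\mu}$ is exactly the unit vector $\tfrac{1}{\sqrt{t!}}\sum_{\pi \in S_t}\bigotimes_i \ket{x_{\pi(i)}}$ (for distinct arguments the $t!$ branches are mutually orthogonal, so the $\tfrac{1}{t!}$ prefactor in the definition of $\ket{\mathsf{type}(\cdot)}$ is to be read as yielding this normalized vector). Hence $\E_{\ket{\psi}}(\ketbra{\psi}{\psi})^{\otimes t} = \E_\mu \ketbra{T_\mu}{T_\mu}$, the expectation over a uniformly random size-$t$ multiset over $\zo^n$.

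Next I would rewrite the second operator. For pairwise distinct $x_1, \dots, x_t$ the state $\ket{\mathsf{type}(x_1, \dots, x_t)}$ depends only on the set $S = \{x_1, \dots, x_t\}$, and each such $S$ arises from exactly $t!$ ordered distinct tuples; therefore the expectation over distinct ordered tuples equals $\E_S \ketbra{T_S}{T_S}$, the expectation over a uniformly random size-$t$ subset $S \subseteq \zo^n$. So the two operators to be compared are $\E_\mu \ketbra{T_\mu}{T_\mu}$ and $\E_S \ketbra{T_S}{T_S}$, and the subsets are precisely the repetition-free multisets.

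To conclude I would couple the two distributions. Let $p = \binom{N}{t}/\binom{N+t-1}{t}$ be the probability that a uniformly random size-$t$ multiset over $\zo^n$ is repetition-free; conditioned on that event, it is uniformly distributed among size-$t$ subsets. Consequently $\E_\mu \ketbra{T_\mu}{T_\mu} = p \cdot \E_S \ketbra{T_S}{T_S} + (1-p)\,\sigma$ for some density operator $\sigma$, so the difference of the two operators equals $(1-p)\bigl(\sigma - \E_S \ketbra{T_S}{T_S}\bigr)$, whose trace norm is at most $2(1-p)$. Finally, using $1 - \prod_j c_j \le \sum_j (1 - c_j)$ for $c_j \in [0,1]$ together with $p = \prod_{j=1}^{t-1}\tfrac{N-j}{N+j}$, we get $1 - p \le \sum_{j=1}^{t-1}\tfrac{2j}{N+j} \le \tfrac{t(t-1)}{N}$, so the trace norm is at most $\tfrac{2t(t-1)}{2^n} = O(t^2 / 2^n)$; the statement is vacuous when $t > 2^n$, as there are then no distinct $t$-tuples.

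There is no genuine obstacle here: the argument is a short symmetric-subspace computation. The two points that need care are (i) matching the symmetrized ``type'' vectors of the statement to the orthonormal type basis of $\mathrm{Sym}^t$ with the correct normalization, and (ii) the elementary estimate $1 - \binom{N}{t}/\binom{N+t-1}{t} \le t^2/N$. One can also avoid the Schur's-lemma step by expanding $\Pi_{\mathrm{sym}} = \sum_\mu \ketbra{T_\mu}{T_\mu}$ directly, splitting the sum into repetition-free and repeated multisets, and bounding the dimension of the ``repeated'' part relative to $\dim \mathrm{Sym}^t$ by the same product estimate.
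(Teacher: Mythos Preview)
The paper does not give its own proof of this lemma; it simply cites it from \cite{ananth2025simultaneous}. Your argument is the standard symmetric-subspace derivation and is correct: identify the Haar side with the maximally mixed state on $\mathrm{Sym}^t$, rewrite it as a uniform mixture over type vectors indexed by size-$t$ multisets, recognize the second operator as the same mixture restricted to repetition-free multisets, and bound the trace distance by $2(1-p)$ with $p=\binom{N}{t}/\binom{N+t-1}{t}=\prod_{j=1}^{t-1}\frac{N-j}{N+j}$, which you estimate correctly. You also handled the normalization subtlety properly: with the paper's $1/t!$ prefactor the vector $\ket{\mathsf{type}(x_1,\dots,x_t)}$ would have norm $1/\sqrt{t!}$ for distinct $x_i$, so the statement is only sensible (and consistent with its use in the proof of \Cref{thm:compiler}) if one reads it as the normalized symmetric vector $\tfrac{1}{\sqrt{t!}}\sum_\pi \bigotimes_i \ket{x_{\pi(i)}}$, exactly as you did. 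Likewise, the trailing ${}^{\otimes t}$ on the type side of the displayed bound is evidently a typo (the two operators must live on the same $t$-register space), and you correctly ignored it.
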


Now we show our main theorem.
\begin{theorem}[Purifying State Games]\label{thm:compiler} 
Let $\mathsf{GenState}$ be a QPT algorithm that has classically determined outputs (\Cref{def:classically_determined}) with the randomness length $r(\lambda)$.   
Let $\secexp$ denote a security experiment between an efficient challenger and an adversary, where the outcome of the experiment is $0$ or $1$. We will write $\secexp^{\adve}$ to denote the experiment when the adversary is the QPT algorithm $\adve$.
%Let $\secexp$ consist of multiple state-query stages and let $\ell$ denote the number of state-query stages.
In an execution of $\secexp$, there are $\ell$ state-query stages. 
 In each state-query stage, the challenger receives a number $t$ from the adversary, and runs $\mathsf{GenState}(st)$  $t$ times using independent classical randomness (and the same $st$),  and submits the outputs to the adversary, where $st$ is an (evolving)  internal classical state of the challenger.

Consider the experiment $\secexp_{pure}$ with a modified challenger where for each $i \in [\ell]$ the challenger samples a PRS key $k_i$ for $\lambda$-qubit outputs and a PRF key $K_i$ for a pseudorandom function $F$ from $\zo^\lambda$ to $\zo^{r(\lambda)}$, and all the state-queries are modified as follows:  During each state-query stage $i \in [\ell]$, instead of calling $\mathsf{GenState}(st)$ $t$ times with independent randomness, the challenger instead outputs $t$ times the state $\sum_{x}\alpha_{k_i,x} \ket{x}\otimes \ket{\phi_x}$ where $\ket{\phi_x}$ is the state obtained by calling $\mathsf{GenState}(st; F(K_i,x))$ and $\ket{\psi_{k_i}}= \sum_{x}\alpha_{{k_i},x} \ket{x}$ is the state produced by the PRS scheme with the key $k_i$.

Then, 
assuming security of the PRS and PRF, 
%assuming the existence of one-way functions,  
for any QPT adversary $\adve$ for $\secexp_{pure}$, there is a QPT adversary $\adve'$ for $\secexp$ such that $|\Pr[\secexp_{pure}^{\adve} = 1]  - \Pr[\secexp^{\adve'}=1]|\leq \negl(\lambda)$. \takashi{$\approx_{\negl(\lambda)}$ is undefined.} Thus, the original security guarantee holds even for the modified experiment.

If the number of states queried by the adversary in each state-query stage is bounded by some fixed polynomial $q(\lambda)$ for all admissible adversaries, then the above result holds unconditionally (i.e. without assuming one-way functions) by replacing the PRS with a statistical $q(\lambda)$-design and replacing the PRF with a $2q(\lambda)$-wise independent function.
\end{theorem}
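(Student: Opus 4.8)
The plan is to prove the theorem by a hybrid argument that morphs $\secexp_{pure}$ into $\secexp$ while keeping the adversary's view computationally (or, in the bounded case, statistically) unchanged. I would introduce the hybrids stage by stage: let $\hyb_j$ be the experiment in which the first $j$ state-query stages are handled as in $\secexp_{pure}$ (using the PRS/PRF superposition) and the remaining $\ell - j$ stages are handled as in $\secexp$ (independent calls to $\genstate$). Then $\hyb_\ell = \secexp_{pure}$ and $\hyb_0 = \secexp$, so it suffices to bound $|\Pr[\hyb_j^\adve = 1] - \Pr[\hyb_{j-1}^\adve = 1]|$ for each $j$; the adversary $\adve'$ for $\secexp$ is just $\adve$ composed with the (efficient) simulation of stages $>0$, which is vacuous since $\hyb_0$ is exactly $\secexp$. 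So the whole content is the single-stage transition.

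For the single-stage transition I would chain three sub-steps on stage $j$. \textbf{(i) Replace the PRF by a random function.} Since the PRF key $K_j$ is used only as an oracle to produce the branch randomness $F(K_j,x)$, pseudorandomness of the PRF (invoked on the appropriate superposition/queries; note the challenger is efficient and runs in QPT) lets us switch $F(K_j,\cdot)$ to a truly random function $H_j : \zo^\lambda \to \zo^{r(\lambda)}$ at the cost of $\negl(\lambda)$. \textbf{(ii) Replace the PRS by a Haar-random state.} The PRS key $k_j$ now feeds into $t$ copies of $\sum_x \alpha_{k_j,x}\ket{x}\otimes\ket{\phi_x}$; since $\ket{\phi_x}$ is computed from $x$ (via $H_j$ and $\genstate$, which is efficient), the map $\sum_x \alpha_{k_j,x}\ket{x} \mapsto \sum_x \alpha_{k_j,x}\ket{x}\otimes\ket{\phi_x}$ is an efficient isometry applied to each of the $t$ PRS copies, so PRS security (for the polynomial $t$ chosen by the adversary in this stage) lets us replace $\ket{\psi_{k_j}}^{\otimes t}$ by $\ket{\psi}^{\otimes t}$ for $\ket{\psi}\samp\cH_\lambda$, again at cost $\negl(\lambda)$. \textbf{(iii) Replace the Haar state by a random type state.} By \Cref{lem:typelem}, the state $\E_{\ket\psi\samp\cH_\lambda}(\ketbra\psi\psi)^{\otimes t}$ is within $O(t^2/2^\lambda) = \negl(\lambda)$ in trace distance of $\E_{x_1,\dots,x_t \text{ distinct}}(\ketbra{\typekey(x_1,\dots,x_t)}{\typekey(x_1,\dots,x_t)})^{\otimes t}$; applying the same efficient isometry from (ii) to each copy and using that trace distance does not increase under CPTP maps, the $t$ states delivered to the adversary are now $\negl(\lambda)$-close to $t$ copies of $\frac{1}{t!}\sum_{\pi\in S_t}\bigotimes_{i}\ket{x_{\pi(i)}}\otimes\ket{\phi_{x_{\pi(i)}}}$ for uniformly random distinct $x_1,\dots,x_t$. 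Finally I would argue that this latter state, together with the random function $H_j$ restricted to $\{x_1,\dots,x_t\}$, is exactly (or statistically close to) what $\secexp$ produces at this stage: measuring each of the $t$ type-state copies in the computational basis on the first register yields an independent uniformly random permutation of $(x_1,\dots,x_t)$, hence the $i$-th copy carries a fresh uniformly random $x$ among the distinct values, and $\ket{\phi_x} = \genstate(st; H_j(x))$ is the output of $\genstate$ on independent (random-function) randomness — matching $\secexp$'s ``$t$ independent calls to $\genstate(st)$'' up to the negligible statistical difference between sampling $t$ i.i.d.\ uniform strings and $t$ distinct uniform strings, and up to the already-applied PRF$\to$random-function swap. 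Summing the $\ell$ stage transitions, each $\negl(\lambda)$, gives the claim.

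The main obstacle, and the step I would be most careful about, is \textbf{(iii)} — specifically, justifying that ``$t$ copies of a random type state'' really is indistinguishable from ``$t$ independently sampled basis states with independent $\genstate$ randomness.'' The subtlety is that in the type state the $t$ registers are entangled (symmetrized), whereas $\secexp$ hands out a product state; the resolution is that the adversary never gets to see a purification of the symmetrization and the reduced state on the $t$ registers, after also marginalizing correctly, reproduces the right product-of-marginals — but one must be careful that the adversary in $\secexp_{pure}$ could in principle coherently exploit the entanglement across copies, so the argument must go through \emph{density matrices of the joint view} rather than a copy-by-copy hybrid. Using \Cref{lem:typelem} at the level of the full $t$-fold mixed state (as stated) handles this cleanly, which is why I route through the Haar state rather than attempting a direct combinatorial argument. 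The remaining care is bookkeeping: the distinct-vs-i.i.d.\ gap $O(t^2/2^\lambda)$, the fact that $\genstate$ and the branch-appending isometry are QPT so all reductions are efficient, and that $t$ (and $\ell$) are a priori bounded polynomials, so all the $\negl$ terms remain negligible after summing.

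For the \textbf{unconditional bounded-copy case}: when every admissible adversary queries at most $q(\lambda)$ states per stage, I would replace the PRS with a \emph{statistical} $q(\lambda)$-design $\{\ket{\psi_k}\}$ and the PRF with a $2q(\lambda)$-wise independent hash family. Then sub-step (i) becomes exact up to the statistical closeness of a $2q$-wise independent function to a random function \emph{on any $2q$ queries}, which covers the at-most-$q$ computational-basis outcomes and the symmetrization (each type state touches $t \le q$ distinct points, and across the argument we never need more than $2q$ evaluations); sub-step (ii) becomes the statement that a $t$-copy state drawn from a $q$-design is \emph{exactly} (for $t \le q$) identically distributed, as a $t$-fold mixed state, to a $t$-copy Haar state — which is the defining property of a state $q$-design — so the swap is exact; and sub-step (iii) is unchanged, using \Cref{lem:typelem} plus the $O(t^2/2^\lambda)$ statistical term. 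Hence every transition is statistical rather than computational, no one-way function is used anywhere, and summing over the $\ell$ stages gives the unconditional conclusion.
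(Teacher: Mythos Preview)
Your hybrid skeleton (PRF $\to$ random, PRS $\to$ Haar $\to$ type) matches the paper's, but the crucial last move is missing, and your informal resolution of step~(iii) does not work. The type state $\frac{1}{\sqrt{t!}}\sum_{\pi}\bigotimes_i \ket{x_{\pi(i)}}\ket{\phi_{x_{\pi(i)}}}$ lives on a strictly larger Hilbert space than the product state $\bigotimes_i \ket{\phi_i}$ that $\secexp$ hands out: the tag registers are extra. So the two cannot be ``close as density matrices,'' and your argument about ``measuring the first register'' or ``marginalizing'' is not valid, because $\adve$ (a $\secexp_{pure}$ adversary) \emph{keeps} the tag registers and may use them coherently. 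Relatedly, your sentence ``$\adve'$ is just $\adve$ \ldots which is vacuous since $\hyb_0$ is exactly $\secexp$'' does not type-check: $\adve$ expects tagged states, so $\hyb_0$ as you describe it is not a well-defined experiment, and you have not actually exhibited any $\adve'$.

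The paper's fix is to run the simulation in the \emph{other} direction: $\adve'$ receives the untagged product $\bigotimes_i\ket{\phi_i}$ from the $\secexp$ challenger, samples its own distinct $x_1,\dots,x_t$, and \emph{efficiently builds} the symmetrized state $\sum_{\pi}\bigotimes_i \ket{x_{\pi(i)}}\ket{\phi_{\pi(i)}}$, which it then feeds to $\adve$. The nontrivial point is that this superposition can be prepared in QPT: first create $\sum_{\pi}\ket{\pi(1)}\cdots\ket{\pi(t)}$ via a quantum Fisher--Yates shuffle, apply controlled-SWAPs to permute the $(\ket{x_i},\ket{\phi_i})$ pairs accordingly, and then uncompute the permutation registers by reading off the (distinct!) $x$-values. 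Once you have this $\adve'$, one checks that $\secexp^{\adve'}$ coincides with your post-type-lemma hybrid (independent $\genstate$ randomness corresponds to $H_j$ evaluated at distinct points), and the chain closes. A smaller issue: in your step~(ii) you invoke PRS security with $\ket{\phi_x}$ computed via a \emph{truly random} $H_j$, which makes the reduction inefficient; the paper first passes to a $2t$-wise independent function (efficient, and indistinguishable from random under $t$ quantum queries) before swapping PRS for Haar, and you should do the same.
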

\begin{proof}
    Fix a QPT adversary $\adve$. Let $\hyb_0$ denote $\secexp^\adve_{pure}$. Define $\hyb_1$ by modifying $\hyb_0$ as follows: For each stage $i \in [\ell]$, instead of using $F(K_i,\cdot)$ when generating the output states, instead use a $2\cdot t$-wise independent function $H_i$. We have $\hyb_0 \approx \hyb_1$ by the security of PRF and since $t$-quantum-query algorithms cannot distinguish between a truly random function and a $2\cdot t$ wise independent functions \cite{Z12}.

    Define $\hyb_2$ by modifying $\hyb_1$ as follows: For all state-query stages, sample $\ket{\psi_i'} \samp \mathcal{H}_\lambda$ and when computing the output state, use $\ket{\psi_i'}$ instead of the pseudorandom state $\ket{\psi_{k_i}}$. Note that since we are using (multiple copies of) the Haar random sample $\ket{\psi'_i}$, this experiment is inefficient. We have $\hyb_1 \approx \hyb_2$ by the security of the PRS scheme.

    Define $\hyb_3$ by modifying $\hyb_2$ as follows: For each stage $i \in [\ell]$, first the challenger samples $x^i_1, \dots, x^i_{t}$, each from $\zo^{\lambda}$ subject to all being distinct (otherwise independent). Then, it prepares the state $\ket{\mathsf{type}(x_1,\dots,x_{t})}$ and outputs it instead of $(\ket{\psi'_i})^{\otimes t}$.  We have $\hyb_2 \approx \hyb_3$ by \cref{lem:typelem}. 
    
    Now define the adversary $\adve'$ for $\secexp$ where $\adve'$ simulates $\adve$ except for the following differences. For each stage $i \in [\ell]$, it samples values $x^i_1, \dots, x^i_{t}$, each from $\zo^{\lambda}$ subject to all being distinct (otherwise independent), and when it receives the states $\ket{\phi^i_1}, \dots, \ket{\phi^i_t}$ from the challenger, it prepares the state \begin{equation*}
       \ket{\eta} = \sum_{\pi \in S_t} \ket{x^i_{\pi(1)}}\otimes\ket{\phi^i_{\pi(1)}}\otimes\dots\otimes \ket{x^i_{\pi(t)}}\otimes\ket{\phi^i_{\pi(t)}}
    \end{equation*} 
    and passes it to $\adve$.
    Note that this state can be prepared efficiently. First, note that, as shown by \cite{berry2018improved}, we can efficiently create the state $\sum_{\pi \in S_t}\ket{\pi(1)}\dots \ket{\pi(t)}$ using a quantum version of the Fisher-Yates (FY) algorithm. Once we create the state $\sum_{\pi \in S_t}\ket{\pi(1)}_{P_1}\dots \ket{\pi(t)}_{P_t}$, we can then coherently run a controlled SWAP algorithm to permute the registers carrying $\ket{x^i_1}\ket{\phi^i_1}, \dots, \ket{x^i_t}\ket{\phi^i_t}$. Finally, since $x^i_1, \dots, x^i_t$ are distinct, we can read the corresponding registers to recompute the permutation, and thus undo the registers ${P_1}, \dots, {P_t}$. This perfectly creates the state $\ket{\eta}$.

    Define $\hyb'_1$ by modifying $\adve'$ playing $\secexp$: For each stage $i \in [\ell]$, the challenger samples a random function $H_i$ and runs $\mathsf{GenState}(st)$ using the randomness $H_i(x^i_j)$ (for $j \in [t]$). We get $\hyb'_1 \equiv \secexp^{\adve'}$. Finally, define $\hyb'_2$ by modifying $\hyb_1'$ as follows: Instead of a random function, sample a $2\cdot t$-wise independent function $H_i$. As before, $\hyb_1' \approx \hyb_2'$ follows by \cite{Z12}. Finally, it is easy to see that $\hyb'_2 \equiv \hyb_3$. Combining the above, we get $\secexp_{pure}^{\adve} \approx_{\negl(\lambda)} \secexp^{\adve'}$ as desired.
\end{proof}
\section{Quantum-State Read-Once Small-Range Distributions}\label{sec:readonce}
We first recall \emph{small-range distributions}.
\begin{theorem}[Small Range Distributions \cite{Z12}]\label{thm:srd}
There is a universal constant $\csrd > 1$ such that, for any sets $\mathcal{X}$ and $\mathcal{Y}$, distribution $\mathcal{D}$ on $\mathcal{Y}$, any integer $\ell$, and any quantum algorithm $\adve$ making $k$ queries to an oracle $H : \mathcal{X}\to \mathcal{Y}$, the following two cases are indistinguishable, except with probability less than $\csrd\frac{k^3}{\ell}$
\begin{itemize}
    \item $H(x) = {y}_x$ where $(y_x)_{x \in \mathcal{X}}$ is a list of samples of $\mathcal{D}$ of size $|\mathcal{X}|$.
    \item $H(x) = \vec{y}_{P(x)}$ where $(y_{i})_{i \in [\ell]}$ is a list of samples of $\mathcal{D}$ of size $\ell$ and $P$ is a random function $P: \mathcal{X} \to [\ell]$.
\end{itemize}
\end{theorem}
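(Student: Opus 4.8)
The plan is to establish this as a pure query-complexity statement, with no computational assumptions, via the quantum polynomial method (Zhandry's QROM adaptation of Beals et al.). Parametrize the small-range oracle by its number of buckets, writing $H_\ell(x) = \vec{y}_{P(x)}$ with $P : \mathcal{X} \to [\ell]$ uniform, and note that letting $\ell \to \infty$ recovers the ``real'' oracle (independent samples $y_x \sim \mathcal{D}$), since the probability of any bucket collision among a fixed finite set of inputs then vanishes. For a fixed $k$-query distinguisher $\adve$, put $p(\ell) := \Pr[\adve^{H_\ell}() = 1]$; the quantity to bound is $\abs{p(\ell) - p(\infty)}$, and the goal is to show it is $O(k^3/\ell)$ uniformly over $\adve$, $\mathcal{X}$, $\mathcal{Y}$, $\mathcal{D}$, which gives the universal constant $\csrd$.

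First I would invoke the polynomial representation of bounded-query quantum algorithms in its general-range form: after $k$ queries, each amplitude of $\adve$'s final register is a linear combination of terms depending on the oracle only through its values at at most $k$ inputs, so the acceptance probability depends on $H$ only through the joint distributions of tuples $(H(x_{1}), \dots, H(x_{m}))$ with $m \le 2k$; equivalently, $\E_H[\Pr[\adve^H = 1]]$ is a fixed, oracle-independent linear functional of these $\le 2k$-input marginals. Then I would compute those marginals for $H_\ell$: given at most $2k$ distinct inputs, $P$ induces a random set partition of them (by coincidence of buckets) whose distribution, as a function of $\ell$, is a polynomial in $1/\ell$ built from falling factorials $\ell(\ell - 1)\cdots(\ell - b + 1)/\ell^{m}$, and conditioned on the partition the $H$-values are i.i.d.\ samples from $\mathcal{D}$ across blocks and constant within blocks. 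Substituting into the linear functional shows that $p(\ell) = \widetilde P(1/\ell)$ for a single univariate polynomial $\widetilde P$ of degree at most $2k$, with $\widetilde P(0) = p(\infty)$ (only the all-singletons partition survives the $\ell \to \infty$ limit) and $\widetilde P(1/n) = p(n) \in [0,1]$ for every integer $n \ge 1$.

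It then remains to transfer this boundedness to the difference near $0$ by a Markov-type extremal estimate: a degree-$d$ polynomial bounded by $1$ on $\{0\} \cup \{1/n : n \ge 1\}$ stays within $O(d^2/\ell)$ of its value at $0$ when evaluated at $1/\ell$, losing at most one further factor of $d$ from the fact that we control it only on this discrete set rather than on all of $[0,1]$; with $d \le 2k$ this yields $\abs{p(\ell) - p(\infty)} = O(k^3/\ell)$. The cube (rather than a square) is, morally, the quantum query complexity of collision finding: distinguishing $H_\ell$ from the real oracle amounts to noticing that $P$ fails to be injective on the at most $k$ ``touched'' inputs, i.e.\ a collision search in an $\ell$-valued function, for which $k$ queries succeed with probability $\Theta(k^3/\ell)$.

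I expect the middle step to be the main obstacle. Verifying that averaging the general-range polynomial over $H_\ell$ truly produces a bona fide low-degree polynomial in $1/\ell$ requires careful bookkeeping over all $\le 2k$-input marginals, their collision-partition expansions, and the contributions of repeated inputs within a tuple; and the concluding polynomial inequality needs the exact degree, the exact power of $k$, and the constant pinned down. The conceptual crux is exactly the polynomial method combined with the $\ell \to \infty$ identification of the real oracle; everything after that is careful combinatorics and a classical fact about polynomials.
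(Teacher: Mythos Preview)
The paper does not prove this theorem; it merely recalls it as a known result from \cite{Z12} and uses it as a black box (in the proof of \Cref{thm:qsreadonce}, where it is invoked once to bound the distance between two hybrids). So there is no paper-proof to compare your proposal against.

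That said, your sketch is a faithful outline of Zhandry's original argument: one shows via the polynomial method that the acceptance probability of any $k$-query algorithm, as a function of the oracle distribution, depends only on the $2k$-wise marginals; for the small-range oracle $H_\ell$ these marginals expand as polynomials in $1/\ell$ of degree at most $2k$ (via the collision-partition / falling-factorial computation you describe); and then a Markov--Paturi-type bound on a degree-$2k$ polynomial that is $[0,1]$-valued on $\{1/n : n \ge 1\}$ gives the $O(k^3/\ell)$ gap near $0$. The only caveat worth flagging is your bookkeeping concern in the middle step: in the original proof this is handled cleanly by first reducing to the statement that the output distribution of a $k$-query algorithm is determined by the first $2k$ ``moments'' of the oracle distribution (equivalently, its behavior on $2k$-wise independent oracles), after which the $1/\ell$-polynomial structure of the small-range marginals is a straightforward computation over set partitions. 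Your identification of the cube with the quantum collision lower bound is also the right intuition, though in the actual proof the $k^3$ arises purely from the degree-$2k$ Markov estimate rather than from any direct reduction to collision finding.
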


Now we introduce a new result showing that an adversary that obtains superpositions of quantum state samples can be simulated with a state whose support consists only of distinct copies of polynomially many quantum state samples.

\begin{theorem}[Quantum-State Read-Once Small-Range Distributions]\label{thm:qsreadonce}
    Let $\mathcal{X}$ be a set, and let $\mathcal{D}$ be a distribution over pure quantum states with efficient descriptions, both implicitly a sequence of such object indexed by a security parameter $\lambda$. Let $\mathcal{O}$ be an oracle with input domain $\mathcal{X}$ and output values consisting of samples from $\mathcal{D}$. That is, $\mathcal{O}$ is constructed by sampling independently a state $\ket{\psi_x} \samp \mathcal{D}$ for each $x \in \mathcal{X}$. 
    
    Let $\mathcal{A}$ be an adversary that makes $k(\lambda)$ (where $k(\cdot)$ is a polynomial) non-adaptive queries to $\mathcal{O}$ such that all queries have negligible weight on each $x \in \mathcal{X}$.\footnote{An example is an adversary that makes equal superposition queries (where $|\mathcal{X}|$ is superpolynomial).} Then, for any polynomial $p(\cdot)$, the execution of $\adve$ can be simulated with trace distance at most $1/p(\lambda)$ using the state \begin{equation*}
   \ket{\phi_0} = \sum_{\substack{x_1, \dots, x_k \in \mathcal{X}\\P(x_1), \dots, P(x_k) \text{ distinct}}} c\cdot \alpha_{x_1}^{(1)}\dots \alpha_{x_k}^{(k)}  \ket{x_1}\ket{\tilde{\psi}_{P(x_1)}}\dots\ket{x_k}\ket{\tilde{\psi}_{P(x_k)}}.
\end{equation*}
     where $P: \mathcal{X} \to [\ell(\lambda)]$ is random function, $\ell = \ocsrd\cdot p^2 k^3$ and $\ocsrd > 1$ is a universal constant.
\end{theorem}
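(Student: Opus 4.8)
The plan is to mirror the structure of the classical small-range distribution theorem (\Cref{thm:srd}), but carried out at the level of quantum-state-valued oracles. The key observation is that an oracle $\mathcal{O}$ whose output registers carry i.i.d. samples $\ket{\psi_x} \samp \mathcal{D}$ is, formally, a random function $x \mapsto \ket{\psi_x}$ into the (finite, since descriptions are efficient) set of possible state descriptions, except that the ``values'' are quantum states rather than classical strings. I would first replace $\mathcal{O}$ by the small-range oracle $\mathcal{O}'$ that, on input $x$, outputs the state indexed by $P(x)$, where $P \colon \mathcal{X} \to [\ell]$ is a random function and $\ket{\tilde\psi_1}, \dots, \ket{\tilde\psi_\ell} \samp \mathcal{D}$ are i.i.d.\ samples. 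Invoking \Cref{thm:srd} in the standard way — by treating the classical description of each state as the oracle value, and noting that any $k$-query quantum algorithm acting on the state registers in particular induces a $k$-query algorithm against the classical description oracle — gives that $\adve$ cannot distinguish $\mathcal{O}$ from $\mathcal{O}'$ except with probability $\csrd \cdot k^3/\ell = \csrd/(\ocsrd p^2)$, which is at most $1/(2p)$ for $\ocsrd \geq 2\csrd$.

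Once we are in the small-range world, the state held by $\adve$ after its $k$ non-adaptive queries is exactly
\begin{equation*}
\sum_{x_1,\dots,x_k \in \mathcal{X}} \alpha_{x_1}^{(1)}\cdots\alpha_{x_k}^{(k)}\,\ket{x_1}\ket{\tilde\psi_{P(x_1)}}\cdots\ket{x_k}\ket{\tilde\psi_{P(x_k)}},
\end{equation*}
which is $\ket{\phi_0}$ except that the sum is over \emph{all} tuples rather than only those with $P(x_1),\dots,P(x_k)$ distinct (and hence with a different normalization constant). So the remaining work is to bound the trace distance between this full state and its projection onto the ``distinct $P$-images'' subspace. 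The second step, therefore, is a \textbf{collision bound}: I would argue that the squared overlap of the full state with the bad subspace (where some $P(x_i) = P(x_j)$) is small. For each fixed pair $i \neq j$, conditioned on $P$, the amplitude weight on $\{x_i, x_j : P(x_i) = P(x_j)\}$ can be analyzed by noting that $P$ is a random function into $[\ell]$; after averaging over $P$, since each query has negligible weight on any individual $x \in \mathcal{X}$, the collision probability is governed by $\sum_{m \in [\ell]} \Pr_{x_i}[P(x_i) = m]\Pr_{x_j}[P(x_j)=m]$, which by $\ell = \ocsrd p^2 k^3$ and a union bound over the $\binom{k}{2}$ pairs is $O(k^2/\ell) = O(1/(p^2 k))$. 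Taking square roots and using the Gentle Measurement / ``close states have close traces'' lemma converts this into a trace-distance bound of $O(1/(p\sqrt{k})) \leq 1/(2p)$ (adjusting $\ocsrd$ by a constant factor if needed). Combining the two steps via the triangle inequality yields total trace distance at most $1/p(\lambda)$.

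**The main obstacle** I anticipate is the second step: making the collision bound fully rigorous requires care because the adversary's query amplitudes $\alpha^{(j)}_{x}$ are arbitrary (subject only to the negligible-individual-weight condition) and, crucially, the $k$ query registers share the \emph{same} small-range function $P$, so the events ``$P(x_i) = P(x_j)$'' across different pairs are correlated, and the state on the $x$-registers is not a product state. The cleanest way I see to handle this is to first take the expectation over $P$ of the bad-subspace projector's expectation value, expand it as a sum over pairs, and for each pair use that $P(x_i), P(x_j)$ are — conditioned on $x_i \neq x_j$ — independent uniform over $[\ell]$ (and the $x_i = x_j$ diagonal contributes only $\sum_x |\alpha^{(i)}_x \alpha^{(j)}_x|^2 \le \max_x |\alpha^{(i)}_x|^2 = \negl$ by hypothesis, which is where the non-adaptivity and the negligible-weight condition are used). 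This reduces everything to a second-moment computation that does not see the cross-pair correlations at all. One should double-check that non-adaptivity is genuinely needed here — it is, since for adaptive queries the $j$-th query amplitudes would themselves depend on $P$ through earlier answers, breaking the averaging argument — and that the ``efficient descriptions'' hypothesis is what makes $\mathcal{D}$'s support a well-defined finite set so that \Cref{thm:srd} applies verbatim.
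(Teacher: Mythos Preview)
Your proposal is correct and follows essentially the same two-step approach as the paper: first replace the state-valued oracle by a small-range one via \Cref{thm:srd} applied to the underlying classical descriptions, then bound the distance between the full small-range state and its restriction to distinct $P$-images by a pairwise collision argument that uses the negligible-weight hypothesis on the diagonal $x_i = x_j$. The paper phrases the second step through an orthogonal decomposition indexed by collision patterns before passing to the pairwise bound, whereas you go straight to the union bound over $\binom{k}{2}$ pairs, but the substance and the final estimate $O(k^2/\ell)$ are identical.
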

\begin{proof}
    We will first construct a sequence of indistinguishable hybrids where modify the oracle of $\adve$, and finally we will construct a simulator that simulates the final hybrid. We set the constant $\ocsrd$ to be $16\cdot \csrd$.

\paragraph{$\underline{\hyb_0}$}: The original experiment, where the output is the final state output by $\adve$.

\paragraph{$\underline{\hyb_1}$}: We modify the experiment as follows. We sample a truly random function $P: \mathcal{X} \to [\ell(\lambda)]$ at the beginning. Further, we sample $\ket{\tilde{\psi}_i} \samp \mathcal{D}$ for $i \in [\ell]$. Then, instead of answering a query of $\adve$ to $\mathcal{O}$ as $\sum \alpha_x^{(i)} \ket{x} \to \sum \alpha_x^{(i)} \ket{x}\ket{\psi_x}$, we now answer it as $\sum \alpha_x^{(i)} \ket{x} \to \sum \alpha_x^{(i)} \ket{x}\ket{\tilde{\psi}_{P(x)}}$.

\paragraph{$\underline{\hyb_2}$}: Let $\sum \alpha_x^{(i)} \ket{x}\ket{\psi_x}$ for $i \in [k]$ be the queries of the adversary (induced by its unitary circuit). Now instead of providing the adversary with the corresponding responses, which is
\begin{equation*}
   \ket{\phi} = \sum_{x_1, \dots, x_k \in \mathcal{X}} \alpha_{x_1}^{(1)}\dots \alpha_{x_k}^{(k)}  \ket{x_1}\ket{\tilde{\psi}_{P(x_1)}}\dots\ket{x_k}\ket{\tilde{\psi}_{P(x_k)}},
\end{equation*}
we instead give the adversary the state
\begin{equation*}
   \ket{\phi_0} = \sum_{\substack{x_1, \dots, x_k \in \mathcal{X}\\P(x_1), \dots, P(x_k) \text{ distinct}}} c\cdot \alpha_{x_1}^{(1)}\dots \alpha_{x_k}^{(k)}  \ket{x_1}\ket{\tilde{\psi}_{P(x_1)}}\dots\ket{x_k}\ket{\tilde{\psi}_{P(x_k)}}.
\end{equation*}
where $c \in \R^{+}$ is the normalizing factor.

Now, we show $|\hyb_0 - \hyb_1| \leq \frac{1}{2p}$. First, by our assumption, we have that each $\ket{{\psi}_x}$ is obtained by sampling a classical string $s_x$ from some distribution $\mathcal{D}'$, and applying a public unitary $U$ (e.g. the universal quantum circuit) to $s_i$. Now consider an experiment $\hyb_0'$ where we simulate the queries of $\adve$, but to $\mathcal{D}'$, and then apply the unitary $U$ to convert the sample strings $s_x$ into the quantum states $\ket{{\psi}_x}$, and then run $\adve$ on the resulting state. Observe that this corresponds exactly to $\hyb_0$. Also consider an experiment $\hyb_1'$ similarly where we sample a random function $P: \mathcal{X} \to [\ell]$, then simulate the queries of $\adve$ to the modified oracle $\mathcal{D}''$ defined as $s_x = \tilde{s}_{P(x)}$, and then convert them to the corresponding states by applying $U$ and finally running $\adve$ on the output. Since $\ell = \ocsrd\cdot p^2 k^3 \geq 2\csrd pk^3$, by \cref{thm:srd}, we get $|\hyb_0 - \hyb_1| \leq \frac{1}{2p}$.

Finally, we show $|\hyb_1 - \hyb_2| \leq \frac{1}{2p}$. We will show that for any states $\ket{\tilde{\psi}_1},\dots,\ket{\tilde{\psi}_\ell}$, we have $\E_{P}[|\braket{\phi}{\phi_0}|^2] \geq 1 - \frac{k^2}{\ell} - \negl(\lambda)$. Since we have that the trace distance between the hybrids is upper bounded by $\sqrt{1 - f}$ where $f$ is the fidelity between the inputs to $\adve$ in the two hybrids, and since the fidelity between the two inputs are $\E_{P}[|\braket{\phi}{\phi_0}|^2]$, the above will imply that the trace distance between $\hyb_1$ and $\hyb_2$ is at most $\sqrt{1 - (1 - \frac{2k^2}{\ell})} \leq \frac{1}{2p}$ due to $\ell = \ocsrd p^2 k^3$ and $\ocsrd > 16$.

Fix the states $\ket{\tilde{\psi}_1},\dots,\ket{\tilde{\psi}_\ell}$, and the map $P$, and $\ket{\phi^P}, \ket{\phi^P_0}$ will refer to the corresponding states based on this fixing. We define the unnormalized states
\begin{align*}
    &\ket{\zeta^P_\emptyset} =  \sum_{\substack{x_1, \dots, x_k \in \mathcal{X}\\P(x_1), \dots, P(x_k) \text{ distinct}}} \alpha_{x_1}^{(1)}\dots \alpha_{x_k}^{(k)}  \ket{x_1}\ket{\tilde{\psi}_{P(x_1)}}\dots\ket{x_k}\ket{\tilde{\psi}_{P(x_k)}} \\
    &\ket{\zeta^P_{S}} =  \sum_{\substack{x_1, \dots, x_k \in \mathcal{X}\\P(x_{i_1}), \dots, P(x_{i_{k - |S|}}) \text{ distinct} \\ P(x_{j_1}) = \dots = P(j_{|S|})}} \alpha_{x_1}^{(1)}\dots \alpha_{x_k}^{(k)}  \ket{x_1}\ket{\tilde{\psi}_{P(x_1)}}\dots\ket{x_k}\ket{\tilde{\psi}_{P(x_k)}}.
\end{align*}
for all $S \subseteq [k]$, $S \neq \emptyset$ where we write $S = \{j_1, \dots, j_k\}$ and $\overline{S} = \{i_1, \dots, i_{k-|S|}\}$.
Observe that we have $\ket{\phi^P_0} = \frac{\ket{\zeta^P_\emptyset}}{\norm{\zeta^P_\emptyset}}$, and $\ket{\zeta^P_{S_1}} \perp \ket{\zeta^P_{S_2}}$ for all sets $S_1 \neq S_2$ (including empty sets). Define the unnormalized state $\ket{\zeta^P} = \sum_{S} \ket{\zeta^P_S}$. Then we have $\ket{\phi} = \frac{\ket{\zeta^P}}{\norm{\zeta^P}}$.

By above, we have that, $q^P = |\braket{\phi^P}{\phi^P_0}|^2 = \frac{|\braket{\zeta^P}{\zeta^P_\emptyset}|^2}{\norm{\zeta^P}^2 \norm{\zeta^P_\emptyset}^2}$. Also, since  $\ket{\zeta^P} = \sum_{S} \ket{\zeta^P_S}$ and $\ket{\zeta^P_{S_1}} \perp \ket{\zeta^P_{S_2}}$, we get $q = \frac{|\braket{\zeta_\emptyset}{\zeta^P_\emptyset}|^2}{\norm{\zeta^P}^2 \norm{\zeta^P_\emptyset}^2} = \frac{\norm{\ket{\zeta^P_\emptyset}}^2}{\norm{\ket{\zeta^P}}^2}$. Finally, again by pairwise orthogonality, we have $\norm{\ket{\zeta^P}}^2 = \sum_S {\norm{\ket{\zeta^P_S}}^2}$. Thus, $1 - q = \sum_{S \neq \emptyset} \frac{\norm{\ket{\zeta^P_S}^2}}{\norm{\ket{\zeta^P}}^2}$. Now define for each 
$\gamma,\beta \in [k]$ with $\gamma \neq \beta$
\begin{equation*}
  \ket{\eta^P_{\gamma,\beta}} =  \sum_{\substack{x_1, \dots, x_k \in \mathcal{X}\\ P(x_i) \text{ all distinct for } i \neq \gamma,\beta \\ P(x_{\gamma}) = P(x_\beta)}} \alpha_{x_1}^{(1)}\dots \alpha_{x_k}^{(k)}  \ket{x_1}\ket{\tilde{\psi}_{P(x_1)}}\dots\ket{x_k}\ket{\tilde{\psi}_{P(x_k)}}.
\end{equation*}
It is easy to see that $\sum_{S \neq \emptyset} {\norm{\ket{\zeta^P_S}^2}} \leq \sum_{\gamma \neq \beta} \norm{\ket{\eta^P_{\gamma,\beta}}}^2$ since the latter expression \emph{adds/counts with repetition} (i.e., same collisions are added more than once). Combining these, we get $1 - q^P \leq \sum_{\gamma \neq \beta} \frac{\norm{\ket{\eta^P_{\gamma,\beta}}}^2}{{\norm{\ket{\zeta^P}}^2}}$. This implies $\E_{P}[|\braket{\phi^P}{\phi^P_0}|^2] \geq 1 - \E_{P}\left[ \sum_{\gamma \neq \beta} \frac{\norm{\ket{\eta^P_{\gamma,\beta}}}^2}{{\norm{\ket{\zeta^P}}^2}}\right] = 1 -  \sum_{\gamma \neq \beta} \E_{P}\left[ \frac{\norm{\ket{\eta^P_{\gamma,\beta}}}^2}{{\norm{\ket{\zeta^P}}^2}}\right]$ by linearity of expectation.

Now, we claim that for any $\gamma \neq \beta$, we have $\E_{P}\left[\frac{\norm{\ket{\eta^P_{\gamma,\beta}}}^2}{{\norm{\ket{\zeta^P}}^2}}\right] \leq \frac{1}{\ell} + \negl(\lambda)$, which implies $\sum_{\gamma \neq \beta} \E_{P}\left[\frac{\norm{\ket{\eta^P_{\gamma,\beta}}}^2}{{\norm{\ket{\zeta^P}}^2}}\right] \leq {k \choose 2}\cdot\left(\frac{1}{\ell} + \negl(\lambda) \right) \leq \frac{k^2}{\ell} + \negl(\lambda)$. To prove our final claim: It is easy to see that since each $\alpha^{(\gamma)}_x$ is negligible for all $\gamma \in [k], x \in \mathcal{X}$, we have that $\E_{P}\left[\frac{\norm{\ket{\eta^P_{\gamma,\beta}}}^2}{{\norm{\ket{\zeta^P}}^2}}\right] \leq \negl(\lambda) + \max_{x_1 \neq x_2}\Pr_P[P(x_1) = P(x_2)]$, and we also have $\Pr_P[P(x_1) = P(x_2)] = \frac{1}{\ell}$ for $x_1 \neq x_2$. This completes the proof.

\end{proof}

\section{Quantum-Query-Secure Deterministic Signature Schemes}\label{sec:detsig}
We start by recalling Boneh-Zhandry (\cite{boneh2013secure}) security, also called \emph{plus-one security}, which requires that an adversary that makes $k$ signature quantum queries cannot output valid signatures for $k+1$ distinct messages.
\begin{definition}[$\mathsf{BZ}$ Security]\label{defn:bzsec}
Consider the following game between a challenger and an adversary $\adve$.
\paragraph{\underline{$\bzgame{\adve}(1^\lambda)$}}
\begin{enumerate}
    \item The challenger samples $vk, sk \samp \dss.\mathsf{Setup}(1^\lambda)$.
    \item For multiple rounds, the adversary queries for a signature by submitting a quantum state $\sum_{m,w} \alpha_{m,w}\ket{m}\ket{w}$, and the challenger responds with  $\sum_{m,w} \alpha_{m,w}\ket{m}\ket{w \oplus \dss.\mathsf{Sign}(sk, m)}$\footnote{Here, for randomized signature schemes, the randomness for $\dss.\mathsf{Sign}$ is supplied either by sampling a single random string $r$ per query, and using that for all signatures in the superposition, or by sampling a PRF key per query and using $F(K, m)$ for each signatures in the superposition. In our construction, there will be no per-query randomness, and the signature scheme will be fully deterministic.} which it computes by running the signing function in superposition.
    \item After the query phase, the adversary outputs $k+1$ pairs $(m_1, sig_1), \dots (m_{k+1}, sig_{k+1})$, where $k$ is the number of queries it made. 
    \item The challenger checks if all $m_i$ are distinct. If not, it outputs $0$ and terminates (adversary loses).
    \item The challenger verifies each signature as $\dss.\mathsf{Verify}(vk, m_i, sig_i)$. If all pass, the challenger outputs $1$, otherwise it outputs $0$.
\end{enumerate}
We say that a signature scheme is Boneh-Zhandry ($\mathsf{BZ}$) secure if for any QPT adversary $\adve$, we have
\begin{equation*}
    \Pr[\bzgame{\adve}(1^\lambda) = 1] \leq \negl(\lambda).
\end{equation*}
\end{definition}

We now state our main theorem.
\begin{theorem}\label{thm:bzexists}Assuming existence of any subexponentially secure collision-resistant hash function, there exists a \textbf{deterministic} signature scheme that satisfies $\mathsf{BZ}$ security.
\end{theorem}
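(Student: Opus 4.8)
I would realize the scheme as a \emph{derandomized hash‑based signature} in the spirit of Merkle/XMSS/SPHINCS, and reduce Boneh--Zhandry (plus‑one) security to two ingredients: collision resistance of the hash, and Zhandry's quantum‑query lemma (\Cref{lem:qrom}), which says that $k$ quantum queries to a random function cannot pin down $k+1$ distinct input--output pairs. Concretely, from the (subexponentially secure) CRHF $\mathsf{H}$ derive a one‑way function and hence a quantum‑query‑secure puncturable PRF $F$ (\Cref{thm:pprf-owf}); the signing key is a single PRF key $\mathsf{K}$, and every bit of internal randomness (one‑time‑signature preimages, subtree layout, intermediate keys) is produced as $F(\mathsf{K},\cdot)$, so the scheme is \emph{fully deterministic}. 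Since one cannot hash a super‑polynomially large tree, I would use a hyper‑tree of $\Theta(\lambda/\log\lambda)$ layers of short ($O(\log\lambda)$‑height) Merkle trees, so that each signature is a path through constantly many small trees and is computable from only $\poly(\lambda)$ evaluations of $F$; the message $m$ is placed at leaf $\mathsf{H}(m)\in\zo^\lambda$, which is injective on any polynomial set of messages with overwhelming probability, so honestly no leaf is ever reused. The verification key is the published root, a signature on $m$ is the bottom one‑time signature at leaf $\mathsf{H}(m)$ together with the authentication material (Merkle co‑paths and the intermediate one‑time keys/signatures) linking it to the root, and verification is a sequence of public hash checks with no secret‑key‑dependent test. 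I would emphasize that subexponential security is not merely a convenience: it lets us replace the Boneh--Zhandry "chameleon hash'' programming trick (not available from plain CRHF) by complexity leveraging, thereby obtaining fully adaptive plus‑one security rather than only a selective variant.

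\textbf{Reduction skeleton.} For correctness there is nothing to do beyond unwinding the definitions. For BZ security, start from $\bzgame{\adve}$ and move to a hybrid in which $F(\mathsf{K},\cdot)$ is replaced by a truly random function $\mathsf{RF}$; since the signing oracle evaluates $F$ on the adversary's query registers in superposition, this step uses quantum‑query security of the PRF (and it is here, and in the union bound below, that the subexponential strength of $\mathsf{H}$ is ultimately consumed via leveraging over the super‑polynomial leaf/subtree space). In this hybrid, look at the $k+1$ alleged forgeries $(m_i,\mathit{sig}_i)$ on distinct messages produced after $k$ quantum signing queries. For each $i$ exactly one of two situations occurs: (ii) \emph{some} value appearing in $\mathit{sig}_i$ — a co‑path node, an intermediate verification key, or an opened preimage — disagrees with the honest value that $\mathsf{RF}$ determines yet still satisfies the relevant hash equation; or (i) \emph{every} value in $\mathit{sig}_i$ matches the honest $\mathsf{RF}$‑determined value, and in particular the bottom one‑time signature at leaf $\mathsf{H}(m_i)$ is the honest one.

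\textbf{The two cases and the main obstacle.} If some forgery is of type (ii), a polynomial‑time reduction reads off a collision of $\mathsf{H}$ (two distinct child‑pairs, or two distinct one‑time preimages, with equal hash); this requires a union bound over the super‑polynomially many internal subtrees, which is why subexponential collision resistance is needed. So assume all $k+1$ forgeries are of type (i). Since the $m_i$ are distinct and $\mathsf{H}$ is collision‑resistant, the leaves $\mathsf{H}(m_i)$ are distinct, and by counting against the $k$ queries at least one forged leaf is "fresh''. The plan is then to invoke \Cref{lem:qrom}: build $\cB$ that simulates $\adve$, answers each quantum signing query by coherently computing the hyper‑tree path with a bounded number of quantum queries to $\mathsf{RF}$, and finally extracts from the type‑(i) forgeries a family of distinct input--output pairs of $\mathsf{RF}$, which the lemma forbids. \textbf{The hard part} is precisely this last reduction: a single quantum signing query does \emph{not} reveal one clean value of $\mathsf{RF}$ — it touches $\mathsf{RF}$ at the $\poly(\lambda)$ correlated points lying on (and adjacent to) the queried path — so a naive simulation makes $\poly(\lambda)\cdot k\gg k$ queries and \Cref{lem:qrom} no longer yields a contradiction. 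I expect the resolution to require engineering the construction so that "producing a type‑(i) forgery on a fresh message'' is equivalent to pinning down a value, not determined by the queries, of a genuinely \emph{non‑redundant} sub‑function of $\mathsf{RF}$ — isolating at each layer only the one‑time key actually consumed and folding the redundant co‑path material into public information — together with a guess of the fresh leaf/relevant message up front (a $2^{-\lambda}$ loss, absorbed by subexponential security in place of a chameleon hash), and finally a careful balancing of the hyper‑tree height, the component‑tree size, the query bound $k$, and the subexponential security exponent so that the PRF switch, the collision union bound, and the appeal to \Cref{lem:qrom} all hold simultaneously. This quantum plus‑one accounting is the technical heart of \Cref{thm:bzexists}.
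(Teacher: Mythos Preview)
Your high-level plan---a deterministic tree-based signature derandomized via a PRF, with subexponential hardness absorbing a guessing loss and \Cref{lem:qrom} closing the argument---matches the paper's, and you correctly isolate the crux: a single signing query touches polynomially many points of the random function, so \Cref{lem:qrom} cannot be applied to the full $\mathsf{RF}$.

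The paper's resolution of this crux is, however, more concrete than your sketch and different from your proposed fixes. Rather than indexing leaves by $\mathsf{H}(m)$ in a hyper-tree, the paper lets the \emph{bits of $m$ themselves} determine a path in a depth-$n$ binary tree (each prefix of $m$ names a node; the subexponential one-time-signature security absorbs a $2^{-\alpha}$ guessing loss at level~$\alpha$). The decisive construction idea you are missing is a \emph{second, dedicated} PRF $F_2:\zo^n\to\zo^\lambda$: the leaf key signs not $m$ but $y^*_m=F_2(K_2,m)$. This buys exactly the ``non-redundant sub-function'' you gestured at---each signing query evaluates $F_2$ on \emph{one} point (namely $m$)---and the proof decouples cleanly. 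A sequence of $n$ hybrids (one per level, each invoking one-time security together with a $2^{-\alpha}$ guess of the relevant prefix) forces every valid forgery on $m_j$ to carry the honest leaf verification key and hence the honest signed value $\tilde{y}_j=H_2(m_j)$. After these hybrids, the adversary has made exactly $k$ queries to the random function $H_2$ and produced $k{+}1$ correct pairs $(m_j,H_2(m_j))$ on distinct $m_j$, and \Cref{lem:qrom} applies directly.

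Your proposed fixes do not obviously yield this one-query-per-sign structure. Guessing the fresh leaf up front with a $2^{-\lambda}$ loss would convert the endgame into a single-key unforgeability reduction rather than a \Cref{lem:qrom} argument, which is a different proof (and needs its own care in the quantum-query setting). ``Folding co-path material into public information'' does not isolate a function that is queried once per signing and whose value is recoverable from a type-(i) forgery. The missing design choice is precisely the separation of the \emph{authentication chain} (handled level by level via one-time-signature hybrids with subexponential leveraging) from a \emph{single fresh random value per message} (the $F_2$ layer, to which \Cref{lem:qrom} is then applied with the correct query count).
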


\subsection{Construction}
Let $n(\lambda)$ denote the message length.  Let $F, F_2$ be PRF schemes, where the input-output length of $F$ will be clear from context, and $F_2$ maps $n(\lambda)$ bits to $\lambda$ bits. Let $\cdss$ be a one-time classical-query-secure deterministic signature scheme that is $2^{-n(\lambda)}\cdot\negl(\lambda)$-secure, such that two times its  verification key length is smaller than the length of the messages it can sign. It is easy to see that all of these constructions can be obtained assuming any subexponentially secure collision-resistant hash function.

For a message $m$, we will write $\biti{i}{m}$ to denote its $i$-th bit, and we will write $\prefii{i}{m}$ to denote prefix $\biti{1}{m} || \dots || \biti{i}{m}$, with the convention $\prefii{0}{m} = \emptyset$.

\paragraph{\underline{$\dss.\Sign(sk, m)$}}
\begin{enumerate}
    \item Sample $vk_{\emptyset}, sk_{\emptyset} \samp \cdss.\mathsf{Setup}(1^\lambda)$.
    \item Sample $K \samp F.\mathsf{Setup}(1^\lambda)$.
    \item Sample $K_2 \samp F_2.\mathsf{Setup}(1^\lambda)$.
    \item Set $vk = vk_{\emptyset}$.
    \item Set $sk = (sk_{\emptyset}, K, K_2)$.
    \item Output $vk, sk$.
\end{enumerate}

Before describing our signing algorithm, we introduce some notation.  For any strings $a$ of length up to $n$, we write $vk^*_{a}, sk^*_{a}$ to denote the value computed as $vk_{a}, sk_{a} = \cdss.\mathsf{Setup}(1^\lambda; F(K, a))$. Similarly, we write $y^*_m = F_2(K_2, m)$ and $sig^*_m = \cdss.\mathsf{Sign}(sk_m, y^*), f((x^*_{m,t})^{1}))_{t \in [n]})$ for all $m \in \zo^n$.

\paragraph{\underline{$\dss.\Sign(sk, m)$}}
\begin{enumerate}
    \item Parse $(sk_{\emptyset}, K, K_2) = sk$.
    \item For $t \in [n]$, compute $s_{t} = vk^*_{\prefii{t-1}{m} || 0} || vk^*_{\prefii{t-1}{m} || 1} || \cdss.\mathsf{Sign}(sk_{\prefii{t-1}{m}},\allowbreak vk^*_{\prefii{t-1}{m} || 0} ||\allowbreak vk^*_{\prefii{t-1}{m} || 1})$.
    \item Output $(s_1, \dots s_n), y^*_m, sig^*_m$.
\end{enumerate}

\paragraph{\underline{$\dss.\Ver(vk, m, sig)$}}
\begin{enumerate}
    \item Parse $(s_1, \dots s_n), y, isig = sig$.
    \item For $t \in [n]$, parse $pl_t^0 || pl_t^1 || sigpl_t = s_t$ (lengths of components clear from below).
     \item Verify  $\cdss.\mathsf{Verify}(vk, pl_{1}^0 || pl_{1}^1, sigpl_{1})$ and then for $t \in [n - 1]$, verify $\cdss.\mathsf{Verify}(pl_{t}^{\biti{t}{m}},\allowbreak pl_{t+1}^0 || pl_{t+1}^1, sigpl_{t+1})$. Output $0$ and terminate if any of the checks fails.
    \item Verify $\cdss.\mathsf{Verify}(pl_n^{\biti{n}{m}}, y, isig)$.
    \item Output $1$ if all checks passed. Otherwise, output $0$.
\end{enumerate}

\subsection{Proof of Security}
We now prove the security of our scheme through a sequence of hybrids.

\paragraph{\underline{$\hyb_0$}:} The original game $\bzgame{\adve}(1^\lambda)$.

\paragraph{\underline{$\hyb_1$}:} In this hybrid, we modify the experiments so that all PRF queries to $F, F_2$ are replaced with queries to truly random functions $H, H_2$ sampled during setup.

\paragraph{\underline{$\hyb_{2, \alpha}$ for $\alpha \in [n]$}:} Let $(\tilde{s}_{j,t})_{t \in [n]}, \tilde{y}_j, \tilde{isig}_j, \tilde{m}_j$ for $j \in [k + 1]$ be the forgeries (signature and message) produced by the adversary. Let $\tilde{pl}_{j,t}^0 || \tilde{pl}_{j,t}^1 || \tilde{sigpl}_{j,t}$ be the parsing (as in $\dss.\mathsf{Verify}$) of $\tilde{s}_{j,t}$ for all $j \in [k + 1], t \in [n]$.

We add an additional check at the end of the game. For all $t \in \{1, \dots, \alpha\}$, for all $j \in [k + 1]$, we check if $\tilde{pl}_{j,t}^0 || \tilde{pl}_{j,t}^1 = vk^*_{\prefii{t-1}{\tilde{m}_j} || 0} || vk^*_{\prefii{t-1}{\tilde{m}_j} || 1}$. If any of the checks fails, we output $0$ (adversary fails) and terminate.

\paragraph{\underline{$\hyb_3$}:} At the end of the game, we add the following additional check. For $j \in [k+1]$, we check if $\tilde{y}_j = H_2(\tilde{m}_j)$. If any of the checks fails, we output $0$ (adversary fails) and terminate.

\begin{lemma}
    $\hyb_0 \approx \hyb_1$
\end{lemma}
\begin{proof}
    This follows by the PRF security of $F, F_2$.
\end{proof}

\begin{lemma}
    $\hyb_1 \approx \hyb_{2,1}$ and $\hyb_{2,\alpha} \approx \hyb_{2,\alpha+1}$ for all $\alpha \in [n-1]$.
\end{lemma}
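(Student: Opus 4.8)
The plan is to reduce both claims to the one-time (classical-query) unforgeability of the inner deterministic signature scheme $\cdss$, exploiting that $\cdss$ is assumed $2^{-n(\lambda)}\negl(\lambda)$-secure so that an exponential guessing loss is affordable. Recall that the check added in $\hyb_{2,\alpha}$ forces, for every forgery $j$, the level-$\alpha$ payload $\tilde{pl}_{j,\alpha}^0\|\tilde{pl}_{j,\alpha}^1$ to equal the honest pair $vk^*_{\prefii{\alpha-1}{\tilde{m}_j}\|0}\|vk^*_{\prefii{\alpha-1}{\tilde{m}_j}\|1}$; in particular $\tilde{pl}_{j,\alpha}^{\biti{\alpha}{\tilde{m}_j}} = vk^*_{\prefii{\alpha}{\tilde{m}_j}}$, the honest verification key at the length-$\alpha$ node $\prefii{\alpha}{\tilde{m}_j}$. (In the base case $\hyb_1 \approx \hyb_{2,1}$ this role is played by the public root key $vk = vk_\emptyset$, which is why no guessing will be needed there.) The gap $\Pr[\hyb_{2,\alpha}=1] - \Pr[\hyb_{2,\alpha+1}=1]$ is exactly the probability that the adversary wins (distinct messages, all signatures verify), all level-$\le\alpha$ checks pass, yet the level-$(\alpha+1)$ payload of some forgery $j^*$ differs from the honest one. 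When this happens, the $(\alpha+1)$-st verification equation reads $\cdss.\Ver(vk^*_{\prefii{\alpha}{\tilde{m}_{j^*}}}, \tilde{pl}_{j^*,\alpha+1}^0\|\tilde{pl}_{j^*,\alpha+1}^1, \tilde{sigpl}_{j^*,\alpha+1})=1$ with a message that differs from the unique string $vk^*_{\prefii{\alpha}{\tilde{m}_{j^*}}\|0}\|vk^*_{\prefii{\alpha}{\tilde{m}_{j^*}}\|1}$ ever signed under $sk^*_{\prefii{\alpha}{\tilde{m}_{j^*}}}$ during honest signing; crucially, because $\dss$ is fully deterministic, each internal-node secret key signs one fixed string no matter how many signing queries touch that node. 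So this event yields a one-time forgery against $\cdss$ at node $\prefii{\alpha}{\tilde{m}_{j^*}}$.

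The reduction guesses a length-$\alpha$ string $a^* \chosen \zo^\alpha$ (no guess in the base case, where $a^*=\emptyset$), receives a challenge key $vk^*$ from the one-time $\cdss$ game, and plants it at node $a^*$: it simulates $\hyb_{2,\alpha}$ honestly — sampling the random functions $H, H_2$, implemented by $2k(\lambda)$-wise independent functions so the reduction stays QPT, exactly as in \Cref{thm:compiler} via \cite{Z12} — except that $vk^*_{a^*}$ is set to $vk^*$ and $sk^*_{a^*}$ is never used directly. The only use of $sk^*_{a^*}$ in honest signing is to produce $\cdss.\Sign(sk^*_{a^*},\, vk^*_{a^*\|0}\|vk^*_{a^*\|1})$, a fixed classical string the reduction can compute from $H$; it obtains this single signature from its one-time signing oracle at the outset and hard-codes it into the classical signing circuit, which it then runs in superposition to answer the adversary's BZ signing queries. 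Since the signed string at $a^*$ is deterministic, this hard-coded value is consistent across every (superposition) query, so conditioned on $a^* = \prefii{\alpha}{\tilde{m}_{j^*}}$ the simulation is perfect; the reduction then outputs $(\tilde{pl}_{j^*,\alpha+1}^0\|\tilde{pl}_{j^*,\alpha+1}^1, \tilde{sigpl}_{j^*,\alpha+1})$ as its forgery. Hence it wins the one-time game with probability at least $2^{-\alpha}\big(\Pr[\hyb_{2,\alpha}=1]-\Pr[\hyb_{2,\alpha+1}=1]\big)$; since $\alpha \le n$ and $\cdss$ is $2^{-n(\lambda)}\negl(\lambda)$-secure, the gap is $\negl(\lambda)$. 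The base case is identical but without the $2^{-\alpha}$ factor, planting $vk^*$ directly as the published root key $vk$.

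I expect the main obstacle to be simulating the quantum signing oracle while holding only one-time access to the planted key: what makes it go through is precisely that the scheme is deterministic, so a single classical string is signed at every node, letting the reduction pre-query its one-time oracle once and answer all superposition queries coherently with the cached value. A secondary point to check carefully is that the planted node is always internal of length $\le n-1$ (so its key only ever signs a pair of child verification keys, never a leaf hash $y^*$), and the statistical correctness of replacing truly random $H, H_2$ by $2k$-wise independent functions against a $k$-query adversary.
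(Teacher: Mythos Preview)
The proposal is correct and takes essentially the same approach as the paper: guess a length-$\alpha$ node, plant the $\cdss$ challenge key there, and extract a one-time forgery from any failure of the level-$(\alpha{+}1)$ check, absorbing the $2^{-\alpha}$ guessing loss into the $2^{-n(\lambda)}\negl(\lambda)$ security of $\cdss$. You give more care than the paper does to how the reduction answers superposition signing queries (pre-querying the single deterministic string signed at the planted node and hard-coding it) and to making the reduction QPT via $t$-wise independent functions, but the underlying argument is identical.
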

\begin{proof}
    We will prove $\hyb_{2,\alpha} \approx \hyb_{2,\alpha+1}$, and $\hyb_1 \approx \hyb_{2,1}$ follows from essentially the same argument.

    Observe that in $\hyb_{2,\alpha}$, we already know that (conditioned on adversary passing the checks) we have $\tilde{pl}_{j,\alpha}^0 || \tilde{pl}_{j,\alpha}^1 = vk^*_{\prefii{\alpha-1}{\tilde{m}_j} || 0} || vk^*_{\prefii{\alpha-1}{\tilde{m}_j} || 1}$. Suppose that adversary's output is such that there exists some index $j' \in [k + 1]$ such that $\tilde{pl}_{j',\alpha+1}^0 || \tilde{pl}_{j',\alpha+1}^1 \neq vk^*_{\prefii{\alpha}{\tilde{m}_{j'}} || 0} || vk^*_{\prefii{\alpha}{\tilde{m}_{j'}} || 1}$. Observe that this constitutes a forgery for the instance of $\cdss$ with the key pair $(vk^*_{\prefii{\alpha}{\tilde{m}_{j'}}, },sk^*_{\prefii{\alpha}{\tilde{m}_{j'}}})$ since: 
    \begin{itemize}
        \item We already checked that  $\cdss.\Ver(\tilde{pl}_{j',\alpha}^{\biti{\alpha}{\tilde{m}_{j'}}}, \tilde{pl}_{j',\alpha+1}^0 || \tilde{pl}_{j',\alpha+1}^1, \tilde{sigpl}_{j',\alpha'+1}) = 1$,
        \item We already checked that $\tilde{pl}_{j',\alpha}^{\biti{\alpha}{\tilde{m}_{j'}}}= vk^*_{\prefii{\alpha}{\tilde{m}_{j'}}}$,
        \item Only message signed with $sk^*_{\prefii{\alpha}{\tilde{m}_{j'}}}$ in the experiment is $vk^*_{\prefii{\alpha}{\tilde{m}_{j'}} || 0} || vk^*_{\prefii{\alpha}{\tilde{m}_{j'}} || 1}$.
    \end{itemize}
    Thus, such $j'$ value cannot exist (except with negligible probability); or otherwise we can create an adversary for the one-time unforgeability game $\mathcal{G}$ of $\cdss$ as follows. We place the challenge from the challenger of $\mathcal{G}$ at the node associated with a random string $a$ of length $\alpha$, and simulate the experiment $\hyb_{2,\alpha}$. Then, in case we get $\tilde{m}_{j'} = a$ (which happens with probability $2^{-\alpha}$), we get a forgery for $\mathcal{G}$ as argued above. Since $\cdss$ is at least $2^{-\alpha}\cdot\negl(\lambda)$ secure, we conclude that indeed such $j'$ value cannot exist except with negligible probability. This means that (except with negligible probability), it is already the case in $\hyb_{2,\alpha}$ that if the adversary wins we have $\tilde{pl}_{j,\alpha+1}^0 || \tilde{pl}_{j,\alpha+1}^1 = vk^*_{\prefii{\alpha}{\tilde{m}_{j}} || 0} || vk^*_{\prefii{\alpha}{\tilde{m}_{j}} || 1}$ for all $j \in [k+1]$. Thus, the newly added checks in $\hyb_{2,\alpha+1}$ do not change anything, and hence $\hyb_{2,\alpha}\approx\hyb_{2,\alpha+1}$.
\end{proof}

\begin{lemma}
    $\hyb_{2,t} \approx \hyb_{3}$ .
\end{lemma}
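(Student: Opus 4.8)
The plan is to replay, at the leaf level $t=n$, the same argument that was used for $\hyb_{2,\alpha}\approx\hyb_{2,\alpha+1}$. Recall that in $\hyb_{2,n}$ (i.e.\ $\hyb_{2,t}$ with $t=n$, the last hybrid of the $\hyb_{2,\cdot}$ chain) we have already forced that, conditioned on $\adve$ winning, $\tilde{pl}_{j,n}^0 || \tilde{pl}_{j,n}^1 = vk^*_{\prefii{n-1}{\tilde{m}_j} || 0} || vk^*_{\prefii{n-1}{\tilde{m}_j} || 1}$ for every $j\in[k+1]$, and hence $\tilde{pl}_{j,n}^{\biti{n}{\tilde{m}_j}} = vk^*_{\tilde{m}_j}$ since $\prefii{n-1}{\tilde{m}_j} || \biti{n}{\tilde{m}_j} = \tilde{m}_j$. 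A winning forgery must additionally pass the final check of $\dss.\Ver$, namely $\cdss.\Ver(\tilde{pl}_{j,n}^{\biti{n}{\tilde{m}_j}}, \tilde{y}_j, \tilde{isig}_j)=1$, and therefore $\cdss.\Ver(vk^*_{\tilde{m}_j}, \tilde{y}_j, \tilde{isig}_j)=1$. On the other hand, the leaf secret key $sk^*_{\tilde{m}_j}$ is used in the entire experiment only to produce $sig^*_{\tilde{m}_j}=\cdss.\Sign(sk^*_{\tilde{m}_j}, y^*_{\tilde{m}_j})$, and only when a signing query of $\adve$ has support on the message $\tilde{m}_j$; moreover $y^*_{\tilde{m}_j}=H_2(\tilde{m}_j)$ is a single fixed string and $\cdss$ is deterministic, so $H_2(\tilde{m}_j)$ is the unique $\cdss$-message ever signed under $vk^*_{\tilde{m}_j}$. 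Consequently, any index $j'$ with $\tilde{y}_{j'}\neq H_2(\tilde{m}_{j'})$ yields a valid one-time forgery $(\tilde{y}_{j'},\tilde{isig}_{j'})$ against the $\cdss$ instance sitting at leaf $\tilde{m}_{j'}$.

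To turn this into a quantitative bound I would build a QPT reduction to the one-time classical-query unforgeability of $\cdss$. The reduction samples a uniform $a\samp\zo^n$, plants the $\cdss$ challenge verification key as $vk^*_a$ at leaf $a$ (this is distributed exactly as an honest leaf key, since honest leaf keys are $\cdss.\Setup(1^\lambda; H(a))$ with $H(a)$ uniform), simulates $H,H_2$ itself (efficiently, e.g.\ via sufficiently-wise-independent hash functions, which is indistinguishable from truly random for $\adve$'s bounded number of queries), makes a single classical query $\cdss.\Sign(\cdot,H_2(a))$ to its challenger to obtain $sig^*_a$, and then simulates $\hyb_{2,n}$ faithfully. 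Every signature it must produce, including answering $\adve$'s quantum signing queries coherently as $\ket{m}\ket{w}\mapsto\ket{m}\ket{w\oplus\dss.\Sign(sk,m)}$, uses only internal/leaf secret keys $sk^*_b$ with $b\neq a$ (all known to the reduction) together with the fixed string $sig^*_a$ whenever a query's superposition touches the message $a$; since $\cdss$ and $\dss$ are deterministic and, at the leaf level, only the message $a$ itself (not any proper extension of $a$) exercises $sk^*_a$, one classical $\cdss$-query indeed suffices. At the end, the reduction inspects $\adve$'s $k+1$ forgeries and, if there is an index $j'$ with $\tilde{m}_{j'}=a$, $\cdss.\Ver(vk^*_a,\tilde{y}_{j'},\tilde{isig}_{j'})=1$ and $\tilde{y}_{j'}\neq H_2(a)$, it outputs $(\tilde{y}_{j'},\tilde{isig}_{j'})$.

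Finally, the reduction's simulated view of $\adve$ is distributed exactly as in a genuine $\hyb_{2,n}$ run and is independent of $a$, so $a$ is information-theoretically hidden from $\adve$. Therefore, conditioned on the bad event that $\hyb_{2,n}$ outputs $1$ while some index violates $\tilde{y}_{j'}=H_2(\tilde{m}_{j'})$ — an event whose probability is exactly $\Pr[\hyb_{2,n}=1]-\Pr[\hyb_3=1]$, since $\hyb_3$ differs from $\hyb_{2,n}$ only by adding this check — the lexicographically smallest violating index $j'$ has $\tilde{m}_{j'}=a$ with probability $2^{-n}$. Hence the reduction wins with probability at least $2^{-n}(\Pr[\hyb_{2,n}=1]-\Pr[\hyb_3=1])$, and the $2^{-n}\cdot\negl(\lambda)$-security of $\cdss$ forces $\Pr[\hyb_{2,n}=1]-\Pr[\hyb_3=1]\le\negl(\lambda)$, i.e.\ $\hyb_{2,n}\approx\hyb_3$. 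This lemma is essentially a restatement of the preceding one at the leaf, so there is no genuinely new obstacle; the one point that still needs care is the same as before, namely confirming that a superposition signing query exercises at most one distinct classical $\cdss$-message under the planted key, so that a single classical query to the one-time challenger is enough — and at a leaf this is the easiest instance of that phenomenon, since the planted key is touched only by the single message $a$.
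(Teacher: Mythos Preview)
Your proposal is correct and follows exactly the approach the paper intends: the paper's own proof is the single sentence ``This follows from essentially the same argument as the one for $\hyb_{2,\alpha} \approx \hyb_{2,\alpha+1}$,'' and you have carefully instantiated that argument at the leaf level $\alpha=n$, including the details of planting the $\cdss$ challenge at a random leaf $a\in\zo^n$, simulating the quantum signing queries using a single classical $\cdss$-signature on $H_2(a)$, and absorbing the $2^{-n}$ guessing loss into the $2^{-n}\cdot\negl(\lambda)$ security of $\cdss$.
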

\begin{proof}
This follows from essentially the same argument as the one for $\hyb_{2,\alpha} \approx \hyb_{2,\alpha+1}$.
\end{proof}

\begin{lemma}
    $\Pr[\hyb_3 = 1] \leq \negl(\lambda)$.
\end{lemma}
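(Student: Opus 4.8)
The plan is to derive this inequality from Zhandry's record lemma (\Cref{lem:qrom}), applied to the random function $H_2$. Recall that in $\hyb_3$ the experiment outputs $1$ only if (i) the $k+1$ forged messages $\tilde m_1,\dots,\tilde m_{k+1}$ are all distinct (the original check in $\bzgame{\adve}$) and (ii) the middle component of the $j$-th forgery satisfies $\tilde y_j = H_2(\tilde m_j)$ for every $j$ (the check added in $\hyb_3$). Hence
\[
\Pr[\hyb_3 = 1] \;\le\; \Pr\!\left[\tilde m_1,\dots,\tilde m_{k+1}\ \text{distinct}\ \wedge\ \forall j\in[k+1]:\ \tilde y_j = H_2(\tilde m_j)\right],
\]
so it suffices to bound the right-hand side. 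The key observation is that in $\hyb_3$ the adversary touches $H_2$ only through its $k$ signature queries, and each such query is morally a single quantum query to $H_2$: the component $y^*_m = H_2(m)$ of a signature is just $H_2$ evaluated at $m$, and everything else in a signature — the chain $s_1,\dots,s_n$, the leaf key pair, and the inner signature $sig^*_m = \cdss.\Sign(sk^*_m,y^*_m)$ — can be produced by the challenger from its own randomness once $H_2(m)$ is available.

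Concretely, I would build an adversary $\cB$ for the game of \Cref{lem:qrom} against $H_2$. It samples the remaining randomness of the $\hyb_3$ challenger itself (the root key $sk_\emptyset$ and the random function $H$ used for the internal and leaf keys, handled lazily), runs $\adve$, and answers each signature query $\sum_{m,w}\alpha_{m,w}\ket m\ket w$ by making one query to its $H_2$-oracle to coherently obtain $y^*_m = H_2(m)$, computing from it the chain, the leaf keys, and $sig^*_m$, XOR-ing the full signature into $w$, and restoring its ancilla. At the end, $\cB$ reads the middle component $\tilde y_j$ out of each (classical) forgery string and outputs $(\tilde m_1,\tilde y_1),\dots,(\tilde m_{k+1},\tilde y_{k+1})$. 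On the event in the displayed inequality, $\cB$ has produced $k+1$ distinct input/output pairs of $H_2$ from only $k$ queries, which by \Cref{lem:qrom} happens with at most negligible probability; hence $\Pr[\hyb_3=1]\le\negl(\lambda)$. Chaining this with the previous lemmas, $\hyb_0\approx\hyb_1\approx\hyb_{2,1}\approx\hyb_{2,2}\approx\cdots\approx\hyb_{2,n}\approx\hyb_3$, then yields $\Pr[\bzgame{\adve}(1^\lambda)=1]\le\negl(\lambda)$ and finishes the proof of \Cref{thm:bzexists}.

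The delicate step is the query counting: one must ensure that answering a signature query costs $\cB$ exactly one $H_2$-query rather than two. The naive reversible implementation writes $H_2(m)$ into an ancilla, uses it for both $y^*_m$ and $sig^*_m$, and then needs a second $H_2$-query to uncompute the ancilla; this would give $\cB$ a budget of $2k$ queries against only $k+1$ required outputs, and \Cref{lem:qrom} would say nothing. The fix is to exploit that $y^*_m = H_2(m)$ appears verbatim in the signature: once the output slot holding $y^*_m$ has been written, the adversary's register already contains a copy of $H_2(m)$, so the ancilla can be cleared by a single CNOT instead of a fresh $H_2$-query. Making this bookkeeping airtight — in particular arguing that it is without loss of generality to take the signing-oracle responses with zeroed output registers — is the main technical point; the remainder is routine.
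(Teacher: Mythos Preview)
Your high-level plan matches the paper's proof exactly: reduce to \Cref{lem:qrom} applied to the random function $H_2$, noting that a winning adversary in $\hyb_3$ produces $k{+}1$ distinct correct pairs $(\tilde m_j,H_2(\tilde m_j))$ from only $k$ signing queries. You also go beyond the paper by flagging the query-counting subtlety --- that a naive reversible simulation of one signing answer costs two $H_2$-queries (compute into an ancilla, use it for $y^*_m$ and $sig^*_m$, uncompute), which would leave $\cB$ with $2k$ queries against only $k{+}1$ required outputs, where \Cref{lem:qrom} gives nothing. The paper's own two-line proof simply asserts ``the adversary makes $k$ queries to $H_2$'' and does not address this.

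Your proposed fix, however, does not go through. The claim that one may ``without loss of generality take the signing-oracle responses with zeroed output registers'' is false for quantum queries: it would say the standard XOR oracle $\ket{m}\ket{w}\mapsto\ket{m}\ket{w\oplus\dss.\Sign(sk,m)}$ is no stronger than the minimal oracle with $w=0$, but these are inequivalent --- phase kickback via $\ket{-}$ in the output slot is available only with the standard oracle, and there is no garbage-free way to turn one minimal query into one standard query. Concretely, after XOR-ing the signature into $w$, the $y$-slot holds $w_y\oplus H_2(m)$, not $H_2(m)$; CNOT-ing this back into the ancilla leaves $w_y$ there rather than $0$, and you have no independent copy of $w_y$ with which to cancel it. Since $\cdss.\Sign(sk^*_m,\cdot)$ is nonlinear in its message, a one-query simulation of the full signing unitary genuinely fails for general $w$. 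So the gap you identified is real, your patch does not close it, and the paper's proof shares the same gap; closing it cleanly seems to require a more refined argument (e.g., working directly in the compressed-oracle picture rather than black-box invoking \Cref{lem:qrom}).
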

\begin{proof}
    Observe that in this hybrid, the adversary makes $k$ queries to the random function $H_2$, and at the end, (if it wins) it produces $k+1$ pairs $(\tilde{m}_t,\tilde{y}_t)$ such that $H_2(\tilde{m}_t) = \tilde{y}_t$. Therefore, we get $\Pr[\hyb_3 = 1] \leq \negl(\lambda)$ by \cref{lem:qrom}.
\end{proof}
Combining the above lemmas completes the security proof.

\part{Quantum Money}
\section{Quantum Coins}\label{sec:coincomp}
In this section, we give two compilers that transforms any public-key quantum money mini-scheme into a public-key quantum \emph{coin} scheme, that is, a quantum money scheme where all honest banknotes are identical pure states.

We first recall some definitions.

\begin{definition}[Quantum Money Mini-Scheme]\label{def:qmm}
A quantum money mini-scheme is a tuple of QPT algorithms
$\bank = (\mathsf{GenBanknote}, \Ver)$ defined as follows:
\begin{itemize}
    \item $\mathsf{GenBanknote}(1^\lambda)$: Outputs a serial number $sn$ and a register $\reg$ containing a \emph{banknote}. 
    \item $\Ver({sn}, \reg)$: Outputs $1$ or $0$, denoting if the state in the register $\reg$ is a valid banknote for the serial number $\mathsf{sn}$.
\end{itemize}

We require the following.
\paragraph{Correctness}
    \begin{align}
    \Pr[\Ver({sn}, \reg) = 1] = 1,
    \end{align}
    where $({sn}, \reg) \lrun \mathsf{GenBanknote}(1^\lambda)$.

\paragraph{Unforgeability}
For public-key scheme, we require that for any QPT adversary $\cA$, it holds that
    \begin{equation*}
        \Pr[b_1 = b_2 = 1 : \begin{array}{c}
             (sn, \reg) \samp \mathsf{GenBanknote}(1^\lambda) \\
             \reg_1, \reg_2 \samp \adve(sn,\reg) \\
             b_1 \samp \Ver(sn, \reg_1) \\
             b_2 \samp \Ver(sn, \reg_2)
        \end{array}] \leq \negl(\lambda)
    \end{equation*}
    For private-key schemes, the requirement is the same except that the adversary does not receive the serial number $sn$.
\end{definition}

We now define quantum money and quantum coin schemes. Since our focus is public-key schemes, we only define the public-key versions, and the private-key versions are defined similarly as in the case of mini-schemes.
\begin{definition}[Quantum Money and Quantum Coins]\label{def:qcoins}
A (public-key) quantum money scheme consists of the following efficient algorithms.
\begin{itemize}
    \item $\Setup(1^{\secp})$ outputs a public verification key $vk$ and a secret key $sk$.
    \item $\mathsf{GenBanknote}(sk)$ outputs a register.
    \item $\Ver(vk, \reg)$ outputs $1$ or $0$, denoting if the input is a valid banknote.
\end{itemize}

If $\mathsf{GenBanknote}(sk)$ outputs a fixed pure state $\ket{\psi_{sk}}$, then the scheme is called a quantum coin scheme.

We require correctness as before and also require the following security guarantee.

\paragraph{Unforgeability}
We require that for any polynomial $t(\cdot)$ and for any QPT adversary $\cA$ 
    \begin{equation*}
        \Pr[\forall i \in [t(\lambda)+1] b_i = 1 : \begin{array}{c}
            sk \samp \mathsf{Setup}(1^\lambda) \\
             (\reg_i) \samp \mathsf{GenBanknote}(sk)~\text{for}~i \in [t(\lambda)] \\
             (\reg^{adv}_{i})_{i \in [t(\lambda)+1]} \samp \adve(vk,(\reg_i)_{i \in [t(\lambda)]}) \\
             (b_i) \samp \mathsf{Verify}(vk, \reg^{adv}_i)~\text{for}~i \in [t(\lambda)+1]
        \end{array}] \leq \negl(\lambda)
    \end{equation*}
\end{definition}

\subsection{Pseudorandom State Compiler}\label{sec:money_PR_compiler}
Let $\minibank=(\minibank.\Gen,\minibank.\mathsf{Verify})$ be a public-key mini-scheme. 
We assume that $\minibank.\Gen$ has classically determined outputs (\Cref{def:classically_determined}); namely, its money state is a pure state fully determined by its classical randomness (and the security parameter).  
%We will assume that $\minibank$ produces banknotes by first generating a classical description and then applies a public unitary (e.g. universal quantum circuit) to obtain the state from the description. 
This is a very natural requirement satisfied by the existing schemes.

Let $\dss$ be a classical deterministic signature scheme that satisfies unforgeability security with classical queries. Let $F$ a pseudorandom function - its input-output size will be clear from context. Let $\mathsf{PRS}$ be a pseudorandom state scheme.

\paragraph{\underline{$\bank.\mathsf{Setup}(1^\lambda)$}}
\begin{enumerate}
    \item Sample $K \samp F.\mathsf{Setup}(1^\lambda)$.
\item Sample $vk, sgk \samp \dss.\mathsf{Setup}(1^\lambda)$.
\item Sample $k \samp \mathsf{PRS.Setup}(1^\lambda)$
\item Set $sk = (sgk, K, k)$
\item Output $vk, sk$.
\end{enumerate}

\paragraph{\underline{$\bank.\mathsf{GenBanknote}(sk)$}}
\begin{enumerate}
\item Parse $(sgk, K, k) = sk$.
    \item Prepare the state $\ket{\$} = \sum_{x}\alpha_x\ket{x}\ket{sn_{x}}\ket{\dss.\mathsf{Sign}(sgk, sn_{x})}\ket{\$_{x}}$ where $sn_{x},\ket{\$_{x}} = \minibank.\mathsf{Gen}(1^\lambda; r_{x})$ and $r_{x} = F(K, x)$ and $\ket{\psi_k} = \mathsf{PRS.StateGen}(k) = \sum_x \alpha_x\ket{x}$. $\ket{\$}$ can be prepared efficiently by running $\mathsf{PRS.StateGen}(k)$ and then running $F(K,\cdot)$,  $\minibank.\mathsf{Gen}$ and the signing algorithms coherently.
    
    \item Output $\ket{\$}$.
\end{enumerate}

\paragraph{\underline{$\bank.\mathsf{Verify}(vk, \reg)$}}

\begin{enumerate}
\item Perform the following operations coherently (as in Gentle Measurement Lemma \cite{aarlemma}), then rewind.
\begin{enumerate}[label=\arabic*.]
    \item Measure all but the last register (the mini-banknote register) of $\reg$ to obtain $id, sn, sig, \reg'$.
    \item Verify the signature by $\dss.\mathsf{Verify}(vk, sn, sig)$.
    \item Verify $\minibank.\mathsf{Verify}(sn, \reg')$
\end{enumerate}
\item If all verifications passed, output $1$. Otherwise, output $0$.
\end{enumerate}
\begin{theorem}
    $\mathsf{Bank}$ is a quantum coin scheme that satisfies correctness and unclonability (counterfeiting) security.
\end{theorem}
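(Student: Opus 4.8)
We check the three parts separately: that $\bank$ is a \emph{coin} scheme, that it is correct, and that it is multi-copy unforgeable. The first two are immediate. Since $\mathsf{PRS.StateGen}(k)$ is a fixed pure state determined by $k$, the PRF $F$ is deterministic, $\minibank.\mathsf{Gen}$ has classically determined outputs by assumption, and $\dss.\Sign$ is deterministic, the output of $\bank.\mathsf{GenBanknote}(sk)$ is a fixed pure state $\ket{\psi_{sk}}$ depending only on $sk=(sgk,K,k)$; this is the coin property. For correctness, in every branch $\ket{x}\ket{sn_x}\ket{\dss.\Sign(sgk,sn_x)}\ket{\$_x}$ the signature verifies by signature correctness and the mini-banknote verifies with probability $1$ by the perfect correctness of $\minibank$, so $\ket{\$}$ lies in the accepting eigenspace of the projector implemented by the coherent verification inside $\bank.\Ver$; hence $\bank.\Ver(vk,\ket{\$})$ outputs $1$ with probability $1$, and by the Gentle Measurement Lemma the coin is left undisturbed.

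For unforgeability, the plan is to present $\bank$ as the purification compiler of \Cref{thm:compiler} applied to the collusion-resistant Aaronson--Christiano money scheme, and then invoke that compiler. Let $\bank'$ be the scheme in which $\bank'.\Setup(1^\lambda)$ samples $(vk,sgk)\samp\dss.\mathsf{Setup}(1^\lambda)$, $\bank'.\mathsf{GenBanknote}(sgk)$ samples fresh $r$, sets $(sn,\ket{\$})=\minibank.\mathsf{Gen}(1^\lambda;r)$ and $sig=\dss.\Sign(sgk,sn)$, and outputs the banknote $\ket{sn}\ket{sig}\ket{\$}$, and $\bank'.\Ver$ measures the $sn,sig$ registers and runs the $\dss.\Ver$ and $\minibank.\Ver$ checks. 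By \cite{TOC:AarChr13} (see also \cite{EC:Zhandry19b}), $\bank'$ satisfies the unforgeability notion of \Cref{def:qcoins}: any $t{+}1$ accepted forgeries either include a banknote whose serial number was never issued, which is a signature forgery under classical queries, or, by a counting argument over the issued serial numbers, two accepted banknotes sharing a serial number that was issued only once, which after measuring the (classical) serial-number registers of the joint forgery state yields a mini-scheme forgery as in \Cref{def:qmm}. Now take $\mathsf{GenState}(sgk;r)$ to be the QPT map that outputs $\ket{sn}\ket{sig}\ket{\$}$ as above; it has classically determined outputs with randomness length equal to that of $\minibank.\mathsf{Gen}$. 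Let $\secexp$ be the collusion game of $\bank'$, written as a security experiment with a single state-query stage in which the challenger, holding internal state $st=sgk$ and having already handed $vk$ to the adversary, answers a query of size $t$ by $t$ independent invocations of $\mathsf{GenState}(st)$. The compiled experiment $\secexp_{pure}$ of \Cref{thm:compiler} then samples $k\samp\mathsf{PRS.Setup}(1^\lambda)$ and $K\samp F.\mathsf{Setup}(1^\lambda)$ and hands the adversary $t$ copies of $\sum_x\alpha_{k,x}\ket{x}\otimes\ket{\phi_x}$ with $\ket{\phi_x}=\mathsf{GenState}(sgk;F(K,x))=\ket{sn_x}\ket{sig_x}\ket{\$_x}$ --- precisely $t$ copies of the coin $\ket{\$}$ of $\bank$ with $sk=(sgk,K,k)$. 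Up to the PRS-index register, which $\bank.\Ver$ merely measures and discards before performing exactly the checks of $\bank'.\Ver$, the experiment $\secexp_{pure}$ is the multi-copy unforgeability game of $\bank$, and any adversary against the latter induces one against $\secexp_{pure}$ with the same winning probability (trace out the index register of each forgery). Applying \Cref{thm:compiler}, any QPT adversary $\adve$ against $\secexp_{pure}$ yields a QPT adversary $\adve'$ against $\secexp$ with $|\Pr[\secexp_{pure}^{\adve}=1]-\Pr[\secexp^{\adve'}=1]|\le\negl(\lambda)$; since $\bank'$ is collusion-resistant, $\Pr[\secexp^{\adve'}=1]\le\negl(\lambda)$, hence $\Pr[\secexp_{pure}^{\adve}=1]\le\negl(\lambda)$, which is the unforgeability bound for $\bank$.

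The step I expect to be the main obstacle is making the correspondence with \Cref{thm:compiler} fully rigorous: one must cast $\bank'$'s collusion game as a legitimate instance of the experiment template of \Cref{thm:compiler} (fixing the internal state $st$, the single state-query stage, and the timing of handing $vk$ to the adversary), confirm that $\mathsf{GenState}$ as defined genuinely has classically determined outputs, and check that $\bank.\Ver$ and $\bank'.\Ver$ have matching acceptance probabilities once the ignored index register is traced out. A secondary point needing care is the collusion-resistance of $\bank'$ itself, in particular that the counting/hybrid reduction to mini-scheme unforgeability still goes through when the $t{+}1$ forgeries form an entangled joint state --- handled by measuring the classical serial-number registers of all forgeries before invoking the mini-scheme challenger.
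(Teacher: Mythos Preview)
Your proposal is correct and follows exactly the same approach as the paper: correctness is immediate, and unforgeability is obtained by applying the purification compiler of \Cref{thm:compiler} to reduce to the collusion-resistant (mixed-state) setting of independent signed mini-scheme banknotes, whose security was established by \cite{TOC:AarChr13}. The paper's proof is simply a terse two-sentence version of what you have written out in detail; your expanded treatment of the coin property, the correspondence between $\secexp_{pure}$ and the $\bank$ game, and the handling of the index register are all implicit in the paper's invocation of \Cref{thm:compiler}.
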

\begin{proof}
    The correctness is immediate. By \cref{thm:compiler}, security reduces to the mixed state case where the adversary receives independent samples from the mini-scheme with signed serial numbers, and the security in that case was proven by \cite{TOC:AarChr13}.
\end{proof}

Since the PRS, PRF and signature schemes can be constructed from one-way functions,
\begin{corollary}
    Assuming the existence of a public-key mini-scheme whose banknote generation algorithm has classically determined outputs and one-way functions, there exists a public-key quantum coin scheme.
\end{corollary}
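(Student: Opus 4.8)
The plan is to instantiate every building block of the scheme $\bank$ from \Cref{sec:money_PR_compiler} from one-way functions and then invoke the theorem just established that $\bank$ is a quantum coin scheme. Recall that $\bank$ is parameterized by four objects: a public-key mini-scheme $\minibank$ whose banknote generation has classically determined outputs, a \emph{deterministic} classical signature scheme $\dss$ satisfying unforgeability under classical queries, a pseudorandom function $F$, and a pseudorandom state scheme $\mathsf{PRS}$. The mini-scheme is supplied by hypothesis, so it remains only to obtain $\mathsf{PRS}$, $F$, and $\dss$ from OWFs, and this is where the corollary differs from the preceding theorem.

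For $\mathsf{PRS}$: by \Cref{thm:prsexists}, (post-quantum) one-way functions yield a pseudorandom state scheme of any polynomial output length, in particular the $\lambda$-qubit one required by the construction. For $F$: a PRF of arbitrary polynomial input/output length follows from OWFs (indeed \Cref{thm:pprf-owf} already gives a puncturable PRF, which is a fortiori an ordinary PRF). For $\dss$: by standard results, one-way functions imply EUF-CMA-secure classical signatures; such a scheme need not be deterministic, but we derandomize the signer by sampling an extra PRF key $K'$ at setup and using $F(K', m)$ as the signing randomness for message $m$. Unforgeability under classical queries of the derandomized scheme follows from that of the base scheme via a one-step hybrid that replaces $F(K',\cdot)$ with a truly random function, which the reduction can implement while answering the signing oracle; the resulting scheme is deterministic, as required. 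Plugging these three objects together with the hypothesized mini-scheme into the construction of \Cref{sec:money_PR_compiler}, and applying the theorem proved immediately above (that $\bank$ so instantiated is a correct, unclonable quantum coin scheme), yields the desired quantum coin scheme.

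I do not expect a genuinely hard step. All of the cryptographic weight sits in the preceding theorem, whose security argument passes through \Cref{thm:compiler} (the purification compiler) to reduce unforgeability of $\bank$ to collusion-resistant unforgeability of the Aaronson--Christiano mini-scheme-plus-signature construction \cite{TOC:AarChr13}; and the ``classically determined outputs'' property of $\bank.\mathsf{GenBanknote}$ is inherited from the same property of $\minibank.\mathsf{Gen}$ together with the fact that the signing step is deterministic and the PRS/PRF outputs are fixed by their keys. The only bookkeeping point warranting care is the derandomization of $\dss$ and the observation that it preserves classical-query security; everything else is a direct consequence of the theorem and textbook OWF-based feasibility results.
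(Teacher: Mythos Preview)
Your proposal is correct and follows essentially the same approach as the paper, which simply notes that the PRS, PRF, and (deterministic) signature schemes can all be constructed from one-way functions and then invokes the preceding theorem. Your additional detail on derandomizing the signature scheme via a PRF is a standard step the paper leaves implicit, but it does not change the overall argument.
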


Since a public-key mini-scheme can be constructed from subspace-hiding obfuscation (\cite{EC:Zhandry19b}), we also get the following corollary.
\begin{corollary}
    Assuming the existence of subspace-hiding obfuscation and one-way functions, there exists a public-key quantum coin scheme.
\end{corollary}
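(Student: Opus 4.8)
The plan is to instantiate the public-key mini-scheme appearing in the previous corollary with Zhandry's subspace-based construction~\cite{EC:Zhandry19b}, and to verify that its banknote generation algorithm has classically determined outputs in the sense of \Cref{def:classically_determined}. Given that check, the corollary is immediate: the excerpt already records that subspace-hiding obfuscation yields a public-key mini-scheme~\cite{EC:Zhandry19b}, and the preceding corollary turns any such mini-scheme with classically determined outputs, together with one-way functions, into a public-key quantum coin scheme via the compiler of \Cref{sec:money_PR_compiler}.

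Concretely, recall that in that mini-scheme $\minibank.\Gen(1^\lambda; r)$ uses its classical randomness $r$ to first sample a uniformly random $\lambda/2$-dimensional subspace $A \subseteq \F_2^{\lambda}$, and then outputs as the serial number the pair of subspace-hiding obfuscations of the membership programs for $A$ and $A^\perp$ (using further coins extracted from $r$), together with the subspace state $\ket{A} = \tfrac{1}{\sqrt{\abs{A}}}\sum_{a \in A}\ket{a}$ as the banknote. The subspace $A$ is a deterministic function of $r$; each obfuscated program is a deterministic function of $r$ (obfuscation being a classical randomized algorithm whose coins come from $r$); and once $A$ is fixed, $\ket{A}$ is a fixed pure state, preparable by a deterministic unitary from any fixed basis of $A$ (apply Hadamards to a $\lambda/2$-qubit register and then the linear map sending the computational basis onto the chosen basis of $A$). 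Hence the output of $\minibank.\Gen$ is a classical serial number together with a pure state, all fully determined by $1^\lambda$ and $r$, so $\minibank.\Gen$ has classically determined outputs.

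It then remains only to apply the previous corollary: from this mini-scheme together with one-way functions---which are assumed in the statement, and are in any case exactly what the compiler of \Cref{sec:money_PR_compiler} needs in order to build the pseudorandom state scheme, the pseudorandom function, and the deterministic signature scheme---we obtain a public-key quantum coin scheme, and correctness/unforgeability are inherited as in the theorem of \Cref{sec:money_PR_compiler}. The only step with any content is the classically-determined-outputs check above, and even there the sole subtlety is to insist that the chosen implementation of $\minibank.\Gen$ prepares $\ket{A}$ by a deterministic unitary rather than, say, by measuring an intermediate register or sampling additional quantum randomness; since the standard coset-state preparation does exactly this, the property holds and the corollary follows.
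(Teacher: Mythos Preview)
Your proposal is correct and takes essentially the same approach as the paper: the paper simply observes, in a single sentence before the corollary, that a public-key mini-scheme can be built from subspace-hiding obfuscation via~\cite{EC:Zhandry19b}, and then invokes the preceding corollary. You do the same, but additionally spell out the verification that Zhandry's banknote generation has classically determined outputs---a point the paper merely asserts as ``a very natural requirement satisfied by the existing schemes'' without further detail.
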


\subsection{Equal Superposition Compiler}\label{sec:money_eqsup_compiler}
In this section, we show that the natural scheme that uses the equal superposition of all possible banknotes as the purified banknote gives a secure scheme.

%Let $\minibank$ be a public-key mini-scheme. We will assume that $\minibank$ produces banknotes by first generating a classical description and then applies a public unitary (e.g. universal quantum circuit) to obtain the state from the description. This is a very natural requirement satisfied by the existing schemes.
Let $\minibank=(\minibank.\Gen,\minibank.\mathsf{Verify})$ be a public-key mini-scheme. 
We assume that $\minibank.\Gen$ has classically determined outputs (\Cref{def:classically_determined}).

Let $\dss$ be a classical deterministic signature scheme that satisfies $\mathsf{BZ}$-security (\cref{defn:bzsec}). Let $F$ a pseudorandom function - its input-output size will be clear from context. Finally, let $\nu(\lambda)$ be a function such that $1/2^{\nu(\lambda)}$ is negligible (i.e., $\nu(\lambda)$ is superlogarithmic).

\paragraph{\underline{$\bank.\mathsf{Setup}(1^\lambda)$}}
\begin{enumerate}
    \item Sample $K \samp F.\mathsf{Setup}(1^\lambda)$.
\item Sample $vk, sgk \samp \dss.\mathsf{Setup}(1^\lambda)$.
\item Set $sk = (sgk, K)$
\item Output $vk, sk$.
\end{enumerate}

\paragraph{\underline{$\bank.\mathsf{GenBanknote}(sk)$}}
\begin{enumerate}
\item Parse $(sgk, K) = sk$.
    \item Prepare the state $\ket{\$} = \sum_{id \in \zo^{\nu(\lambda)}}\ket{id}\ket{sn_{id}}\ket{\dss.\mathsf{Sign}(sgk, sn_{id})}\ket{\$_{id}}$ where $sn_{id},\ket{\$_{id}} = \minibank.\mathsf{Gen}(1^\lambda; r_{id})$ and $r_{id} = F(K, id)$.
    
    \item Output $\ket{\$}$.
\end{enumerate}

\paragraph{\underline{$\bank.\mathsf{Verify}(vk, \reg)$}}

\begin{enumerate}
\item Perform the following operations coherently (as in Gentle Measurement Lemma \cite{aarlemma}), then rewind.
\begin{enumerate}[label=\arabic*.]
    \item Measure all but the last register (the mini-banknote register) of $\reg$ to obtain $id, sn, sig, \reg'$.
    \item Verify the signature by $\dss.\mathsf{Verify}(vk, sn, sig)$.
    \item Verify $\minibank.\mathsf{Verify}(sn, \reg')$
\end{enumerate}
\item If all verifications passed, output $1$. Otherwise, output $0$.
\end{enumerate}
\begin{theorem}\label{thm:coinsecure}
    $\mathsf{Bank}$ is a quantum coin scheme that satisfies correctness and unclonability (counterfeiting) security.
\end{theorem}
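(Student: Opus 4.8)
The plan is to reduce the multi-copy unforgeability game to the mini-scheme unforgeability of $\minibank$ and the $\mathsf{BZ}$-security of $\dss$, using \Cref{thm:qsreadonce} as the key intermediate step. Fix a QPT adversary $\adve$ and the polynomial number of coins $t = t(\lambda)$ it requests; let $\event{Win}$ be the event that all $t+1$ registers it returns pass $\bank.\mathsf{Verify}$. Throughout we treat each invocation of $\bank.\mathsf{Verify}$ as first measuring the classical components $id,sn,sig$ of the register and then running the signature and mini-scheme checks; by the gentle-measurement structure of the verifier this leaves the acceptance probability unchanged, and it makes the measured serial numbers $sn_1,\dots,sn_{t+1}$ well defined. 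Correctness is immediate from correctness of $\minibank$ and $\dss$, so the content is the unforgeability bound $\Pr[\event{Win}]\le\negl(\lambda)$.

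\emph{Step 1: replacing the coins by a read-once state.} First replace $r_{id} = F(K,id)$ by a truly random function $H(id)$; this costs $\negl(\lambda)$ by pseudorandomness of $F$, since the whole experiment — including the coherent preparation of $\ket{\$}^{\otimes t}$ — can be run relative to an oracle for $F$ or $H$ with polynomially many queries. Now regard $\ket{\$}^{\otimes t}$ as the output of $t$ non-adaptive equal-superposition queries, each of negligible weight $2^{-\nu(\lambda)}$ on every $id$ (as $\nu$ is superlogarithmic), to the oracle $\mathcal{O}$ that on input $id$ returns the pure state $\ket{sn_{id}}\ket{\dss.\mathsf{Sign}(sgk,sn_{id})}\ket{\$_{id}}$ with $(sn_{id},\ket{\$_{id}}) = \minibank.\mathsf{Gen}(1^\lambda;H(id))$; because $\minibank.\mathsf{Gen}$ has classically determined outputs, for each fixed $sgk$ this oracle is a fresh list of independent samples from a fixed efficiently describable distribution, so \Cref{thm:qsreadonce} applies. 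Hence for every polynomial $p$, $\adve$'s output is within trace distance $1/p(\lambda)$ of its output when given instead the read-once state $\ket{\phi_0}$ built from a pool of $\ell := \ocsrd\,p^2 t^3$ independent mini-banknotes $\{(\tilde{sn}_j,\tilde{\$}_j)\}_{j\in[\ell]}$, their signatures $\tilde{sig}_j = \dss.\mathsf{Sign}(sgk,\tilde{sn}_j)$, and a random $P:\zo^{\nu}\to[\ell]$, with $\ket{\phi_0}$ supported only on $t$-tuples $(id_1,\dots,id_t)$ whose images $P(id_1),\dots,P(id_t)$ are distinct. Writing $\event{Win}'$ for the win event in this modified game, it suffices to prove $\Pr[\event{Win}']\le\negl(\lambda)$: then $\Pr[\event{Win}]\le 1/p(\lambda)+\negl(\lambda)$ for all polynomials $p$, which forces $\Pr[\event{Win}]$ to be negligible. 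One point to check here is that $\ket{\phi_0}$ can be prepared efficiently from a \emph{single} copy of each pool banknote together with at most $t$ signature queries — this is needed for the reductions below: load the $\ell$ pool states into $\ell$ registers and, slot by slot, controlled-SWAP the register indexed by $P(id_k)$ into slot $k$ (using the quantum Fisher--Yates idea already used in \Cref{thm:compiler}); the distinct-image support guarantees these SWAPs never conflict.

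\emph{Step 2: bounding $\Pr[\event{Win}']$.} Let $D$ be the number of distinct values among $sn_1,\dots,sn_{t+1}$, put $T = \{\tilde{sn}_j\}_{j\in[\ell]}$, and let $\event{NoPool}$ be the event that some $sn_i\notin T$. First, if $D = t+1$ then the $t+1$ pairs $(sn_i,sig_i)$ are distinct and valid, so — realizing the game by a $\mathsf{BZ}$-adversary that prepares $\ket{\phi_0}$ with $t$ quantum signature queries and then runs $\adve$ — this is a $\mathsf{BZ}$-forgery, whence $\Pr[\event{Win}'\wedge D=t+1]\le\negl(\lambda)$. Second, if $\event{NoPool}$ holds, use a different $\mathsf{BZ}$-adversary that makes $\ell$ \emph{classical} queries on the serial numbers of $T$ (learning all $\tilde{sig}_j$), then builds $\ket{\phi_0}$ with no further queries, runs $\adve$, and outputs the $\ell$ known pairs $\{(\tilde{sn}_j,\tilde{sig}_j)\}$ together with one forged pair $(sn_{i_0},sig_{i_0})$ having $sn_{i_0}\notin T$ — a set of $\ell+1$ distinct valid pairs from $\ell$ queries, whence $\Pr[\event{Win}'\wedge\event{NoPool}]\le\negl(\lambda)$. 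Finally, in the remaining case $\event{Win}'\wedge D\le t\wedge\neg\event{NoPool}$, two returned banknotes share a serial number $sn = sn_i = sn_{i'} = \tilde{sn}_j$ with $i\neq i'$ and both carry valid mini-banknote registers for $\tilde{sn}_j$; a reduction to mini-scheme unforgeability plants the challenge mini-banknote at a uniformly random slot $j^\ast\samp[\ell]$ (generating the rest of the pool, $sgk$, $vk$, and all signatures itself — it can sign $\tilde{sn}_{j^\ast}$ since it knows $sgk$), builds $\ket{\phi_0}$ from the single challenge copy, runs $\adve$, and when $j^\ast=j$ outputs the two cloned mini-banknote registers, so this case has probability at most $\ell\cdot\negl(\lambda)=\negl(\lambda)$. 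Since these three cases cover $\event{Win}'$, we obtain $\Pr[\event{Win}']\le\negl(\lambda)$, as desired.

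\emph{Main obstacle.} The difficulty — and the reason \Cref{thm:qsreadonce} rather than the purification compiler of \Cref{thm:compiler} is used — is the $\event{NoPool}$ case: a priori the adversary could return valid coins whose serial numbers lie outside the polynomial-size pool $T$, seemingly escaping the mini-scheme reduction. The resolution exploits that such a serial number necessarily comes with a \emph{forged} signature, while the read-once reduction only ever obtains signatures on the $\ell$-element pool, so one extra forged pair on top of the $\ell$ classically queried ones yields $\ell+1$ forgeries from $\ell$ queries and contradicts $\mathsf{BZ}$-security. A secondary delicate ingredient is the use of a deterministic $\mathsf{BZ}$-secure signature scheme throughout — so that $\ket{\$}$ is a genuine pure state and the (quantum or classical) signature queries used to assemble $\ket{\phi_0}$ are consistent across the superposition — which is exactly what \Cref{thm:bzexists} supplies.
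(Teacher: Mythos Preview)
There is a genuine gap in your mini-scheme reduction. Your claim that $\ket{\phi_0}$ can be prepared from a single copy of each pool banknote via controlled-SWAP does not hold. The Fisher--Yates trick in the proof of \Cref{thm:compiler} works because there all $t$ sample states are used in \emph{every} branch of the superposition (only the ordering varies), so the permutation can be recomputed from the distinct labels and the auxiliary register uncomputed. Here, by contrast, only $t$ of the $\ell$ pool banknotes are consumed in each branch, and \emph{which} $\ell-t$ of them remain sitting in the pool registers depends on the $id$-tuple in superposition; those leftover pool registers stay entangled with the $id$ registers and cannot be uncomputed. In your mini-scheme reduction the only truly problematic pool index is the challenge slot $j^\ast$ (the other $\ell-1$ banknotes have known randomness and can be generated coherently), but that single pool register still ends up in state $\ket{0}$ on branches where $j^\ast$ is used and in state $\ket{\tilde{\$}_{j^\ast}}$ on branches where it is not. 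Tracing it out collapses $\ket{\phi_0}$ into an incoherent mixture of its ``$j^\ast$ used'' and ``$j^\ast$ unused'' components, which sits at trace distance $\sqrt{w_0 w_1}=\Theta(\sqrt{t/\ell})$ from $\ketbra{\phi_0}{\phi_0}$. This is only inverse polynomial and swamps the $1/\ell$ factor you pay for guessing $j^\ast$, so no contradiction with mini-scheme security follows.

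The paper repairs exactly this step by inserting one more hybrid before the mini-scheme reduction: attach an auxiliary register (with values in $\Z_{k+1}$) that records which slot, if any, has $P$-image equal to the guessed challenge index, and \emph{measure} it. By \Cref{lem:midmeasure} this costs only a multiplicative $1/(k+1)$ in the winning probability. After the measurement the challenge banknote sits at a fixed, known slot (or at none), so the post-measurement state can be prepared exactly from one challenge copy together with coherent computation of the remaining slots from their known randomness, and the mini-scheme reduction now goes through cleanly. Your $\mathsf{BZ}$-based arguments for the $D=t+1$ and $\event{NoPool}$ cases are fine and essentially match the paper's $\hyb_3\to\hyb_4$ and $\hyb_2\to\hyb_3$ transitions; only the final mini-scheme step needs this extra measurement hybrid.
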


Since the required signature scheme can be constructed from subexponentially secure collision-resistant hash functions (\cref{thm:bzexists}), we obtain the following corollary.
\begin{corollary}
    Assuming the existence of a public-key mini-scheme and subexponentially secure collision-resistant hash functions, there exists a public-key quantum coin scheme.
\end{corollary}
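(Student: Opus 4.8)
The plan is to instantiate the building blocks of the equal-superposition compiler of \Cref{sec:money_eqsup_compiler} under the stated hypothesis and then invoke \Cref{thm:coinsecure}. The compiler requires three ingredients: (i) a public-key mini-scheme $\minibank$ whose banknote generation algorithm has classically determined outputs, (ii) a pseudorandom function $F$, and (iii) a \emph{deterministic} classical signature scheme $\dss$ satisfying $\mathsf{BZ}$-security. Ingredient (i) is assumed directly in the corollary's hypothesis. For the remaining two, the observation is that subexponentially secure collision-resistant hash functions imply (in fact, polynomially secure collision-resistant hash functions already imply) one-way functions, hence pseudorandom functions by the standard GGM construction (\Cref{thm:pprf-owf}), which gives (ii). Ingredient (iii) is precisely the content of \Cref{thm:bzexists}, which states that subexponentially secure collision-resistant hash functions yield a deterministic $\mathsf{BZ}$-secure signature scheme.

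Concretely, I would proceed as follows. First, note that any mini-scheme $\minibank.\Gen$ whose output is classically determined --- i.e., a pure state fully specified by its classical randomness --- is exactly the hypothesis needed for the compiler; I simply carry this assumption through. Second, I invoke \Cref{thm:bzexists} to obtain the deterministic $\mathsf{BZ}$-secure signature scheme $\dss$ from the assumed subexponentially secure collision-resistant hash function; I also note that such a hash function is in particular one-way, so \Cref{thm:pprf-owf} supplies the pseudorandom function $F$ (with whatever input/output lengths the compiler needs, which are polynomial in $\lambda$). Third, I plug $\minibank$, $F$, and $\dss$ into the $\bank = (\Setup, \mathsf{GenBanknote}, \Ver)$ construction of \Cref{sec:money_eqsup_compiler}; by construction, $\mathsf{GenBanknote}(sk)$ outputs the fixed pure state $\ket{\$} = \sum_{id \in \zo^{\nu(\lambda)}}\ket{id}\ket{sn_{id}}\ket{\dss.\mathsf{Sign}(sgk, sn_{id})}\ket{\$_{id}}$, so this is indeed a quantum coin scheme in the sense of \Cref{def:qcoins}. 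Finally, correctness and unclonability (unbounded multi-copy counterfeiting security) follow immediately from \Cref{thm:coinsecure}.

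The only point requiring slight care is the matching of security parameters and assumption strengths: \Cref{thm:bzexists} consumes the subexponential security of the hash function (its proof uses a one-time signature scheme that is $2^{-n(\lambda)}\negl(\lambda)$-secure, obtained by a suitable re-parametrization of the hash-based construction), while the PRF needs only polynomial security, which is subsumed. No additional assumption beyond a public-key mini-scheme and a subexponentially secure collision-resistant hash function is introduced anywhere in the chain, so the corollary follows. I do not expect a genuine obstacle here --- the statement is essentially a bookkeeping consequence of \Cref{thm:coinsecure} together with \Cref{thm:bzexists} --- the mild subtlety is just ensuring the reader sees that ``subexponentially secure CRHF'' is enough to simultaneously supply both a plain PRF and the specially-parametrized deterministic $\mathsf{BZ}$-secure signature scheme.
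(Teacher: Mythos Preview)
Your proposal is correct and follows the same approach as the paper: instantiate the equal-superposition compiler by obtaining the deterministic $\mathsf{BZ}$-secure signature scheme from \Cref{thm:bzexists}, observe that a PRF follows from the (collision-resistant, hence one-way) hash function, and invoke \Cref{thm:coinsecure}. The paper's own justification is a single sentence to this effect, so your write-up simply makes the bookkeeping explicit; there is no substantive difference.
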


Since a public-key mini-scheme can be constructed from subspace-hiding obfuscation (\cite{EC:Zhandry19b}), we also get the following corollary.
\begin{corollary}
    Assuming the existence of subspace-hiding obfuscation and subexponentially secure collision-resistant hash functions, there exists a public-key quantum coin scheme.
\end{corollary}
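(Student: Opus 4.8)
The plan is to mirror the (two-line) proof of the pseudorandom-state compiler in \Cref{sec:money_PR_compiler}, but with \Cref{thm:qsreadonce} playing the role that \Cref{thm:compiler} played there, and with the $\mathsf{BZ}$-secure deterministic signature scheme of \Cref{thm:bzexists} controlling the serial numbers in place of a purification argument. Concretely: reduce to an ``Aaronson--Christiano-type'' mixed game, then combine mini-scheme unforgeability with $\mathsf{BZ}$ security.

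Correctness is immediate from correctness of $\minibank$ and $\dss$ (verification is run coherently and rewound). For unforgeability, fix a polynomial $t$ and assume for contradiction that some QPT adversary wins the $t \to t+1$ game with probability at least $1/q(\lambda)$ for a polynomial $q$ and infinitely many $\lambda$. First I would pass to a hybrid where the PRF key $K$ is replaced by a truly random function $H \colon \zo^{\nu(\lambda)} \to (\text{randomness})$; since preparing the $t$ banknote copies performs only $t$ (parallel, coherent) evaluations of $F$, this is indistinguishable by PRF security, or unconditionally by replacing $F$ with a $2t$-wise independent function (perfectly indistinguishable for $t$-query algorithms, \cite{Z12}). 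Now the data $(sn_{id},\ket{\$_{id}}) = \minibank.\Gen(1^\lambda;H(id))$ attached to each $id$ is an independent fresh mini-scheme sample, so the hypotheses of \Cref{thm:qsreadonce} hold with $\mathcal{X} = \zo^{\nu(\lambda)}$, $\mathcal{D}$ the distribution of $\ket{sn}\ket{\$}$ for $(sn,\ket{\$}) \samp \minibank.\Gen(1^\lambda)$, and $\mathcal{O}$ the corresponding state oracle: preparing the $t$ banknotes is $k=t$ \emph{non-adaptive} queries to $\mathcal{O}$, each on the uniform superposition $\sum_{id}\ket{id}$, which has weight $2^{-\nu(\lambda)} = \negl(\lambda)$ on each $id$ because $\nu$ is superlogarithmic. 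Choosing $p = 2q$ in \Cref{thm:qsreadonce}, the whole game is simulated, up to trace distance $1/(2q)$, by the game in which the challenger instead samples $\ell = \ocsrd\, p^2 t^3$ independent mini-scheme samples $(sn_j,\ket{\$_j})$ for $j \in [\ell]$ and a random $P \colon \zo^{\nu(\lambda)} \to [\ell]$, hands the adversary the state $\ket{\phi_0}$ of \Cref{thm:qsreadonce} built from these with $\dss.\Sign(sgk,\cdot)$ applied coherently to each copy's serial-number register, and collects the $t+1$ forgeries. The adversary still wins with probability at least $1/q - 1/(2q) = 1/(2q)$.

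Next I would control the forged serial numbers via $\mathsf{BZ}$ security. The state $\ket{\phi_0}$ has the ``read-once'' property that the $\ell$ samples occur pairwise disjointly across the $t$ copies, so reconstructing the $t$ signed banknotes from $\ket{\phi_0}$ uses exactly $t$ coherent queries to $\dss.\Sign(sgk,\cdot)$. A $\mathsf{BZ}$ reduction that samples the $(sn_j,\ket{\$_j})$ and $P$ itself, obtains these $t$ signatures via $t$ quantum signing queries, runs the adversary, and outputs the $t+1$ pairs $(sn^*_i, sig^*_i)$ read off the forged banknotes then shows that (except with negligible probability) the $t+1$ forged serial numbers are \emph{not} all distinct; hence two forged banknotes carry the same serial number $sn^*$, and the adversary has produced two states that both pass $\minibank.\mathsf{Verify}(sn^*,\cdot)$. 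A second use of $\mathsf{BZ}$ security, together with the read-once structure (an adversary holding $\ket{\phi_0}$ can ``pull out'' at most $t$ of the $\ell$ honest serial numbers, and honestly generated serial numbers are distinct/high-entropy), is used to conclude that, except with negligible probability, $sn^* \in \{sn_1,\dots,sn_\ell\}$. Finally I would reduce to mini-scheme unforgeability by the usual guessing argument: guess $j^\star \in [\ell]$ with $sn^* = sn_{j^\star}$, place the mini-scheme challenge $(sn^\dagger,\reg^\dagger)$ at position $j^\star$ (signing $sn^\dagger$ with the known $sgk$), simulate the $\ket{\phi_0}$-game, and output the two extracted banknotes for $sn^\dagger$; this succeeds with probability at least $\tfrac{1}{\ell}\cdot\tfrac{1}{2q} - \negl(\lambda)$, a contradiction.

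The step I expect to be the main obstacle is the serial-number control: carefully counting how many coherent signing queries are consumed in preparing $\ket{\phi_0}$ (so that a $t$-query $\mathsf{BZ}$ bound applies and \emph{forces} a serial-number collision), and then ruling out the case in which the colliding serial number $sn^*$ was never honestly generated. This is where the read-once structure of \Cref{thm:qsreadonce}, the determinism of $\dss$, and the (in)distinctness of mini-scheme serial numbers all have to be combined, and where the black-box appeal to \cite{TOC:AarChr13} used in the pseudorandom-state compiler no longer applies directly.
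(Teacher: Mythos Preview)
Your outline follows the paper's hybrid sequence almost exactly: replace the PRF by a random function; apply the read-once small-range lemma (\Cref{thm:qsreadonce}) to pass to the state $\ket{\phi_0}$ built from $\ell=\poly(p,t)$ independent mini-scheme samples; use signature security to control the forged serial numbers; and finally reduce to mini-scheme unforgeability by guessing the repeated serial. The step you flag as the main obstacle, serial-number control, is in fact the easy part, and the paper dispatches it in two strokes: (i) after the SRD step the only signed messages are the $\ell$ classical strings $\tilde{sn}_1,\dots,\tilde{sn}_\ell$, so ordinary EUF-CMA (not a ``second use of $\mathsf{BZ}$'' and not the read-once structure) already forces every forged serial number to lie in that set; (ii) preparing $\ket{\phi_0}$ costs exactly $t$ superposition signing queries, so $\mathsf{BZ}$ forbids $t{+}1$ distinct valid serials, forcing a collision.

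The genuine gap is in your final reduction. ``Place the mini-scheme challenge $(sn^\dagger,\reg^\dagger)$ at position $j^\star$ and simulate the $\ket{\phi_0}$-game'' does not work with a single copy of the challenge banknote: in $\ket{\phi_0}$ the index $j^\star$ sits in \emph{different} slots in different branches of the superposition (and in some branches not at all), so controlled-swapping the challenge state in from an auxiliary register leaves that register entangled with the banknote registers, and the adversary no longer sees $\ket{\phi_0}$. The paper handles this with an additional hybrid ($\hyb_6$): it appends an ancilla recording which slot (if any) carries $Q(id)=i^\star$, measures it, and invokes \Cref{lem:midmeasure}; this collapses to a fixed slot at the cost of a further $1/(t{+}1)$ factor, after which a single challenge copy suffices for the simulation. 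Your claimed success probability $\tfrac{1}{\ell}\cdot\tfrac{1}{2q}$ is therefore missing this $(t{+}1)$ factor, and the simulation step as you describe it is incomplete without this measurement trick.
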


\subsubsection{Proof of Security}
In this section, we prove \cref{thm:coinsecure}. Correctness follows in a straightforward manner from the correctness of the mini-scheme and the other underlying schemes. It is also evident that the scheme always produces identical copies of the pure state $\ket{\$}$ as banknotes.

We move onto counterfeiting (unclonability) security, which we will prove through a sequence of hybrids (each of which is constructed by modifying the previous one). Suppose for a contradiction that there exists a QPT adversary $\adve$ that wins the counterfeiting game with probability $> 1/p(\lambda)$ for infinitely many values of $\lambda > 0$, where $p(\cdot)$ is a polynomial. Let $k(\cdot)$ denote the number of banknotes requested by the adversary\footnote{Note that this does \textbf{not} make the construction \emph{bounded collusion}: This $k$ value is only used in the reduction, not in the construction.}.

\paragraph{$\underline{\hyb_0}$}: The original game $\cfgame{\adve}(1^\lambda)$.

\paragraph{$\underline{\hyb_1}$}: During setup, we sample a random function $H$ with the same input-output size as $F$. When answering the banknote generation queries, we now output $\ket{\$'}$ where
\begin{equation*}
    \ket{\$'} = \sum_{id \in \zo^{\nu(\lambda)}}\ket{id}\ket{sn_{id}}\ket{\dss.\mathsf{Sign}(sgk, sn_{id})}\ket{\$_{id}}
\end{equation*}
where $sn_{id},\ket{\$_{id}} = \minibank.\mathsf{Gen}(1^\lambda; H(id))$.

\paragraph{$\underline{\hyb_2}$}:
During setup, we sample a truly random function $Q: \zo^{\nu(\lambda)} \to [\ell]$ where $\ell = 16\ocsrd p^2k^3$ and $\ocsrd$ is the universal constant from \cref{thm:qsreadonce}. Then, we sample  $\tilde{sn}_{i}, \ket{\tilde{\$}_{i}} \samp \minibank.\mathsf{Gen}(1^\lambda)$ for $i \in [\ell]$. Further,
instead of providing the $k$ banknotes as $\ket{\psi} = \ket{\$'}^{\otimes{k}}$ to the adversary, we output $\ket{\phi}$ where 
\begin{align*}
 \ket{\phi} = \sum_{\substack{id_1, \dots, id_k \in \zo^{\nu(\lambda)}:\\ Q(id_1), \dots, Q(id_k) \text{ distinct}}}\left(\ket{id_1}_{\mathsf{ID}_1}\ket{\tilde{sn}_{Q(id_1)}}\ket{\dss.\mathsf{Sign}(sgk, \tilde{sn}_{Q(id_1)})}\ket{\tilde{\$}_{Q(id_1)}}\right)\otimes \cdots \\ \otimes \left(\ket{id_k}_{{\mathsf{ID}_k}}\ket{\tilde{sn}_{Q(id_k)}}\ket{\dss.\mathsf{Sign}(sgk, \tilde{sn}_{Q(id_k)})}\ket{\tilde{\$}_{Q(id_k)}}\right)
\end{align*}

\paragraph{$\underline{\hyb_3}$}: We add an extra check at the end of the experiment, in addition to the banknote verification using $\bank.\mathsf{Verify}$. Let $sn^*_1, \dots, sn^*_{k+1}$ be the serial numbers of the forged banknotes\footnote{More formally, let these be the serial numbers obtained from measuring the serial number registers during verification}. We check if $sn^*_i \in \{\tilde{sn}_1, \dots, \tilde{sn}_\ell\}$ for all $i \in [k+1]$. If any of the checks fail, the adversary loses and experiment outputs $0$.

\paragraph{$\underline{\hyb_4}$}: We add yet another additional check at the end of the experiment. We check if the serial numbers $sn^*_1, \dots, sn^*_{k+1}$ are distinct, and if not, the adversary loses and experiment outputs $0$.

\paragraph{$\underline{\hyb_5}$}: We add yet another additional check at the end of the experiment. We sample a random value $i^* \samp [\ell]$ at the beginning of the game. At the end, if $\tilde{sn}_{i^*}$ does not appear at least twice in $sn^*_1, \dots, sn^*_{k+1}$, the adversary loses and experiment outputs $0$.

\paragraph{$\underline{\hyb_6}$}: Instead of giving the adversary the state $\ket{\phi}$, instead give it the state $\ket{\phi'}$ constructed as follows. First add an additional auxiliary register whose space is $\C^{\Z_{k+1}}$ to the end of the state $\ket{\phi}$, and initialize it to $\ket{0}$. Then, for each $j \in {1,\dots,k}$, apply a controlled add-$j$ gat from the register $\mathsf{ID}_j$ to the auxiliary register with the control bit being $1$ if the value $id$ in $\mathsf{ID}_j$ is such that $Q(id) = i^*$. Then, give the original registers (i.e. the registers except the measured auxiliary register) to the adversary.

\paragraph{}
We now show hybrid indistinguishability results. Throughout the remainder of the section, when needed, security reductions will implicitly use PRFs to simulate any random functions (such as $Q$) involved in the experiments to make them efficient.

\begin{lemma}
    $\hyb_0 \approx \hyb_1$
\end{lemma}
\begin{proof}
    We simply replaced the PRF queries with a truly random function. The result follows by the PRF security of $F$.
\end{proof}

\begin{lemma}
    $|\Pr[\hyb_1 = 1] - \Pr[\hyb_2 = 1]| < \frac{1}{4p}$.
\end{lemma}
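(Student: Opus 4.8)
This lemma is essentially a direct invocation of \Cref{thm:qsreadonce} (quantum-state read-once small-range distributions), and the plan is to fit $\hyb_1$ into its template. In $\hyb_1$ a banknote is $\ket{\$'}=\sum_{id\in\zo^{\nu(\lambda)}} 2^{-\nu(\lambda)/2}\ket{id}\ket{sn_{id}}\ket{\dss.\mathsf{Sign}(sgk,sn_{id})}\ket{\$_{id}}$, where $H$ is a truly random function, $sn_{id},\ket{\$_{id}}=\minibank.\Gen(1^\lambda;H(id))$, and $\dss$ is deterministic. I would take $\mathcal{X}=\zo^{\nu(\lambda)}$ and let $\mathcal{D}$ be the distribution over pure states that samples randomness $r$, computes $sn,\ket{\$}=\minibank.\Gen(1^\lambda;r)$ and $sig=\dss.\mathsf{Sign}(sgk,sn)$, and outputs $\ket{sn}\ket{sig}\ket{\$}$; this distribution has an efficient description, namely the classical triple $(sn,sig,r)$ together with the public reconstruction unitary that recomputes $\ket{\$}$ from $r$ via $\minibank.\Gen$ (producing the triple uses $sgk$, but the reconstruction unitary does not). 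Because $H$ is a random function, the oracle $\mathcal{O}\colon id\mapsto\ket{sn_{id}}\ket{sig_{id}}\ket{\$_{id}}$ is exactly an oracle whose value at each point is a fresh independent $\mathcal{D}$-sample, and handing $\adve$ the state $\ket{\$'}^{\otimes k}$ is precisely $\adve$ making $k$ non-adaptive equal-superposition queries to $\mathcal{O}$, each having weight $2^{-\nu(\lambda)}$ on every point of $\mathcal{X}$, which is negligible since $\nu$ is superlogarithmic. Thus all hypotheses of \Cref{thm:qsreadonce} hold with query count $k(\lambda)$.

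Next I would apply \Cref{thm:qsreadonce} with simulation parameter $p'(\lambda)=4p(\lambda)$. The theorem's collision parameter is then $\ell=\ocsrd\cdot p'^2k^3=16\ocsrd p^2k^3$, which is exactly the value of $\ell$ used in $\hyb_2$; the theorem's random function $P$ plays the role of $Q$; and its i.i.d.\ samples $\ket{\tilde{\psi}_i}\samp\mathcal{D}$ are precisely the states $\ket{\tilde{sn}_i}\ket{\dss.\mathsf{Sign}(sgk,\tilde{sn}_i)}\ket{\tilde{\$}_i}$ with $\tilde{sn}_i,\ket{\tilde{\$}_i}\samp\minibank.\Gen(1^\lambda)$. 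Expanding the $\ket{\tilde{\psi}_{Q(id)}}$ notation, the simulator state $\ket{\phi_0}$ guaranteed by the theorem is literally the state $\ket{\phi}$ that $\hyb_2$ hands to the adversary. Hence \Cref{thm:qsreadonce} says $\adve$'s execution in $\hyb_1$ is within trace distance $\tfrac1{4p}$ of its execution in $\hyb_2$ — and in fact strictly less, since the trace-distance bound in \Cref{thm:qsreadonce} carries slack (its fidelity estimate loses an extra $\negl(\lambda)$ and its constants are not tight).

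Finally I would pass from ``closeness of $\adve$'s state'' to ``closeness of the experiment output''. The only difference between $\hyb_1$ and $\hyb_2$ is the state supplied to $\adve$; everything afterwards — running $\adve$ and then applying the fixed channel $\bank.\mathsf{Verify}$ to each of the $k+1$ returned registers and conjoining the outcomes — is identical, and the additional $sn^*_i\in\{\tilde{sn}_1,\dots,\tilde{sn}_\ell\}$-type checks appear only from $\hyb_3$ onward and are absent here. Since post-processing by a fixed quantum channel cannot increase trace distance, $|\Pr[\hyb_1=1]-\Pr[\hyb_2=1]|$ is at most the trace distance between the two executions of $\adve$, hence $<\tfrac1{4p}$. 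The only point that requires genuine care is checking that $\mathcal{D}$ meets the ``efficient description / public reconstruction unitary'' requirement of \Cref{thm:qsreadonce} even though the signature uses $sgk$ (handled as above), together with verifying that the parameter $\ell$ and the simulator state $\ket{\phi_0}$ line up exactly with the definitions in $\hyb_2$.
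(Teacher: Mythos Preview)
Your proposal is correct and is exactly the paper's approach: the paper's proof is a single sentence (``Observe that here we are simply using the simulator from the quantum-state read-once small-range distribution lemma (\cref{thm:qsreadonce})''), and you have spelled out in full detail why that invocation is legitimate, including the matching of $\ell=16\ocsrd p^2k^3$ via the choice $p'=4p$, the identification of $\ket{\phi_0}$ with $\ket{\phi}$, and the use of the data-processing inequality to pass from trace distance on inputs to closeness of the final bit.
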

\begin{proof}
Observe that here we are simply using the simulator from the quantum-state read-once small-range distribution lemma (\cref{thm:qsreadonce}).
\end{proof}

\begin{lemma}
    $\hyb_2 \approx \hyb_3$.
\end{lemma}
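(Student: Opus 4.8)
The plan is to show that the membership check added in $\hyb_3$ almost never costs the adversary a win, by reducing to the $\mathsf{BZ}$ security of $\dss$ (\cref{defn:bzsec}). The key observation is that in $\hyb_2$ the adversary's entire view (and the verification procedure) depends on the signing key $sgk$ only through the signatures on the pre-sampled serial numbers: the state $\ket{\phi}$ handed to $\adve$ only ever contains values $\dss.\mathsf{Sign}(sgk, \tilde{sn}_i)$ for $i \in [\ell]$, i.e.\ signatures on elements of the fixed set $T \coloneqq \{\tilde{sn}_i : i \in [\ell]\}$, and since $\dss$ is deterministic these are determined by the serial-number values alone. So an adversary that, after receiving $\ket{\phi}$, produces an accepting banknote whose serial number lies outside $T$ is effectively forging a signature on a fresh message.

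Concretely, I would build a reduction $\cB$ playing $\bzgame{\cB}(1^\lambda)$. Given $vk$, $\cB$ samples $Q$ (simulated by a PRF as usual), samples $(\tilde{sn}_i, \ket{\tilde{\$}_i}) \samp \minibank.\mathsf{Gen}(1^\lambda)$ for all $i \in [\ell]$, forms the set $T$ of \emph{distinct} values among the $\tilde{sn}_i$, and makes exactly $d \coloneqq |T|$ classical signing queries, one on each element of $T$, obtaining a lookup table from every serial number in $T$ (hence every $\tilde{sn}_i$) to its signature. Using this table together with the sampled mini-banknotes, $\cB$ assembles $\ket{\phi}$ exactly as the $\hyb_2$ challenger does and runs $\adve$ on it, perfectly simulating $\hyb_2$. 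When $\adve$ returns $k+1$ registers, $\cB$ runs $\bank.\mathsf{Verify}$ on each, recording the measured serial numbers $sn^*_1, \dots, sn^*_{k+1}$ and, from the same measurements, the corresponding signatures $sig^*_1, \dots, sig^*_{k+1}$.

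For the analysis, let $\mathsf{Bad}$ be the event that every $\bank.\mathsf{Verify}$ call accepts (so $\adve$ wins $\hyb_2$) and $sn^*_{i_0} \notin T$ for some $i_0$. Conditioned on $\mathsf{Bad}$, $\cB$ outputs the $d+1$ pairs consisting of $(t, \dss.\mathsf{Sign}(sgk, t))$ for each $t \in T$ together with $(sn^*_{i_0}, sig^*_{i_0})$: the messages are pairwise distinct (elements of $T$ are distinct and $sn^*_{i_0} \notin T$), and every pair verifies (the $T$-pairs are honest signatures, and the last pair verifies since $\bank.\mathsf{Verify}$ accepted, which includes the check $\dss.\mathsf{Verify}(vk, sn^*_{i_0}, sig^*_{i_0}) = 1$). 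Since $\cB$ made exactly $d$ queries and output $d+1$ valid signatures on distinct messages, $\Pr[\mathsf{Bad}] \le \Pr[\bzgame{\cB}(1^\lambda) = 1] \le \negl(\lambda)$ by $\mathsf{BZ}$ security of $\dss$. As $\hyb_3$ is identical to $\hyb_2$ except that it additionally outputs $0$ exactly on $\mathsf{Bad}$, we conclude $|\Pr[\hyb_2 = 1] - \Pr[\hyb_3 = 1]| \le \Pr[\mathsf{Bad}] \le \negl(\lambda)$, i.e.\ $\hyb_2 \approx \hyb_3$.

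The main point requiring care is the possibility of collisions among the pre-sampled $\tilde{sn}_i$. Rather than assuming the mini-scheme has high-entropy serial numbers, the reduction sidesteps this by querying only on the $d \le \ell$ distinct values in $T$, turning any breach into a ``$d$ queries, $d+1$ forgeries'' violation; and note that when $T$ is small the $\hyb_3$ check is correspondingly easy to pass, so nothing is lost in that regime either. A second, minor point is that extracting a genuine $\dss$ message--signature pair from each returned quantum banknote is precisely what the serial-number and signature measurements inside $\bank.\mathsf{Verify}$ already provide, so the reduction needs no rewinding or Gentle-Measurement machinery beyond running $\bank.\mathsf{Verify}$ itself.
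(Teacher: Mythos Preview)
Your proof is correct and follows essentially the same approach as the paper: both observe that simulating $\hyb_2$ requires only the signatures $\dss.\mathsf{Sign}(sgk,\tilde{sn}_i)$ for $i\in[\ell]$, so any accepting banknote with $sn^*_{i_0}\notin\{\tilde{sn}_1,\dots,\tilde{sn}_\ell\}$ yields a signature forgery. The only cosmetic difference is that the paper invokes $\mathsf{EUF\text{-}CMA}$ security of $\dss$ (which is implied by $\mathsf{BZ}$ security), whereas you reduce directly to $\mathsf{BZ}$ security via a ``$d$ queries, $d{+}1$ forgeries'' argument; your handling of possible collisions among the $\tilde{sn}_i$ by querying only the distinct values is a nice detail the paper elides.
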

\begin{proof}
    Observe to simulate these experiments, we only need the signatures $\dss.\mathsf{Sign}(sgk, \tilde{sn}_{i})$ for $i \in [\ell]$. By $\mathsf{EUF-CMA}$ security of $\dss$, the result follows.
\end{proof}

\begin{lemma}
    $\hyb_3 \approx \hyb_4$.
\end{lemma}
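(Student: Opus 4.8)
The plan is to show that the extra check appended in $\hyb_4$ is met essentially for free, so that $\Pr[\hyb_3=1]$ and $\Pr[\hyb_4=1]$ differ only negligibly. Since $\hyb_4$ only restricts the winning condition of $\hyb_3$ (a winning run is discarded precisely when the $k+1$ forged serial numbers $sn^*_1,\dots,sn^*_{k+1}$ are pairwise distinct), the inequality $\Pr[\hyb_4=1]\le\Pr[\hyb_3=1]$ is immediate, and it suffices to bound the bad event: $\adve$ wins $\hyb_3$ and the $sn^*_i$ are all distinct. I would bound this by reducing to the $\mathsf{BZ}$ (plus-one) security of the deterministic signature scheme $\dss$ (\cref{defn:bzsec}).

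The reduction $\cB$ runs $\hyb_3$ for $\adve$ while playing $\bzgame{\cB}(1^\lambda)$, using the $\mathsf{BZ}$ signing oracle in place of the signing key $sgk$. All remaining randomness of $\hyb_3$ is sampled by $\cB$ itself: it picks the collision-inducing map $Q$ (implemented by a PRF, so that $\cB$ is QPT), it generates the mini-scheme serial numbers $\tilde{sn}_1,\dots,\tilde{sn}_\ell$ and banknotes $\ket{\tilde{\$}_1},\dots,\ket{\tilde{\$}_\ell}$ by running $\minibank.\Gen$ (recording the randomness, so that each $\ket{\tilde{\$}_i}$ can be re-prepared as needed), and it keeps the verification key $vk$. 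To produce the state $\ket{\phi}$ handed to $\adve$, $\cB$ first coherently prepares the $\mathsf{ID}$-, serial-number- and mini-banknote-registers of each of the $k$ position blocks, leaving the signature registers blank, and then, for each position $j\in[k]$, issues a single quantum signing query on the superposition sitting in the $j$-th serial-number register, which coherently fills in the $j$-th signature register. Thus $\ket{\phi}$ is prepared using exactly $k$ signing queries.

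Once $\adve$ returns its $k+1$ candidate banknotes, $\cB$ measures the identifier, serial-number and signature registers of each of them to obtain classical tuples $(id^*_i,sn^*_i,sig^*_i)$, retains the mini-banknote register $\reg'_i$, and performs all the checks of $\hyb_3$ on these: $\dss.\mathsf{Verify}(vk,sn^*_i,sig^*_i)=1$, $\minibank.\mathsf{Verify}(sn^*_i,\reg'_i)=1$, and $sn^*_i\in\{\tilde{sn}_1,\dots,\tilde{sn}_\ell\}$ for every $i\in[k+1]$. Since $\bank.\mathsf{Verify}$'s coherent-then-rewind verification has the same acceptance statistics as this destructive verification and the $k+1$ banknote registers are disjoint, these checks pass with exactly the probability that $\adve$ wins $\hyb_3$. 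Whenever additionally the $sn^*_i$ are pairwise distinct, $\cB$ outputs the $k+1$ valid pairs $(sn^*_i,sig^*_i)$ and wins $\bzgame{\cB}(1^\lambda)$ after only $k$ signing queries. By $\mathsf{BZ}$ security the bad event is negligible, giving $\Pr[\hyb_3=1]\le\Pr[\hyb_4=1]+\negl(\lambda)$ and hence $\hyb_3\approx\hyb_4$.

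The step I expect to be the main obstacle is justifying that preparing $\ket{\phi}$ costs only $k$ signing queries, not one per slot: the point is that within each branch of $\ket{\phi}$ the values $Q(id_1),\dots,Q(id_k)$ are forced to be distinct, so each branch carries exactly $k$ signed serial numbers, one per position register, and a single $\mathsf{BZ}$ query signs the whole superposition sitting in a fixed position register at once; hence $k$ queries suffice even though $\ell\gg k$ distinct serial numbers occur across the superposition. The secondary points to verify are that the PRF-simulation of $Q$ keeps $\cB$ in QPT and that the destructive read-out of the forged serial numbers and signatures does not change the acceptance probability, which holds because the banknotes live on disjoint registers and $\bank.\mathsf{Verify}$ is coherent-then-rewind.
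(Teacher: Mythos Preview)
Your proof is correct and takes the same approach as the paper: both reduce to the $\mathsf{BZ}$ security of $\dss$, ruling out the event that the adversary produces valid signatures on $k+1$ distinct serial numbers after (effectively) only $k$ superposition signing queries. Your detailed reduction---in particular the observation that preparing $\ket{\phi}$ costs exactly $k$ signing queries, one per position register---fleshes out what the paper's one-line proof leaves implicit.
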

\begin{proof}
    Due to $\mathsf{BZ}$ security of $\dss$, we know that (except with negligible probability) the adversary cannot output signatures on $k + 1$ distinct messages. Thus the result follows.
\end{proof}

\begin{lemma}
    $\Pr[\hyb_5 = 1] \geq \frac{\Pr[\hyb_4 = 1]}{\ell}$.
\end{lemma}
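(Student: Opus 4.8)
The plan is a standard guessing argument: $\hyb_5$ differs from $\hyb_4$ only in that a uniformly random index $i^*\in[\ell]$ is drawn at the outset and the adversary is declared to lose unless the mini-banknote serial number $\tilde{sn}_{i^*}$ appears at least twice among the $k+1$ forged serial numbers, so we expect to pay exactly a factor $\ell$ for guessing which honestly-sampled mini-banknote gets duplicated. The key structural point I would use is that $i^*$ enters \emph{only} this final check --- it plays no role in the preparation of the state $\ket{\phi}$ given to the adversary, nor in any check already present in $\hyb_4$ --- so $i^*$ is information-theoretically independent of the entire view and output of the adversary, and the event that all of the $\hyb_4$ checks pass does not depend on $i^*$.

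First I would spell out what winning in $\hyb_4$ gives us. Conditioned on the adversary winning in $\hyb_4$, the serial numbers $sn^*_1,\dots,sn^*_{k+1}$ extracted during the $k+1$ verification runs all lie in $\{\tilde{sn}_1,\dots,\tilde{sn}_\ell\}$ --- this is the check added in $\hyb_3$ --- and they are not all distinct --- this is precisely the check added in $\hyb_4$. Consequently, in every execution of $\hyb_4$ that outputs $1$ there is at least one index $j^\dagger\in[\ell]$, possibly depending on the execution, for which $\tilde{sn}_{j^\dagger}$ occurs at least twice among $sn^*_1,\dots,sn^*_{k+1}$: pick any colliding pair $sn^*_a=sn^*_b$ with $a\neq b$, and since this common value lies in $\{\tilde{sn}_1,\dots,\tilde{sn}_\ell\}$, let $\tilde{sn}_{j^\dagger}$ be that value.

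Then I would condition on all of the randomness of $\hyb_5$ except the draw of $i^*$ --- the bank keys, the mini-banknotes $(\tilde{sn}_i,\ket{\tilde{\$}_i})_{i\in[\ell]}$, the functions $Q$ and $H$, and all of the adversary's internal randomness and measurement outcomes. By the independence observation above, this conditioning already determines whether all the $\hyb_4$ checks pass, and whenever they do it fixes a witness $j^\dagger$ as in the previous paragraph. Since $i^*$ is uniform on $[\ell]$ and independent of this conditioning, $i^*=j^\dagger$ holds with probability exactly $1/\ell$, and in that case the new $\hyb_5$ check is satisfied, so $\hyb_5$ outputs $1$; averaging over the conditioning gives $\Pr[\hyb_5=1]\ge\Pr[\hyb_4=1]/\ell$ with no loss. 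There is no real obstacle here beyond making the independence of $i^*$ explicit, and I would note that multiple valid witnesses $j^\dagger$ (or collisions among the $\tilde{sn}_i$ themselves) can only increase this probability, so the bound $1/\ell$ is all that is needed.
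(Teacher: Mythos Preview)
Your proof is correct and follows the same approach as the paper: both argue that since $i^*$ is independent of everything except the final check, and winning in $\hyb_4$ guarantees a repeated serial number drawn from $\{\tilde{sn}_1,\dots,\tilde{sn}_\ell\}$, the uniform guess $i^*$ hits a repeated index with probability at least $1/\ell$. Your version simply spells out the independence and conditioning more explicitly than the paper's one-line argument.
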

\begin{proof}
    We know that there is a repeating value among $sn^*_1, \dots, sn^*_{k+1}$, and all these values are in the set $\{\tilde{sn}_1, \dots, \tilde{sn}_\ell\}$. Thus, with probability $1/\ell$, the index $i^*$ we sampled randomly will be the repeating one.
\end{proof}

\begin{lemma}
    $\Pr[\hyb_6 = 1] \geq \frac{\Pr[\hyb_5 = 1]}{k+1}$.
\end{lemma}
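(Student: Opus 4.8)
The plan is to recognize that $\hyb_6$ is obtained from $\hyb_5$ by inserting a single intermediate measurement with $k+1$ possible outcomes, and then to invoke \cref{lem:midmeasure}.

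In $\hyb_6$, just before the state $\ket\phi$ would be handed to the adversary, the challenger instead performs the following operation: it appends a fresh register in state $\ket 0$ over $\C^{\Z_{k+1}}$, applies the sequence of controlled add-$j$ gates for $j \in \{1,\dots,k\}$ (each controlled on whether the value $id$ in register $\mathsf{ID}_j$ satisfies $Q(id) = i^*$), measures the appended register in the computational basis, and hands the remaining registers to the adversary. This is precisely a partial measurement applied to the state that $\hyb_5$ would hand to the adversary; the rest of the experiment --- the adversary $\adve$ together with all the final verification checks, including the banknote verifications and the additional checks introduced in $\hyb_3$, $\hyb_4$, and $\hyb_5$ --- is left unchanged. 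Moreover, because $\ket\phi$ is supported only on tuples with $Q(id_1),\dots,Q(id_k)$ pairwise distinct, at most one index $j$ has $Q(id_j) = i^*$, so the value written into the appended register always lies in $\{0,1,\dots,k\} = \Z_{k+1}$; hence this measurement has exactly $k+1$ outcomes.

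Now I would apply \cref{lem:midmeasure}, viewing the whole of $\hyb_5$ --- the challenger's sampling, the preparation of $\ket\phi$, the execution of $\adve$, and the final checks --- as a single quantum algorithm $A$ whose output is the indicator of the event that the experiment returns $1$. Then $\hyb_6$ is exactly the algorithm $A'$ obtained from $A$ by pausing it right before $\ket\phi$ is delivered to $\adve$, performing the $(k+1)$-outcome partial measurement described above, and resuming. \cref{lem:midmeasure} with number of outcomes $k+1$ then gives $\Pr[\hyb_6 = 1] \geq \Pr[\hyb_5 = 1]/(k+1)$.

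There is no real obstacle here; the only point that needs a moment's care is to confirm that the controlled add-$j$ gates act on a superposition in which the labels $Q(id_1),\dots,Q(id_k)$ are distinct --- a property of $\ket\phi$ that is preserved through $\hyb_3$, $\hyb_4$, and $\hyb_5$ --- so that the appended register indeed ends up holding a well-defined element of $\Z_{k+1}$ and the inserted measurement has the claimed $k+1$ outcomes.
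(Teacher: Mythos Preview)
Your proposal is correct and follows exactly the paper's approach: the paper's proof is the single sentence ``Since there are $k+1$ measurement outcomes, the result follows by \cref{lem:midmeasure},'' and you have simply spelled out why the inserted operation is a $(k+1)$-outcome partial measurement before invoking that lemma.
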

\begin{proof}
Since there are $k+1$ measurement outcomes, the result follows by \cref{lem:midmeasure}.
\end{proof}

\begin{lemma}
    $\Pr[\hyb_6 = 1] \leq \negl(\lambda).$
\end{lemma}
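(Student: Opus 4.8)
The plan is to reduce to the single-copy unforgeability of the mini-scheme $\minibank$. The key observation is that the state handed to the adversary in $\hyb_6$ carries \emph{at most one} copy of the pure mini-banknote $\ket{\tilde{\$}_{i^*}}$ attached to the serial number $\tilde{sn}_{i^*}$. Indeed, on every branch of $\ket{\phi}$ the values $Q(id_1),\dots,Q(id_k)$ are distinct, so the value written into the auxiliary $\Z_{k+1}$-register is either $0$ (no index $j$ with $Q(id_j)=i^*$) or the unique such $j_0\in[k]$; after that register is measured and discarded, the adversary's state is therefore either the branch of $\ket{\phi}$ using $i^*$ in none of the $k$ slots — holding zero copies of $\ket{\tilde{\$}_{i^*}}$ — or the branch using $i^*$ exactly in slot $j_0$, which factorizes so that the $j_0$-th mini-banknote register holds one copy of $\ket{\tilde{\$}_{i^*}}$, tensored with a state independent of the randomness of the $i^*$-th mini-scheme instance. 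On the other side, whenever the adversary wins $\hyb_6$, all of $\bank.\mathsf{Verify}$'s checks pass on the $k+1$ forged banknotes and, by the check added in $\hyb_5$, at least two of the measured serial numbers equal $\tilde{sn}_{i^*}$; since $\bank.\mathsf{Verify}$ in particular runs $\minibank.\mathsf{Verify}(sn^*_i,\cdot)$ on the $i$-th mini-banknote register, winning yields two registers that both pass $\minibank.\mathsf{Verify}(\tilde{sn}_{i^*},\cdot)$.

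I would then build the reduction $\mathcal{B}$ against $\minibank$'s unforgeability. Given a challenge $(sn^\dagger,\reg^\dagger)\samp\minibank.\Gen(1^\lambda)$, $\mathcal{B}$ sets $\tilde{sn}_{i^*}:=sn^\dagger$, samples the remaining $\ell-1$ mini-scheme instances, the signature keys, and a PRF key simulating $Q$ by itself, and prepares the pre-measurement state of $\hyb_6$ \emph{except} that the $i^*$-th mini-banknote register is filled with a placeholder $\ket{0}$ instead of $\ket{\tilde{\$}_{i^*}}$ (which $\mathcal{B}$ cannot synthesize, holding only the single copy $\reg^\dagger$). The $id$-superposition obeying the ``distinct $Q$-values'' constraint is prepared by rejection sampling from the uniform superposition over all $k$-tuples — coherently test distinctness of the $Q$-images and measure, which succeeds with probability at least $1/2$ since $\ell=16\ocsrd p^2k^3\gg k^2$ — after which the serial-number, signature and non-$i^*$ mini-banknote registers are filled in coherently. $\mathcal{B}$ runs the controlled add-$j$ operations and measures the auxiliary register, getting $o$: if $o=0$ the placeholder branches have zero amplitude, so $\mathcal{B}$'s state equals the genuine outcome-$0$ branch and it discards $\reg^\dagger$; if $o=j_0\in[k]$, the $j_0$-th mini-banknote register is the product state $\ket{0}$ on that branch, so $\mathcal{B}$ swaps $\reg^\dagger$ into it, obtaining the genuine outcome-$j_0$ branch. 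Either way, using exactly one copy of $\reg^\dagger$, $\mathcal{B}$ reproduces (up to a negligible PRF error for $Q$) the state given to the adversary in $\hyb_6$.

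$\mathcal{B}$ then runs the adversary, verifies its $k+1$ outputs in a way that also measures their serial numbers and leaves the mini-banknote registers in a state that has just passed $\minibank.\mathsf{Verify}$, and outputs two of these registers whose serial number equals $sn^\dagger$, if such exist. By the first paragraph, whenever the adversary wins $\hyb_6$ these two registers pass $\minibank.\mathsf{Verify}(sn^\dagger,\cdot)$, so $\mathcal{B}$ breaks $\minibank$'s unforgeability with probability at least $\Pr[\hyb_6=1]-\negl(\lambda)$, which forces $\Pr[\hyb_6=1]\le\negl(\lambda)$.

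The hard part is exactly this reduction: $\mathcal{B}$ receives only one copy of the challenge banknote, cannot clone it, and does not know randomness that would regenerate $\ket{\tilde{\$}_{i^*}}$, yet it must embed that single copy inside a large coherent superposition that genuinely carries at most one copy of it. The placeholder-then-swap trick — deferring the insertion of $\reg^\dagger$ until after the auxiliary measurement has collapsed the superposition onto a single known slot — is what makes this work, and is precisely why $\hyb_6$ is phrased with ``add-$j$'' rather than ``add-$1$'': the outcome identifies which slot to repair. The remaining points to check are routine: that rejection sampling prepares the constrained $id$-superposition exactly in expected $O(1)$ rounds, that replacing $\ket{\tilde{\$}_{i^*}}$ by $\ket{0}$ leaves all branch weights unchanged (both are unit vectors) so the measurement statistics and post-measurement state match, and that extracting the two accepting mini-banknote registers out of the coherent-verify-and-rewind procedure leaves them still passing $\minibank.\mathsf{Verify}$.
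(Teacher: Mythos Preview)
Your proposal is correct and follows essentially the same approach as the paper: observe that after the auxiliary measurement the adversary's state contains at most one copy of $\ket{\tilde{\$}_{i^*}}$ (in a known slot), and reduce to the mini-scheme's single-copy unforgeability by simulating everything else and planting the challenge banknote there. The paper's own proof simply asserts ``we can simulate $\ket{\phi'}$ using one copy of $\ket{\tilde{\$}_{i^*}}$'' without spelling out the mechanics; your placeholder-then-swap description, the rejection-sampling preparation of the constrained $id$-superposition, and the observation that the add-$j$ (rather than add-$1$) design is what lets the reduction identify the slot, are exactly the details that justify that assertion.
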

\begin{proof}
First, note that due to the distinctness of these values in the support of the state $\ket{\phi}$, when we are constructing $\ket{\phi'}$, each state in the support of $\ket{\phi}$ will have its auxiliary register modified at most once, and the value in the auxiliary state will exactly show the index at which the state $\ket{\tilde{\$}_{i^*}}$ is used. Thus, once we measure the auxiliary register, no matter what the measurement outcome is, the resulting state either does not involve at all or uses  $\ket{\tilde{\$}_{i^*}}$ exactly once (at the the index equal to the auxiliary register measurement outcome). Thus, we can simulate $\ket{\phi'}$ using one copy of $\ket{\tilde{\$}_{i^*}}$.

Using this, we obtain a reduction to the security of the mini scheme, by construct the following adversary $\adve'$ for its security game. $\adve'$ simulates $\adve$ by sampling independent instance of the mini scheme on its own for all $i \neq i^*$ as $\tilde{sn}_{i}, \ket{\tilde{\$}_{i}} \samp \minibank.\mathsf{Gen}(1^\lambda)$, and setting $\tilde{sn}_{i^*}, \ket{\tilde{\$}_{i^*}}$ to be the challenge banknote received from its own challenger. At the end, we know that if $\adve$ wins, it means that it has output two banknotes that pass verification with the serial number $\tilde{sn}_{i^*}$. Thus, $\adve'$ wins the mini scheme security game with probability $\Pr[\hyb_5 = 1]$. By security of the mini-scheme, we get $\Pr[\hyb_5 = 1] \leq \negl(\lambda)$.
\end{proof}

Combining all but the last lemma gives us $\Pr[\hyb_6 = 1] \geq \frac{1}{4p\ell(k+1)} \geq \frac{1}{64\ocsrd p^2 k^3\cdot(k+1)}$,  whereas the last lemma says $\Pr[\hyb_6 = 1] \leq \negl(\lambda)$. Thus, we arrive at a contradiction, which completes the proof of security.
\section{Upgradable Quantum Coins}\label{sec:upgrade}
In this section, we introduce the notion of upgradable quantum coins. 
Then, we give a construction that transforms any public-key mini-scheme into an upgradable quantum coin.

We first introduce the relevant models. We will assume that for quantum money mini-schemes, the banknote generation proceeds in two steps: $\minibank.\mathsf{Gen}$ outputs a private string $s$ (e.g. the classical description of a quantum state), and $\minibank.\mathsf{State}(s)$ outputs the related banknote state and $\minibank.\mathsf{SerialNumber}(s)$ outputs the related serial number. This is a natural model and indeed applies to all existing constructions. Note that in the private-key setting, we can take the serial number to be the same as $s$.

\begin{definition}[Upgradable Quantum Money Mini-Scheme]
        Let $\crcons_1, \crcons_2$ be sets of cryptographic constructions, e.g., $\crcons_1 = \{\mathsf{one-way-function}\}$ and $\crcons_2 = \{\mathsf{one-way-function},\allowbreak \mathsf{subspace}-\mathsf{hiding}-\mathsf{obfuscation}\}$. A quantum money mini-scheme is said satisfy upgradable correctness and security with respect to  $\crcons_1, \crcons_2$ if it satisfies the following.
        \begin{itemize}
            \item $\minibank.\mathsf{Gen}$ and $\minibank.\mathsf{State}$ only uses $\crcons_1$, and private-key security (i.e., unclonability without serial number revealed) is satisfied only assuming $\crcons_1$.
            \item $\minibank.\mathsf{SerialNumber}(s)$ uses constructions from $\crcons_2$, and the mini-scheme satisfies public-key security (i.e., unclonability with serial number given to adversary) assuming $\crcons_2$.
        \end{itemize}
\end{definition}
An example is the subspace-based quantum money construction of \cite{TOC:AarChr13, EC:Zhandry19b}. It satisfies upgradable security with $\crcons_1 = \{\mathsf{one-way-function}\}$ and $\crcons_2 = \{\mathsf{subspace-hiding-obfuscation}\}$ where $\minibank.\mathsf{SerialNumber}$ would be the subspace-hiding obfuscation construction.

\begin{definition}[Upgradable Quantum Coin]
    Let $\crcons_1, \crcons_2$ be sets of cryptographic constructions. A quantum coin scheme with upgradable correctness and security with respect to $(\crcons_1, \crcons_2)$ is defined to be a quantum coin scheme with an additional algorithm $\bank.\mathsf{Reveal}$ that satisfies the following. 

    \begin{itemize}
        \item \textbf{Before Reveal:} Banknotes can be generated assuming only $\crcons_1$, without using any information about $\crcons_2$. No key is published initially. $\bank$ satisfies correctness and counterfeiting (unclonability) security assuming only $\crcons_1$. Banknote verification by users is done through comparison-based-verification (see \cite{bs21}). \takashi{Shouldn't we formally define this?}\anote{it would take too long for now, lets do later}
        \item \textbf{After Reveal:} $\bank.\mathsf{Reveal}$ takes as input bank's secret key, and a randomized program $P$ (which uses constructions from $\crcons_2$), and outputs a classical public verification key $vk$ which is published. The scheme (with $vk$ published) satisfies full-fledged public-key quantum money correctness and security assuming $\crcons_2$.
    \end{itemize}
\end{definition}

\subsection{Construction}
Let $\minibank$ be an upgradable quantum-money mini-scheme. Let $\crcons_1, \crcons_2$ be sets of cryptographic assumptions. Assume that $\minibank$ is secure assuming $\crcons_1$ in the private mode, and it is correct and secure assuming $\crcons_2$ in the public mode. 

Let $\dss$ be a classical deterministic signature scheme that satisfies unforgeability security with classical queries.  Let $F$ a pseudorandom function - its input-output size will be clear from context. Let $\npke$ be a non-committing public-key encryption scheme.  Let $\nu(\lambda)$ be a function such that $2^{\nu(\lambda)}$ is superpolynomial (i.e., $\nu(\lambda)$ is superlogarithmic). Finally, let $\ofe$ be a public-key functional encryption scheme that satisfies simulation security for a single functional-key with $2^{-\nu(\lambda)}\cdot \negl(\lambda)$ security. 

We move onto our construction below. The construction will be upgraded by publishing $\bank.\mathsf{Reveal}(sk, P = \minibank.\mathsf{SerialNumber})$ once an implementation of $\minibank.\mathsf{SerialNumber}$ with $\crcons_2$ becomes available.
\paragraph{\underline{$\bank.\mathsf{Setup}(1^\lambda)$}}
\begin{enumerate}
    \item Sample $K \samp F.\mathsf{Setup}(1^\lambda)$.
    \item Sample $pk, msk \samp \ofe.\mathsf{Setup}(1^\lambda)$.
    \item Sample $npk, nsk \samp \npke.\mathsf{Setup}(1^\lambda)$.
    \item Sample $svk, sgk \samp \dss.\mathsf{Setup}(1^\lambda)$.
    \item Sample $k \samp \mathsf{PRS.Setup}(1^\lambda)$
\item Set $sk = (K, pk, msk, npk, nsk, svk, sgk, k)$.
\item Output $sk$.
\end{enumerate}

\paragraph{\underline{$\bank.\mathsf{GenBanknote}(sk)$}}
\begin{enumerate}
\item Parse $(K, pk, msk, npk, nsk, svk, sgk, k) = sk$.
    \item Prepare the state $\ket{\$} = \sum_{id \in \zo^{\nu(\lambda)}}\alpha_{id}\ket{id}\ket{\npke.\enc(npk, \ofe.\enc(pk, s_{id} || r_{id}^4 || sgk; r^2_{id}); r^3_{id})}\ket{\$_{id}}$ where $r^1_{id} || r^2_{id} || r^3_{id} || r^4_{id} = F(K,id)$, $s_{id} = \minibank.\mathsf{Gen}(1^\lambda; r^1_{id})$ and $\ket{\$_{id}} = \minibank.\mathsf{State}(s_{id})$ and $\ket{\psi_k} = \mathsf{PRS.StateGen}(k) = \sum_{id} \alpha_{id}\ket{id}$.
    
\item Output $\ket{\$}$.
\end{enumerate}

\paragraph{\underline{$\bank.\mathsf{Reveal}(sk, P)$}}
\begin{enumerate}
    \item Parse $(K, pk, msk, npk, nsk, svk, sgk) = sk$.
    \item Let $f_{P}$ be the circuit that takes as input $s || r || k$ and outputs $\dss.\Sign(k, P(s; r)) || P(s; r)$.
\item Sample $fsk \samp \ofe.\mathsf{KeyGen}(msk, f_{P})$
\item Output $vk = (nsk, svk, fsk)$.
\end{enumerate}

\paragraph{\underline{$\bank.\mathsf{Verify}(vk, \reg)$}}
\begin{enumerate}
\item Parse $(nsk, svk, fsk) = vk$.
\item Measure all but the last register (the mini-banknote register) of $\reg$ to obtain $id, ct, \reg'$.
\item Decrypt $ct' \samp \npke.\dec(nsk, ct)$.
\item Decrypt $sig || sn \samp \ofe.\dec(fsk, ct')$.
\item Verify the signature by $\dss.\mathsf{Verify}(svk, sn, sig)$; output $0$ and terminate if verification fails.
\item Execute the verification code inside $sn$ on $\reg'$, output its output.
\end{enumerate}
Since the required PRF (by \cite{Z12}), PRS, non-committing encryption (by \cite{hiroka2021quantum}) and functional encryption schemes (by \cite{sahai2010worry, gorbunov2012functional}) can be constructed from PKE with subexponential security, we get the following result.
\begin{theorem}
    $\mathsf{Bank}$ is an upgradable quantum coin scheme that satisfies correctness and upgradable security with respect to $(\crcons_1 \bigcup \allowbreak \{\mathsf{subexp-PKE}\},\crcons_2 \bigcup \allowbreak \{\mathsf{subexp-PKE}\})$.
\end{theorem}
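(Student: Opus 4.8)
The plan is to verify the two correctness properties and the two security properties (before and after $\bank.\mathsf{Reveal}$) in turn, in each case stripping the encryption layers and then invoking the purification compiler (\Cref{thm:compiler}) to reduce to a statement about $\minibank$ and $\dss$ alone. Correctness is immediate: every honest banknote is the single pure state $\ket{\$}$, since $\ket{\psi_k}$ is fixed and $F,\minibank.\mathsf{Gen},\minibank.\mathsf{State},\ofe.\mathsf{Enc},\npke.\mathsf{Enc}$ are all run coherently with randomness derived from $K$; thus comparison-based verification against reference copies succeeds before the reveal, and after the reveal $\bank.\mathsf{Verify}(vk,\cdot)$ collapses a branch $(id,\ct)$, $\npke$-decrypts to the inner $\ofe$ ciphertext, decrypts that with $fsk$ (a key for $f_P$ with $P=\minibank.\mathsf{SerialNumber}$) to obtain $\dss.\mathsf{Sign}(sgk,sn_{id})\,\|\,sn_{id}$, and the signature check and $\minibank.\mathsf{Verify}$ on the banknote register pass by correctness of the components. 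That $\mathsf{PRS}$, $F$, $\npke$ and $\ofe$ all follow from subexponentially secure PKE (by the cited results) is what fixes the assumption sets in the statement.

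For security before the reveal, where neither $nsk$ nor $fsk$ is published and serial numbers are never materialized, I would reduce to the private-mode $1\!\to\!2$ unclonability of $\minibank$ under $\crcons_1$. First, using security of the non-committing encryption $\npke$ --- in the form that tolerates superposed plaintexts and revelation of $nsk$, as provided by \cite{hiroka2021quantum} --- I would replace the outer ciphertext in every branch of $\ket{\$}$ by a single message-independent simulated ciphertext $\ct^\ast$, turning the banknote into $\ket{\ct^\ast}\otimes\sum_{id}\alpha_{id}\ket{id}\ket{\$_{id}}$. The second factor is exactly the purification compiler applied to $\mathsf{GenState}(\cdot)=\minibank.\mathsf{State}(\minibank.\mathsf{Gen}(1^\lambda;\cdot))$, so \Cref{thm:compiler} reduces counterfeiting to the game in which the adversary gets $t$ independent mini-banknotes $\ket{\$_{s_1}},\dots,\ket{\$_{s_t}}$ and must output $t+1$ states each passing comparison-based verification against a fresh $\ket{\$}$; this follows from the mini-scheme's private-mode security by embedding the challenge instance at a uniformly chosen slot ($1/\poly$ loss) together with the pigeonhole/collision argument already used for the equal-superposition construction (\Cref{sec:money_eqsup_compiler}), in the spirit of the comparison-based money of \cite{bs21}.

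For security after the reveal, where $vk=(nsk,svk,fsk)$ is published, I would reduce to the public-mode $1\!\to\!2$ unclonability of $\minibank$ under $\crcons_2$ together with unforgeability of $\dss$. The obstruction to a direct reduction is that a mini-scheme challenge arrives as a pair $(\ket{\$_{s^\ast}},sn^\ast)$ with the private string $s^\ast$ hidden, so it cannot be inserted into an $\ofe$ ciphertext of $s^\ast\,\|\,r^\ast\,\|\,sgk$. The remedy is to first apply the single-key simulation security of $\ofe$ to replace, in each of the $2^{\nu(\lambda)}$ branches of $\ket{\$}$, the ciphertext $\ofe.\mathsf{Enc}(pk,s_{id}\,\|\,r^4_{id}\,\|\,sgk)$ by a simulated ciphertext computed from just the value $f_P(s_{id}\,\|\,r^4_{id}\,\|\,sgk)=\dss.\mathsf{Sign}(sgk,sn_{id})\,\|\,sn_{id}$ (and the simulated $fsk$); since this hybrid acts on $2^{\nu(\lambda)}$ ciphertexts simultaneously, this is precisely where $2^{-\nu(\lambda)}\cdot\negl(\lambda)$ simulation security --- hence subexponential PKE --- is needed. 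After the switch the generation procedure is still classically determined (only $\ofe.\mathsf{Sim}$ and $\npke.\mathsf{Enc}$ are invoked coherently with fixed randomness), so \Cref{thm:compiler} again reduces to the game with $t$ independent samples; there the adversary holds $t$ signed mini-banknotes with serial numbers $sn_1,\dots,sn_t$ and $vk$, and measuring-then-$fsk$-decrypting any forged banknote yields a serial number with a $\dss$-valid signature, so unforgeability of $\dss$ forces all $t+1$ forged serial numbers into $\{sn_1,\dots,sn_t\}$; a pigeonhole collision plus a guess of the attacked instance ($1/\poly$ loss) produces a public-mode mini-scheme forger. This is essentially the Aaronson--Christiano mini-scheme-plus-signature argument run through the compiled (coin) state.

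The step I expect to be the main obstacle is making the after-the-reveal reduction sound: the $\ofe$-simulation switch must be carried out on the full $2^{\nu(\lambda)}$-branch superposition \emph{before} the compiler is invoked (otherwise one cannot embed the mini-scheme challenge, which lacks $s^\ast$), one must check that the resulting simulated generation algorithm still satisfies the classically-determined-output hypothesis of \Cref{thm:compiler}, and one must confirm that $\bank.\mathsf{Verify}$ accepts the simulated-ciphertext banknotes exactly when it accepts honest ones. The analogous but milder point on the before-reveal side is the superposed-plaintext non-committing-encryption step; granting these, the remainder is a routine combination of the purification compiler with the collusion-resistance-via-signatures argument and standard $1/\poly$-loss reductions.
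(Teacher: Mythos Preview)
Your after-reveal argument is essentially the paper's proof. The paper replaces the PRF by a random function, then runs a hybrid over all $2^{\nu(\lambda)}$ values of $id$, at each step switching one $\ofe$ ciphertext to a simulated one computed only from $\dss.\mathsf{Sign}(sgk,sn_{id})\,\|\,sn_{id}$ (this is exactly where the $2^{-\nu(\lambda)}\cdot\negl(\lambda)$ simulation security of $\ofe$ is spent), and finally observes that the resulting banknote can be prepared from the PRS-compiler banknote of \Cref{sec:money_PR_compiler} by wrapping it with fresh $\npke$ and simulated-$\ofe$ layers, reducing to that section's already-proved security. Your plan (simulate $\ofe$, then apply \Cref{thm:compiler}, then run the signature/pigeonhole reduction) is the same argument with the last step unrolled.

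For the before-reveal phase your route differs from the paper's, and the $\npke$ step you propose has a gap. The paper's proof here is a single sentence, ``simply follows by \Cref{thm:compiler},'' and it never appeals to the non-committing property of $\npke$ at all (nor does the after-reveal proof: there the $\npke$ layer is simply re-applied by the reduction, which samples the $\npke$ instance itself). Your plan to collapse the outer ciphertext in every branch of $\ket{\$}$ to one fixed $\ct^\ast$, yielding $\ket{\ct^\ast}\otimes\sum_{id}\alpha_{id}\ket{id}\ket{\$_{id}}$, does not follow from non-committing security as stated: each branch carries both a different plaintext and a different encryption randomness $r^3_{id}$ (derived from $F(K,id)$), so the ciphertext register is genuinely entangled with $\ket{id}$, and non-committing security --- which lets you simulate \emph{one} ciphertext and later equivocate it given $nsk$ --- does not disentangle a coherent superposition of distinct ciphertexts. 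Moreover $nsk$ is not even released before reveal, so invoking the ``revelation of $nsk$'' form of the property is out of place. If you want to push your approach through, you would at minimum need to apply \Cref{thm:compiler} (or the PRF-to-random switch) first, so that each sample carries a single classical ciphertext, and only then argue about the $\npke$ layer; but at that point the paper's route --- reduce directly to the coin of \Cref{sec:money_PR_compiler} --- is simpler.
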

Since we can instantiate the mini-scheme with $\crcons_1 = \{\mathsf{one-way-function}\}$ and  $\crcons_2 = \{\mathsf{subspace}-\mathsf{hiding}-\mathsf{obfuscation}\}$ by \cite{EC:Zhandry19b}, we get the following corollary.
\begin{corollary}
    There exists an upgradable quantum coin scheme that satisfies correctness and upgradable security with respect to $\crcons_1 = \{\mathsf{\mathsf{subexp-PKE}}\}$ and  $\crcons_2 = \{\mathsf{\mathsf{subexp-PKE}},\allowbreak\mathsf{subspace}-\mathsf{hiding}-\mathsf{obfuscation}\}$.
\end{corollary}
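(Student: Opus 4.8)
The plan is to establish the three obligations of an upgradable quantum coin scheme one at a time: correctness; before-reveal (almost-public, comparison-based) counterfeiting security under $\crcons_1$ together with subexponentially secure PKE; and after-reveal (full public-key) counterfeiting security under $\crcons_2$ together with subexponentially secure PKE. Correctness is immediate from the building blocks: $\bank.\mathsf{GenBanknote}$ always outputs the same pure state $\ket{\$}$, so comparison-based verification against an honest banknote accepts with certainty; and after $\bank.\mathsf{Reveal}$, in every branch $id$ the chain $\npke.\dec(nsk,\cdot)$ followed by $\ofe.\dec(fsk,\cdot)$ recovers $\dss.\Sign(sgk,sn_{id})\,\|\,sn_{id}$ with $sn_{id}=\minibank.\mathsf{SerialNumber}(s_{id})$ (FE correctness and the definition of $f_P$), the signature verifies under $svk$, and $\minibank.\mathsf{Verify}(sn_{id},\cdot)$ accepts $\minibank.\mathsf{State}(s_{id})$ by mini-scheme correctness. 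As in the other coin constructions, the verification steps are performed coherently and rewound, so honest banknotes are left essentially undisturbed.

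For before-reveal security, the key point is that nothing derived from $\crcons_2$ is ever evaluated before $\bank.\mathsf{Reveal}$ — the map $\minibank.\mathsf{SerialNumber}\in\crcons_2$ appears only inside $f_P$, hence only after reveal — and that no key is published, so in particular the $\npke$ public key $npk$ is secret. I would first use semantic security of $\npke$ — a branch-wise statement about the density matrix of $\ket{\$}$ — to replace every ciphertext inside $\ket{\$}$ with an encryption of $0$; the resulting banknote is, up to an $id$-indexed ciphertext-of-$0$ register that is independent of the mini-scheme and reproducible by anyone, exactly the pseudorandom-state coin compiler of \Cref{sec:money_PR_compiler} applied to the \emph{private-mode} mini-scheme. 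Before-reveal counterfeiting security in the comparison-based model then reduces, via the purification compiler (\Cref{thm:compiler}, instantiated with $\mathsf{GenState}$ the classically determined map $rand\mapsto\ket{ct_{rand}}\ket{\minibank.\mathsf{State}(s_{rand})}$, cf.\ \Cref{def:classically_determined}) and a private-mode analogue of the mini-scheme-to-coin analysis of \Cref{sec:money_PR_compiler}, to single-copy private-mode security of $\minibank$, which holds under $\crcons_1$.

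For after-reveal security the adversary additionally gets $vk=(nsk,svk,fsk)$, and I would argue through a hybrid sequence. Since the banknotes are handed out before the reveal while $nsk$ is released only afterwards, I would first use the non-committing property of $\npke$ to make the banknote ciphertexts equivocal and then open $nsk$ consistently, so that the contents of the $\npke$ layer may be fixed after the reveal; single-key simulation security of $\ofe$ then lets me replace the inner FE ciphertexts with simulated ones produced by the FE simulator from $f_P$ (which does not contain $sgk$) and the values $\dss.\Sign(sgk,sn_{id})\,\|\,sn_{id}$ alone, so that afterwards $sgk$ is used only to sign the serial numbers occurring in the banknotes. (Because the banknote superposition ranges over $\zo^{\nu(\lambda)}$, carrying out this step across branches is what forces $\ofe$ to be $2^{-\nu(\lambda)}\negl$-secure and $\nu$ superlogarithmic, analogously to the $2^{-n(\lambda)}$-secure one-time signature used in \Cref{sec:detsig}.) Applying the purification compiler (\Cref{thm:compiler}) to pass to the de-purified game with $t$ independently generated banknotes, I invoke classical-query unforgeability of $\dss$: the experiment is now simulable from $svk$ and a signing oracle queried exactly on the $t$ serial numbers $sn_1,\dots,sn_t$, and verifying a forged banknote decrypts it (using $nsk$ and the simulated $fsk$) to a pair $(sn^\ast,sig^\ast)$ with $\dss.\Ver(svk,sn^\ast,sig^\ast)=1$, so any forged serial number outside $\{sn_1,\dots,sn_t\}$ is a signature forgery; hence with overwhelming probability all forged serial numbers lie in $\{sn_1,\dots,sn_t\}$. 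Since these $t$ serial numbers are distinct with overwhelming probability and there are $t+1$ forged banknotes, two of them collide on some serial number $sn_{j^\star}$ by pigeonhole, and a reduction that guesses $j^\star\in[t]$ (a factor-$t$ loss) plants its single-copy public-mode mini-scheme challenge as the $j^\star$-th banknote — using the challenger-supplied serial number as $sn_{j^\star}$, the challenge state as that banknote's mini-scheme register, and the FE simulator (fed $\dss.\Sign(sgk,sn_{j^\star})\,\|\,sn_{j^\star}$, which the reduction can compute as it chose $sgk$) for the FE layer — and then extracts from the two colliding forged banknotes two registers that pass $\minibank.\mathsf{Verify}(sn_{j^\star},\cdot)$, contradicting public-mode security of $\minibank$ under $\crcons_2$.

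I expect the main obstacle to be twofold. First, pinning down the comparison-based (almost-public) verification model precisely enough that the generic purification compiler — stated for security experiments whose state-query stages return identical copies of a pure state — genuinely applies to it, given that comparison-based verification refers to ``other honest banknotes'' which are no longer identical in the de-purified game. Second, in the after-reveal proof, orchestrating the non-committing-encryption and simulated-FE hybrids so that $sgk$ (and the $j^\star$-th mini-scheme seed) are genuinely removed from the banknotes the adversary already holds before the reveal, while correctly handling the $\zo^{\nu(\lambda)}$-indexed family of inner FE ciphertexts against which the single functional key $fsk$ is applied — both in honest branches and on adversarially malformed inputs during verification.
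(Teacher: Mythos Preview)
The corollary itself is immediate from the preceding theorem: instantiate the upgradable mini-scheme with the subspace construction of \cite{EC:Zhandry19b}, whose private-mode security holds under OWFs (subsumed by subexp-PKE) and whose public-mode security holds under subspace-hiding obfuscation. What you have written is really a proof plan for the theorem, not the corollary.

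Comparing with the paper's proof of the theorem, your after-reveal argument differs in two ways. First, the paper never invokes the non-committing property of $\npke$. FE simulation security already holds given $(pk,fsk)$, and the only use the adversary has for $nsk$ is to strip the outer $\npke$ layer and expose the FE ciphertext; hence one can swap each real FE ciphertext for a simulated one directly inside the pure banknote, via $2^{\nu}$ hybrids indexed by $id$ (this is precisely where the $2^{-\nu}\cdot\negl$ security of $\ofe$ is spent). Your NCE step is not incorrect, but it is unnecessary, and the temporal-ordering rationale you give (``banknotes handed out before reveal, $nsk$ released only afterwards'') does not match the after-reveal game, where $vk=(nsk,svk,fsk)$ is public from the outset. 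Second, after the FE-simulation hybrids the paper does not re-derive the signature--pigeonhole--mini-scheme chain; it observes that the resulting state is exactly the PRS-compiler banknote of \Cref{sec:money_PR_compiler} (with the $\ofe$ and $\npke$ instances now sampled by the reduction itself) and invokes that section's already-proved security. Your direct route is fine but duplicates that work.

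For before-reveal the paper says only ``follows by \Cref{thm:compiler}'', so there is little to compare against. Your branch-wise replacement of all $2^\nu$ ciphertexts by encryptions of $0$ inside the pure state needs $2^\nu$ hybrids and hence subexponential PKE (which you have), but ``reproducible by anyone'' is inaccurate since $npk$ is secret before reveal --- it is reproducible by the \emph{reduction}, which is what you actually need. The obstacle you flag about how comparison-based verification interacts with the de-purified game is genuine and is not addressed in the paper either.
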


\subsection{Proof of Security}
Proof of security before the reveal phase simply follows by \cref{thm:compiler}.
We will prove the counterfeiting security of the scheme after the reveal phase, that is, once $vk = (nsk, svk, fsk)$ is public.

\paragraph{$\underline{\hyb_0}$}: The original counterfeiting security game.

\paragraph{$\underline{\hyb_1}$}: When answering the banknote queries, instead of computing $r^1_{id} || r^2_{id} || r^3_{id} || r^4_{id} = F(K, id)$, we now compute $r^1_{id} || r^2_{id} || r^3_{id} || r^4_{id} = H(id)$, where $H$ is a truly random function sampled at setup.

\paragraph{}
Let $\mathcal{S}$ be a simulator, promised by the simulation-security of $\ofe$, such that $(pk, fsk, \ofe.\enc(pk, pl)) \approx_{2^{-\nu(\lambda)}\cdot\negl(\lambda)} (pk, fsk, \mathcal{S}(pk, fsk, f_P(pl)))$. 
\paragraph{\underline{${\hyb_{2,j}}$  for $j \in \zo^{v(\lambda)}$}}: 

When answering the banknote queries, we now output 
\begin{equation*}
    \ket{\$^{(j)}} = \sum_{id \in \zo^{\nu(\lambda)}}\ket{id}\ket{a_{id}}\ket{\$_{id}}
\end{equation*}
where we define $a_{id} = \npke.\enc(npk, \ofe.\enc(pk, s_{id} || r_{id}^4 || sgk; r^2_{id}); r^3_{id})$ for $id > j$ and define 
\begin{equation*}
    a_{id} = \npke.\enc(S(pk, fsk, \dss.\Sign(sgk, P(s_{id}; r^4_{id})) ||  P(s_{id}; r^4_{id}); H'(id)); r^3_{id})
\end{equation*}
for $id \leq j$.

\begin{lemma}
    $\hyb_0 \approx \hyb_1$
\end{lemma}
\begin{proof}
    We simply replaced the PRF queries with a truly random function, the result follows by the PRF security.
\end{proof}
\begin{lemma}
    $\hyb_{1} \approx_{2^{-\nu(\lambda)}\cdot\negl(\lambda)} \hyb_{2,0}$ and $\hyb_{2,j} \approx_{2^{-\nu(\lambda)}\cdot\negl(\lambda)} \hyb_{2,j+1}$ for all $j \in \zo^{\nu(\lambda)}$.\footnote{Here, the strings are associated with the numbers $0, 1, \dots, 2^{\nu(\lambda)}$ in the natural way.} 
\end{lemma}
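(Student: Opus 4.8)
The plan is to prove both displayed statements by a single reduction, observing that $\hyb_{2,j}$ and $\hyb_{2,j+1}$ (and likewise $\hyb_1$ and $\hyb_{2,0}$) differ only in how one index $id^\star$ is answered in the banknote queries: here $id^\star=j+1$ (resp.\ $id^\star=0$), and the only change is that its ciphertext register switches from the honest value $\npke.\enc(npk,\ofe.\enc(pk,s_{id^\star}\|r^4_{id^\star}\|sgk;r^2_{id^\star});r^3_{id^\star})$ to the simulated value $\npke.\enc(npk,\mathcal{S}(pk,fsk,\dss.\Sign(sgk,P(s_{id^\star};r^4_{id^\star}))\|P(s_{id^\star};r^4_{id^\star});H'(id^\star));r^3_{id^\star})$, while every other index is answered identically. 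Since $f_P(s\|r\|sgk)=\dss.\Sign(sgk,P(s;r))\|P(s;r)$, this is exactly the gap between encrypting $pl_{id^\star}:=s_{id^\star}\|r^4_{id^\star}\|sgk$ honestly under $\ofe$ and invoking $\mathcal{S}$ on $f_P(pl_{id^\star})$, so the claim will follow from the $2^{-\nu(\lambda)}\cdot\negl(\lambda)$ simulation security of $\ofe$, i.e.\ from $(pk,fsk,\ofe.\enc(pk,pl))\approx_{2^{-\nu(\lambda)}\cdot\negl(\lambda)}(pk,fsk,\mathcal{S}(pk,fsk,f_P(pl)))$.

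Concretely, I would build a distinguisher $\cB$ for this FE indistinguishability. Given $(pk,fsk,ct^\star)$, $\cB$ samples every remaining bank key itself: $npk,nsk\samp\npke.\Setup(1^\lambda)$, $svk,sgk\samp\dss.\Setup(1^\lambda)$, a PRS key $k$, and two PRF keys with which it instantiates the ``random'' functions $H$ and $H'$ (replacing the truly random functions by these PRFs costs at most $2^{-\nu(\lambda)}\cdot\negl(\lambda)$, since under the subexponential PKE assumption we may take the PRFs to be $2^{-\nu(\lambda)}\cdot\negl(\lambda)$-secure). It computes $r^1_{id^\star}\|r^2_{id^\star}\|r^3_{id^\star}\|r^4_{id^\star}=H(id^\star)$ and $s_{id^\star}=\minibank.\mathsf{Gen}(1^\lambda;r^1_{id^\star})$ so that $ct^\star$ is treated as an encryption of $pl_{id^\star}$ (submitting $f_P$ as the functional-key request if the game requires it; $f_P$ is public after the reveal). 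Then $\cB$ publishes $vk=(nsk,svk,fsk)$ to the counterfeiting adversary $\adve$ and answers each banknote query by coherently preparing $\mathsf{PRS.StateGen}(k)$ and, controlled on the register holding $id$, appending $\ket{a_{id}}\ket{\$_{id}}$, where $a_{id}$ is the simulated ciphertext for $id<id^\star$, the honest ciphertext for $id>id^\star$, and the fixed classical string $\npke.\enc(npk,ct^\star;r^3_{id^\star})$ for $id=id^\star$ (reusing this one value across all queries is legitimate because the coin is a fixed pure state). When $\adve$ outputs $k+1$ candidate banknotes, $\cB$ runs $\bank.\mathsf{Verify}(vk,\cdot)$ on each and outputs $1$ iff all accept.

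Once $\cB$'s PRFs are replaced back by random functions (and the simulator's internal randomness at $id^\star$ is identified with $H'(id^\star)$, which is distributed identically), $\cB$'s simulation is exactly $\hyb_{2,j}$ when $ct^\star=\ofe.\enc(pk,pl_{id^\star})$ and exactly $\hyb_{2,j+1}$ when $ct^\star=\mathcal{S}(pk,fsk,f_P(pl_{id^\star}))$. Hence $|\Pr[\hyb_{2,j}=1]-\Pr[\hyb_{2,j+1}=1]|\le 2^{-\nu(\lambda)}\cdot\negl(\lambda)$, and the case $\hyb_1\approx_{2^{-\nu(\lambda)}\cdot\negl(\lambda)}\hyb_{2,0}$ is the same argument with $id^\star=0$.

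I do not expect a genuine obstacle; the two points needing care are (i) that $\cB$ is QPT, which holds because the circuit mapping $id\mapsto a_{id}$ has size $\poly(\lambda)$ — a single comparison of $id$ against $id^\star$ followed by the poly-size subroutines $\minibank.\mathsf{Gen}$, $P$, $\dss.\Sign$, $\mathcal{S}$, $\ofe.\enc$, $\npke.\enc$ — and is applied in superposition just once per banknote query; and (ii) the security-parameter bookkeeping: since $id$ ranges over the superpolynomially large set $\zo^{\nu(\lambda)}$, it is essential that $\ofe$ (and the reduction's PRFs) have $2^{-\nu(\lambda)}\cdot\negl(\lambda)$ security, so that later chaining all $2^{\nu(\lambda)}$ of these single-index steps between $\hyb_1$ and the fully simulated $\hyb_{2,\cdot}$ incurs only a $\negl(\lambda)$ total loss.
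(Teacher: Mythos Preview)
Your proposal is correct and follows the same approach as the paper: the only difference between consecutive hybrids is switching from the real to the simulated $\ofe$ ciphertext at a single index, and the claim reduces to the simulation security of $\ofe$. The paper's own proof is a two-sentence sketch stating exactly this; your write-up simply fleshes out the reduction explicitly, including the PRF-for-random-function swap needed to make $\cB$ efficient and the observation that reusing the single challenge ciphertext across all banknote queries is sound because the coin is a fixed pure state.
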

\begin{proof}
    The only difference between the two neighboring hybrids is switching from the real ciphertext to the simulated ciphertext for $id = j + 1$. The result thus follows by the simulation security of $\ofe$.
\end{proof}

\begin{corollary}
    $\hyb_0 \approx \hyb_{2, 2^{\nu(\lambda)}}$.
\end{corollary}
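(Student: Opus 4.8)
The plan is to obtain the corollary as a straightforward hybrid argument that chains together the three preceding lemmas along the sequence
\[
\hyb_0,\ \hyb_1,\ \hyb_{2,0},\ \hyb_{2,1},\ \dots,\ \hyb_{2,2^{\nu(\lambda)}}.
\]
The first lemma already supplies $\hyb_0 \approx \hyb_1$ with negligible loss (from PRF security of $F$). The second lemma supplies the steps $\hyb_1 \approx \hyb_{2,0}$ and $\hyb_{2,j} \approx \hyb_{2,j+1}$ for every $j \in \zo^{\nu(\lambda)}$, each with the much stronger bound $2^{-\nu(\lambda)}\cdot\negl(\lambda)$ coming from the $2^{-\nu(\lambda)}\cdot\negl(\lambda)$-simulation-security of the single-key FE scheme $\ofe$. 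So the only work is to add these bounds up via the triangle inequality and check that the total is negligible.

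The point I would stress is that although the intermediate chain $\hyb_{2,0},\dots,\hyb_{2,2^{\nu(\lambda)}}$ has \emph{superpolynomial} length $2^{\nu(\lambda)}$ (as $\nu$ is superlogarithmic), the accumulated error remains negligible: summing $2^{\nu(\lambda)}$ terms, each bounded by $2^{-\nu(\lambda)}\cdot\negl(\lambda)$, yields $\negl(\lambda)$, and then adding the single $\hyb_0\approx\hyb_1$ term of size $\negl(\lambda)$ preserves negligibility. This is precisely why the construction demands subexponential ($2^{-\nu(\lambda)}\cdot\negl(\lambda)$) security from $\ofe$ rather than ordinary polynomial security, and I would flag it explicitly so that the reader sees the role of that assumption.

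The one technical subtlety — and the part I would spell out with care — is that for the triangle inequality over $2^{\nu(\lambda)}$ steps to be legitimate, the $\negl$ factor in the per-step bound $2^{-\nu(\lambda)}\cdot\negl(\lambda)$ must be a \emph{single} negligible function, uniform across all values of $j$, rather than a different negligible function for each $j$ (a supremum of superpolynomially many negligible functions need not be negligible). This holds because the reduction that turns a distinguisher for $\hyb_{2,j}$ versus $\hyb_{2,j+1}$ into an adversary against the simulation security of $\ofe$ is obtained simply by hardwiring the index $j$ — a string of length $\nu(\lambda)$, which is polynomially bounded — so all of these reductions are QPT with running time independent of $j$, and the simulation security of $\ofe$ therefore bounds all of them by the same $2^{-\nu(\lambda)}\cdot\negl_{\ofe}(\lambda)$. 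I do not anticipate a real obstacle here; the corollary is essentially a bookkeeping consequence of the earlier lemmas, and the uniformity-of-the-negligible-function observation is the only point that needs anything beyond a mechanical triangle inequality.
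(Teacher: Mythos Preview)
Your proposal is correct and takes essentially the same approach as the paper, which simply states ``This follows by combining the above lemmas.'' Your write-up is considerably more careful than the paper's one-line proof, in particular by explicitly noting that the $2^{\nu(\lambda)}$-step chain is tolerable because each step incurs only $2^{-\nu(\lambda)}\cdot\negl(\lambda)$ loss and by addressing the uniformity of the negligible function across $j$, but the underlying argument is the same triangle-inequality hybrid.
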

\begin{proof}
    This follows by combining the above lemmas.
\end{proof}

\begin{lemma}
    $\Pr[\hyb_{2, 2^{\nu(\lambda)}} = 1] = \negl(\lambda)$.
\end{lemma}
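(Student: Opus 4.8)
The plan is to strip off the multi-copy structure via the purification compiler (\cref{thm:compiler}) and then reduce to the public-mode security of the underlying mini-scheme through a signature-plus-pigeonhole argument in the spirit of~\cite{TOC:AarChr13}.

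First I would unpack what $\hyb_{2,2^{\nu(\lambda)}}$ is. The banknote $\sum_{id}\alpha_{id}\ket{id}\ket{a_{id}}\ket{\$_{id}}$ now depends on the mini-scheme secret $s_{id}$ only through the pure state $\ket{\$_{id}}=\minibank.\mathsf{State}(s_{id})$ and through the pair $(\dss.\Sign(sgk,sn_{id}),sn_{id})$ with $sn_{id}=\minibank.\mathsf{SerialNumber}(s_{id})$, because that pair is exactly what the simulator $\mathcal S$ is fed; the plaintext $s_{id}||r^4_{id}||sgk$ no longer appears anywhere. Hence $\hyb_{2,2^{\nu(\lambda)}}$ is, up to replacing the truly random function by the PRF $F$ (a change absorbed by PRF security), precisely the purified game $\secexp_{pure}$ of \cref{thm:compiler} built from the algorithm $\mathsf{GenState}'$ that samples $s\samp\minibank.\mathsf{Gen}(1^\lambda;r^1)$ and outputs $\ket{\npke.\enc(npk,\mathcal S(pk,fsk,\dss.\Sign(sgk,sn)||sn;h);r^3)}\otimes\minibank.\mathsf{State}(s)$ with $sn=\minibank.\mathsf{SerialNumber}(s)$; this $\mathsf{GenState}'$ has classically determined outputs since $\minibank.\mathsf{State}$ does. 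Invoking \cref{thm:compiler}, it then suffices to bound the adversary's success in the base game $\secexp$ in which it receives $vk=(nsk,svk,fsk)$ together with $t$ \emph{independently} sampled banknotes $\ket{\phi_j}=\ket{a_j}\otimes\minibank.\mathsf{State}(s_j)$, each with fresh randomness and a fresh serial number $sn_j$, and must output $t+1$ registers all passing $\bank.\mathsf{Verify}$.

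In this base game the serial numbers $sn_1,\dots,sn_t$ are pairwise distinct except with negligible probability, and since $nsk$ and $fsk$ are public the adversary can compute $\ofe.\dec(fsk,\npke.\dec(nsk,a_j))=\dss.\Sign(sgk,sn_j)||sn_j$; so it effectively sees exactly the $t$ classical signatures on $sn_1,\dots,sn_t$ and learns nothing else about $sgk$. By unforgeability of $\dss$ under classical queries, any forged register that passes $\bank.\mathsf{Verify}$ must carry a serial number lying in $\{sn_1,\dots,sn_t\}$, so by pigeonhole two of the $t+1$ forgeries share some serial number $sn_{j^*}$. Since ``winning $\secexp$'' means that for each forgery the verification measurement leaves a mini-banknote register $\reg'$ on which $\minibank.\mathsf{Verify}$ accepts with respect to the recovered serial number, the adversary has in effect produced two mini-scheme banknotes valid for the single serial number $sn_{j^*}$. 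I would turn this into a public-mode mini-scheme forger: on challenge $(sn^*,\reg^*)$ it picks $j^*\samp[t]$, generates the other $t-1$ mini-scheme banknotes and all of $npk,nsk,pk,msk,svk,sgk$ (hence $fsk$) itself, assembles $\ket{\phi_{j^*}}$ from $sn^*$, $\reg^*$ and fresh randomness for $\mathcal S$ and $\npke$, runs the coin adversary, performs only the signature-checking part of $\bank.\mathsf{Verify}$ on its $t+1$ outputs while setting the mini-banknote registers aside, and forwards two registers whose recovered serial number equals $sn^*$ (aborting if fewer than two do). Because the coin adversary's view is independent of $j^*$, conditioned on it winning the colliding index equals $j^*$ with probability at least $1/t$, and then the forwarded registers break public-mode mini-scheme security, which holds under the assumptions in $\crcons_2$ together with subexponentially secure PKE. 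Thus $\Pr[\hyb_{2,2^{\nu(\lambda)}}=1]\le t(\lambda)\cdot\negl(\lambda)+\negl(\lambda)=\negl(\lambda)$.

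The main obstacle --- and the reason the earlier hybrids switched every functional-encryption ciphertext to a simulated one --- is that the mini-scheme forger must be able to assemble all $t$ wrapped banknotes without ever touching the challenge instance's mini-scheme secret $s_{j^*}$: in $\hyb_{2,2^{\nu(\lambda)}}$ the wrapper around banknote $j^*$ needs only $sn_{j^*}$, which the forger gets in public mode, and $\dss.\Sign(sgk,sn_{j^*})$, which it can produce since it samples $sgk$ itself; had any ciphertext still encrypted $s_{j^*}$, the reduction would be stuck.
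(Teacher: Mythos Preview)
Your argument is correct and follows the same underlying route as the paper: strip away the $\npke/\mathcal{S}$ wrapper (which now depends only on $sn_{id}$ and its signature, not on $s_{id}$), pass through the purification compiler to land in an independent-samples game, and then run the Aaronson--Christiano signature-plus-pigeonhole reduction to public-mode mini-scheme security. The paper simply does this more modularly: it observes that once the wrapper is stripped, the banknote state is literally the banknote of the pseudorandom-state compiler in \Cref{sec:money_PR_compiler}, and invokes that section's already-proven theorem rather than re-deriving the compiler-plus-AC13 argument inline as you do. Both arrive at the same conclusion; the paper's presentation avoids repeating the pigeonhole/guess-$j^*$ mechanics, while yours is more self-contained.
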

\begin{proof}
    In this hybrid, we answer the banknote queries with the state
    \begin{equation*}
        \ket{\phi} = \sum_{id}\alpha_{id}\ket{id}\ket{ \npke.\enc(S(pk, fsk, \dss.\Sign(sgk, P(s_{id}; r^4_{id})) ||  P(s_{id}; r^4_{id}); H'(id)); r^3_{id})}\ket{\$_{id}}
    \end{equation*}
    Observe that this state can be simulated using the state
    \begin{equation*}
        \ket{\phi} = \sum_{id}\alpha_{id}\ket{id}\ket{P(s_{id}; r^4_{id})}\ket{\dss.\Sign(sgk, P(s_{id}; r^4_{id}))}\ket{\$_{id}}
    \end{equation*}
    by sampling the instances of $\ofe$ and $\npke$ ourselves. Finally, observe that $\ket{\phi}$ is exactly the banknote states that the compiler in \cref{sec:coincomp}. Since the compiled scheme's security is proven in \cref{sec:coincomp} assuming the security of the mini scheme and $\dss$, it follows that $\Pr[\hyb_{2, 2^{\nu(\lambda)}} = 1] = \negl(\lambda)$ assuming the security of $\dss$ and the constructions $\crcons_2$
\end{proof}

By above we conclude $\Pr[\hyb_0 = 1] = \negl(\lambda)$, which completes the security proof.

\part{Multi-Copy Unclonable Schemes}
% !TEX root = main.tex
\section{Single-Decryptor Encryption}\label{sec:CR-transformation}
In this section, we construct multi-copy secure single-decrypor encryption (SDE), where decryption keys are copy-protected. %We prove the following theorem. 
%show the following theorem.

In \Cref{sec:SDE_def}, we recall the relevant definitions of collusion-resistant SDE. In \Cref{sec:compiler_SDE}, we present a generic compiler that transforms a single-key secure SDE scheme into a collusion-resistant one. In \Cref{sec:proof_cr_conversion_sde}, we provide a security proof for this compiler.
In \Cref{sec:identical_SDE}, we show that the security notion proven in \Cref{sec:proof_cr_conversion_sde} also implies search-based security in the identical-challenge setting, where all parties receive the same challenge ciphertext. Looking ahead, the identical-challenge security plays a crucial role in the conversion to unclonable encryption presented in \Cref{Sec:UE}.
Finally, in \Cref{sec:multi-copy_SDE}, we apply \Cref{thm:compiler} to obtain a multi-copy secure scheme.

%By \cref{thm:compiler}, the construction of a multi-copy secure SDE is reduced to constructing a collusion-resistant one. Thus, we begin by presenting a generic compiler that upgrades a single-key secure scheme to a collusion-resistant one. Finally, we apply \cref{thm:compiler} to obtain a multi-copy secure scheme.

%Then we give our constructions and security proofs in the setting of collusion-resistance rather than multi-copy security. That is, the schemes produces samples from the same mixed state each time a state is requested. Then, through \cref{thm:compiler}, we %unconditionally 
%upgrade our results to the multi-copy setting and obtain \cref{thm:sde}.  

\subsection{Definitions of Collusion-Resistant SDE}\label{sec:SDE_def}
In this section, we provide some definitions related to \emph{single-decryptor encryption} (i.e. copy-protected decryption keys). Unless otherwise specified, we will refer to the security notions where the adversary receives any (unbounded) polynomial number of keys and tries to produce one-more decryption adversary.

\begin{definition}[Single-Decryptor Encryption (Syntax)]\label{def:sde}
An SDE scheme $\SDE$ is a tuple of QPT algorithms $(\Setup,\KG,\Enc,\qDec)$ with plaintext space $\cM$.
\begin{description}
    \item[$\Setup (1^\secp) \ra (\pk,\msk)$:] The setup algorithm takes as input the security parameter $1^\secp$ and outputs a public encryption key $\pk$ and a master secret key $\msk$.
    \item[$\KG (\msk) \ra (\pk,\sk)$:] The key generation algorithm takes as input $\msk$ and outputs a quantum decryption key $\sk$.
    \item[$\Enc(\pk,m) \ra \ct$:] The encryption algorithm takes as input $\pk$ and a plaintext $m\in\cM$ and outputs a ciphertext $\ct$.
    \item[$\Dec(\sk,\ct) \ra m^\prime$:] The decryption algorithm takes as input $\sk$ and $\ct$ and outputs a plaintext $m^\prime \in \cM$ or $\bot$.

\item[Decryption correctness:] There exists a negligible function $\negl$ such that for any $m\in\cM$, 
\begin{align}
\Pr\left[
\Dec(\sk,\ct)= m
\ \middle |
\begin{array}{ll}
(\pk,\msk)\lrun \Setup(1^\secp)\\
\sk\lrun \KG(\msk)\\
\ct \lrun \Enc(\pk,m)
\end{array}
\right] 
=1-\negl(\secp).
\end{align}

\end{description}
\end{definition}

For defining security of SDE, we introduce the notions of quantum decryptors following \cite{C:CLLZ21}.

\begin{definition}[Quantum Decryptor~\cite{C:CLLZ21}]\label{def:qunatum_decryptor}
Let $\delta$ and $\mat{U}$ be a quantum state and a general quantum circuit acting on $n+m$ qubits, where $m$ is the number of qubits of $\delta$ and $n$ is the length of ciphertexts. A quantum decryptor for ciphertexts is a pair $(\delta,\mat{U})$.

When we say that we run the quantum decryptor $(\delta,\mat{U})$ on ciphertext $\ct$, we execute the circuit $\mat{U}$ on inputs $\ket{\ct} \tensor \delta$.
\end{definition}

\begin{definition}[Testing a (CPA-Style) Quantum Decryptor~\cite{C:CLLZ21}]\label{def:test_quantum_decryptor}
Let $\gamma \in [0,1]$. Let $\pk$ and $(\msg_0,\msg_1)$ be a public key and a pair of messages, respectively.
A test for a $\gamma$-good quantum decryptor with respect to $\pk$ and $(\msg_0,\msg_1)$ is the following procedure.
\begin{itemize}
\item The procedure takes as input a quantum decryptor $\qD =(\delta,\mat{U})$.
\item Let $\cP = (\mat{P}, \mat{I} - \mat{P})$ be the following mixture of projective measurements acting on a quantum state $\delta^\prime$:
\begin{itemize}
\item Sample a uniformly random $\coin\chosen \zo$ and compute $\ct \lrun \Enc(\pk,\msg_\coin)$.
\item Run $\coin^\prime \lrun \qD^{\prime}(\ct)$, where $\qD^\prime = (\delta^\prime,\mat{U})$. If $\coin^\prime = \coin$, output $1$, otherwise output $0$.
\end{itemize}
Let $\TI_{1/2+\gamma}(\cP)$ be the threshold implementation of $\cP$ with threshold $\frac{1}{2}+\gamma$.
Apply $\TI_{1/2+\gamma}(\cP)$ to $\delta$. If the outcome is $1$, we say that the test passed, otherwise the test failed.
\end{itemize}
We say that a quantum decryptor $\qD=(\delta,\mat{U})$ is a $\gamma$-good quantum decryptor with respect to $\pk$ and $(\msg_0,\msg_1)$ if it passes the above test with probability $1$.
\end{definition}

\begin{definition}[Testing a Search-Type Quantum Decryptor~\cite{C:CLLZ21}]\label{def:test_search_quantum_decryptor}
Let $\gamma \in [0,1]$. Let $\pk$ be a public key.
A test for a $\gamma$-good search-type quantum decryptor with respect to $\pk$ is the following procedure.
\begin{itemize}
\item The procedure takes as input a quantum decryptor $\qD =(\delta,\mat{U})$.
\item Let $\cP = (\mat{P}, \mat{I} - \mat{P})$ be the following mixture of projective measurements acting on a quantum state $\delta^\prime$:
\begin{itemize}
\item Sample a uniformly random $\msg\chosen \cM$ and compute $\ct \lrun \Enc(\pk,\msg)$.
\item Run $\msg^\prime \lrun \qD^{\prime}(\ct)$, where $\qD^\prime = (\delta^\prime,\mat{U})$. If $\msg^\prime = \msg$, output $1$, otherwise output $0$.
\end{itemize}
Let $\TI_{1/\abs{\cM}+\gamma}(\cP)$ be the threshold implementation of $\cP$ with threshold $\frac{1}{\abs{\cM}}+\gamma$.
Apply $\TI_{1/\abs{\cM}+\gamma}(\cP)$ to $\delta$. If the outcome is $1$, we say that the test passed, otherwise the test failed.
\end{itemize}
We say that a quantum decryptor $\qD=(\delta,\mat{U})$ is a $\gamma$-good search-type quantum decryptor with respect to $\pk$ if it passes the above test with probability $1$.
\end{definition}

Now we are ready to state the security notions of SDE. Specifically, we define two notions: collusion-resistant strong anti-piracy and collusion-resistant strong search anti-piracy, following~\cite{C:CLLZ21,TCC:CakGoy24}. Intuitively, the former corresponds to indistinguishability-based security, while the latter corresponds to search-based security. Unlike in standard public-key encryption, no relationship between these two notions is currently known.\footnote{A recent work \cite{EPRINT:KitYam25} showed that collusion-resistant strong anti-piracy implies collusion-resistant strong search anti-piracy in the single-key setting, but their proof does not extend to the collusion-resistant setting.}

\begin{definition}[Collusion-Resistant Strong Anti-Piracy]\label{def:cr_strong_anti_piracy}
Let $\gamma\in[0,1]$. Let $\SDE=(\Setup,\KG,\Enc,\Dec)$ be a public-key SDE scheme. We consider the collusion-resistant strong anti-piracy game $\expc{\SDE,\cA}{strong}{anti}{piracy}(\secp,\gamma)$ between the challenger and an adversary $\cA$ below.
\begin{enumerate}
\item The challenger generates $(\pk,\msk)\lrun \Setup(1^\secp)$ and sends $(1^\secp,\pk)$ to $\cA$.
\item $\cA$ sends $1^q$ to the challenger, then the challenger generates $\sk_i \lrun \KG(\msk)$ for $i\in[q]$, and sends $(\sk_1,\ldots,\sk_q)$ to $\cA$.
\item $\cA$ outputs $\setbk{(\msg_{i,0},\msg_{i,1})}_{i\in[q+1]} \in \cM^{2(q+1)}$ and $(q+1)$ (possibly entangled) quantum decryptors $\qD_1=(\delta[\qreg{R}_1],\mat{U}_1) ,\ldots, \qD_{q+1}=(\delta[\qreg{R}_{q+1}],\mat{U}_{q+1})$, where $\abs{\msg_{i,0}}=\abs{\msg_{i,1}}$ for all $i\in[q+1]$, $\delta$ is a quantum state over registers $\qreg{R}_1,\ldots, \qreg{R}_{q+1}$, and each $\mat{U}_i$ is a unitary quantum circuit.
\item The challenger runs the tests for a $\gamma$-good quantum decryptor with respect to $\pk$ and $(\msg_{i,0},\msg_{i,1})$ on $\qD_i$ for all $i\in[q+1]$. The challenger outputs $1$ if all tests pass; otherwise outputs $0$.
\end{enumerate}

We say that $\SDE$ is collusion-resistant strong $\gamma$-anti-piracy secure if for any QPT adversary $\cA$, it satisfies that
\begin{align}
\advc{\SDE,\cA}{strong}{anti}{piracy}(\secp,\gamma)\seteq\Pr[ \expc{\SDE,\cA}{strong}{anti}{piracy}(\secp, \gamma)=1]= \negl(\secp).
\end{align}
We say that $\SDE$ is collusion-resistant strong anti-piracy secure if it is strong $\gamma$-anti-piracy secure for any inverse polynomial $\gamma$.
\end{definition}

\begin{definition}[Collusion-Resistant Strong Search Anti-Piracy]\label{def:cr_strong_search_anti_piracy}
Let $\gamma\in[0,1]$. Let $\SDE=(\Setup,\KG,\Enc,\Dec)$ be a public-key SDE scheme. We consider the strong-search anti-piracy game $\expd{\SDE,\cA}{strong}{search}{anti}{piracy}(\secp,\gamma)$ between the challenger and an adversary $\cA$ below.
\begin{enumerate}
\item The challenger generates $(\pk,\msk)\lrun \Setup(1^\secp)$ and sends $(1^\secp,\pk)$ to $\cA$.
\item $\cA$ sends $1^q$ to the challenger, then the challenger generates $\sk_i \lrun \KG(\msk)$ for $i\in[q]$, and sends $(\sk_1,\ldots,\sk_q)$ to $\cA$.
\item $\cA$ outputs $(q+1)$ (possibly entangled) quantum decryptors $\qD_1=(\delta[\qreg{R}_1],\mat{U}_1) ,\ldots, \qD_{q+1}=(\delta[\qreg{R}_{q+1}],\mat{U}_{q+1})$, where $\delta$ is a quantum state over registers $\qreg{R}_1,\ldots, \qreg{R}_{q+1}$, and each $\mat{U}_i$ is a unitary quantum circuit.
\item The challenger runs the tests for a $\gamma$-good search-type quantum decryptor with respect to $\pk$ on $\qD_i$ for all $i\in[q+1]$. The challenger outputs $1$ if all tests pass; otherwise outputs $0$.
\end{enumerate}

We say that $\SDE$ is strong-search $\gamma$-anti-piracy secure if for any QPT adversary $\cA$, it satisfies that
\begin{align}
\advd{\SDE,\cA}{strong}{search}{anti}{piracy}(\secp,\gamma)\seteq\Pr[ \expd{\SDE,\cA}{strong}{search}{anti}{piracy}(\secp, \gamma)=1]= \negl(\secp).
\end{align}
We say that $\SDE$ is collusion-resistant strong-search anti-piracy secure if it is strong-search $\gamma$-anti-piracy secure for any inverse polynomial $\gamma$.
\end{definition}

We also define the single-key version of strong search anti-piracy.
\begin{definition}[Single-Key Strong Anti-Piracy]
We say that an SDE scheme satisfies single-key strong anti-piracy security if it satisfies the requirement of \Cref{def:cr_strong_anti_piracy} when $q$ is limited to $1$. 
\end{definition} 
We recall the following theorem about the existence of single-key SDE. %copy-protected decryption keys. 
\begin{theorem}[\cite{EPRINT:KitYam25}]\label{thm:SDE_from_IO}
If there exist a secure iO for polynomial size classical circuits and OWFs, then there exists an SDE scheme that satisfies single-key strong anti-piracy security with message space $\zo^{\ell_\msg(\secp)}$ for any polynomial $\ell_\msg$. Moreover, its key generation algorithm has classically determined outputs (\Cref{def:classically_determined}). 
\end{theorem}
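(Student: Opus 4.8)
The plan is to recall the coset-state construction of single-decryptor encryption from \cite{C:CLLZ21}, instantiated so that every obfuscation-based ingredient relies only on $\io$ (rather than on LWE-based compute-and-compare obfuscation), following \cite{EPRINT:KitYam25}. Concretely, $\Setup$ samples a puncturable PRF key (instantiated from OWFs via \Cref{thm:pprf-owf}) and an $\io$-obfuscated ``master encryption'' program, publishing the latter in $\pk$ and keeping the PRF key in $\msk$. The key generation algorithm $\KG(\msk)$ uses its classical randomness to sample a uniformly random subspace $A \subseteq \F_2^{n}$ of dimension $n/2$ and two uniformly random cosets $s, s' \in \F_2^{n}$, and outputs the coset state $\ket{A_{s,s'}} = \frac{1}{\sqrt{|A|}}\sum_{a \in A}(-1)^{\langle a, s'\rangle}\ket{a+s}$ together with $\io$-obfuscations (with explicitly supplied obfuscation coins) of the membership-testing programs for $A+s$ and $A^{\perp}+s'$. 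Encryption $\Enc(\pk,m)$ uses the hidden-trigger mechanism of \cite{C:CLLZ21}, producing an obfuscated program that releases $m$ on inputs lying in the relevant coset; this handles arbitrary polynomial message length $\ell_\msg$. Since the whole output of $\KG$ — the coset state (a fixed pure state determined by $A,s,s'$) and the obfuscated programs (deterministic once their coins are fixed, and the coins are part of the sampled randomness) — is a fixed function of the classical input and randomness, the scheme has classically determined outputs in the sense of \Cref{def:classically_determined}.

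Correctness is routine: $\ket{A_{s,s'}}$ is supported on $A+s$ in the computational basis and on $A^{\perp}+s'$ in the Hadamard basis, so the honest decryptor can always feed a valid coset element into the program inside a ciphertext and recover $m$; near-perfect correctness additionally ensures that the honest decryption key is $(1/2+\epsilon)$-good for some inverse-polynomial $\epsilon$, which is the property the ``good-decryptor'' testing formalism of \Cref{def:test_quantum_decryptor} needs.

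The main work — and the main obstacle — is single-key strong anti-piracy (\Cref{def:cr_strong_anti_piracy} with $q=1$): given a QPT adversary that, on input one key, outputs two (possibly entangled) $\gamma$-good quantum decryptors for some inverse-polynomial $\gamma$, we must contradict the monogamy-of-entanglement property of coset states. The strategy has three layers. First, using the (approximate) threshold-implementation machinery (\Cref{lem:ATI_property}, \Cref{thm:ind_distribution_TI}, \Cref{cor:ind_distribution_ATI}), one argues that running $\ATI$ with a slightly lowered threshold on each of the two decryptors (acting on registers $\qreg{R}_1,\qreg{R}_2$ of the joint state) succeeds with noticeable probability and leaves residual decryptors that are still good, i.e.\ each still distinguishes encryptions of $\msg_{i,0}$ from $\msg_{i,1}$ with advantage bounded away from $0$. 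Second, one invokes $\io$ security together with PRF puncturing to move to a hybrid in which the challenge ciphertexts are hidden-trigger ciphertexts whose underlying programs output $m$ only on elements of $A+s$ (resp.\ $A^{\perp}+s'$); this is precisely where one must carry out the CLLZ21 hidden-trigger hybridization using plain $\io$ alone, as in \cite{EPRINT:KitYam25}. Third, from the two good residual decryptors one extracts, by feeding them suitably formed ciphertexts and measuring, a vector in $A+s$ from one and a vector in $A^{\perp}+s'$ from the other, contradicting monogamy-of-entanglement; the possible entanglement between $\qreg{R}_1$ and $\qreg{R}_2$ is exactly what the monogamy game accommodates.

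The points I expect to cost the most effort are: (i) the parameter bookkeeping, so that the ``lower-the-threshold-then-extract'' chain loses only inverse-polynomially at each of polynomially many steps and the final extraction probability stays non-negligible; (ii) verifying that the hidden-trigger hybridization remains sound when built from $\io$ and a puncturable PRF instead of from LWE; and (iii) arranging the monogamy reduction so that both extracted vectors provably come from the single shared coset state that is the honest key. Granting these, \Cref{thm:SDE_from_IO} follows, with the PRF from OWFs and the obfuscations from polynomially secure $\io$.
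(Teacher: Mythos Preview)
The paper does not prove this statement at all: \Cref{thm:SDE_from_IO} is stated purely as a citation of \cite{EPRINT:KitYam25} and is used as a black box in the subsequent compiler. There is therefore no proof in the paper to compare your proposal against.

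As a side remark on your sketch itself: the outline you give is broadly the CLLZ21 template, but the description of the construction is slightly garbled. If $\KG(\msk)$ freshly samples the subspace $A$ and the coset shifts $s,s'$, then $\Enc$, which sees only $\pk$, has no way to produce a ciphertext tied to that particular coset; in CLLZ21 the cosets are chosen in $\Setup$ and the obfuscated membership programs are placed in $\pk$, while $\KG$ simply outputs the corresponding coset state (plus auxiliary data). More substantively, the step you flag as ``verifying that the hidden-trigger hybridization remains sound when built from $\io$ and a puncturable PRF instead of from LWE'' understates the issue: in CLLZ21 the compute-and-compare obfuscation is used not in the construction but in the \emph{reduction}, specifically in the extraction step that turns a good decryptor into a coset vector. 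Removing the LWE assumption is precisely the technical contribution of \cite{EPRINT:KitYam25}, and it is not a matter of re-checking the same hybrid; that reference introduces a different argument for the extraction. Your point (ii) is thus the whole crux rather than a bookkeeping detail, and your sketch does not indicate how it is discharged.
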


\subsection{Generic Compiler from Single-Key to Collusion-Resistant SDE}\label{sec:compiler_SDE}
We construct a collusion-resistant single-decryptor encryption scheme $\SDE =(\Setup,\qKG,\Enc,\qDec)$ by using the following building blocks.
\begin{itemize}
\item Single-key SDE $\One = \One.(\Setup,\qKG,\Enc,\qDec)$.
\item Adaptively secure PKFE $\FE = \FE.(\Setup,\KG,\Enc,\Dec)$.
\item Puncturable PRF $\PRF=(\prfgen,\prfF,\Puncture)$.
\end{itemize}
We set $\ell_\pk \seteq \abs{\one.\pk}$ and $\ell_\ct \seteq \abs{\one.\ct}$, where $\abs{\one.\pk}$ and $\abs{\one.\ct}$ are the public key and ciphertext length of $\One$, respectively. Let $\ell_\msg$ be the plaintext length.
\begin{description}
\item[$\Setup(1^\secp)$:] $ $
\begin{itemize}
    \item Generate $(\sffe.\pk,\sffe.\msk) \lrun \FE.\Setup(1^\secp)$.
    \item Output $(\pk,\msk)\seteq (\sffe.\pk,\sffe.\msk)$.
\end{itemize}
\item[$\qKG(\msk)$:] $ $
\begin{itemize}
	\item Parse $\sffe.\msk=\sk$ and generate $(\one.\pk,\one.\sk)\lrun \One.\Setup(1^\secp)$.
	\item Generate $\one.\qsk \lrun \One.\qKG(\one.\sk)$.
	\item Construct a circuit $\RE[\one.\pk]$, which is described in~\cref{fig:func_reenc}. Note that we assume that each $\one.\pk$ is associated with a random tag $\tau_{\one.\pk}$ though we do not explicitly write $\tau_{\one.\pk}$.
	\item Generate $\sffe.\fsk \lrun \FE.\KG(\sffe.\msk,\RE[\one.\pk])$.
	\item Output $\qsk \seteq (\one.\qsk,\sffe.\fsk)$.
\end{itemize}
\item[$\Enc(\pk,\msg)$:] $ $
\begin{itemize}
\item Parse $\sffe.\pk =\pk$.
\item Choose $\prfkey \chosen \prfgen(1^\secp)$.
\item Set $x \seteq \msg\concat \prfkey\concat 0\concat 0^{\ell_\pk}\concat 0^{\ell_\ct}$.
\item Compute $\sffe.\ct \lrun \FE.\Enc(\sffe.\pk,x)$.
\item Output $\ct \seteq \sffe.\ct$.
\end{itemize}
\item[$\qDec(\qsk,\ct)$:] $ $
\begin{itemize}
\item Parse $\qsk = (\one.\qsk,\sffe.\fsk)$ and $\ct =\sffe.\ct$.
\item Compute $\one.\ct \lrun \FE.\Dec(\sffe.\fsk,\sffe.\ct)$.
\item Compute and output $\msg^\prime \lrun \One.\qDec(\one.\qsk,\one.\ct)$.
\end{itemize}
\end{description}

\protocol
{Function $\RE[\one.\pk](m\concat \prfkey\concat \sfmode \concat \one.\pk^\prime \concat \one.\ct^\ast)$}
{The description of $\RE[\one.\pk]$}
{fig:func_reenc}
{
\begin{description}
\setlength{\parskip}{0.3mm} % between paragraph
\setlength{\itemsep}{0.3mm} % between items
\item[Constants:] Single-key SDE public key $\one.\pk$.
\item[Input:] Plaintext $\msg$, PRF key $\prfkey$, mode number $\sfmode$, public key $\one.\pk^\prime$, single-key SDE ciphertext $\one.\ct^\ast$.
\end{description}
\begin{enumerate}
	\item If $\sfmode=0$, do the following:
	\begin{itemize}
		\item Compute and output $\one.\ct \seteq \One.\Enc(\one.\pk,\msg;\prfF_\prfkey(\one.\pk))$.
	\end{itemize}
	\item If $\sfmode =1$, do the following:
	\begin{itemize}
		\item If $\tau_{\one.\pk} \le \tau_{\one.\pk^\prime}$, compute and output $\one.\ct \seteq \One.\Enc(\one.\pk,0^{\ell_\msg};\prfF_\prfkey(\one.\pk))$.
		\item If $\tau_{\one.\pk} > \tau_{\one.\pk^\prime}$, compute and output $\one.\ct \seteq \One.\Enc(\one.\pk,\msg;\prfF_\prfkey(\one.\pk))$.
	\end{itemize}
	\item If $\sfmode =2$, do the following:
	\begin{itemize}
		\item If $\tau_{\one.\pk} < \tau_{\one.\pk^\prime}$, compute and output $\one.\ct \seteq \One.\Enc(\one.\pk,0^{\ell_\msg};\prfF_\prfkey(\one.\pk))$.
		\item If $\tau_{\one.\pk} = \tau_{\one.\pk^\prime}$, output $\one.\ct^\ast$.
		\item If $\tau_{\one.\pk} > \tau_{\one.\pk^\prime}$, compute and output $\one.\ct \seteq \One.\Enc(\one.\pk,\msg;\prfF_\prfkey(\one.\pk))$.
	\end{itemize}
\end{enumerate}
}

\begin{theorem}\label{thm:collusion-resistant_conversion}
Assume that $\One$ is single-key strong anti-piracy secure, $\FE$ is adaptively secure, and $\PRF$ is a puncturable PRF.
Then, $\SDE$ satisfies \begin{itemize}
    \item strong anti-piracy security (\cref{def:cr_strong_anti_piracy}),
    \item strong search anti-piracy security (\cref{def:cr_strong_search_anti_piracy}).
    %\item identical-challenge search anti-piracy security (\cref{def:cr_iden-chal_search_anti_piracy}).\footnote{Its definition will be given in \Cref{def:cr_iden-chal_search_anti_piracy}.}
\end{itemize}
\end{theorem}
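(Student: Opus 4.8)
The plan is to establish collusion-resistant strong anti-piracy (\Cref{def:cr_strong_anti_piracy}); collusion-resistant strong search anti-piracy (\Cref{def:cr_strong_search_anti_piracy}) then follows from the \emph{same} argument after replacing every CPA-style decryptor test (\Cref{def:test_quantum_decryptor}) by a search-type test (\Cref{def:test_search_quantum_decryptor}) and taking as the starting point the single-key restriction of \Cref{def:cr_strong_search_anti_piracy} for $\One$, which is implied by single-key strong anti-piracy of $\One$ via~\cite{EPRINT:KitYam25}. Fix an inverse polynomial $\gamma$ and suppose, toward a contradiction, that a QPT adversary $\adve$ wins $\expc{\SDE,\adve}{strong}{anti}{piracy}(\secp,\gamma)$ with probability at least $1/p(\secp)$ for infinitely many $\secp$, where $p$ is a polynomial; let $q=q(\secp)$ denote the polynomial number of key queries $\adve$ makes. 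Since each public key $\one.\pk$ carries a uniformly random tag $\tau_{\one.\pk}$, the $q$ single-key instances embedded in the keys have pairwise distinct tags except with negligible probability; relabel the keys so that $\tau_{\one.\pk_{(1)}}<\dots<\tau_{\one.\pk_{(q)}}$.

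For $j\in\{0,1,\dots,q\}$ I would define the hybrid $\hyb_j$ to be the anti-piracy game in which every challenge ciphertext sampled by $\Enc$ inside the decryptor tests encrypts the plaintext $\msg\concat\prfkey\concat 1\concat\one.\pk_{(j)}\concat 0^{\ell_\ct}$ (in $\hyb_0$ it encrypts $\msg\concat\prfkey\concat 0\concat 0^{\ell_\pk}\concat 0^{\ell_\ct}$, i.e.\ the honest mode $0$), and in which every $\gamma$-good test threshold is relaxed to $1/2+\gamma_j$ with $\gamma_j=\gamma-j\cdot\eps$, where $\eps$ is a sufficiently small inverse polynomial that the total threshold loss below stays under $\gamma/2$. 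In $\hyb_j$ the re-encryption circuit of the $i$-th key thus outputs $\One.\Enc(\one.\pk_{(i)},0^{\ell_\msg};\prfF_\prfkey(\one.\pk_{(i)}))$ for $i\le j$ and $\One.\Enc(\one.\pk_{(i)},\msg;\prfF_\prfkey(\one.\pk_{(i)}))$ for $i>j$. I would pass from $\hyb_{j-1}$ to $\hyb_j$ through sub-hybrids, applying \Cref{cor:ind_distribution_ATI} at every computational step (to the test register of each decryptor, commuting the other $q$ threshold/ATI measurements past it), which by that corollary costs at most $3\eps$ in the threshold and a fixed negligible amount in the winning probability per step: (i) switch to mode $2$ with threshold $\one.\pk_{(j)}$ and $\one.\ct^\ast=\One.\Enc(\one.\pk_{(j)},\msg;\prfF_\prfkey(\one.\pk_{(j)}))$, indistinguishable by adaptive security of $\FE$ since $\RE[\one.\pk_{(i)}]$ has identical output on the old and new plaintexts for every $i$; (ii) puncture $\prfkey$ at $\one.\pk_{(j)}$ inside the FE plaintext — indistinguishable by adaptive $\FE$ security, as the functional keys evaluate $\prfF_\prfkey$ only at points $\ne\one.\pk_{(j)}$ — and then replace $\prfF_\prfkey(\one.\pk_{(j)})$ by a fresh truly random string $r$, indistinguishable by pseudorandomness at the punctured point of $\PRF$; (iii) replace $\one.\ct^\ast=\One.\Enc(\one.\pk_{(j)},\msg;r)$ by $\One.\Enc(\one.\pk_{(j)},0^{\ell_\msg};r)$, which invokes single-key strong anti-piracy of $\One$ (analysed next); (iv) reverse step (ii), and rewrite the game as mode $1$ with threshold $\one.\pk_{(j)}$, which is $\hyb_j$. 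In the terminal hybrid $\hyb_q$ every re-encryption outputs an encryption of $0^{\ell_\msg}$, so the challenge ciphertext is independent of the bit $\coin$ and no quantum decryptor passes the $(1/2+\gamma_q)$-good test (as $\gamma_q\ge\gamma/2>0$); hence $\Pr[\hyb_q=1]\le\negl(\secp)$.

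For step (iii) the reduction $\cB$ against single-key strong anti-piracy of $\One$ proceeds as follows: on receiving $\one.\pk^\ast$, it samples $q$ fresh tags and places $\one.\pk^\ast$ at a uniformly random one of the $q$ key slots, continuing only if $\one.\pk^\ast$ ends up with the $j$-th smallest tag (probability $1/q-\negl(\secp)$; conditioned on this the joint tag distribution is exactly that of the real game); it requests the one quantum key $\one.\qsk^\ast$ and sets $\qsk_{(j)}=(\one.\qsk^\ast,\FE.\KG(\sffe.\msk,\RE[\one.\pk^\ast]))$, generates all other single-key instances together with their quantum keys and the $\FE$ keys itself, simulates the game for $\adve$, and outputs two of $\adve$'s $q+1$ quantum decryptors (wrapped so that a single-key ciphertext handed by the challenger is first embedded into a mode-$2$ $\FE$ ciphertext), grouping the registers of the remaining $q-1$ decryptors into the second one. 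The crux — and the step that requires the inequality $q+1>q$ — is to argue that whenever $\adve$ wins, at least two of its $q+1$ decryptors are simultaneously good \emph{with respect to the single instance $\one.\pk_{(j)}$}, not merely with respect to the compiled scheme: via a pigeonhole over the $q$ single-key instances one would assign to each surviving compiled decryptor a single-key instance on which it decrypts successfully — making this assignment well-defined through the approximate threshold implementations of \Cref{lem:ATI_property} and the distribution-swap bound of \Cref{cor:ind_distribution_ATI}, along the lines of simultaneous extraction arguments for quantum adversaries — so that two of the $q+1$ decryptors must be assigned the same instance. Those two decryptors contradict single-key strong anti-piracy of $\One$. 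Stringing the $\hyb_j$ together and combining the per-step losses (negligible in probability, at most $\gamma/2$ total in threshold, and the $1/q$ reduction loss) gives $1/p(\secp)\le\negl(\secp)$, a contradiction.

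I expect the main obstacle to be precisely this simultaneous two-decryptor extraction at the pivot step: one must carefully commute the $q+1$ threshold/ATI measurements corresponding to the decryptor tests past one another, track the accumulated $\gamma$-loss over $O(q)$ hybrids, and argue that a non-negligible drop at the pivot can be charged to two decryptors both being good for $\one.\pk_{(j)}$; by contrast the $\FE$ and puncturable-PRF manipulations, together with the bookkeeping, are routine given \Cref{thm:ind_distribution_TI,cor:ind_distribution_ATI,lem:ATI_property}, \Cref{def:pprf}, and \Cref{def:ad_ind_PKFE}. For collusion-resistant strong search anti-piracy the construction and the entire hybrid sequence are unchanged; one only replaces each CPA-style test $\TI_{1/2+\gamma}(\cP)$ by the search-type test $\TI_{1/\abs{\cM}+\gamma}(\cP)$, starts from single-key strong search anti-piracy of $\One$ (implied by its single-key strong anti-piracy via~\cite{EPRINT:KitYam25}), and runs the same pivot reduction with a search-type single-key decryptor; again the terminal hybrid renders the challenge ciphertext independent of the message, so its winning probability is negligible.
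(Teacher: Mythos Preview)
Your ingredients match the paper's proof (which follows \cite{TCC:LLQZ22}): the mode-$1$/mode-$2$ FE plaintexts, the puncturable-PRF sub-hybrids you label (i)--(iv), the pigeonhole over $q$ instances, and the reduction to single-key anti-piracy of $\One$. You also correctly flag the simultaneous two-decryptor extraction as the crux. The problem is how you glue these together. You run a \emph{global} hybrid chain on the joint event ``all $q{+}1$ tests pass'', locate an index $j$ where $\Pr[\hyb_{j-1}]-\Pr[\hyb_j]$ is large, and then plug the $\One$ challenge into slot $j$. But a drop in the \emph{joint} success at $j$ does not force \emph{two} decryptors to be good at instance $j$: if decryptor $1$ relies only on instance $1$ while decryptors $2,\dots,q{+}1$ all rely on instance $2$, the joint probability already collapses at $j=1$ (decryptor $1$ alone fails), yet the pigeonhole collision is at instance $2$, and your reduction at $j=1$ recovers only one good $\One$-decryptor. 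Your per-decryptor pigeonhole paragraph is the right idea, but it produces a collision index $j'$ that need not equal the global-hybrid pivot $j$, so the reduction as you describe it is aimed at the wrong slot.

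The paper sidesteps this by never running a global hybrid. It conditions on all $q{+}1$ threshold tests passing (giving a state $\delta^\ast$), and then, for each decryptor $k$ separately, applies the projective implementations $\sfPI_0,\dots,\sfPI_q$ \emph{sequentially to that decryptor's register}, obtaining values $b_{k,0}\ge \tfrac12+\gamma-o(1)$ down to $b_{k,q}\approx\tfrac12$; telescoping gives each decryptor its own pivot $j_k$ with $b_{k,j_k-1}-b_{k,j_k}\ge\gamma'/q$. The pigeonhole is on the $j_k$'s (\cref{lem:pigeonhole_two_dec}), and the reduction guesses $(\idx_1,\idx_2,j)$ at cost $\approx 1/q^3$ (\cref{lem:two_good_dec}). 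Your sub-hybrids (i)--(iv) appear in the paper as the chain $\sfD_{j-1}\leftrightarrow\sfD_{j-1}^{1,2,3}$ and $\sfD_j\leftrightarrow\sfD_j^{-1,-2,-3}$ (\cref{lem:sub_hyb_ind}), applied only to the two chosen decryptors rather than to all $q{+}1$. The extraction subtlety you anticipate is then handled by a short conditional-probability argument on the events $\event{SD}_1$ and $\event{G}_2$: after building a distinguisher from register $\idx_1$, one shows that the $\idx_2$-register still meets the hypotheses of \cref{lem:construct_distinguisher}, yielding the second $\One$-decryptor. The search variant is exactly as you say (\cref{rem:res_cr_strong_search_anti_piracy}).
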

\begin{remark}
	In the proof of~\cref{thm:collusion-resistant_conversion} below, we use $\sfPI$ and $\TI$ for simplicity, which are inefficient. In reality, when actually constructing reductions to the security of the underlying primitives, we need to use the efficient versions $\API$ and $\ATI$. However, as shown by \cite{TCC:Zhandry20,C:CLLZ21}, by an appropriate choice of parameters, this replacement can be made without affecting the proof.
\end{remark}

\subsection{Proof of \cref{thm:collusion-resistant_conversion}}\label{sec:proof_cr_conversion_sde}
We only prove the strong anti-piracy security, and strong search anti-piracy security follows similarly - see \cref{rem:res_cr_strong_search_anti_piracy}. %Identical-challenge strong search anti-piracy security will follow from (independent-challenge) strong search anti-piracy security, through \cref{thm:ind-to-iden}.

Now we move onto proving strong anti-piracy security. Although our basic proof strategy follows that by Liu et al.~\cite{TCC:LLQZ22}, some details are different. Assume, for the sake of contradiction, that there exist inverse polynomials $\gamma(\cdot)$ and $\nu(\cdot)$, and a polynomial $q(\cdot)$, and an adversary $\qA$ such that for infinitely many $\secp \in \N^{+}$, $\qA$ outputs two messages $(\msg_0,\msg_1)$ and a state $\delta$ over $q+1$ registers satisfying:
\begin{align}
	\Tr[(\TI_{1/2+\gamma} \tensor \TI_{1/2+\gamma} \tensor \cdots \tensor \TI_{1/2+\gamma})\delta] \ge \nu.
\end{align}
We now consider the event where all tests $(\TI_{1/2+\gamma} \tensor \TI_{1/2+\gamma} \tensor \cdots \tensor \TI_{1/2+\gamma})$ return ourcome $1$. Conditioned on this event, let $\delta^\ast$ denote the resulting post-measurement state over the $q+1$ registers.
By assumption, this event occurs with probability at least $\nu$.
Let $\qsk_i \seteq (\one.\qsk_i,\sffe.\fsk_i)$ be the $i$-th quantum decryption key, where $\sffe.\fsk_i \lrun \FE.\KG(\sffe.\msk,\RE[\one.\pk_i])$ for $i\in[q]$. Without of loss generality, we assume $\tau_{\one.\pk_1} < \cdots < \tau_{\one.\pk_q}$, which can be ensured by relabeling if necessary.

We consider dummy keys $\one.\pk_0 \seteq 0^{\ell_\pk}$ and $\one.\pk_{q+1}\seteq 1^{\ell_\pk}$, and set $\tau_{\one.\pk_0} \seteq 0^{\ell_\pk}$ and $\tau_{\one.\pk_{q+1}}\seteq 1^{\ell_\pk}$. Let $\sfD_i$ be the following distribution for each $i\in\setbk{0,1,\ldots,q}$.
\begin{description}
	\item[$\sfD_i$:] $ $
	\begin{itemize}
	\item Sample $\coin\chosen\zo$ and $\prfkey \lrun \prfgen(1^\secp)$.
	\item Let the distribution be $\sffe.\ct \lrun \FE.\Enc(\sffe.\pk,\msg_\coin \concat \prfkey\concat 1\concat \one.\pk_i \concat 0^{\ell_\ct})$.
	\end{itemize}
\end{description}

We also let $\sfD_{-1}$ and $\sfD_{q+1}$ be the following distributions:
\begin{description}
	\item[$\sfD_{-1}$:]$ $
	\begin{itemize}
	\item Sample $\coin\chosen\zo$ and $\prfkey \lrun \prfgen(1^\secp)$.
	\item Let the distribution be $\sffe.\ct \lrun \FE.\Enc(\sffe.\pk,\msg_\coin \concat \prfkey\concat 0\concat 0^{\ell_\pk} \concat 0^{\ell_\ct})$.
	\end{itemize}
	\item[$\sfD_{q+1}$:]$ $
	\begin{itemize}
	\item Sample $\coin\chosen\zo$ and $\prfkey \lrun \prfgen(1^\secp)$.
	\item Let the distribution be $\sffe.\ct \lrun \FE.\Enc(\sffe.\pk,0^{\ell_\msg} \concat \prfkey\concat 1\concat 1^{\ell_\pk} \concat 0^{\ell_\ct})$.
	\end{itemize}	
\end{description}

\begin{itemize}
	\item It holds that $\FE.\Enc(\sffe.\pk, \msg\concat \prfkey\concat 0\concat 0^{\ell_\pk}\concat 0^{\ell_\ct}) \cind \FE.\Enc(\sffe.\pk, \msg\concat \prfkey\concat 1\concat 0^{\ell_\pk}\concat 0^{\ell_\ct})$ by the adaptive security of $\FE$ since $\RE[\one.\pk_i](\msg\concat \prfkey\concat 0\concat 0^{\ell_\pk}\concat 0^{\ell_\ct})=\RE[\one.\pk_i](\msg\concat \prfkey\concat 1\concat 0^{\ell_\pk}\concat 0^{\ell_\ct})$ for all $i\in[q]$.  That is, $\sfD_{-1} \cind \sfD_0$. Note that we can assume that $\one.\pk_i \ne 0^{\ell_\pk}$.
	% \item $\FE.\Enc(\sffe.\pk, \msg\concat \prfkey\concat 1\concat 0^{\ell_\pk}\concat 0^{\ell_\ct})$, $\FE.\Enc(\sffe.\pk, \msg\concat \prfkey\concat 1\concat \one.\pk_q \concat 0^{\ell_\ct})$ (jump finding)
	\item It holds that $\FE.\Enc(\sffe.\pk, \msg\concat \prfkey\concat 1\concat \one.\pk_q\concat 0^{\ell_\ct}) \cind \FE.\Enc(\sffe.\pk, 0^{\ell_\msg}\concat \prfkey\concat 1\concat 1^{\ell_\pk}\concat 0^{\ell_\ct})$ by the adaptive security of $\FE$ since $\RE[\one.\pk_i](\msg\concat \prfkey\concat 1\concat \one.\pk_q\concat 0^{\ell_\ct})=\RE[\one.\pk_i](0^{\ell_\msg}\concat \prfkey\concat 1\concat 1^{\ell_\pk}\concat 0^{\ell_\ct})$ for all $i\in[q]$. Note that we assume $\tau_{\one.\pk_1} < \cdots < \tau_{\one.\pk_q}$. That is, $\sfD_q \cind \sfD_{q+1}$.
\end{itemize}

We define a POVM based on $\sfD_i$ for $i \in \setbk{0,\ldots,q}$.
\begin{itemize}
	\item Let $\cP_i = (P_i,I-P_i)$ be the following POVM.
	\begin{itemize}
		\item Sample $\coin \chosen \zo$, PRF key $\prfkey$, and ciphertext randomness.
		\item Compute $\sffe.\ct\lrun \FE.\Enc(\sffe.\pk,\msg_\coin \concat \prfkey\concat 1\concat \one.\pk_i \concat 0^{\ell_\ct} )$
		\item Run the quantum decryptor on input $\sffe.\ct$. If the outcome is $\coin$, output $1$. Otherwise, output $0$.
	\end{itemize}
	\item Let $\sfPI_i \seteq \projimp(\cP_i)$ be the projective implementation of $\cP_i$.
\end{itemize}

We can similarly define POVM $\cP_{-1}$ and $\cP_{q+1}$ based on $\sfD_{-1}$ and $\sfD_{q+1}$ and projective implementation $\sfPI_{-1}$ and $\sfPI_{q+1}$ as above.

By~\cref{thm:TI_repeat}, if a quantum decryptor lies within the subspace defined by $\TI_{1/2+\gamma}$, then applying $\sfPI_{-1}$ to the state always yields a real-valued outcome $\beta \ge 1/2+\gamma$.
Let $b_{i,-1}$ and $b_{i,q+1}$ denote the random variables representing the outcomes of applying $\sfPI_{-1}$ and $\sfPI_{q+1}$, respectively, to the $i$-th quantum decryptor in the post-measurement state $\delta^\ast$.
Then we have:
\begin{align}
	\Pr[\forall i \in [q+1], b_{i,-1} \ge \frac{1}{2} + \gamma ] = 1, \label{eq:top}\\
	\Pr[\forall i \in [q+1], b_{i,q+1} = \frac{1}{2} ] = 1. \label{eq:bot}
\end{align}
The latter equation holds since the ciphertext in $\sfPI_{q+1}$ does not depends on any information about $\msg_0$ or $\msg_1$.

Let $b_{i,0}$ and $b_{i,q}$ denote the random variables representing the outcomes of applying $\sfPI_0$ and $\sfPI_q$, respectively, to the $i$-th quantum decryptor in $\delta^\ast$.
By $\sfD_{-1} \cind \sfD_0$ and \cref{thm:ind_distribution_TI}, we have
\begin{align}
	\Pr[\forall i \in [q+1], b_{i,0} \ge \frac{1}{2} + \gamma - \frac{\gamma}{c}] = 1 - \negl(\secp),
\end{align}
where $c$ is some small constant (say, $c=10$).
For the $k$-th quantum decryptor $\delta^\ast [k]$, we apply $\sfPI_0$ to obtain an outcome $b_{k,0}$. We apply projective implementation $\sfPI_i$ on the leftover state to obtain an outcome $b_{k,i}$ for $i\in\setbk{1,\ldots,q}$.

\begin{lemma}\label{lem:jump_first_decryptor}
There always exists $j \in \setbk{1,\ldots,q}$ such that $b_{1,j-1} - b_{1,j} \ge \gamma^\prime /q $.
\end{lemma}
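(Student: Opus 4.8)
The plan is to track how the success probability of the first quantum decryptor degrades as the challenge distribution moves along the chain $\sfD_0,\sfD_1,\dots,\sfD_q$ (and then $\sfD_{q+1}$), and to exploit that each step $\sfD_{i-1}\to\sfD_i$ only alters the behavior of one of the $q$ re-encryption functional keys held inside that decryptor, so that a large \emph{total} drop in success probability must be concentrated in a single step. Concretely, I will pin down the two ends of the chain as measured on $\delta^\ast[1]$ and then telescope.

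First I would record the left endpoint, which is already essentially in hand: from $\sfD_{-1}\cind\sfD_0$ (adaptive security of $\FE$), the fact that $\sfD_{-1}$ is exactly the original test distribution so that $\sfPI_{-1}$ returns a value $\ge\tfrac12+\gamma$ on $\delta^\ast[1]$ with certainty (\cref{eq:top}), and \cref{thm:ind_distribution_TI}, one gets $b_{1,0}\ge\tfrac12+\gamma-\gamma/c$ except with negligible probability. Next I would establish the right endpoint: $b_{1,q}\le\tfrac12+2\epsilon$ except with negligible probability, for a small inverse polynomial $\epsilon$ of my choosing. Here I use that by \cref{eq:bot} the POVM $\cP_{q+1}$ accepts \emph{every} state with probability exactly $\tfrac12$, and that $\sfD_q\cind\sfD_{q+1}$ (adaptive security of $\FE$); applying \cref{thm:ind_distribution_TI} to the leftover state $\rho$ of the first decryptor after the measurements $\sfPI_0,\dots,\sfPI_{q-1}$ yields
\begin{align*}
\Tr[\TI_{1/2+2\epsilon}(\cP_q)\,\rho]\ \le\ \Tr[\TI_{1/2+\epsilon}(\cP_{q+1})\,\rho]+\negl(\secp)\ =\ \negl(\secp),
\end{align*}
so by \cref{thm:TI_repeat} the outcome of $\sfPI_q=\projimp(\cP_q)$ on $\rho$ is below $\tfrac12+2\epsilon$ except with negligible probability. (In the genuine reduction one replaces $\sfPI,\TI$ by $\API,\ATI$ as in the remark above --- which in particular makes the leftover state efficiently constructible as \cref{thm:ind_distribution_TI} requires --- and uses \cref{cor:ind_distribution_ATI} in place of \cref{thm:ind_distribution_TI}.)

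With $\epsilon$ taken small enough that $\gamma-\gamma/c-2\epsilon\ge\gamma'$, I then condition on the two endpoint events (which jointly fail only with negligible probability) and telescope over the sequentially applied $\sfPI_0,\dots,\sfPI_q$: for every realization,
\begin{align*}
\sum_{j=1}^{q}\big(b_{1,j-1}-b_{1,j}\big)\ =\ b_{1,0}-b_{1,q}\ \ge\ \Big(\tfrac12+\gamma-\tfrac{\gamma}{c}\Big)-\Big(\tfrac12+2\epsilon\Big)\ \ge\ \gamma',
\end{align*}
hence by averaging some $j\in\{1,\dots,q\}$ has $b_{1,j-1}-b_{1,j}\ge\gamma'/q$, which is the claim (the statement's ``always'' being read, as is standard once the idealized $\TI/\sfPI$ are swapped for their efficient counterparts, as ``except with negligible probability''). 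I expect the right-endpoint bound to be the one subtlety worth care: one must invoke \cref{thm:ind_distribution_TI} in the correct direction and, crucially, apply it to the \emph{post-measurement} state of the first decryptor after the earlier $\sfPI_j$'s, which is only legitimate in the efficient $\API/\ATI$ formulation; the left endpoint, the telescoping identity, and the pigeonhole step are all routine.
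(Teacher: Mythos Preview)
Your proposal is correct and follows essentially the same route as the paper: establish the left endpoint $b_{1,0}\ge\tfrac12+\gamma-\gamma/c$ via $\sfD_{-1}\cind\sfD_0$, the right endpoint $b_{1,q}\approx\tfrac12$ via $\sfD_q\cind\sfD_{q+1}$ and \cref{eq:bot}, then telescope and apply pigeonhole. The paper phrases the right-endpoint step as ``$\Pr[b_{1,q}-b_{1,q+1}\ge p']=\negl(\secp)$ for any inverse polynomial $p'$'' and sets $\gamma'\seteq\gamma-\gamma/c-\negl(\secp)$, while you fix an $\epsilon$ and bound $b_{1,q}\le\tfrac12+2\epsilon$ directly; these are equivalent formulations, and your remarks about passing to $\API/\ATI$ match the paper's own caveat.
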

\begin{proof}
First, we prove that for any inverse polynomial $p^\prime$, $\Pr[b_{1,q} -b_{1,q+1}\ge p^\prime] = \negl(\secp)$ holds. If $\Pr[b_{1,q} -b_{1,q+1} \ge p^\prime] = \nonnegl(\secp)$ for some inverse polynomial $p^\prime$, by~\cref{thm:ind_distribution_TI}, $\sfD_q$ and $\sfD_{q+1}$ must be distinguishable. As we saw above, $\sfD_q \cind \sfD_{q+1}$ by the adaptive security of FE. Hence, $b_{1,0} - b_{1,q} \ge \gamma-\frac{\gamma}{c} - \negl(\secp)$.
From this and triangle inequality, we derive the claim by setting $\gamma^\prime \seteq \gamma-\frac{\gamma}{c}-\negl(\secp)$.
\end{proof}

We similarly have the following lemma.
\begin{lemma}\label{lem:jump_for_all_decryptor}
For every $i \in [q+1]$, there always exists $j \in [q]$ such that $b_{i,j-1}-b_{i,j}\ge \gamma^\prime/q$.
\end{lemma}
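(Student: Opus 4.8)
The plan is to carry out the argument of \cref{lem:jump_first_decryptor} verbatim, but for an arbitrary $i \in [q+1]$ in place of $i = 1$. The reason the generalization is essentially free is that every ingredient used for the first decryptor is either a global property of the scheme or a per-register statement: the guarantees \eqref{eq:top} and \eqref{eq:bot} are stated for all $i \in [q+1]$, not just $i=1$; the indistinguishabilities $\sfD_{-1} \cind \sfD_0$ and $\sfD_q \cind \sfD_{q+1}$ (both following from the adaptive security of $\FE$) do not refer to any particular decryptor; and each $\sfPI_j$ acts only on the register $\qreg{R}_i$ of the $i$-th decryptor, so the sequence $\sfPI_0, \sfPI_1, \dots, \sfPI_q$ applied to that decryptor is well-defined and leaves the other registers untouched.

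Concretely, fix $i \in [q+1]$. Starting from $\delta^\ast$, I would apply $\sfPI_0$ to the $i$-th decryptor and then apply $\sfPI_1, \dots, \sfPI_q$ in turn to the leftover state, recording the outcomes $b_{i,0}, b_{i,1}, \dots, b_{i,q}$; I also let $b_{i,q+1}$ denote the outcome that would result from subsequently applying $\sfPI_{q+1}$. First, since $\cP_{-1}$ is exactly the POVM used by the test for a $\gamma$-good quantum decryptor with respect to $(\msg_0,\msg_1)$, equation \eqref{eq:top} gives $b_{i,-1} \ge 1/2 + \gamma$ with certainty, and then $\sfD_{-1} \cind \sfD_0$ together with \cref{thm:ind_distribution_TI} gives $b_{i,0} \ge 1/2 + \gamma - \gamma/c$ with probability $1 - \negl(\secp)$. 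Next, $b_{i,q+1} = 1/2$ with certainty: the ciphertext sampled in $\sfD_{q+1}$ carries no information about $\msg_0,\msg_1$, so $\mat{P}_{q+1} = \tfrac12\mat{I}$ and its projective implementation always outputs $1/2$ (this is \eqref{eq:bot}). Finally, exactly as in \cref{lem:jump_first_decryptor}, if $\Pr[b_{i,q} - b_{i,q+1} \ge p^\prime] = \nonnegl(\secp)$ for some inverse polynomial $p^\prime$, then $\sfD_q$ and $\sfD_{q+1}$ would be distinguishable by \cref{thm:ind_distribution_TI}, contradicting $\sfD_q \cind \sfD_{q+1}$; hence $b_{i,q} \le 1/2 + \negl(\secp)$ with overwhelming probability. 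Combining these, $b_{i,0} - b_{i,q} \ge \gamma - \gamma/c - \negl(\secp) =: \gamma^\prime$ with overwhelming probability, and since $b_{i,0} - b_{i,q} = \sum_{j=1}^{q} (b_{i,j-1} - b_{i,j})$, a telescoping/averaging argument (as in the proof of \cref{lem:jump_first_decryptor}) yields some $j \in [q]$ with $b_{i,j-1} - b_{i,j} \ge \gamma^\prime/q$.

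There is no serious obstacle here; the only points that need care are bookkeeping ones. The bound $b_{i,0} \ge 1/2 + \gamma - \gamma/c$ (and likewise $b_{i,q} \le 1/2 + \negl$) holds only with overwhelming probability rather than with certainty, since it passes through computational indistinguishability and \cref{thm:ind_distribution_TI}; so the ``always exists $j$'' in the statement should be read as ``except with negligible probability there exists $j$'', matching the reading already in force for \cref{lem:jump_first_decryptor}. Moreover, as flagged in the remark after \cref{thm:collusion-resistant_conversion}, when one actually instantiates the indistinguishability steps as reductions to $\FE$ security, the inefficient $\sfPI$/$\TI$ must be replaced by the efficient $\API$/$\ATI$ of \cref{lem:ATI_property}, with the thresholds shifted by the corresponding inverse-polynomial slacks; by the standard choice of parameters (\cite{TCC:Zhandry20,C:CLLZ21}) this only perturbs $\gamma^\prime$ by an inverse polynomial and does not affect the conclusion. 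Finally, applying $\sfPI_0, \dots, \sfPI_q$ to the $i$-th decryptor is legitimate simultaneously for every $i$ (as needed in the later pigeonhole step) precisely because these measurements act on pairwise disjoint registers.
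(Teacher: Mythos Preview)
Your proposal is correct and takes essentially the same approach as the paper: the paper simply states ``We similarly have the following lemma,'' meaning the proof of \cref{lem:jump_first_decryptor} is repeated verbatim with an arbitrary $i \in [q+1]$ in place of $i=1$, which is exactly what you do. Your additional remarks about the ``always'' being up to negligible probability and about replacing $\sfPI/\TI$ by $\API/\ATI$ are valid clarifications that the paper also acknowledges (the latter in the remark following \cref{thm:collusion-resistant_conversion}).
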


Let $\idx$ be the first index such that $b_{\idx,j_{\idx} -1} - b_{\idx,j_\idx}\ge \gamma^\prime/q$. 
\begin{lemma}\label{lem:pigeonhole_two_dec}
There exist $\idx_1 \ne \idx_2$ such that $\idx_1,\idx_2 \in [q+1]$ and $j_{\idx_1} = j_{\idx_2}$ with probability $1$.
\end{lemma}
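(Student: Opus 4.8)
The plan is to invoke the pigeonhole principle directly. By \cref{lem:jump_for_all_decryptor}, for every decryptor index $i \in [q+1]$ there is guaranteed to exist some $j \in [q]$ with $b_{i,j-1} - b_{i,j} \ge \gamma^\prime/q$, and we have already fixed $j_{\idx}$ (for a generic decryptor index $\idx \in [q+1]$) to be the \emph{first} such index. First I would observe that this makes $\idx \mapsto j_{\idx}$ a well-defined map from the index set $[q+1]$ into $[q]$: it is defined for every $i$ because the jump is guaranteed to exist, and it is uniquely determined once the outcomes $b_{i,0}, \dots, b_{i,q}$ of the sequential projective implementations $\sfPI_0, \dots, \sfPI_q$ applied to the $i$-th register of $\delta^\ast$ have been recorded. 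Since the domain has $q+1$ elements and the codomain has only $q$, there must exist $\idx_1 \ne \idx_2$ in $[q+1]$ with $j_{\idx_1} = j_{\idx_2}$, which is exactly the claim.

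For the probability-$1$ part, I would stress that \cref{lem:jump_for_all_decryptor} asserts the existence of the jump \emph{with certainty} (the statement reads ``there always exists''), so the map $\idx \mapsto j_{\idx}$ is well-defined on every run of the experiment conditioned on the event defining $\delta^\ast$, not merely with overwhelming probability. Consequently the collision produced by the pigeonhole argument occurs on every such run, i.e. with probability $1$. Concretely, I would phrase the argument pointwise: fix the classical record of all measurement outcomes $(b_{i,j})_{i \in [q+1], j \in \{0,\dots,q\}}$ obtained from measuring the (possibly entangled) joint state $\delta^\ast$ register by register; on this record the indices $j_{\idx}$ are deterministic elements of $[q]$, and the deterministic pigeonhole argument applies to produce the desired $\idx_1 \ne \idx_2$.

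I do not expect a genuine obstacle here; the substantive analytic work has already been carried out in \cref{lem:jump_first_decryptor,lem:jump_for_all_decryptor}, and this claim is just the combinatorial bookkeeping that sets up the final contradiction—two distinct decryptors that both ``jump'' at the same index $j$, which will then be leveraged (at the single-key public key $\one.\pk_j$) to break the single-key strong anti-piracy security of $\One$. The only minor point worth care is the measurability/well-definedness of the random variables $j_{\idx}$, which is handled by the pointwise phrasing above.
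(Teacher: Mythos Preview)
Your proposal is correct and matches the paper's approach exactly: the paper's entire proof is the single sentence ``The pigeonhole principle immediately provides this lemma.'' You have simply spelled out the implicit details (well-definedness of $\idx \mapsto j_{\idx}$ via \cref{lem:jump_for_all_decryptor}, and the pointwise/probability-$1$ aspect), which is fine.
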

\begin{proof}
The pigeonhole principle immediately provides this lemma.
\end{proof}

We describe a part of our reduction algorithm named $\cR_0$ below.
\protocol
{Reduction algorithm $\cR_0$}
{The description of reduction $\cR_0$}
{fig:base_reduction}
{
\begin{description}
\setlength{\parskip}{0.3mm} % between paragraph
\setlength{\itemsep}{0.3mm} % between items
\item[Input:] $q+1$ quantum decryptors $\delta^\ast$.
\end{description}
\begin{enumerate}
	\item Sample $1\le \idx_1 \le \idx_2 \le q+1$ and $j \in [q]$.
	\item Apply $\sfPI_{-1}$ to $\sfPI_{j-1}$ to $\delta^\ast[\idx_1]$. Let $b_{\idx_1,j-1}$ be the last outcome.
	\item Apply $\sfPI_{-1}$ to $\sfPI_{j-1}$ to $\delta^\ast[\idx_2]$. Let $b_{\idx_2,j-1}$ be the last outcome.
	\item Output $(\idx_1,\idx_2,j,b_{\idx_1,j-1},b_{\idx_2,j-1})$ and both the $\idx_1$-th and $\idx_2$-th quantum decryptors, denoted by $\delta^{\ast\ast}[\idx_1,\idx_2]$.
\end{enumerate}
}

\begin{lemma}\label{lem:two_good_dec}
$\cR_0$ in~\cref{fig:base_reduction} outputs $(\idx_1,\idx_2,j,b_{\idx_1,j-1},b_{\idx_2,j-1})$ and $\delta^{\ast\ast}[\idx_1,\idx_2]$ satisfies the following with probability at least $1/(2q^3)$.
\begin{enumerate}
	\item Applying $\sfPI_{j-1}^{\tensor 2}$ jointly to $\delta^\ast [\idx_1,\idx_2]$ yields $b_{\idx_1,j-1},b_{\idx_2,j-1}$ with probability $1$.
	\item Applying $\sfPI_{j}^{\tensor 2}$ jointly to $\delta^\ast [\idx_1,\idx_2]$ yields $b_{\idx_1,j},b_{\idx_2,j}$ such that
	\[
		\Pr[b_{\idx_1,j-1} - b_{\idx_1,j} \ge \frac{\gamma^\prime}{q} \land b_{\idx_2,j-1} - b_{\idx_2,j} \ge \frac{\gamma^\prime}{q}] \ge \frac{1}{2q^3}
	\]
\end{enumerate}
\end{lemma}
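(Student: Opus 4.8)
The plan is to reduce the statement to two ingredients that are already in hand: the pigeonhole structure of \cref{lem:jump_for_all_decryptor,lem:pigeonhole_two_dec}, and the fact that the projective implementations attached to distinct decryptor registers act on disjoint subsystems and hence commute. First I would recall the ``reference'' sequential process that, starting from $\delta^\ast$, applies to each register $k\in[q+1]$ the sequence $\sfPI_{-1},\sfPI_0,\ldots,\sfPI_q$, yielding the outcomes $\{b_{k,i}\}$, with $j_k\in[q]$ the first index such that $b_{k,j_k-1}-b_{k,j_k}\ge\gamma^\prime/q$ (which exists by \cref{lem:jump_for_all_decryptor}). By \cref{lem:pigeonhole_two_dec} there is, with probability $1$, a pair $\iota_1<\iota_2$ in $[q+1]$ with $\jmath:=j_{\iota_1}=j_{\iota_2}$; since $\jmath=j_\iota$ forces a jump at $\jmath$ for decryptor $\iota$, a union bound over all triples $(\iota_1\le\iota_2,\ \jmath\in[q])$ gives $\sum_{\mathrm{triples}}\Pr[\,b_{\iota_1,\jmath-1}-b_{\iota_1,\jmath}\ge\gamma^\prime/q \ \wedge\ b_{\iota_2,\jmath-1}-b_{\iota_2,\jmath}\ge\gamma^\prime/q\,]\ge 1$ in the reference process.

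The second step is to show that $\cR_0$, conditioned on guessing a fixed triple $(\idx_1,\idx_2,j)$, reproduces the relevant outcome statistics of the reference process on registers $\idx_1,\idx_2$; denote by $\delta^{\ast\ast}[\idx_1,\idx_2]$ the state it outputs in that case. Condition~1 is immediate and holds in every run: after Steps~2--3, registers $\idx_1$ and $\idx_2$ lie in the $b_{\idx_1,j-1}$- and $b_{\idx_2,j-1}$-eigenspaces of $\sfPI_{j-1}$, and since $\sfPI_{j-1}^{\otimes 2}$ factorizes across the two registers, re-applying it to $\delta^{\ast\ast}[\idx_1,\idx_2]$ returns $(b_{\idx_1,j-1},b_{\idx_2,j-1})$ with certainty. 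For Condition~2 I would argue that (i) all measurements on registers other than $\idx_1,\idx_2$ commute with those on $\idx_1,\idx_2$ and can be pushed to the end of the process, where, being measure-and-forget channels on disjoint registers, they no longer affect the already-determined outcomes $b_{\idx_1,-1},\ldots,b_{\idx_1,j}$ and $b_{\idx_2,-1},\ldots,b_{\idx_2,j}$; and (ii) the reference measurements $\sfPI_{j+1},\ldots,\sfPI_q$ on $\idx_1,\idx_2$ act strictly after the outcomes of interest and are likewise irrelevant. Commuting measurements past one another accordingly, the joint law of $(b_{\idx_1,-1},\ldots,b_{\idx_1,j},b_{\idx_2,-1},\ldots,b_{\idx_2,j})$ produced by running $\cR_0$ (which stops at $\sfPI_{j-1}$ on the two registers) and then applying $\sfPI_j^{\otimes 2}$ to $\delta^{\ast\ast}[\idx_1,\idx_2]$ coincides with that of the reference process; in particular the event ``jump at $j$ for both $\idx_1$ and $\idx_2$'' has, conditioned on the guess, the same probability under $\cR_0$ as in the reference process.

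Finally I would average over $\cR_0$'s guess. Since $(\idx_1,\idx_2,j)$ is chosen uniformly and independently of $\delta^\ast$ among the $N\le 2q^3$ admissible triples, combining the two previous paragraphs gives
\[
\Pr_{\cR_0}[\,b_{\idx_1,j-1}-b_{\idx_1,j}\ge\gamma^\prime/q \ \wedge\ b_{\idx_2,j-1}-b_{\idx_2,j}\ge\gamma^\prime/q\,] \;=\; \frac{1}{N}\sum_{\mathrm{triples}}\Pr_{\mathrm{ref}}[\cdots] \;\ge\; \frac1N \;\ge\; \frac{1}{2q^3},
\]
which, together with Condition~1 (which always holds), is exactly the claim. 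The step I expect to be the main obstacle is making the commutation/no-signalling bookkeeping of the second step fully rigorous: one must justify carefully that truncating each guessed register's measurement sequence at $\sfPI_{j-1}$, never touching the other $q-1$ registers at all, and then resuming with the joint measurement $\sfPI_j^{\otimes 2}$, leaves the joint outcome distribution on those two registers identical to that of the untruncated, per-register sequential process. The remaining ingredients --- the pigeonhole count and the averaging --- are routine.
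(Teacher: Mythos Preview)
Your pigeonhole and commutation bookkeeping are exactly what the paper does, and they correctly establish the bound for the combined process $\sfR\sfM$ (that is, $\cR_0$ followed by $\sfPI_j^{\otimes 2}$): $\Pr_{\sfR\sfM}[\text{both registers jump by }\gamma'/q]\ge 1/N$. The gap is in your last sentence, where you assert this ``is exactly the claim.'' It is not. The lemma is a \emph{nested} probability statement: with probability at least $1/(2q^3)$ over the randomness of $\cR_0$ alone, the output $\delta^{\ast\ast}[\idx_1,\idx_2]$ has the \emph{pointwise} property (Condition~2) that a subsequent $\sfPI_j^{\otimes 2}$ yields a double jump with probability at least $1/(2q^3)$. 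Writing $p=p(\delta^{\ast\ast})$ for that conditional double-jump probability, you have shown only the averaged bound $\E_{\cR_0}[p]\ge 1/N$; the lemma demands $\Pr_{\cR_0}[\,p\ge 1/(2q^3)\,]\ge 1/(2q^3)$. The paper's proof confirms this reading: it introduces $\sfR\sfM$ precisely as $\cR_0$ followed by $\sfPI_j^{\otimes 2}$, and then argues by contradiction about whether ``condition~(2)'' is satisfied by the output. The nested form is also what is consumed downstream, when instantiating $\alpha_1=\alpha_2=1/(2q^3)$ in \cref{lem:construct_distinguisher} and when reasoning about $\event{SD}_1$ and $\event{G}_2$.

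The missing step is a short Markov-type contradiction, and for it to deliver the stated constants your triple count must be sharpened. With $\idx_1<\idx_2$ the count is $\binom{q+1}{2}\cdot q\le q^3$, hence $\E_{\cR_0}[p]\ge 1/q^3$. If the lemma failed, i.e.\ $\Pr[\,p\ge 1/(2q^3)\,]<1/(2q^3)$, one would have
\[
\E[p]\;<\;\frac{1}{2q^3}\cdot 1\;+\;1\cdot\frac{1}{2q^3}\;=\;\frac{1}{q^3},
\]
a contradiction. Note this needs the factor-of-two slack between $1/q^3$ and $1/(2q^3)$; with only $\E[p]\ge 1/(2q^3)$ (your looser $N\le 2q^3$) the step does not go through with the stated thresholds. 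So the fix is twofold: tighten $N\le q^3$, then append this Markov argument to pass from the averaged bound to the nested one.
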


We will construct a distinguisher for $\sfD_{j-1}$ and $\sfD_{j}$ by using $\idx_1$-th and $\idx_2$-th decyrptors $\delta^{\ast\ast}[\idx_1,\idx_2]$.

\begin{proof}[Proof of~\cref{lem:two_good_dec}]
By \Cref{lem:jump_for_all_decryptor}, there exists a pair of indices $\idx_1 < \idx_2$ and an integer $j \in [q]$ such that $b_{\idx_1,j-1}-b_{\idx_1,j} \ge \frac{\gamma^\prime}{q}$ and $b_{\idx_2,j-1}-b_{\idx_2,j} \ge \frac{\gamma^\prime}{q}$.
To exploit this, we randomly guess $\idx_1$ and $\idx_2$ and $j$ after applying the sequence of projective implementations $\sfPI_0,\cdots,\sfPI_q$ to each quantum decryptor. Then, we have the following bound:
\begin{align}
\Pr[b_{\idx_1,j-1}-b_{\idx_1,j} \ge \frac{\gamma^\prime}{q} \land b_{\idx_2,j-1}-b_{\idx_2,j} \ge \frac{\gamma^\prime}{q}] \ge \frac{1}{\frac{q(q-1)}{2} \cdot q} \ge \frac{1}{q^3}.\label{eq:first_reduction}
\end{align}
Since the repeated projective implementations on disjoint quantum decryptors, they commute.
Therefore, we achieve the same probability bound even if we restrict the measurements to only the $\idx_1$-th and $\idx_2$-th decryptors, omitting the other $(q-1)$ ones. Hence, we obtain:
\begin{align}
\Pr_{\sfR\sfM(\delta^\ast)}[b_{\idx_1,j-1}-b_{\idx_1,j} \ge \frac{\gamma^\prime}{q} \land b_{\idx_2,j-1}-b_{\idx_2,j} \ge \frac{\gamma^\prime}{q}] \ge \frac{1}{q^3}.	\label{eq:random_measure}
\end{align}
The difference between the two inequalities lies in how the values $b_{\idx_1,j-1},b_{\idx_1,j},b_{\idx_2,j-1},b_{\idx_2,j}$ are sampled.
\protocol
{Algorithm $\sfR\sfM$}
{The description of $\sfR\sfM$}
{fig:random_measure}
{
\begin{description}
\setlength{\parskip}{0.3mm} % between paragraph
\setlength{\itemsep}{0.3mm} % between items
\item[Input:] $q+1$ quantum decryptors $\delta^\ast$.
\end{description}
\begin{enumerate}
	\item Sample $1\le \idx_1 \le \idx_2 \le q+1$ and $j \in [q]$.
	\item Apply repeated projective measurement $\sfPI_{-1}$ to $\sfPI_{j-1}$ to $\delta^\ast[\idx_1]$. Let $b_{\idx_1,j-1}$ and $b_{\idx_1,j}$ be the last two outcomes.
	\item Apply repeated projective measurement $\sfPI_{-1}$ to $\sfPI_{j-1}$ to $\delta^\ast[\idx_2]$. Let $b_{\idx_2,j-1}$ and $b_{\idx_2,j}$ be the last two outcomes.
	\item Output $(\idx_1,\idx_2,j,b_{\idx_1,j-1},b_{\idx_2,j-1})$.
\end{enumerate}
}
Algorithm $\sfR\sfM$ first runs $\cR_0$ to get $(\idx_1,\idx_2,j,b_{\idx_1,j-1},b_{\idx_2,j-1})$ and $\delta^{\ast\ast}[\idx_1,\idx_2]$, then applies $\sfPI_j$ on $\idx_1$-th and $\idx_2$-th registers.

If \cref{lem:two_good_dec} does not hold, $(\idx_1,\idx_2,j,b_{\idx_1,j-1},b_{\idx_2,j-1})$ and $\delta^{\ast\ast}[\idx_1,\idx_2]$ satisfy condition (2) in \cref{lem:two_good_dec} with probability less than $1/2q^3$.
Hence, 
\[
	\Pr_{\sfR\sfM(\delta^\ast)}[b_{\idx_1,j-1}-b_{\idx_1,j} \ge \frac{\gamma^\prime}{q} \land b_{\idx_2,j-1}-b_{\idx_2,j} \ge \frac{\gamma^\prime}{q}]\label{eq:random_measure_probability}
\]
is strictly smaller than $1/2q^3 + 1/2q^3$, which is a contradiction. (The test projector commutes with rest of $\sfR\sfM$.)
\end{proof}

\begin{lemma}\label{lem:construct_distinguisher}
Let $q=\poly(\secp)$ be any polynomial.
Let $\pk \seteq \sffe.\pk$, where $(\sffe.\pk,\sffe.\msk)\lrun \FE.\Setup(1^\secp)$.
Let $\qsk_i = (\one.\qsk_i,\sffe.\fsk_i)$ be the $i$-th decryption key generated by $\qKG(\sk)$, where $\sk = \sffe.\msk$.
For any $j\in [q]$, let $\sfPI_{j-1}$ and $\sfPI_j$ be defined at the beginning of the proof.
If there exist inverse polynomials $\alpha_1(\cdot),\alpha_2(\cdot)$ and an quantum algorithm $\qB$ that takes $(\qsk_1,\ldots,\qsk_q,\pk)$ outputs $\delta$ such that with probability at least $\alpha_1$, $\delta$ satisfies the following:
\begin{enumerate}
	\item There exist $b_{j-1}\in (0,1]$, applying $\sfPI_{j-1}$ on $\delta$ always outputs $b_{j-1}$.
	\item Let the outcome of applying $\sfPI_j$ on $\delta$ be $b_j$. We have $\Pr[b_{j-1}-b_j > \gamma^\prime/q]>\alpha_2$.
\end{enumerate}
Then, there exist another inverse polynomial $\beta(\cdot)$ and an efficient quantum algorithm $\qA$ that takes $\delta$ and distinguishes $\sfD_j$ from $\sfD_{j-1}$ with probability greater than $\beta$.
\end{lemma}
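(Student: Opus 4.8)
The plan is to use the standard \emph{project-then-test} strategy, relying on the fact that $\sfPI_{j-1}$ is a genuine (idempotent, non-disturbing) projective measurement on the eigenspace in which $\delta$ lives by hypothesis~1, so that one can read off the value $b_{j-1}$ ``for free'' and then probe the quantum decryptor carried by $\delta$ with the challenge distribution. The point is that running the decryptor-test projective implementation against $\sfD_{j-1}$ reproduces $b_{j-1}$, whereas against $\sfD_j$ the outcome drops by more than $\gamma^\prime/q$ with probability $>\alpha_2$, and this gap is what $\qA$ detects.

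Concretely, I would define $\qA$ as follows. Given $\delta$ and (in the distinguishing experiment) sampling access to the unknown challenge distribution $\sfD^\ast\in\setbk{\sfD_{j-1},\sfD_j}$: first apply $\sfPI_{j-1}$ to $\delta$, recording the outcome $b_{j-1}$ and the post-measurement state; then apply to that post-measurement state the projective implementation $\sfPI^\ast$ of the CPA-style decryptor test taken with respect to $\sfD^\ast$, recording the outcome $b^\ast$; finally output ``$\sfD^\ast=\sfD_j$'' if $b_{j-1}-b^\ast>\gamma^\prime/q$ and ``$\sfD^\ast=\sfD_{j-1}$'' otherwise. The order of the two measurements matters: since (with probability $\ge\alpha_1$) $\delta$ is supported on the $b_{j-1}$-eigenspace of $\sfPI_{j-1}$, the first measurement leaves $\delta$ undisturbed, so the guarantee of hypothesis~2 is still available for the second measurement; applying $\sfPI^\ast$ first could instead destroy the eigenstructure.

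Next I would analyze the two cases. If $\sfD^\ast=\sfD_{j-1}$, then $\sfPI^\ast=\sfPI_{j-1}$, and applying the same projective measurement a second time reproduces the first outcome, so $b^\ast=b_{j-1}$ with probability $1$; hence $\qA$ (correctly) never outputs ``$\sfD_j$'', and this holds regardless of whether the particular $\delta$ produced by $\qB$ is good. If $\sfD^\ast=\sfD_j$, then conditioned on the probability-$\ge\alpha_1$ event that $\delta$ satisfies hypotheses~1 and~2, the first step returns $b_{j-1}$ without disturbing $\delta$, and the second step is exactly an application of $\sfPI_j$ to $\delta$, which by hypothesis~2 produces $b^\ast=b_j$ with $b_{j-1}-b^\ast>\gamma^\prime/q$ with probability $>\alpha_2$; so $\qA$ outputs ``$\sfD_j$'' with probability $>\alpha_1\alpha_2$. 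Therefore $\Pr[\qA\text{ guesses correctly}]\ge\tfrac12(1+\alpha_1\alpha_2)$, i.e.\ $\qA$ distinguishes $\sfD_{j-1}$ from $\sfD_j$ with advantage $\beta\ge\alpha_1\alpha_2/2$, an inverse polynomial.

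Finally, since $\sfPI_{j-1},\sfPI_j,\sfPI^\ast$ are inefficient, I would replace each by its efficient approximate counterpart $\API$ (simulating the random objects in the corresponding POVMs by puncturable PRFs), exactly as in the remark immediately following \cref{thm:collusion-resistant_conversion}. Choosing the approximation parameter $\epsilon$ to be a sufficiently small inverse polynomial (say $\epsilon=\gamma^\prime/(10q)$) and the error parameter negligible, the exact equality ``$b^\ast=b_{j-1}$'' in the $\sfD_{j-1}$ case degrades to $|b^\ast-b_{j-1}|=O(\epsilon)$ except with negligible probability, by the repeat-measurement property of the approximate implementations (\cref{lem:ATI_property}), and the ``jump'' in the $\sfD_j$ case is preserved up to an additive $O(\epsilon)$ by \cref{cor:ind_distribution_ATI} and \cref{thm:ind_distribution_TI}; moving the decision threshold from $\gamma^\prime/q$ to $\gamma^\prime/q-c\epsilon$ for a suitable constant $c$ keeps the two cases cleanly separated and shrinks the advantage only negligibly. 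I expect the only real obstacle to be precisely this parameter bookkeeping — ensuring that all the slack introduced by passing to $\API$ and by the approximate-projection lemmas stays comfortably below $\gamma^\prime/q$, so that the jump detector has no false positives on $\sfD_{j-1}$ and still fires with probability $\ge\alpha_1\alpha_2$ on $\sfD_j$.
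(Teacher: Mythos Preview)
Your proposal is correct and is essentially an explicit unfolding of what the paper does: the paper's proof is a one-liner that simply invokes \cref{thm:ind_distribution_TI} (contrapositively), observing that a non-negligible gap between the $\sfPI_{j-1}$ and $\sfPI_j$ outcomes on $\delta$ forces $\sfD_{j-1}$ and $\sfD_j$ to be computationally distinguishable. Your project-then-test distinguisher is precisely the construction that underlies that theorem, so the two arguments coincide; your version has the minor expository advantage of making the distinguisher and its advantage $\beta\ge\alpha_1\alpha_2/2$ explicit, while the paper's version is terser but leaves the reader to extract the distinguisher from the cited theorem.
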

\begin{proof}[Proof of~\cref{lem:construct_distinguisher}]
Recall that $\sfPI_j \seteq \projimp(\cP_j)$ and $\cP_j$ is a POVM based on $\sfD_j$. Hence, we immediately obtain the lemma from~\cref{thm:ind_distribution_TI}.
\end{proof}
If we set $\alpha_1 = \alpha_2 \seteq 1/2q^3$, and let $\cB$ be the $\cR_0$ in~\cref{fig:base_reduction} and $\rho \seteq \delta^{\ast\ast}[\idx_1]$, we have an algorithm that takes $\delta^{\ast\ast}[\idx_1]$ and distinguishes $\sfD_j$ from $\sfD_{j-1}$.

We define the following sub-hybrid distributions:
\begin{description}
	\item[$\sfD_{i-1}^{1}$:] $ $
	\begin{itemize}
	\item Sample $\coin\chosen\zo$, $\prfkey \lrun \prfgen(1^\secp)$, and  $\one.\ct^\ast \lrun \One.\Enc(\one.\pk_i,\msg_\coin;\prfF_{\prfkey}(\one.\pk_i))$.
	\item Let the distribution be $\sffe.\ct \lrun \FE.\Enc(\sffe.\pk,\msg_\coin \concat \prfkey\concat 2\concat \one.\pk_{i} \concat \one.\ct^\ast)$.
	\end{itemize}
	\item[$\sfD_{i-1}^2$:] $ $
	\begin{itemize}
	\item Sample $\coin\chosen\zo$, $\prfkey \lrun \prfgen(1^\secp)$, and  $\one.\ct^\ast \lrun \One.\Enc(\one.\pk_i,\msg_\coin;\prfF_{\prfkey}(\one.\pk_i))$.
	\item Let the distribution be $\sffe.\ct \lrun \FE.\Enc(\sffe.\pk,\msg_\coin \concat \prfkey_{\ne \one.\pk_i}\concat 2 \concat \one.\pk_{i}\concat \one.\ct^\ast)$, where  $\prfkey_{\ne \one.\pk_i} \lrun \Puncture(\prfkey,\setbk{\one.\pk_i})$.
	\end{itemize} 
	\item[$\sfD_{i-1}^3$:] $ $
	\begin{itemize}
	\item Sample $\coin\chosen\zo$, $r \lrun \cR_{\One}$, and  $\one.\ct^\ast \lrun \One.\Enc(\one.\pk_i,\msg_\coin;r)$.
	\item  Let the distribution be $\sffe.\ct \lrun \FE.\Enc(\sffe.\pk,\msg_\coin \concat \prfkey_{\ne \one.\pk_i}\concat 2 \concat \one.\pk_{i}\concat \one.\ct^\ast)$, where  $\prfkey_{\ne \one.\pk_i} \lrun \Puncture(\prfkey,\setbk{\one.\pk_i})$.
	\end{itemize} 
 	\item[$\sfD_{i}^{-1}$:] $ $
 	\begin{itemize}
 	 	\item Sample $\prfkey \lrun \prfgen(1^\secp)$ and $\one.\ct^\ast \lrun \One.\Enc(\one.\pk_i,0^{\ell_\msg};\prfF_{\prfkey}(\one.\pk_i))$.
 	 	\item  Let the distribution be $\sffe.\ct \lrun \FE.\Enc(\sffe.\pk,\msg_\coin \concat \prfkey\concat 2\concat \one.\pk_{i} \concat \one.\ct^\ast)$.
 	 \end{itemize}
	\item[$\sfD_{i}^{-2}$:] $ $
	\begin{itemize}
		\item Sample $\prfkey \lrun \prfgen(1^\secp)$ and $\one.\ct^\ast \lrun \One.\Enc(\one.\pk_i,0^{\ell_\msg};\prfF_{\prfkey}(\one.\pk_i))$.
		\item Let the distribution be $\sffe.\ct \lrun \FE.\Enc(\sffe.\pk,\msg_\coin \concat \prfkey_{\ne \one.\pk_i}\concat 2 \concat \one.\pk_{i}\concat \one.\ct^\ast)$, where $\prfkey_{\ne \one.\pk_i} \lrun \Puncture(\prfkey,\setbk{\one.\pk_i})$.
	\end{itemize}
	\item[$\sfD_{i}^{-3}$:] $ $
	\begin{itemize}
	 	\item Sample $r\chosen \cR_{\One}$ and $\one.\ct^\ast \lrun \One.\Enc(\one.\pk_i,0^{\ell_\msg};r)$.
	 	\item  Let the distribution be $\sffe.\ct \lrun \FE.\Enc(\sffe.\pk,\msg_\coin \concat \prfkey_{\ne \one.\pk_i}\concat 2 \concat \one.\pk_{i}\concat \one.\ct^\ast)$, where $\prfkey_{\ne \one.\pk_i} \lrun \Puncture(\prfkey,\setbk{\one.\pk_i})$.
	 \end{itemize}
\end{description}

\begin{lemma}\label{lem:sub_hyb_ind}
If there exist a distinguisher for $\sfD_{j-1}$ and $\sfD_{j}$, there also exist a distinguisher for $\sfD_{j-1}^3$ and $\sfD_j^{-3}$.
\end{lemma}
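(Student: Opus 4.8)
The plan is to \emph{sandwich} the two mode-$1$ distributions $\sfD_{j-1}$ and $\sfD_j$ between the mode-$2$ distributions $\sfD_{j-1}^3$ and $\sfD_j^{-3}$ by exhibiting two short chains of computationally indistinguishable distributions (throughout, set $i=j$ in the definitions of the sub-hybrids)
\[
\sfD_{j-1} \cind \sfD_{j-1}^1 \cind \sfD_{j-1}^2 \cind \sfD_{j-1}^3
\quad\text{and}\quad
\sfD_j \cind \sfD_j^{-1} \cind \sfD_j^{-2} \cind \sfD_j^{-3},
\]
and then transferring any distinguisher by the triangle inequality: if $\qA$ (together with a quantum decryptor state $\delta$) tells $\sfD_{j-1}$ from $\sfD_j$ with non-negligible advantage, the same $(\qA,\delta)$ tells $\sfD_{j-1}^3$ from $\sfD_j^{-3}$ with advantage smaller by at most a negligible amount. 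Every one of the six single-step indistinguishabilities must hold even in the presence of the state $\delta$; this is harmless because each reduction below can prepare $\delta$ itself, as $\delta$ is produced by running the QPT algorithm $\qB$ of \cref{lem:construct_distinguisher} followed by the efficient $\API/\ATI$ procedures, and $\qB$'s inputs are $\qsk_i=(\one.\qsk_i,\sffe.\fsk_i)$ where $\one.\pk_i,\one.\qsk_i$ are freshly sampled and $\sffe.\fsk_i=\FE.\KG(\sffe.\msk,\RE[\one.\pk_i])$ is obtainable either from $\sffe.\msk$ or from the adaptive (polynomially-many-key) $\FE$ challenger.

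First I would handle the $\sfD_{j-1}$-side chain. For $\sfD_{j-1}\cind\sfD_{j-1}^1$, the $\FE$ plaintext switches from $(\msg_\coin \concat \prfkey \concat 1 \concat \one.\pk_{j-1} \concat 0^{\ell_\ct})$ to $(\msg_\coin \concat \prfkey \concat 2 \concat \one.\pk_{j} \concat \one.\ct^\ast)$ with $\one.\ct^\ast = \One.\Enc(\one.\pk_j,\msg_\coin;\prfF_\prfkey(\one.\pk_j))$; using the total order $\tau_{\one.\pk_1}<\cdots<\tau_{\one.\pk_q}$ and the branch structure of $\RE$ (\cref{fig:func_reenc}), one verifies that for every $i\in[q]$ the circuit $\RE[\one.\pk_i]$ outputs the same ciphertext on both plaintexts — for $i<j$ both outputs equal $\One.\Enc(\one.\pk_i,0^{\ell_\msg};\prfF_\prfkey(\one.\pk_i))$, for $i>j$ both equal $\One.\Enc(\one.\pk_i,\msg_\coin;\prfF_\prfkey(\one.\pk_i))$, and for $i=j$ the mode-$1$ side outputs exactly the ciphertext hardwired as $\one.\ct^\ast$ on the mode-$2$ side — so adaptive security of $\FE$ applies. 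For $\sfD_{j-1}^1\cind\sfD_{j-1}^2$, we replace $\prfkey$ in the plaintext by $\prfkey_{\ne \one.\pk_j}\lrun\Puncture(\prfkey,\setbk{\one.\pk_j})$; by puncturable correctness $\prfF_\prfkey(\one.\pk_i)=\prfF_{\prfkey_{\ne \one.\pk_j}}(\one.\pk_i)$ for all $i\ne j$, while for $i=j$ the circuit $\RE[\one.\pk_j]$ in mode $2$ never evaluates the PRF (it outputs $\one.\ct^\ast$ directly), so again the functional outputs are identical and adaptive $\FE$ security gives the step. Finally, $\sfD_{j-1}^2\cind\sfD_{j-1}^3$ replaces $\prfF_\prfkey(\one.\pk_j)$, the randomness used to form $\one.\ct^\ast$, by a uniform $r$; since the plaintext now carries only the punctured key $\prfkey_{\ne \one.\pk_j}$, this follows from pseudorandomness of $\PRF$ at the punctured point, plugging the challenge value $y$ in as $\one.\ct^\ast=\One.\Enc(\one.\pk_j,\msg_\coin;y)$.

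The chain $\sfD_j\cind\sfD_j^{-1}\cind\sfD_j^{-2}\cind\sfD_j^{-3}$ is entirely symmetric: the only difference is that the hardwired ciphertext $\one.\ct^\ast$ now encrypts $0^{\ell_\msg}$ instead of $\msg_\coin$, and the same three arguments (adaptive $\FE$ security for the mode $1\to2$ switch, adaptive $\FE$ security plus puncturable correctness for the puncturing, pseudorandomness at the punctured point for derandomizing $\one.\ct^\ast$) go through verbatim. Combining both chains with the triangle inequality yields the lemma. The main obstacle is the routine-but-delicate bookkeeping in the two mode $1\to2$ transitions: one must check that $\RE[\one.\pk_i]$ agrees on the two challenge plaintexts for \emph{every} $i\in[q]$, since this equality is exactly the admissibility condition needed to invoke adaptive $\FE$ security (\cref{def:ad_ind_PKFE}); this is precisely where the exact choice of the hardwired ciphertext $\one.\ct^\ast$ and the total order on the tags $\tau_{\one.\pk_i}$ are used, and it is also where, in reality, one would replace $\sfPI$/$\TI$ by their efficient counterparts $\API$/$\ATI$ so that the reductions are genuinely QPT.
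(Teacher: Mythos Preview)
Your proposal is correct and follows essentially the same approach as the paper's proof: both establish the two chains $\sfD_{j-1}\cind\sfD_{j-1}^1\cind\sfD_{j-1}^2\cind\sfD_{j-1}^3$ and $\sfD_j\cind\sfD_j^{-1}\cind\sfD_j^{-2}\cind\sfD_j^{-3}$ via adaptive $\FE$ security, $\FE$ security plus punctured correctness, and pseudorandomness at the punctured point, respectively, and then apply the triangle inequality. Your version is simply more explicit than the paper's, spelling out the case analysis on $i<j$, $i=j$, $i>j$ that verifies the admissibility condition $\RE[\one.\pk_i](x_0)=\RE[\one.\pk_i](x_1)$ for all $i\in[q]$, and noting the need to pass to $\API/\ATI$ for efficient reductions.
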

\begin{proof}[Proof of~\cref{lem:sub_hyb_ind}]
By FE security we have $\sfD_{i-1} \cind \sfD_{i-1}^1$.
By FE security and punctured correctness, we have $\sfD_{i-1}^1 \cind \sfD_{i-1}^2$. By puncturable PRF security, we have $\sfD_{i-1}^2 \cind \sfD_{i-1}^3$.

By FE security we have $\sfD_{i} \cind \sfD_{i}^{-1}$.
By FE security and punctured correctness, we have $\sfD_{i}^{-1} \cind \sfD_{i}^{-2}$. By puncturable PRF security, we have $\sfD_{i}^{-2} \cind \sfD_{i}^{-3}$.

Hence, we can construct a distinguisher for $\sfD_{j-1}^3$ and $\sfD_{j}^{-3}$ from a distinguisher for $\sfD_{j-1}$ and $\sfD_{j}$.
\end{proof}
The rest of our reduction algorithm $\cR_x$ takes $\delta^{\ast\ast}[\idx_x]$ for $x\in\setbk{1,2}$ and constructs a distinguisher for $\sfD_{j-1}^3$ and $\sfD_{j}^{-3}$.
From~\cref{lem:two_good_dec,lem:construct_distinguisher,lem:sub_hyb_ind}, $\cR_1$ constructs a successful distinguisher $\qD_{\idx_1}$ for $\sfD_{j-1}^3$ and $\sfD_j^{-3}$ from $\delta^{\ast\ast}[\idx_1]$.

We show that after our reduction algorithm obtains $\cD_{\idx_1}$, the other decryptor becomes $\delta^{\prime}[\idx_2]$, which still satisfies the conditions in~\cref{lem:construct_distinguisher}.
To derive a contradiction, suppose instead that $\delta^{\prime}[\idx_2]$ does not satisfy the conditions in~\cref{lem:construct_distinguisher}.

Note that applying $\sfPI_{j-1}$ to $\delta^{\prime}[\idx_2]$ always yields $b_{\idx_2,j-1}$ since constructing the distinguisher from $\delta^{\ast\ast}[\idx_1]$ register does not affect the support of $\delta^{\prime}[\idx_2]$. Hence, condition (2) in~\cref{lem:construct_distinguisher} cannot be satisfied under this assumption.
We define two events:
\begin{description}
	\item[$\event{SD}_1$:] We have a successful distinguisher from $\delta^{\ast\ast}[\idx_1]$.
	\item[$\event{G}_2$:] We apply $\sfPI_j$ on $\delta^{\ast\ast}[\idx_2]$ to get $b_{\idx_2,j}$ and it holds that $b_{\idx_2,j}<b_{\idx_2,j-1} - \gamma^\prime/q$.
\end{description}
Our assumption implies that $\Pr[\event{SD}_1 \land \event{G}_2]$ is negligible.
However, this leads to a contradiction.

By the condition (2) of~\cref{lem:two_good_dec}, we have that $\Pr[\event{G}_2]$ is non-negligible.
Conditioned on $\event{G}_2$, $\idx_1$-th decryptor becomes $\delta^\prime[\idx_1]$.
Here, $\delta^\prime[\idx_1]$ must satisfy both conditions in~\cref{lem:construct_distinguisher}. Otherwise, condition (2) in~\cref{lem:two_good_dec} cannot be satisfied. That is, $\Pr[\event{G}_1 \mid \event{G}_2]$ is non-negligible. This implies $\Pr[\event{SD}_1 \mid \event{G}_2]$ is non-negligible.

Hence, $\Pr[\event{G}_2 \mid \event{SD}_1] $ must be non-negligible since $\Pr[\event{G}_2 \mid \event{SD}_1]= \frac{\Pr[\event{G}_2 \land \event{SD}_1]}{\Pr[\event{SD}_1]}=\frac{\Pr[ \event{SD}_1 \mid \event{G}_2] \Pr[\event{G}_2]}{\Pr[\event{SD}_1]}$. This contradicts with the assumption that $\Pr[\event{SD}_1 \land \event{G}_2]$ is negligible.
Thus, $\Pr[\event{SD}_1 \land \event{G}_2]$ is non-negligible. This implies $\Pr[\event{SD}_1 \land \event{SD}_2]$ is non-negligible.

Hence, $\cR_2$ can obtain a successful distinguisher $\qD_{\idx_2}$ for $\sfD_{j-1}^3$ and $\sfD_j^{-3}$ from $\delta^{\prime}[\idx_2]$ as well.

Now, we define $\sfD_{j}^{\pm 3}$ as:
\begin{description}
	\item[$\sfD_{j}^{\pm 3}$:] $ $
	\begin{itemize}
		\item Let $\msg^{\ast}_0 \seteq \msg_\coin$ and $\msg^{\ast}_1 \seteq 0^{\ell_\msg}$.
	 	\item Sample $b\chosen \zo$, $r\chosen \cR_{\One}$, and $\one.\ct^\ast \lrun \One.\Enc(\one.\pk_i,\msg^\ast_b;r)$.
	 	\item  Let the distribution be $\sffe.\ct \lrun \FE.\Enc(\sffe.\pk,\msg^\ast_b \concat \prfkey_{\ne \one.\pk_j}\concat 2 \concat \one.\pk_{j} \concat \one.\ct^\ast)$, where $\prfkey_{\ne \one.\pk_j} \lrun \Puncture(\prfkey,\setbk{\one.\pk_j})$.
	 \end{itemize}
\end{description}
In the case of $b=1$, $\sfD_j^{\pm 3}$ is the same as $\sfD_{j}^{-3}$. In the case of $b=0$, $\sfD_{j}^{\pm 3}$ is the same as $\sfD_{j-1}^{3}$ where $\coin$ is fixed in advance.
By averaging argument, if there exists a distinguisher $\qD_{\idx_x}$ for $\sfD_{j}^{-3}$ and $\sfD_{j-1}^{3}$ with an inverse polynomial advantage, there also exists a distinguisher $\qD_{\idx_x}^\prime$ for the case $b=0,1$ of $\sfD_{j}^{\pm 3}$ with an inverse polynomial advantage.
Let $1/2 + \gamma^\ast$ be the distinguishing advantage of $\qD^\prime_{\idx_x}$ for the case $b=0,1$ of $\sfD_{j}^{\pm 3}$.

Hence, if we apply $\TI_{1/2 + \gamma^\ast}(\cP_{\sfD_{j}^{\pm 3}})$ to $\qD_{\idx_x}$, then we obtain a $\gamma^\ast$-good decryptor with respect to $\one.\pk_j$ and $(\msg_\coin,0^{\ell_\msg})$.
The reduction runs $\cR_0$ on $\delta^\ast$, then $\cR_x$ constructs a distinguisher $\qD^\prime_{\idx_x}$ from $\delta^{\ast\ast}[\idx_x]$, applies $\TI_{1/2 + \gamma^\ast}(\cP_{\sfD_{j}^{\pm 3}})$ to $\qD^\prime_{\idx_x}$. Let the resulting $\gamma^\ast$-good decryptor be $\qD^{\ast}_x$. The reduction outputs a pair of messages $(\msg_\coin,0^{\ell_\msg})$ and two quantum decryptors $(\qD^{\ast}_1,\qD^{\ast}_2)$.

Hence, we can construct an adversary $\qB$ for $\One$ from the adversary $\qA$ for $\SDE$.
$\qB$ is given a public key $\one.\pk^\dagger$ and a quantum decryption key $\one.\qsk^\dagger$, generates $(\sffe.\pk,\sffe.\msk)\lrun \FE.\Setup(1^\secp)$ and $(\one.\pk_i,\one.\sk_i)\lrun \One.\Setup(1^\secp)$ for $i\in[q]\setminus \setbk{j}$.
Note that we sample random tags such that $\tau_{\one.\pk_1}<\cdots < \tau_{\one.\pk_q}$ by appropriately renaming tags.
$\qB$ can run $\qA$ since $\qB$ has $(\sffe.\pk,\sffe.\msk)$, $\setbk{(\one.\pk_i,\one.\sk_i)}_{i\in[q]\setminus \setbk{j}}$, $\one.\pk_j \seteq \one.\pk^\dagger$, and $\one.\qsk_j \seteq \one.\qsk^\dagger$. The rest of $\qB$ is the same as the reduction above except that sample $j \in [q]$ since $\qB$ had already sampled it. Hence, $\qB$ can output a pair of messages $(\msg_\coin,0^{\ell_\msg})$ and two quantum decryptors $(\qD^{\ast}_1,\qD^{\ast}_2)$ such that $\qD^\ast_x$ is $\gamma^\ast$-good decryptor with respect to $\one.\pk_j = \one.\pk^\dagger$ and $(\msg_\coin,0^{\ell_\msg})$. This completes the proof of strong anti-piracy security.

\begin{remark}\label{rem:res_cr_strong_search_anti_piracy}
We can prove the strong search anti-piracy security similarly to the proof of the strong anti-piracy security given above.
The main differences are~\cref{eq:top,eq:bot}.
Instead of them, we have
\begin{align}
	\Pr[\forall i \in [q+1], b_{i,-1} \ge 1/2^{\ell_\msg} + \gamma ] = 1, \label{eq:top_search}\\
	\Pr[\forall i \in [q+1], b_{i,q+1} = 1/2^{\ell_\msg} ] = 1. \label{eq:bot_search}
\end{align}
Hence, there always exists $j \in \setbk{1,\ldots,q}$ such that $b_{1,j-1} - b_{1,j} \ge \widetilde{\gamma} /q $ as~\cref{lem:jump_first_decryptor}.
The rest of the proof is mostly the same as before.
\end{remark}

% !TEX root = main.tex
\subsection{Collusion-Resistant Identical-Challenge Search Security}\label{sec:identical_SDE}
It is straightforward to show that collusion-resistant strong search anti-piracy security implies plain collusion-resistant search anti-piracy security in the independent-challenge setting, where the challenge ciphertexts given to each party are generated independently, following a similar claim in the single-key setting~\cite{C:CLLZ21}. On the other hand, it is not immediately clear whether this implication also holds in the identical-challenge setting, where all parties receive the same challenge ciphertext.
In this section, we show that this implication indeed holds, relying on a technique similar to the one used in~\cite{C:AnaKalLiu23}. %Looking ahead, the identical-challenge security is crucial for the conversion into unclonable encrypion presented in \Cref{Sec:UE}. 

%In this section, we show that (collusion-resistant) identical-challenge search-security is implied by (collusion-resistant) independent-challenge search-security.

We first give the definition of collusion-resistant identical-challenge search security.
\begin{definition}[Collusion-Resistant Identical-Challenge Search Anti-Piracy]\label{def:cr_iden-chal_search_anti_piracy}
Let $\SDE=(\Setup,\KG,\Enc,\Dec)$ be a public-key SDE scheme. We consider the collusion-resistant identical-challenge search-type anti-piracy game $\expc{\SDE,\cA}{iden}{chal}{search}(\secp,\gamma)$ between the challenger and an adversary $\cA$ below.
\begin{enumerate}
\item The challenger generates $(\pk,\msk)\lrun \Setup(1^\secp)$ and sends $(1^\secp,\pk)$ to $\cA$.
\item $\cA$ sends $1^q$ to the challenger, then the challenger generates $\sk_i \lrun \KG(\msk)$ for $i\in[q]$, and sends $(\sk_1,\ldots,\sk_q)$ to $\cA$.
\item $\cA$ outputs $(q+1)$ (possibly entangled) quantum decryptors $\qD_1=(\delta[\qreg{R}_1],\mat{U}_1) ,\ldots, \qD_{q+1}=(\delta[\qreg{R}_{q+1}],\mat{U}_{q+1})$, where $\delta$ is a quantum state over registers $\qreg{R}_1,\ldots, \qreg{R}_{q+1}$, and each $\mat{U}_i$ is a unitary quantum circuit.
\item The challenger samples $\msg \chosen \cM$ and generates $\ct \lrun \Enc(\pk,\msg)$. The challenger runs quantum decryptors $\qD_i$ on $\ct$ for all $i\in[q+1]$, and checks that all $\qD_i$ outputs $\msg$. If so, the challenger outputs $1$; otherwise outputs $0$.
\end{enumerate}

We say that $\SDE$ is collusion-resistant identical-challenge search $\gamma$-anti-piracy secure if for any QPT adversary $\cA$, it satisfies that
\begin{align}
\advc{\SDE,\cA}{iden}{chal}{search}(\secp,\gamma)\seteq\Pr[ \expc{\SDE,\cA}{iden}{chal}{search}(\secp, \gamma)=1]\le \frac{1}{\abs{\cM}} +\gamma(\secp)+\negl(\secp).
\end{align}
We say that $\SDE$ is identical-challenge search anti-piracy secure if it is identical-challenge search $\gamma$-anti-piracy secure for any inverse polynomial $\gamma$.
\end{definition}

Now we state and our prove our theorem.
\begin{theorem}\label{thm:ind-to-iden}
If $\SDE$ is strong-search anti-piracy secure and $1/\abs{\cM}$ is negligible, it is also identical-challenge search anti-piracy secure.
\end{theorem}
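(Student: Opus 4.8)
The goal is to show that collusion-resistant strong-search anti-piracy (\Cref{def:cr_strong_search_anti_piracy}) implies collusion-resistant identical-challenge search anti-piracy (\Cref{def:cr_iden-chal_search_anti_piracy}), when $1/|\cM|$ is negligible. The core issue is that in the identical-challenge game the same ciphertext $\ct$ is handed to all $q+1$ decryptors at once and they are simply run on it, whereas the strong-search definition works with threshold implementations $\TI_{1/|\cM|+\gamma}$ applied to each register. So the plan is to take an adversary $\cA$ that wins the identical-challenge game with probability noticeably above $1/|\cM|+\gamma+\negl$, and extract from its output state $\delta$ a collection of $q+1$ quantum decryptors that are each $\gamma'$-good (for some inverse polynomial $\gamma'$) in the sense of \Cref{def:test_search_quantum_decryptor}, contradicting strong-search anti-piracy.

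The first step is to set up the right measurement. For each $i \in [q+1]$ let $\cP_i$ be the CPA-style/search mixture of projective measurements from \Cref{def:test_search_quantum_decryptor} applied to register $\qreg{R}_i$ of $\delta$ (sample $\msg \chosen \cM$, encrypt, run $\qD_i$, check). These are binary POVMs on disjoint registers, hence commute. Consider applying $\TI_{1/|\cM|+\gamma'}(\cP_1) \otimes \cdots \otimes \TI_{1/|\cM|+\gamma'}(\cP_{q+1})$ to $\delta$; if this returns all-ones with non-negligible probability, the collapsed state gives $q+1$ simultaneously $\gamma'$-good decryptors (by the properties in \Cref{thm:TI_repeat}), and we are done. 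So it remains to argue that assuming $\cA$ wins the identical-challenge game with probability $\ge 1/|\cM| + \gamma + \eps$ for some inverse polynomial $\eps$, this joint threshold measurement indeed succeeds with non-negligible probability for a suitable $\gamma' < \gamma$. This is where the technique of~\cite{C:AnaKalLiu23} comes in: one relates the probability that \emph{all} decryptors correctly output a freshly sampled $\msg$ to the probability that all of them lie above threshold. Concretely, one uses that for a single binary POVM the projective implementation's expected value equals the POVM's acceptance probability, and an averaging/Markov argument across the $q+1$ (commuting) projective implementations: if the product of acceptance events has probability $\ge 1/|\cM| + \gamma + \eps$, then with non-negligible probability a simultaneous measurement of $\projimp(\cP_1),\dots,\projimp(\cP_{q+1})$ yields a tuple $(p_1,\dots,p_{q+1})$ with $\min_i p_i \ge 1/|\cM| + \gamma'$ — here one exploits that $1/|\cM|$ is negligible so the "random guessing" baseline is essentially $0$ and the gap $\gamma$ cannot be diluted across coordinates the way it could if $1/|\cM|$ were constant. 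I would carry this out by first passing to a purified/unitary description of $\cA$'s strategy so that running $\qD_i$ on $\ct$ and checking the output is literally the POVM $\cP_i$, then invoke \Cref{lem:commutative_projective_implementation} to replace each binary POVM by its projective implementation, and finally do the averaging argument over the jointly-measured outcome tuple.

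The main obstacle is precisely this last averaging step: in the single-key setting of~\cite{C:CLLZ21}, going from "outputs the right message with good probability" to "is $\gamma$-good" for \emph{one} decryptor is by a clean Markov argument, but with $q+1$ decryptors that must \emph{simultaneously} be good, a naive union bound loses a factor of $q+1$ in the gap, which would only give threshold $1/|\cM| + \gamma/(q+1) \cdot(\text{stuff})$ — acceptable since $\gamma/(q+1)$ is still inverse-polynomial, but one must be careful that the loss stays additive over a negligible baseline rather than multiplicative in a way that kills the bound. The role of $1/|\cM|$ being negligible is exactly to make the baseline harmless. Once the extraction succeeds, the reduction $\qB$ against strong-search anti-piracy simply runs $\cA$, requests the same $1^q$ keys, forwards them, receives $\delta$, and outputs the $q+1$ registers of $\delta$ as its decryptors $\qD_1,\dots,\qD_{q+1}$; by the above each passes the $\gamma'$-good test with non-negligible probability, so $\advd{\SDE,\qB}{strong}{search}{anti}{piracy}(\secp,\gamma')$ is non-negligible, contradicting \Cref{def:cr_strong_search_anti_piracy}. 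I would also remark, as the paper does elsewhere, that $\TI$ and $\projimp$ here should be replaced by their efficient approximate counterparts $\ATI$ and $\API$ with a suitable choice of parameters (using \Cref{lem:ATI_property} and \Cref{cor:ind_distribution_ATI}) so that $\qB$ is genuinely QPT, but this does not affect the argument.
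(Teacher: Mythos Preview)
Your high-level plan is correct and matches the paper's: forward $\cA$'s output registers as the $q+1$ decryptors in the strong-search game, and argue that if $\cA$ wins the identical-challenge game with probability noticeably above $1/|\cM|$, then with non-negligible probability all $q+1$ threshold tests pass. The paper just states this contrapositively, proving directly that $p_{\mathsf{iden}} \le (q+2)\gamma + \negl(\secp)$ under strong-search security.

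However, the step you label ``an averaging/Markov argument'' is the entire content of the proof, and a plain Markov bound does not do it. The difficulty is that $p_{\mathsf{iden}}$ is an expectation over a \emph{single shared} sample $(\msg,r_\Enc)$ applied to all $q+1$ registers, so you cannot simply relate it coordinatewise to the independent outcomes $p_i$ of the $\projimp(\cP_i)$. What the paper (following~\cite{C:AnaKalLiu23}) actually does is: pass to projective $M_r^{(i)}$ via Naimark and a pure $\ket{\psi}$, take the joint spectral decomposition of $\ket{\psi}$ with respect to $P^{(1)}\otimes\cdots\otimes P^{(q+1)}$ where $P^{(i)}=\E_r M_r^{(i)}$, split $\ket{\psi}$ into a piece $\ket{\psi_S}$ supported on eigenvectors with all eigenvalues above $\eta$ (this piece has negligible norm by strong-search security) and $q+1$ pieces $\ket{\psi^{(j)}}$ each with the $j$-th eigenvalue at most $\eta$, then apply Cauchy--Schwarz across these $q+2$ summands (losing the factor $q+2$) and use $M_r^{(i)}\le \mat{I}$ to drop all but the $j$-th factor on $\ket{\psi^{(j)}}$. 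Your sketch does not surface either Cauchy--Schwarz or the eigenspace splitting, and without them the ``if $p_{\mathsf{iden}}$ is large then $\min_i p_i$ is large with non-negligible probability'' step is unjustified.

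Two minor points. First, the role of $1/|\cM|$ being negligible is simply to align the threshold $\eta\approx\gamma$ with the definition's threshold $1/|\cM|+\gamma$; your ``dilution across coordinates'' gloss is not quite it. Second, your final remark about replacing $\TI/\projimp$ by $\ATI/\API$ is unnecessary here: the reduction $\qB$ only forwards registers, and it is the \emph{challenger} in the strong-search game who runs the (possibly inefficient) threshold tests, so $\qB$ is already QPT with no approximation needed.
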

\begin{proof}
Let $\cR_\Enc$ be encryption randomness space of $\SDE$. We set $\cR \seteq \cR_\Enc \times \cM$.
Let $\sfD$ be a uniform distribution over $\cR$. Let $\widetilde{\sfD}$ be a distribution that samples a uniformly random $r$ from $\cR$ and outputs $(r,\ldots,r)\in \cR^{q+1}$. Let $\delta$ be a mixed $(q+1)$-partite state.
We consider the following POVM element $M^{(i)}_r$ acting on the $i$-th register of $\delta$:
\begin{itemize}
\item Parse $r=(r_\Enc,\msg)$.
\item Compute $\ct \seteq \Enc(\pk,\msg;r_\Enc)$.
\item Run the quantum decryptor on input $\ct$. If the outcome is $\msg$, output $1$.
\end{itemize}
We define the follwoing quantity:
\begin{align}
p_{\mathsf{iden}}&\seteq \Exp_{(r,\ldots,r)\chosen \widetilde{\sfD}}\Tr[(M^{(1)}_r \otimes \cdots \otimes M^{(q+1)}_r )\delta],\nonumber
% p_{\mathsf{indep}}&\seteq \Exp_{r_1,\ldots,r_{q+1} \chosen \sfD}\Tr[(M^{(1)}_{r_1} \otimes \cdots \otimes M^{(q+1)}_{r_{q+1}} )\delta],
\end{align}
where for each $r \in \cR$, we have $\mat{0}\le M^{(i)}_r \le \mat{I}$ for all $i\in[q+1]$.
Here, $\delta$ is considered to be the quantum decryptors output by $\cA$.
We will show
\begin{align}
p_{\mathsf{iden}} \le (q+2)\gamma(\secp) + \negl(\secp)\label{eq:goal}
\end{align}
if $\SDE$ is strong search $\gamma$-anti-piracy secure and $\frac{1}{\abs{\cM}}$ is negligible.

By the Naimark Dilation theorem, we assume without loss of generality that all measurements $M_r^{(i)}$ are projective for each $i\in[q+1]$.
We can also assume that $\delta=\ket{\psi}\bra{\psi}$ is pure. The general (mixed-state) case follows by convexity.

To prove~\cref{eq:goal}, we define
\[
P^{(i)} \seteq \Exp_{r\chosen \sfD}M^{(i)}_r,
\]
and let $\ket{\psi} =\sum_{i_1,\ldots,i_{q+1}} \alpha_{i_1,\ldots,i_{q+1}} \ket{\phi_{i_1}^{(1)}}\cdots\ket{\phi_{i_{q+1}}^{(q+1)}}$
be the spectral decomposition of $\ket{\psi}$ with respect to $P^{(1)}\otimes \cdots \otimes P^{(q+1)}$, where for each $j\in[q+1]$, $\ket{\phi_{i_j}^{(j)}}$ is an eigenvector of $P^{(j)}$ correspoding to eigenvalue $\mu_{i_j}^{(j)}$.
To analyze $p_{\mathsf{iden}}$, we set $\eta \seteq \gamma(\secp) + \negl(\secp)$ and define
\begin{align}
\ket{\psi^{(1)}}& =\sum_{\substack{i_1: \mu_{i_1}^{(1)} \le \eta \\i_2: \mu_{i_2}^{(2)}> \eta \\\vdots\\ i_{q+1}:\mu_{i_{q+1}}^{(q+1)}>\eta}} \alpha_{i_1,\ldots,i_{q+1}}\ket{\phi_{i_1}^{(1)}}\cdots\ket{\phi_{i_{q+1}}^{(q+1)}},\nonumber\\
\vdots&\nonumber\\
\ket{\psi^{(j)}}& =\sum_{\substack{i_1,\ldots, i_{j-1}\\ i_{j}: \mu_{i_j}^{(j)}\le \eta\\ i_{j+1}:\mu_{i_{j+1}}^{(j+1)}>\eta \\\vdots\\ i_{q+1}:\mu_{i_{q+1}}^{(q+1)}>\eta}} \alpha_{i_1,\ldots,i_{q+1}}\ket{\phi_{i_1}^{(1)}}\cdots\ket{\phi_{i_{q+1}}^{(q+1)}},\nonumber\\
\vdots&\nonumber\\
\ket{\psi^{(q+1)}}& =\sum_{\substack{i_1,\ldots,i_q \\ i_{q+1}:\mu_{i_{q+1}}^{(q+1)}\le\eta}} \alpha_{i_1,\ldots,i_{q+1}}\ket{\phi_{i_1}^{(1)}}\cdots\ket{\phi_{i_{q+1}}^{(q+1)}},\nonumber\\
\ket{\psi_S}&=\sum_{\substack{i_1: \mu_{i_1}^{(1)} > \eta \\\vdots\\ i_{q+1}:\mu_{i_{q+1}}^{(q+1)}>\eta}} \alpha_{i_1,\ldots,i_{q+1}}\ket{\phi_{i_1}^{(1)}}\cdots\ket{\phi_{i_{q+1}}^{(q+1)}},\nonumber
\end{align}
so that $\ket{\psi}=\sum_{j=1}^{q+1}\ket{\psi^{(j)}} +\ket{\psi_{S}}$.
By the definition of $\ket{\psi_S}$ and the strong-search anti-piracy security of $\SDE$, we have
\begin{align}
\norm{\psi_S}^2 \le \negl(\secp).\label{eq:strong_search_piracy}
\end{align}
For notational convenience, let
\begin{align}
M^\ast_r &\seteq M^{(1)}_r\otimes \cdots \otimes M^{(q+1)}_r,\\
M^{(-i)}_r &\seteq \overbrace{\mat{I} \otimes \cdots \otimes \mat{I}}^{i-1} \otimes M^{(i)}_r \otimes\overbrace{\mat{I} \cdots \otimes \mat{I}}^{q+1-i},\\
P^{(-i)} &\seteq \overbrace{\mat{I} \otimes \cdots \otimes \mat{I}}^{i-1} \otimes P^{(i)} \otimes\overbrace{\mat{I} \cdots \otimes \mat{I}}^{q+1-i}.
\end{align}

Then, we have
\begin{align}
p_{\mathsf{iden}}&= \Exp_{(r,\ldots,r)\chosen \widetilde{\sfD}} \norm{(M^{\ast}_r) \left(\ket{\psi^{(1)}}+\cdots + \ket{\psi^{(q+1)}} + \ket{\psi_S}\right)}^{2}\nonumber\\
&\le \Exp_{(r,\ldots,r)\chosen \widetilde{\sfD}} (q+2)\left(\norm{(M^{\ast}_r )\ket{\psi^{(1)}}}^2 +\cdots + \norm{(M^{\ast}_r )\ket{\psi^{(q+1)}}}^2 + \norm{(M^{\ast}_r )\ket{\psi_S}}^2\right)\label{eq:CSieq}\\
&\le \Exp_{(r,\ldots,r)\chosen \widetilde{\sfD}} (q+2)\left(\norm{(M^{(-1)}_r)\ket{\psi^{(1)}}}^2 +\cdots + \norm{(M^{(-(q+1))}_r )\ket{\psi^{(q+1)}}}^2 + \norm{\ket{\psi_S}}^2\right)\nonumber\\
&=\Exp_{r,\ldots,r\chosen \sfD} (q+2)\left(\norm{(M^{(-1)}_r)\ket{\psi^{(1)}}}^2 +\cdots + \norm{(M^{(-(q+1))}_r )\ket{\psi^{(q+1)}}}^2 + \norm{\ket{\psi_S}}^2\right)\nonumber\\
&=(q+2)\left(\bra{\psi^{(1)}} P^{(-1)}\ket{\psi^{(1)}} + \cdots + \bra{\psi^{(q+1)}} P^{(-(q+1))}\ket{\psi^{(q+1)}} +\norm{\ket{\psi_S}}^2\right)\nonumber\\
&= (q+2)\left(\eta \left(\norm{\psi^{(1)}}^2 +\cdots +\norm{\psi^{(q+1)}}^2\right) + \norm{\ket{\psi_S}}^2\right)\nonumber\\
&\le (q+2)\left(\eta \left(\norm{\psi^{(1)}}^2 +\cdots +\norm{\psi^{(q+1)}}^2\right) + \negl(\secp)\right)\label{eq:apply_strong_search}\\
&\le (q+2)(\gamma(\secp)+\negl(\secp))\nonumber\\
& = (q+2)\gamma(\secp)+\negl(\secp).\nonumber
\end{align}
\Cref{eq:CSieq} is derived by Cauchy-Schwartz inequality. \Cref{eq:apply_strong_search} is derived by~\cref{eq:strong_search_piracy}.
By the definition,
\[
\advc{\SDE,\cA}{iden}{chal}{search}(\secp,\gamma) = p_{\mathsf{iden}} \le (q+2)\gamma(\secp)+\negl(\secp).
\]
Hence, if $\SDE$ is strong-search $\gamma$-anti-piracy secure, then $\SDE$ is identical-challenge search-type $\gamma^\prime$-anti-piracy secure, where $\gamma^\prime = (q+2)\gamma$. This completes the proof.
\end{proof}

\subsection{Multi-Copy Secure Construction}\label{sec:multi-copy_SDE}
Here, we apply our compiler to convert a collusion-resistant SDE scheme into multi-copy secure one. 

First, we define multi-copy security of SDE.
\begin{definition}[Multi-Copy Anti-Piracy]
Let $\SDE = (\Setup, \KG, \Enc, \Dec)$ be a public-key SDE scheme where $\KG$ has classically determined outputs (\Cref{def:classically_determined}). We define its multi-copy strong anti-piracy security, multi-copy strong search anti-piracy security, and multi-copy identical-challenge search anti-piracy security analogously to their collusion-resistant counterparts (\cref{def:cr_strong_anti_piracy,def:cr_strong_search_anti_piracy,def:cr_iden-chal_search_anti_piracy}), except that the adversary is given $q$ exact copies of the same secret key generated under the same randomness, rather than independently generated keys.
\end{definition}

We observe that, in the construction given in \cref{sec:compiler_SDE}, the key generation algorithm has classically determined outputs (\Cref{def:classically_determined}) if the underlying single-key scheme does. Note that there exists a single-key SDE scheme that satisfies the property by \Cref{thm:SDE_from_IO}. Therefore, we can apply \cref{thm:compiler} to compile it into a multi-copy secure scheme. We omit the description of the scheme, as it is very similar to that of the quantum coin scheme in \Cref{sec:money_PR_compiler}.
As a result, we obtain the following theorem. 

\begin{theorem}\label{thm:sde}
    Assuming the existence of polynomially secure indistinguishability obfuscation and one-way functions, there exists a %copy-protected 
    public-key SDE scheme that satisfies
    \begin{itemize}
        \item %unbounded 
        multi-copy strong anti-piracy security
        \item %unbounded 
        multi-copy strong search anti-piracy security
        \item %unbounded 
        multi-copy identical-challenge search anti-piracy security.
    \end{itemize}
\end{theorem}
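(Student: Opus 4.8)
The plan is to invoke the single-key-to-collusion-resistant compiler from \Cref{sec:compiler_SDE} together with the purification compiler (\Cref{thm:compiler}), applied to the single-key SDE scheme from \Cref{thm:SDE_from_IO}. Concretely, I would proceed as follows. First, start from the single-key strong anti-piracy secure SDE scheme $\One$ guaranteed by \Cref{thm:SDE_from_IO}, which exists assuming polynomially secure iO and OWFs, and crucially whose key generation algorithm has classically determined outputs (\Cref{def:classically_determined}). Second, apply the generic compiler of \Cref{sec:compiler_SDE}; by \Cref{thm:collusion-resistant_conversion} the resulting scheme $\SDE$ satisfies both collusion-resistant strong anti-piracy and collusion-resistant strong search anti-piracy security. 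The compiler uses an adaptively secure PKFE and a puncturable PRF as building blocks, both of which follow from iO and OWFs (adaptively secure PKFE from iO and OWFs, puncturable PRF from OWFs via \Cref{thm:pprf-owf}); so no new assumptions are introduced. Third, invoke \Cref{thm:ind-to-iden}: since $1/|\cM| = 1/2^{\ell_\msg}$ is negligible for any superlogarithmic message length (and we are free to choose $\ell_\msg$ large), collusion-resistant strong search anti-piracy implies collusion-resistant identical-challenge search anti-piracy. At this point we have a single scheme that simultaneously satisfies all three collusion-resistant security notions.

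The remaining step is to lift collusion resistance to multi-copy security via the purification compiler. The key observation, which I would state explicitly, is that the $\qKG$ algorithm of the compiled scheme $\SDE$ has classically determined outputs whenever the underlying $\One.\qKG$ does: indeed $\qKG$ just runs $\One.\Setup$, $\One.\qKG$, and $\FE.\KG$, all of which (given the classically-determined-output property of $\One.\qKG$) produce a pure state fully determined by the classical randomness. Hence we can set $\mathsf{GenState} := \qKG$ and apply \Cref{thm:compiler}. For each of the three security games (strong anti-piracy, strong search anti-piracy, identical-challenge search), the key-generation queries in the multi-copy game are exactly the ``purified'' state-query stages of $\secexp_{pure}$, while the collusion-resistant game is the original $\secexp$ with independently generated keys. \Cref{thm:compiler} then gives, for any QPT adversary in the multi-copy game, a QPT adversary in the collusion-resistant game with negligibly close success probability, assuming the security of the PRS and PRF used in the compiler — and PRS and PRF both follow from OWFs (\Cref{thm:prsexists,thm:pprf-owf}). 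Since $\SDE$ is collusion-resistant secure in all three senses, the multi-copy versions follow, which is exactly \Cref{thm:sde}.

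The main subtlety — the place I would be most careful — is checking that \Cref{thm:compiler} applies cleanly to each security experiment, in particular that the tests applied in the anti-piracy games (threshold implementations $\TI_{1/2+\gamma}$ on the adversary's decryptors) are computed by an \emph{efficient} challenger, as required by the statement of \Cref{thm:compiler}. As the remark after \Cref{thm:collusion-resistant_conversion} notes, the inefficient $\TI$/$\sfPI$ must be replaced by the efficient approximate versions $\ATI$/$\API$ (via \Cref{lem:ATI_property,cor:ind_distribution_ATI}) with an appropriate choice of parameters, so that the challenger in $\secexp$ runs in QPT; the error introduced is absorbed into the $\negl$ slack. One should also note that multi-copy security of SDE is only meaningful when $\KG$ has classically determined outputs (so that ``$q$ exact copies of the same key'' is well-defined as a pure state), which is precisely why we tracked this property through the compiler. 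Everything else — correctness of the compiled scheme, the bookkeeping of assumptions — is routine.
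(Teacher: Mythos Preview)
Your proposal is correct and follows essentially the same approach as the paper: instantiate the single-key SDE from \Cref{thm:SDE_from_IO}, feed it through the FE-based compiler of \Cref{sec:compiler_SDE} (invoking \Cref{thm:collusion-resistant_conversion} and then \Cref{thm:ind-to-iden}), observe that the compiled $\qKG$ inherits the classically-determined-outputs property, and finally apply the purification compiler \Cref{thm:compiler}. Your added remarks about the building blocks (PKFE, PPRF, PRS, PRF) all following from iO and OWFs, and about replacing $\TI$ by $\ATI$ so the challenger is efficient, are exactly the bookkeeping details the paper leaves implicit.
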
 
% !TEX root = main.tex
\section{Unclonable Encryption}\label{Sec:UE} 
In this section, construct an unclonable encryption (UE) scheme that satisfies multi-copy search security.  
%we show the following theorem. 

By \Cref{thm:compiler}, the construction of a multi-copy secure UE scheme is reduced to constructing a multi-challenge secure variant, where the adversary is given multiple ciphertexts of the same challenge message under independent encryption randomness.\footnote{Multi-challenge security is conceptually equivalent to collusion resistance, but we retain the term multi-challenge as it is more natural in the context of UE.} Thus, we start by constructing multi-challenge secure scheme and then convert it into multi-copy secure one.

%First we recall the relevant definitions. Then we give our construction and security proofs in the setting of collusion-resistance rather than multi-copy security. Then, through \cref{thm:compiler}, we unconditionally upgrade our results to the multi-copy setting and obtain \cref{thm:ue}.\label{sec:from_SDE_to_UE}

\subsection{Definitions} 
\begin{definition}[UE (Syntax)]\label{def:unclonable_ske}
A UE scheme with the message space $\cM$ is a tuple of QPT algorithms $(\KG,\Enc,\Dec)$.
\begin{description}
\item[$\KG(1^\secp)\ra(\ek,\dk)$:] The key generation algorithm takes as input the security parameter $1^\secp$ and outputs an encryption key $\ek$ and a decryption key $\dk$.
    \item[$\Enc(\ek,\msg) \ra \ct$:] The encryption algorithm takes as input $\ek$ and a plaintext $\msg\in\cM$ and outputs a ciphertext $\ct$.
    \item[$\Dec(\dk,\ct) \ra \msg^\prime$:] The decryption algorithm takes as input $\dk$ and $\ct$ and outputs a plaintext $\msg^\prime \in \cM$ or $\bot$.

\item[Decryption correctness:] For any $\msg\in\cM$, it holds that
\begin{align}
\Pr\left[
\Dec(\dk,\ct)= \msg
\ \middle |
\begin{array}{ll}
(\ek,\dk)\lrun \KG(1^\secp)\\
\ct \lrun \Enc(\ek,\msg)
\end{array}
\right] 
=1-\negl(\secp).
\end{align}
\end{description}
\end{definition}

\begin{remark}
The UE syntax above is slightly different from the standard one since it uses two different secret keys $\ek$ and $\dk$.
We believe this relaxation could be acceptable.
In addition, when we compile a multi-challenge UE scheme into multi-copy one, this limitation can be removed by a simple one-time padding technique (see \Cref{sec:UE_ek=dk}).
\end{remark}
\begin{definition}[Multi-Challenge Search Security]\label{def:UE_OT_MC_OW}
Let $\UE=\UE.(\KG,\Enc,\Dec)$ be a UE scheme. 
We consider the following security experiment $\expc{\UE,\cA}{multi}{chal}{search}(\secp)$, where $\cA=(\cA_0,\cA_1,\cdots,\cA_{q+1})$.

\begin{enumerate}
\item The challenger generates $(\ek,\dk) \lrun \UE.\KG(1^\secp)$ and sends $1^\secp$ to $\cA_0$.
\item $\cA_0$ sends $1^q$ to the challenger.
\item The challenger samples %$\msg_1,\ldots,\msg_q \chosen \cM$,
$\msg \chosen \cM$,  
generates $\ct_i \lrun \UE.\Enc(\ek,\msg)$ for all $i\in[q]$, and sends $\setbk{\ct_i}_{i\in[q]}$ to $\cA_0$.
\item $\cA_0$ creates a $(q+1)$-partite state $\delta$ over registers $\qreg{R}_1,\ldots,\qreg{R}_{q+1}$, and sends $\qreg{R}_i$ to $\cA_i$ for each $i\in[q+1]$.
\item The challenger sends $\dk$ to $\cA_i$ for each $i\in[q+1]$.
Each $\cA_i$ outputs $\msg_i^\prime$.
If $\msg_i^\prime = \msg$ for all $i\in [q+1]$, the challenger outputs $1$, otherwise outputs $0$.

\end{enumerate}
We say that $\UE$ is multi-challenge search secure UE scheme if for any QPT $\cA$, it holds that
\begin{align}
\advc{\UE,\cA}{multi}{chal}{search}(\secp)\seteq \Pr[ \expc{\UE,\cA}{multi}{chal}{search}(\secp)=1]\le \frac{1}{\abs{\cM}} + \negl(\secp).
\end{align}
\end{definition}
The building block is an SDE scheme $\SDE=\SDE.(\Setup,\KG,\Enc,\Dec)$.
We construct a UE scheme $\UE=\UE.(\KG,\Enc,\Dec)$ as follows.

\begin{definition}[Multi-Copy Search Security]\label{def:UE_OT_MCopy_OW}
Let $\UE=\UE.(\KG,\Enc,\Dec)$ be a UE scheme such that $\UE.\Enc$ has classically determined outputs (\Cref{def:classically_determined}).  
We define multi-copy search security of $\UE$ analogously to its multi-challenge counterpart (\Cref{def:UE_OT_MC_OW}) except that that the adversary is given $q$ exact copies of the same ciphertext generated under the same randomness, rather than independently generated ciphertexts. 
\end{definition}

\subsection{From Collusion-Resistant SDE to Multi-Challenge UE}\label{sec:SDE_to_UE}
Here, we present a conversion from SDE to UE. Assuming that the SDE satisfies collusion-resistant identical-challenge search anti-piracy security, the following straightforward conversion, where we simply switch the roles of ciphertexts and keys, works. Let $\SDE =\SDE.(\Setup,\qKG,\Enc,\qDec)$ be an SDE scheme with message space $\zo^\ell$. Then we construct an UE scheme  $\UE=\UE.(\KG,\Enc,\Dec)$ with message space $\zo^\ell$ as follows. 
\begin{description}
\item[$\UE.\KG(1^\secp)$:] $ $
\begin{itemize}
\item Generate $(\sde.\pk,\sde.\msk)\lrun \SDE.\Setup(1^\secp)$.
\item Choose $s \chosen \zo^\ell$.
\item Generate $\sde.\ct \lrun \SDE.\Enc(\sde.\pk,s)$.
\item Output $\ue.\ek \seteq (\sde.\msk,s)$ and $\ue.\dk\seteq \sde.\ct$.
\end{itemize}
\item[$\UE.\Enc(\ue.\ek,\msg)$:] $ $
\begin{itemize}
    \item Parse $\ue.\ek = (\sde.\msk,s)$.
\item Generate $\sde.\sk\lrun \SDE.\KG(\sde.\msk)$.
\item Output $\ue.\ct \seteq (\sde.\sk,\msg \oplus s)$
\end{itemize}
\item[$\UE.\Dec(\ue.\dk,\ue.\ct)$:] $ $
\begin{itemize}
\item Parse $\ue.\dk = \sde.\ct$ and $\ue.\ct=(\sde.\sk,\mu)$.
\item Compute $s^\prime \lrun \SDE.\Dec(\sde.\sk,\sde.\ct)$.
\item Output $\msg^\prime \seteq \mu \oplus s^\prime$.
\end{itemize}
\end{description}

We now state our theorem and prove it.
\begin{theorem}\label{thm:from_SDE_to_UE}
If $\SDE$ is collusion-resistant identical-challenge search anti-piracy secure, then $\UE$ is multi-challenge search secure UE.
\end{theorem}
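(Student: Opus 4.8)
The plan is to reduce multi-challenge search security of $\UE$ directly to collusion-resistant identical-challenge search anti-piracy security (\Cref{def:cr_iden-chal_search_anti_piracy}) of $\SDE$. Suppose $\cA = (\cA_0, \cA_1, \ldots, \cA_{q+1})$ is an adversary against $\expc{\UE,\cA}{multi}{chal}{search}$ that wins with probability noticeably larger than $1/\abs{\cM} = 1/2^\ell$. I would build an adversary $\cB$ against $\SDE$ as follows. $\cB$ receives $\sde.\pk$ from the SDE challenger and forwards $1^\secp$ to $\cA_0$. When $\cA_0$ outputs $1^q$, $\cB$ forwards $1^q$ to the SDE challenger and receives $q$ quantum decryption keys $\sde.\sk_1, \ldots, \sde.\sk_q$. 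Now $\cB$ samples $s \chosen \zo^\ell$ itself (this plays the role of the one-time pad hardwired in $\ue.\ek$), forms the $q$ UE ciphertexts $\ue.\ct_i = (\sde.\sk_i,\, \msg \oplus s)$ — but wait, $\cB$ does not know $\msg$; instead observe that in the UE game the challenge message $\msg$ is uniform, and the value $\msg \oplus s$ is therefore uniform and independent of everything else, so $\cB$ can sample a uniform string $\mu \chosen \zo^\ell$ and set $\ue.\ct_i = (\sde.\sk_i, \mu)$, implicitly defining $\msg \seteq \mu \oplus s$. $\cB$ hands $\setbk{\ue.\ct_i}_{i \in [q]}$ to $\cA_0$, receives the $(q+1)$-partite state $\delta$, and defines $q+1$ quantum decryptors: the $i$-th decryptor consists of register $\qreg{R}_i$ of $\delta$ together with the circuit that, on input an SDE ciphertext $\sde.\ct$, runs $\cA_i(\qreg{R}_i, \sde.\ct)$ wrapped so that $\cA_i$'s output $\msg_i'$ is post-processed to $\msg_i' \oplus \mu$ (undoing the pad from the UE side). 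Finally $\cB$ outputs these $q+1$ decryptors to the SDE challenger.

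The key step is to check that $\cB$'s decryptors win the identical-challenge SDE game exactly when $\cA$ wins the UE game. In the SDE identical-challenge game the challenger samples $\sde.m \chosen \zo^\ell$, computes $\sde.\ct \lrun \SDE.\Enc(\sde.\pk, \sde.m)$, and feeds this same $\sde.\ct$ to all $q+1$ decryptors, accepting iff all output $\sde.m$. In the UE game, $\cA_i$ receives $\ue.\dk = \sde.\ct'$ where $\sde.\ct' \lrun \SDE.\Enc(\sde.\pk, s)$ and is required to output $\msg = \mu \oplus s$; by the decryption correctness of $\SDE$, $\SDE.\Dec(\sde.\sk_i, \sde.\ct') = s$, so "$\cA_i$ outputs $\mu \oplus s$" on decryption key $\sde.\ct'$ is the natural event. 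I would set up the correspondence $\sde.m \leftrightarrow s$: the distribution of $\sde.\ct$ in the SDE game (encryption of a uniform $\sde.m$) matches the distribution of $\ue.\dk$ in the UE game (encryption of the uniform $s$), and $\cB$'s $i$-th decryptor on input $\sde.\ct$ outputs $\cA_i$'s guess XOR $\mu$, which equals $\sde.m$ precisely when $\cA_i$'s guess equals $\mu \oplus \sde.m = \mu \oplus s = \msg$. Hence $\Pr[\text{all } q+1 \text{ decryptors output } \sde.m] = \Pr[\text{all } \cA_i \text{ output } \msg] = \advc{\UE,\cA}{multi}{chal}{search}(\secp)$, which by hypothesis exceeds $1/2^\ell + 1/\poly(\secp)$, contradicting identical-challenge search anti-piracy of $\SDE$ (whose bound is $1/\abs{\cM} + \gamma + \negl$ for every inverse polynomial $\gamma$, hence $1/\abs{\cM} + \negl$ after taking $\gamma$ small enough, using that $1/\abs{\cM}=1/2^\ell$; if $\ell$ is such that $1/2^\ell$ is not negligible one simply carries the $1/2^\ell$ term through unchanged).

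A couple of routine points to verify along the way: the decryptors $\cB$ produces are syntactically legitimate quantum decryptors in the sense of \Cref{def:qunatum_decryptor} (the shared state $\delta$ over $\qreg{R}_1, \ldots, \qreg{R}_{q+1}$ is exactly the entangled state allowed there, and each $\mat{U}_i$ is the efficient unitary implementing "run $\cA_i$, then XOR output with $\mu$"); and the reduction is efficient since $\cB$ only samples $s, \mu$, runs $\cA_0$, and repackages registers. The main obstacle — really the only subtlety — is making sure the one-time pad bookkeeping is consistent: the pad string $s$ appears inside $\ue.\ek$ on the encryption side (as $\msg \oplus s$) and is recovered on the decryption side as the SDE-plaintext, so in the reduction $\cB$ must (i) never need $\msg$ explicitly, handled by sampling $\mu = \msg \oplus s$ uniformly, and (ii) correctly identify $s$ with the SDE challenge message $\sde.m$ rather than with a value it chooses, handled by letting the SDE challenger's encryption of its uniform $\sde.m$ serve as $\ue.\dk$ and defining the decryptor outputs relative to $\mu$. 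Once this alignment is pinned down the argument is immediate; I would write the $\cB$ description as a boxed reduction analogous to \Cref{fig:base_reduction} and then state the probability equality in one display.
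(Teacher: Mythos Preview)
Your proposal is correct and follows essentially the same reduction as the paper: your uniform string $\mu$ is exactly the paper's $t$, the decryptors that XOR $\cA_i$'s output with $\mu$ are the paper's $\qD_i$, and the identification of the SDE challenger's plaintext with the UE pad $s$ is the same simulation argument. The only cosmetic wrinkle is that you first write ``$\cB$ samples $s$ itself'' before correctly pivoting to letting the SDE challenger's message play the role of $s$; in the final write-up you should drop that initial attempt and simply have $\cB$ sample only $\mu$, exactly as the paper does.
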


\begin{proof}
We construct an adversary $\cA_\SDE$ for SDE by using an adversary $\cA_\UE=(\cA_{\UE,0},\cA_{\UE,1},\ldots,\cA_{\UE,q+1})$ for UE.
$\cA_\SDE$ proceeds as follows.
\begin{enumerate}
\item $\cA_\SDE$ is given $(1^\secp,\sde.\pk)$, forwards $1^\secp$ to $\cA_{\UE,0}$, and receives $1^q$.
\item $\cA_\SDE$ sends $1^q$ to its challenger, receieves $(\sde.\sk_1,\ldots,\sde.\sk_q)$.
\item $\cA_\SDE$ samples a uniformly random message $t \chosen \zo^\ell$, set $\ue.\ct_i \seteq (\sde.\sk_i,t)$ for all $i\in[q]$, and sends $(\ue.\ct_1,\ldots,\ue.\ct_q)$ to $\cA_{\UE,0}$. 
\item When $\cA_{\UE,0}$ creates $(q+1)$-partite state $\delta$ over registers $\qreg{R}_1,\ldots,\qreg{R}_{q+1}$, $\cA_\SDE$ construct the following quantum decryptors $\qD_i$:
\begin{itemize}
\item Receive an input $\ue.\dk$, interpret it as an SDE ciphertext $\sde.\ct \seteq \ue.\dk$, and run $\msg_i^\prime \lrun \cA_{\UE,i}(\qreg{R}_i,\sde.\ct)$.
\item Output $\msg_i^\prime \oplus t$.
\end{itemize}
\item $\cA_\SDE$ sends $(q+1)$ quantum decryptors $\qD_1,\ldots,\qD_{q+1}$ to its challenger.
\end{enumerate}
The challenger of the collusion-resistant identical-challenge search anti-piracy security game samples $s \chosen \zo^\ell$, generates $\sde.\ct \lrun \SDE.\Enc(\sde.\pk,s)$, and runs $\qD_i (\sde.\ct)$ for all $i\in[q+1]$. In $\UE$, $\sde.\ct$ works as the decryption key $\ue.\dk$ of $\UE$. In addition, $\cA_\SDE$ perfectly simulates $\ue.\ct_i$ since $t$ is uniformly random.  Hence, $\cA_{\UE,i}(\qreg{R}_i,\sde.\ct)$ successfully decrypts $\ue.\ct$ and is expected to output $t \oplus s$ for all $i\in [q+1]$ with non-negligible probability. That is, $\qD_i$ outputs $s$ for all $i\in[q+1]$ with non-negligible probability. This completes the proof.
\end{proof}
\begin{remark}
By a similar proof, we can also show that the above UE scheme satisfies a variant of multi-challenge search security where each ciphertext encrypts independently random message rather than the same message.  
\end{remark}

\subsection{Multi-Copy Secure Construction}\label{sec:multi_copy_UE}
In the  UE scheme given in \Cref{sec:SDE_to_UE}, the encryption algorithm has classically determined outputs (\Cref{def:classically_determined}) if the key generation algorithm of the underlying SDE scheme does. As shown in \Cref{sec:CR-transformation}, there exists a collusion-resistant identical-challenge search anti-piracy secure SDE scheme whose 
key generation has classically determined outputs 
assuming the existence of iO and OWFs. 
Thus, we can apply the compiler of \Cref{thm:compiler} to the multi-challenge UE scheme.
We omit the description of the scheme, as it is very similar to that of the quantum coin scheme in \Cref{sec:money_PR_compiler}. 
Thus, we obtain the following theorem. 
%Thus, by \Cref{thm:from_SDE_to_UE}, there exists a under the same assumptions.  

\begin{theorem}\label{thm:ue}
    Assuming the existence of polynomially secure indistinguishability obfuscation and one-way functions, there exists an unclonable encryption scheme with multi-copy search security.
\end{theorem}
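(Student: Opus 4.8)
The plan is to obtain \Cref{thm:ue} by chaining together three results already established in the excerpt, so the proof is essentially a one-paragraph assembly argument. First I would recall that \Cref{thm:SDE_from_IO} gives, from polynomially secure iO and OWFs, a single-key strong anti-piracy secure SDE scheme $\One$ whose key generation algorithm has classically determined outputs. Next I would invoke the single-key-to-collusion-resistant compiler of \Cref{sec:compiler_SDE}: instantiating it with $\One$, an adaptively secure PKFE (which follows from iO and OWFs), and a puncturable PRF (from OWFs by \Cref{thm:pprf-owf}), \Cref{thm:collusion-resistant_conversion} yields an SDE scheme $\SDE$ that is both strong anti-piracy and strong search anti-piracy secure. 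Crucially, I would point out (as the paper does in \Cref{sec:multi-copy_SDE}) that the compiled $\SDE.\qKG$ inherits the classically-determined-outputs property from $\One.\qKG$, since $\qKG$ only runs $\One.\Setup$, $\One.\qKG$, and the deterministic FE key generation; and that $1/|\cM|$ is negligible for $\cM = \zo^{\ell_\msg}$ with $\ell_\msg$ superlogarithmic.

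Then I would apply \Cref{thm:ind-to-iden} to conclude that $\SDE$ is additionally collusion-resistant identical-challenge search anti-piracy secure. Feeding this $\SDE$ into the SDE-to-UE conversion of \Cref{sec:SDE_to_UE}, \Cref{thm:from_SDE_to_UE} gives a multi-challenge search secure UE scheme $\UE$ (with separate $\ek,\dk$). The final step is to observe that $\UE.\Enc$, which computes $\sde.\sk \lrun \SDE.\KG(\sde.\msk)$ and outputs $(\sde.\sk, \msg \oplus s)$, has classically determined outputs precisely because $\SDE.\KG$ does; hence \Cref{thm:compiler} applies and upgrades the multi-challenge security to multi-copy search security (the state-query stage here being the single stage in which the adversary receives $q$ ciphertexts of the challenge message, and $\UE.\Enc(\cdot; rand)$ playing the role of $\mathsf{GenState}$). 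The PRS and PRF required by \Cref{thm:compiler} exist from OWFs by \Cref{thm:prsexists} and \Cref{thm:pprf-owf}. Tracking assumptions through the chain, everything reduces to polynomially secure iO and OWFs, which gives the stated theorem.

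The one point that needs a small amount of care, rather than being purely mechanical, is verifying that \Cref{thm:compiler} is really applicable to the multi-challenge UE game: one must check that the multi-challenge search experiment of \Cref{def:UE_OT_MC_OW} has the shape required by $\secexp$ in \Cref{thm:compiler} — namely that the challenger's only interaction that produces quantum states is a state-query stage in which it runs $\UE.\Enc(\ek, \msg; \cdot)$ a bounded-by-the-adversary number of times on a fixed internal classical state $st = (\ek,\msg)$, and that the win condition (all $q+1$ sub-adversaries recovering $\msg$) is a classical predicate on the transcript. This is immediate from the definitions but should be stated explicitly, since $\secexp$ in \Cref{thm:compiler} is phrased abstractly. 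I expect no genuine obstacle here; the entire content of \Cref{thm:ue} is in the earlier theorems, and this proof is a routing argument. For completeness I would also note, as \Cref{sec:UE_ek=dk} indicates, that the split $(\ek,\dk)$ can be collapsed to a single key via a one-time pad at the multi-copy stage, so the resulting scheme matches the standard UE syntax if desired, though \Cref{thm:ue} as stated does not require this.

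\begin{proof}
We combine the results established above. By \Cref{thm:SDE_from_IO}, assuming polynomially secure iO and OWFs there is a single-key strong anti-piracy secure SDE scheme $\One$ whose key generation has classically determined outputs (\Cref{def:classically_determined}). Instantiating the compiler of \Cref{sec:compiler_SDE} with $\One$, an adaptively secure PKFE (which exists from iO and OWFs), and a puncturable PRF (\Cref{thm:pprf-owf}), \Cref{thm:collusion-resistant_conversion} gives an SDE scheme $\SDE$ that is strong anti-piracy secure and strong search anti-piracy secure, with message space $\zo^{\ell_\msg}$ for a superlogarithmic $\ell_\msg$; moreover its key generation $\SDE.\qKG$ has classically determined outputs, since it only runs $\One.\Setup$, $\One.\qKG$ (both with classically determined outputs) and the deterministic $\FE.\KG$. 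Since $1/|\zo^{\ell_\msg}|$ is negligible, \Cref{thm:ind-to-iden} shows that $\SDE$ is also collusion-resistant identical-challenge search anti-piracy secure. Applying the conversion of \Cref{sec:SDE_to_UE} to $\SDE$, \Cref{thm:from_SDE_to_UE} yields a multi-challenge search secure UE scheme $\UE$, and its encryption algorithm $\UE.\Enc$ has classically determined outputs because $\SDE.\qKG$ does. Finally, the multi-challenge search security experiment of \Cref{def:UE_OT_MC_OW} has the form of the experiment $\secexp$ in \Cref{thm:compiler}: there is a single state-query stage in which the challenger, holding the fixed internal classical state $(\ek,\msg)$, runs $\UE.\Enc(\ek,\msg;\cdot)$ a number of times chosen by the adversary and hands the outputs to the adversary, and the final outcome is a classical predicate on the transcript. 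Hence \Cref{thm:compiler} (using the PRS of \Cref{thm:prsexists} and a PRF from \Cref{thm:pprf-owf}, both from OWFs) upgrades this to multi-copy search security (\Cref{def:UE_OT_MCopy_OW}). Tracking the assumptions, the construction relies only on polynomially secure iO and OWFs.
\end{proof}
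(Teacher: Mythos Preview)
Your proposal is correct and follows essentially the same approach as the paper: chain \Cref{thm:SDE_from_IO}, \Cref{thm:collusion-resistant_conversion}, \Cref{thm:ind-to-iden}, \Cref{thm:from_SDE_to_UE}, and finally \Cref{thm:compiler}, checking along the way that the relevant algorithms have classically determined outputs. The paper's own argument is the same routing, just stated more tersely; one tiny wording quibble is that $\FE.\KG$ is randomized rather than ``deterministic,'' but since its output and randomness are classical this does not affect the classically-determined-outputs property of $\SDE.\qKG$, and your conclusion stands.
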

\begin{remark}\label{rem:deterministic_ciphertext} 
When we apply the compiler of \Cref{thm:compiler}, the keys of the PRS and PRF can be treated as part of the encryption key of the multi-copy search-secure UE scheme, while they are not included in the decryption key so that they are hidden from the adversary. In this way, we can achieve the property that the ciphertext is a pure state fully determined by the encryption key and the message, similarly to the scheme of~\cite{PRV24}. On the other hand, if we apply the conversion in \Cref{sec:UE_ek=dk} to make the encryption and decryption keys identical, this deterministic ciphertext property is lost. It remains an open problem to construct a multi-copy search secure UE scheme that simultaneously achieves the deterministic ciphertext property and identical encryption and decryption keys.
\end{remark}
%\section{Other Applications}In this section, we briefly review applications of our compiler in other settings. \subsection{Certified Deletion}An encryption scheme with certified deletion is a scheme where a ciphertext is a quantum state, and \subsection{Secure Key Leasing}An encryption scheme with secure key leasing (SKL) allows one to lease a decryption key as a quantum state in such a way that \subsection{Untelegraphable Encryption}Untelegraphable encryption (UTE) is a weaker variant of unclonable encryption, where the 

\bibliographystyle{alpha}
\bibliography{bibfiles/abbrev0,bibfiles/crypto,bibfiles/additional}

\newcommand{\etalchar}[1]{$^{#1}$}
\begin{thebibliography}{BKM{\etalchar{+}}23}

\bibitem[Aar09]{CCC:Aaronson09}
Scott Aaronson.
\newblock Quantum copy-protection and quantum money.
\newblock In {\em Proceedings of the 24th Annual {IEEE} Conference on Computational Complexity, {CCC} 2009, Paris, France, 15-18 July 2009}, pages 229--242. {IEEE} Computer Society, 2009.

\bibitem[Aar16]{aarlemma}
Scott Aaronson.
\newblock The complexity of quantum states and transformations: From quantum money to black holes, 2016.

\bibitem[AC13]{TOC:AarChr13}
Scott Aaronson and Paul~F. Christiano.
\newblock Quantum money from hidden subspaces.
\newblock {\em Theory Comput.}, 9:349--401, 2013.

\bibitem[AKL23]{C:AnaKalLiu23}
Prabhanjan Ananth, Fatih Kaleoglu, and Qipeng Liu.
\newblock Cloning games: {A} general framework for unclonable primitives.
\newblock In Helena Handschuh and Anna Lysyanskaya, editors, {\em Advances in Cryptology -- {CRYPTO}~2023, Part~V}, volume 14085 of {\em Lecture Notes in Computer Science}, pages 66--98, Santa Barbara, CA, USA, August~20--24, 2023. Springer, Cham, Switzerland.

\bibitem[AKY25]{ananth2025simultaneous}
Prabhanjan Ananth, Fatih Kaleoglu, and Henry Yuen.
\newblock Simultaneous haar indistinguishability with applications to unclonable cryptography.
\newblock In {\em 16th Innovations in Theoretical Computer Science Conference (ITCS 2025)}, pages 7--1. Schloss Dagstuhl--Leibniz-Zentrum f{\"u}r Informatik, 2025.

\bibitem[AL21]{EC:AnaLaP21}
Prabhanjan Ananth and Rolando~L. {La Placa}.
\newblock Secure software leasing.
\newblock In Anne Canteaut and Fran\c{c}ois-Xavier Standaert, editors, {\em Advances in Cryptology -- {EUROCRYPT}~2021, Part~II}, volume 12697 of {\em Lecture Notes in Computer Science}, pages 501--530, Zagreb, Croatia, October~17--21, 2021. Springer, Cham, Switzerland.

\bibitem[ALL{\etalchar{+}}21]{C:ALLZZ21}
Scott Aaronson, Jiahui Liu, Qipeng Liu, Mark Zhandry, and Ruizhe Zhang.
\newblock New approaches for quantum copy-protection.
\newblock In Tal Malkin and Chris Peikert, editors, {\em Advances in Cryptology -- {CRYPTO}~2021, Part~I}, volume 12825 of {\em Lecture Notes in Computer Science}, pages 526--555, Virtual Event, August~16--20, 2021. Springer, Cham, Switzerland.

\bibitem[AMP25]{ITC:AnaMutPor25}
Prabhanjan Ananth, Saachi Mutreja, and Alexander Poremba.
\newblock Revocable encryption, programs, and more: The case of multi-copy security.
\newblock In Niv Gilboa, editor, {\em ITC 2025: 6th Conference on Information-Theoretic Cryptography}, volume 343 of {\em Leibniz International Proceedings in Informatics (LIPIcs)}, pages 9:1--9:23, Santa Barbara, CA, USA, August~16--17, 2025. Schloss Dagstuhl - Leibniz-Zentrum fuer Informatik.

\bibitem[BGI14]{PKC:BoyGolIva14}
Elette Boyle, Shafi Goldwasser, and Ioana Ivan.
\newblock Functional signatures and pseudorandom functions.
\newblock In Hugo Krawczyk, editor, {\em PKC~2014: 17th International Conference on Theory and Practice of Public Key Cryptography}, volume 8383 of {\em Lecture Notes in Computer Science}, pages 501--519, Buenos Aires, Argentina, March~26--28, 2014. Springer Berlin Heidelberg, Germany.

\bibitem[BI20]{TCC:BroIsl20}
Anne Broadbent and Rabib Islam.
\newblock Quantum encryption with certified deletion.
\newblock In Rafael Pass and Krzysztof Pietrzak, editors, {\em TCC~2020: 18th Theory of Cryptography Conference, Part~III}, volume 12552 of {\em Lecture Notes in Computer Science}, pages 92--122, Durham, NC, USA, November~16--19, 2020. Springer, Cham, Switzerland.

\bibitem[BKM{\etalchar{+}}23]{TCC:BKMPW23}
James Bartusek, Dakshita Khurana, Giulio Malavolta, Alexander Poremba, and Michael Walter.
\newblock Weakening assumptions for publicly-verifiable deletion.
\newblock In Guy~N. Rothblum and Hoeteck Wee, editors, {\em TCC~2023: 21st Theory of Cryptography Conference, Part~IV}, volume 14372 of {\em Lecture Notes in Computer Science}, pages 183--197, Taipei, Taiwan, November~29~--~December~2, 2023. Springer, Cham, Switzerland.

\bibitem[BKS{\etalchar{+}}18]{berry2018improved}
Dominic~W Berry, M{\'a}ria Kieferov{\'a}, Artur Scherer, Yuval~R Sanders, Guang~Hao Low, Nathan Wiebe, Craig Gidney, and Ryan Babbush.
\newblock Improved techniques for preparing eigenstates of fermionic hamiltonians.
\newblock {\em npj Quantum Information}, 4(1):22, 2018.

\bibitem[BL20]{TQC:BroLor20}
Anne Broadbent and S{\'{e}}bastien Lord.
\newblock Uncloneable quantum encryption via oracles.
\newblock In Steven~T. Flammia, editor, {\em 15th Conference on the Theory of Quantum Computation, Communication and Cryptography, {TQC} 2020, June 9-12, 2020, Riga, Latvia}, volume 158 of {\em LIPIcs}, pages 4:1--4:22. Schloss Dagstuhl - Leibniz-Zentrum f{\"{u}}r Informatik, 2020.

\bibitem[BS21]{bs21}
Amit Behera and Or~Sattath.
\newblock Almost public quantum coins.
\newblock {\em QIP}, 2021.

\bibitem[BW13]{AC:BonWat13}
Dan Boneh and Brent Waters.
\newblock Constrained pseudorandom functions and their applications.
\newblock In Kazue Sako and Palash Sarkar, editors, {\em Advances in Cryptology -- {ASIACRYPT}~2013, Part~II}, volume 8270 of {\em Lecture Notes in Computer Science}, pages 280--300, Bengalore, India, December~1--5, 2013. Springer Berlin Heidelberg, Germany.

\bibitem[BZ13a]{C:BonZha13}
Dan Boneh and Mark Zhandry.
\newblock Secure signatures and chosen ciphertext security in a quantum computing world.
\newblock In Ran Canetti and Juan~A. Garay, editors, {\em Advances in Cryptology -- {CRYPTO}~2013, Part~II}, volume 8043 of {\em Lecture Notes in Computer Science}, pages 361--379, Santa Barbara, CA, USA, August~18--22, 2013. Springer Berlin Heidelberg, Germany.

\bibitem[BZ13b]{boneh2013secure}
Dan Boneh and Mark Zhandry.
\newblock Secure signatures and chosen ciphertext security in a quantum computing world.
\newblock In {\em Advances in Cryptology--CRYPTO 2013: 33rd Annual Cryptology Conference, Santa Barbara, CA, USA, August 18-22, 2013. Proceedings, Part II}, pages 361--379. Springer, 2013.

\bibitem[{\c C}G24]{TCC:CakGoy24}
Alper {\c C}akan and Vipul Goyal.
\newblock Unclonable cryptography with unbounded collusions and impossibility of hyperefficient shadow tomography.
\newblock In Elette Boyle and Mohammad Mahmoody, editors, {\em TCC~2024: 22nd Theory of Cryptography Conference, Part~III}, volume 15366 of {\em Lecture Notes in Computer Science}, pages 225--256, Milan, Italy, December~2--6, 2024. Springer, Cham, Switzerland.

\bibitem[{\c{C}}GY24]{anonymousmoney}
Alper {\c{C}}akan, Vipul Goyal, and Takashi Yamakawa.
\newblock Anonymous public-key quantum money and quantum voting.
\newblock Cryptology {ePrint} Archive, Paper 2024/1822, 2024.

\bibitem[CKNY25]{CKNY25}
Jeffrey Champion, Fuyuki Kitagawa, Ryo Nishimaki, and Takashi Yamakawa.
\newblock Untelegraphable encryption and its applications.
\newblock Cryptology {ePrint} Archive, Paper 2025/1765, 2025.

\bibitem[CLLZ21]{C:CLLZ21}
Andrea Coladangelo, Jiahui Liu, Qipeng Liu, and Mark Zhandry.
\newblock Hidden cosets and applications to unclonable cryptography.
\newblock In Tal Malkin and Chris Peikert, editors, {\em Advances in Cryptology -- {CRYPTO}~2021, Part~I}, volume 12825 of {\em Lecture Notes in Computer Science}, pages 556--584, Virtual Event, August~16--20, 2021. Springer, Cham, Switzerland.

\bibitem[FGH{\etalchar{+}}12]{ITCS:FGHLS12}
Edward Farhi, David Gosset, Avinatan Hassidim, Andrew Lutomirski, and Peter~W. Shor.
\newblock Quantum money from knots.
\newblock In Shafi Goldwasser, editor, {\em ITCS 2012: 3rd Innovations in Theoretical Computer Science}, pages 276--289, Cambridge, MA, USA, January~8--10, 2012. Association for Computing Machinery.

\bibitem[GGM86]{JACM:GolGolMic86}
Oded Goldreich, Shafi Goldwasser, and Silvio Micali.
\newblock How to construct random functions.
\newblock {\em Journal of the {ACM}}, 33(4):792--807, 1986.

\bibitem[GVW12]{gorbunov2012functional}
Sergey Gorbunov, Vinod Vaikuntanathan, and Hoeteck Wee.
\newblock Functional encryption with bounded collusions via multi-party computation.
\newblock In {\em Advances in Cryptology--CRYPTO 2012: 32nd Annual Cryptology Conference, Santa Barbara, CA, USA, August 19-23, 2012. Proceedings}, pages 162--179. Springer, 2012.

\bibitem[GZ20]{GZ20}
Marios Georgiou and Mark Zhandry.
\newblock Unclonable decryption keys.
\newblock Cryptology ePrint Archive, Paper 2020/877, 2020.
\newblock \url{https://eprint.iacr.org/2020/877}.

\bibitem[HMNY21]{hiroka2021quantum}
Taiga Hiroka, Tomoyuki Morimae, Ryo Nishimaki, and Takashi Yamakawa.
\newblock Quantum encryption with certified deletion, revisited: Public key, attribute-based, and classical communication.
\newblock In {\em Advances in Cryptology--ASIACRYPT 2021: 27th International Conference on the Theory and Application of Cryptology and Information Security, Singapore, December 6--10, 2021, Proceedings, Part I 27}, pages 606--636. Springer, 2021.

\bibitem[JLS18]{C:JiLiuSon18}
Zhengfeng Ji, Yi-Kai Liu, and Fang Song.
\newblock Pseudorandom quantum states.
\newblock In Hovav Shacham and Alexandra Boldyreva, editors, {\em Advances in Cryptology -- {CRYPTO}~2018, Part~III}, volume 10993 of {\em Lecture Notes in Computer Science}, pages 126--152, Santa Barbara, CA, USA, August~19--23, 2018. Springer, Cham, Switzerland.

\bibitem[KNP25]{C:KitNisPap25}
Fuyuki Kitagawa, Ryo Nishimaki, and Nikhil Pappu.
\newblock {PKE} and {ABE} with collusion-resistant secure key leasing.
\newblock In Yael~Tauman Kalai and Seny~F. Kamara, editors, {\em Advances in Cryptology -- {CRYPTO}~2025, Part~III}, volume 16002 of {\em Lecture Notes in Computer Science}, pages 35--68, Santa Barbara, CA, USA, August~17--21, 2025. Springer, Cham, Switzerland.

\bibitem[KNY23]{TCC:KitNisYam23}
Fuyuki Kitagawa, Ryo Nishimaki, and Takashi Yamakawa.
\newblock Publicly verifiable deletion from minimal assumptions.
\newblock In Guy~N. Rothblum and Hoeteck Wee, editors, {\em TCC~2023: 21st Theory of Cryptography Conference, Part~IV}, volume 14372 of {\em Lecture Notes in Computer Science}, pages 228--245, Taipei, Taiwan, November~29~--~December~2, 2023. Springer, Cham, Switzerland.

\bibitem[KPTZ13]{CCS:KPTZ13}
Aggelos Kiayias, Stavros Papadopoulos, Nikos Triandopoulos, and Thomas Zacharias.
\newblock Delegatable pseudorandom functions and applications.
\newblock In Ahmad-Reza Sadeghi, Virgil~D. Gligor, and Moti Yung, editors, {\em ACM CCS 2013: 20th Conference on Computer and Communications Security}, pages 669--684, Berlin, Germany, November~4--8, 2013. {ACM} Press.

\bibitem[KY25]{EPRINT:KitYam25}
Fuyuki Kitagawa and Takashi Yamakawa.
\newblock Foundations of single-decryptor encryption.
\newblock {\em {IACR} Cryptol. ePrint Arch.}, page 1219, 2025.

\bibitem[LAF{\etalchar{+}}09]{LAF09}
Andrew Lutomirski, Scott Aaronson, Edward Farhi, David Gosset, Avinatan Hassidim, Jonathan Kelner, and Peter Shor.
\newblock Breaking and making quantum money: toward a new quantum cryptographic protocol, 2009.

\bibitem[LLQZ22]{TCC:LLQZ22}
Jiahui Liu, Qipeng Liu, Luowen Qian, and Mark Zhandry.
\newblock Collusion resistant copy-protection for watermarkable functionalities.
\newblock In Eike Kiltz and Vinod Vaikuntanathan, editors, {\em TCC~2022: 20th Theory of Cryptography Conference, Part~I}, volume 13747 of {\em Lecture Notes in Computer Science}, pages 294--323, Chicago, IL, USA, November~7--10, 2022. Springer, Cham, Switzerland.

\bibitem[MS10]{mosca2010quantum}
Michele Mosca and Douglas Stebila.
\newblock Quantum coins.
\newblock {\em Error-correcting codes, finite geometries and cryptography}, 523:35--47, 2010.

\bibitem[PRV24]{PRV24}
Alexander Poremba, Seyoon Ragavan, and Vinod Vaikuntanathan.
\newblock Cloning games, black holes and cryptography.
\newblock Cryptology {ePrint} Archive, Paper 2024/1826, 2024.

\bibitem[SS10]{sahai2010worry}
Amit Sahai and Hakan Seyalioglu.
\newblock Worry-free encryption: functional encryption with public keys.
\newblock In {\em Proceedings of the 17th ACM conference on Computer and communications security}, pages 463--472, 2010.

\bibitem[Wie83]{Wie83}
Stephen Wiesner.
\newblock Conjugate coding.
\newblock {\em SIGACT News}, 15(1):78–88, jan 1983.

\bibitem[Zha12]{Z12}
Mark Zhandry.
\newblock How to construct quantum random functions.
\newblock In {\em 2012 IEEE 53rd Annual Symposium on Foundations of Computer Science}, pages 679--687, 2012.

\bibitem[Zha19a]{zhandry2019record}
Mark Zhandry.
\newblock How to record quantum queries, and applications to quantum indifferentiability.
\newblock In {\em Advances in Cryptology--CRYPTO 2019: 39th Annual International Cryptology Conference, Santa Barbara, CA, USA, August 18--22, 2019, Proceedings, Part II 39}, pages 239--268. Springer, 2019.

\bibitem[Zha19b]{EC:Zhandry19b}
Mark Zhandry.
\newblock Quantum lightning never strikes the same state twice.
\newblock In Yuval Ishai and Vincent Rijmen, editors, {\em Advances in Cryptology -- {EUROCRYPT}~2019, Part~III}, volume 11478 of {\em Lecture Notes in Computer Science}, pages 408--438, Darmstadt, Germany, May~19--23, 2019. Springer, Cham, Switzerland.

\bibitem[Zha20]{TCC:Zhandry20}
Mark Zhandry.
\newblock {Schr{\"o}dinger}'s pirate: How to trace a quantum decoder.
\newblock In Rafael Pass and Krzysztof Pietrzak, editors, {\em TCC~2020: 18th Theory of Cryptography Conference, Part~III}, volume 12552 of {\em Lecture Notes in Computer Science}, pages 61--91, Durham, NC, USA, November~16--19, 2020. Springer, Cham, Switzerland.

\end{thebibliography}
\appendix
\section{Making Encryption and Decrytpion Keys Identical in UE}\label{sec:UE_ek=dk}
In the syntax of UE given in \Cref{def:unclonable_ske}, the encryption and decryption keys are allowed to differ, which deviates from the standard definition. Here, we show that this discrepancy can be generically resolved while preserving multi-copy security.
\begin{theorem}\label{thm:ek=dk}
 If a UE scheme with classical key generation algorithm that satisfies multi-copy search security exists, then there also exists a UE scheme that satisfies multi-copy search security and has identical encryption and decryption keys.
\end{theorem}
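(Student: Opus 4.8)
The plan is to realize the one-time-padding idea hinted at in \Cref{rem:deterministic_ciphertext}. Let $\UE=\UE.(\KG,\Enc,\Dec)$ be the assumed scheme with classical $\KG$ and message space $\cM$, and let $L=L(\secp)$ be a polynomial upper bound on the bit length of the decryption keys $\dk$ output by $\UE.\KG$ (pad all $\dk$ to exactly $L$ bits). I would define a new scheme $\UE'=\UE'.(\KG',\Enc',\Dec')$ as follows: $\UE'.\KG'(1^\secp)$ samples and outputs a single key $k\chosen\zo^L$, used both for encryption and for decryption; $\UE'.\Enc'(k,\msg)$ samples a fresh pair $(\ek,\dk)\lrun\UE.\KG(1^\secp)$, computes $\ct\lrun\UE.\Enc(\ek,\msg)$, and outputs the ciphertext $\ct'\seteq(\ct,\dk\oplus k)$; and $\UE'.\Dec'(k,(\ct,c))$ recovers $\dk\seteq c\oplus k$ and outputs $\UE.\Dec(\dk,\ct)$. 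By construction the encryption and decryption keys coincide.

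Next I would verify the basic properties. Correctness of $\UE'$ is immediate from that of $\UE$, since $\Dec'$ reconstructs the honestly generated $\dk$ exactly. More importantly, $\UE'.\Enc'$ has classically determined outputs, so that \Cref{def:UE_OT_MCopy_OW} is meaningful for $\UE'$: on input $(k,\msg)$ and classical randomness, $\Enc'$ deterministically produces the classical pair $(\ek,\dk)$---classical precisely because $\UE.\KG$ is classical, which is where the hypothesis on $\KG$ is used---then the pure state $\UE.\Enc(\ek,\msg)$, which is fully determined by $\ek$, $\msg$, and its randomness since $\UE.\Enc$ has classically determined outputs, and finally the classical string $\dk\oplus k$; hence $\ct'$ is a pure state fully determined by $(k,\msg)$ and the randomness.

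The substantive step is the reduction of multi-copy search security of $\UE'$ to that of $\UE$. Given a QPT adversary $\cA=(\cA_0,\dots,\cA_{q+1})$ against $\expc{\UE',\cA}{multi}{chal}{search}$, I would build $\cB=(\cB_0,\dots,\cB_{q+1})$ against $\expc{\UE,\cB}{multi}{chal}{search}$. The only real subtlety---and what I expect to be the main obstacle---is timing: $\cB$ must present $\cA_0$ with a ciphertext whose classical part equals $\dk\oplus k$ already in the first stage, yet it does not learn $\dk$ (which pins down $k$) until the $\UE$-challenger reveals $\dk$ in the second stage. This is resolved by noting that, since in $\UE'$ the pad $k$ is uniform and independent of the internally sampled $(\ek,\dk)$, the string $\dk\oplus k$ is by itself uniformly random over $\zo^L$; so $\cB$ can commit to it early and only program $k$ later. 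Concretely, $\cB_0$ relays $1^\secp$ and $1^q$, receives $q$ copies of $\ct\lrun\UE.\Enc(\ek,\msg)$ for $\msg\chosen\cM$, samples $c^\ast\chosen\zo^L$, feeds $q$ copies of $(\ct,c^\ast)$ to $\cA_0$, and when $\cA_0$ outputs its $(q+1)$-partite state over $\qreg{R}_1,\dots,\qreg{R}_{q+1}$ it appends a classical copy of $c^\ast$ to each $\qreg{R}_i$; each $\cB_i$, upon receiving $\dk$ from the challenger, sets $k\seteq c^\ast\oplus\dk$, runs $\cA_i$ on $(\qreg{R}_i,k)$, and outputs $\cA_i$'s answer. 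One then checks that the joint distribution of the challenge message, the quantum ciphertext copies, the revealed $\dk$, and the key $k=c^\ast\oplus\dk$ in the simulation matches that in the real $\UE'$ experiment (using independence of the pad and the internal $(\ek,\dk)$ inside $\Enc'$), and that $\cB$ wins exactly when $\cA$ does while $\abs{\cM}$ is unchanged, giving $\advc{\UE',\cA}{multi}{chal}{search}(\secp)=\advc{\UE,\cB}{multi}{chal}{search}(\secp)\le 1/\abs{\cM}+\negl(\secp)$; the remaining distributional checks are routine.
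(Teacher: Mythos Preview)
Your proposal is correct and follows exactly the same construction and approach as the paper: the paper defines precisely the scheme you describe (with $s$ in place of your $k$) and then simply asserts that ``it is straightforward to reduce the correctness and multi-copy search security of $\UE'$ to those of $\UE$.'' You have in fact supplied the details the paper omits, including the observation that $\UE'.\Enc'$ has classically determined outputs (which is needed for the multi-copy definition to apply) and the careful handling of the timing of $\dk$ via deferred programming of $k=c^\ast\oplus\dk$.
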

\begin{proof}
    Let $\UE=\UE.(\KG,\Enc,\Dec)$ be a multi-copy secure UE scheme.
    Let $\ell$ be the length of the decryption key of $\UE$.  
    Then we construct another UE scheme $\UE'=\UE'.(\KG,\Enc,\Dec)$ as follows. 
    \begin{description}
\item[$\UE'.\KG(1^\secp)$:] $ $
\begin{itemize}
\item Choose $s\lrun \zo^\ell$. 
\item Output $\ek'=s$ and $\dk'=s$.  
\end{itemize}
\item[$\UE'.\Enc(\ek'=s,\msg)$:] $ $
\begin{itemize}
\item Generate $(\ek,\dk)\lrun \UE.\KG(1^\secp)$. 
\item Generate $\ct\lrun\UE.\Enc(\ek,\msg)$.
\item Output $\ct'=(\ct,\dk\oplus s)$. 
\end{itemize}
\item[$\UE'.\Dec(\dk'=s,\ct')$:] $ $
\begin{itemize}
\item Parse $\ct'=(\ct,t)$.
\item Compute $\dk \lrun s\oplus t$. 
\item Output $\msg\gets \UE.\Dec(\dk,\ct)$.
\end{itemize}
\end{description}
It is straightforward to reduce the correctness and multi-copy search security of $\UE'$ to those of $\UE$.  
\end{proof}

%Note that multi-challenge secure scheme can be generically upgraded into multi-copy secure one by \Cref{thm:compiler}. 
It is easy to see that the multi-copy search secure UE scheme constructed in \Cref{sec:multi_copy_UE} has a classical key generation algorithm. 
Thus, we obtain the following corollary. 
\begin{corollary}\label{cor:ue}
    Assuming the existence of polynomially secure indistinguishability obfuscation and one-way functions, there exists an unclonable encryption scheme with multi-copy search security that has identical encryption and decryption keys. 
\end{corollary}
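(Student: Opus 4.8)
The plan is to obtain the statement as a direct composition of \Cref{thm:ue} with the key-unification transformation of \Cref{thm:ek=dk}. First I would recall the scheme behind \Cref{thm:ue}, built in \Cref{sec:multi_copy_UE}: one starts from the collusion-resistant identical-challenge search anti-piracy secure SDE scheme obtained by applying the single-key-to-collusion-resistant compiler of \Cref{sec:CR-transformation} to the single-key SDE of \Cref{thm:SDE_from_IO}, converts it to a multi-challenge search secure \UE{} scheme via the role-swapping conversion of \Cref{sec:SDE_to_UE}, and finally applies the purification compiler of \Cref{thm:compiler} to upgrade multi-challenge security to multi-copy security. The entire pipeline needs only polynomially secure iO and OWFs, since iO together with OWFs yields adaptively secure PKFE, puncturable PRFs, PRS, and the single-key SDE of \Cref{thm:SDE_from_IO}.

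The one piece of bookkeeping required is to check the hypothesis of \Cref{thm:ek=dk}, namely that this multi-copy \UE{} scheme has a \emph{classical} key generation algorithm. I would verify this by tracing the construction. In the multi-challenge \UE{} scheme of \Cref{sec:SDE_to_UE}, $\UE.\KG$ runs $\SDE.\Setup$, samples a uniform string $s$, and computes an SDE ciphertext of $s$; here $\SDE.\Setup$ is classical (it is just $\FE.\Setup$) and SDE ciphertexts are classical strings (they are FE ciphertexts), so $\UE.\KG$ is a classical procedure and in particular its decryption key $\ue.\dk=\sde.\ct$ is classical. The quantum object in the scheme, the SDE quantum decryption key $\sde.\sk$, is generated inside $\UE.\Enc$, not inside $\UE.\KG$. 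When the purification compiler of \Cref{thm:compiler} is then applied, the modification is confined to the encryption algorithm (the ciphertexts play the role of the state-query outputs in the multi-copy \UE{} game); the PRS key and PRF key introduced by the compiler are appended to the encryption key only (as in \Cref{rem:deterministic_ciphertext}) and are themselves classical strings sampled by ordinary classical sampling. Hence the key generation of the resulting multi-copy search secure \UE{} scheme is still classical.

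With this observation, I would simply invoke \Cref{thm:ek=dk} on this scheme, obtaining a \UE{} scheme that satisfies multi-copy search security and has $\ek=\dk$. Composing with \Cref{thm:ue}, the assumptions collapse to polynomially secure iO and OWFs, which is exactly the claimed corollary.

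I do not anticipate a genuine obstacle: the whole argument is the composition of already-established results. The only points needing minor care are (i) confirming, by line-by-line inspection, that no intermediate construction in the pipeline secretly generates a quantum state within its key generation procedure, and (ii) noting that \Cref{thm:ek=dk} indeed preserves multi-copy search security — which is immediate, since the proof of \Cref{thm:ek=dk} reduces the correctness and security of $\UE'$ directly to those of $\UE$ via a one-time pad on the decryption key.
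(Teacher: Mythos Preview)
Your proposal is correct and follows exactly the same approach as the paper: observe that the multi-copy \UE{} scheme from \Cref{sec:multi_copy_UE} has a classical key generation algorithm, then apply \Cref{thm:ek=dk}. The paper dispenses with the verification in a single sentence (``It is easy to see\ldots''), whereas you spell out the trace through the pipeline; your more detailed check is accurate and matches the construction.
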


\end{document}